
\RequirePackage{tikz}






\documentclass[sn-mathphys]{sn-jnl}

\usepackage{amsmath}
\usepackage{amssymb} 
\usepackage{amsthm} 
\interdisplaylinepenalty=2500
\usepackage{bm} 
\usepackage{extarrows} 
\usepackage{enumitem} 
\usepackage[T1]{fontenc} 

\usepackage{stmaryrd} 
\usepackage{todonotes} 
\presetkeys{todonotes}{inline,fancyline}{} 
\usepackage{subcaption} 


\jyear{2023}%

\newcommand{\old}[1]{\textcolor{black}{#1}}
\newcommand{\new}[1]{\textcolor{black}{#1}}

\renewcommand{\mod}[1]{\llbracket #1 \rrbracket_{\mathrm{mod}}}
\newcommand{\sem}[1]{\llbracket #1 \rrbracket_{\mathrm{sem}}}

\newcommand{\sat}{\mathbin{\leq}}

\newcommand{\quotient}{\backslash\!\backslash}

\usetikzlibrary{positioning}
\usetikzlibrary{arrows}
\usetikzlibrary{backgrounds}
\usetikzlibrary{calc}
\usetikzlibrary{automata}

\tikzset{connector/.style={circle, fill=black, inner sep=2pt}} 
\tikzset{main node/.style={circle,draw,font=\sffamily\small,minimum size=12pt, prefix after command= {\pgfextra{\tikzset{every label/.style={font=\sffamily}}}}}} 
\tikzset{action/.style={font=\sffamily}} 

\theoremstyle{thmstyleone}%
\newtheorem{theorem}{Theorem}
\newtheorem{lemma}{Lemma}
\newtheorem{corollary}{Corollary}
\newtheorem{definition}{Definition}

\raggedbottom


\begin{document}

\title[Timed I/O Automata]{Timed I/O Automata: It Is Never Too Late to Complete Your Timed Specification Theory}


\author*[1]{\fnm{Martijn A.} \sur{Goorden}}\email{mgoorden@cs.aau.dk}

\author[1]{\fnm{Kim G.} \sur{Larsen}}\email{kgl@cs.aau.dk}

\author[2]{\fnm{Axel} \sur{Legay}}\email{axel.legay@uclouvain.be}

\author[1]{\fnm{Florian} \sur{Lorber}}\email{florber@cs.aau.dk}

\author[1]{\fnm{Ulrik} \sur{Nyman}}\email{ulrik@cs.aau.dk}

\author[3]{\fnm{Andrzej} \sur{W\k{a}sowski}}\email{wasowski@itu.dk}

\affil*[1]{\orgdiv{Department of Computer Science}, \orgname{Aalborg University}, \orgaddress{\street{Selma Lagerl\"{o}fs Vej 300}, \city{Aalborg \"{O}st}, \postcode{9220}, \country{Danmark}}}

\affil[2]{\orgdiv{Department of Computer Science Engineering}, \orgname{UC Louvain}, \orgaddress{\street{Place Sainte Barbe 2}, \city{Louvain-la-Neuve}, \postcode{1348}, \country{Belgium}}}

\affil[3]{\orgdiv{Department of Computer Science}, \orgname{IT University}, \orgaddress{\street{Rued Langgaards Vej 7}, \city{Copenhagen S}, \postcode{2300}, \country{Denmark}}}


\abstract{A specification theory combines notions of specifications and implementations with a satisfaction relation, a refinement relation and a set of operators supporting stepwise design.
We develop a complete specification framework for real-time systems using Timed I/O Automata as the specification formalism, with the semantics expressed in terms of Timed I/O Transition Systems.
We provide constructs for refinement, consistency checking, logical and structural composition, and quotient of specifications---all indispensable ingredients of a compositional design methodology.
\new{The theory is backed by rigorous proofs and is being implemented in the open-source tool ECDAR.}}

\keywords{Specification theory, timed input-output automata, timed input-output transition systems}



\maketitle

\section{Introduction}
\label{sec:intro}

Modularity is a highly desired property of a software architecture.
Modular software systems are decomposed into components, often designed by independent teams, working under a common agreement on what the interface of each component should be.  Consequently, \emph{compositional reasoning}~\cite{henzinger_embedded_2006,benveniste_contracts_2018}, the mathematical foundations of reasoning about interfaces, is an active research area.  Besides supporting compositional development, it enables compositional reasoning about the system (verification) and allows well-grounded reuse.

In a logical interpretation, interfaces are specifications, while components that implement an interface are understood as models/implementations.  Specification theories should support various features including (1) a \emph {refinement} that allows to compare specifications and to replace a specification by another one in a design, (2) a \emph {logical conjunction} that expresses combining the requirements of two or more specifications, (3) a \emph {structural composition}, which allows to combine specifications, and (4) a \emph{quotient operator} that, being a dual to structural composition, allows decomposing the design by groups of requirements. The latter is crucial to perform incremental design. Also, the operations have to be related by compositional reasoning theorems, guaranteeing both incremental design and independent implementability~\cite{alfaro_interface-based_2005}.

Building good specification theories is the subject of intensive studies~\cite{chakabarti_resource_2003,alfaro_interface_2001,benveniste_contracts_2018}. Interface automata are one such successful direction~\cite{alfaro_interface_2001,alfaro_interface-based_2005,larsen_modal_1989,milner_communication_1989}. In this framework, an interface is represented by an input/output automaton~\cite{lynch_introduction_1988},  i.e., an automaton whose transitions are typed with \emph{input} and \emph{output}. The semantics of such an automaton is given by a two-player game: the \emph{input} player represents the environment, and the \emph{output} player represents the component itself. Contrary to the input/output model proposed by Lynch and Tuttle~\cite{lynch_introduction_1988}, this semantic offers an optimistic treatment of composition: two interfaces can be composed if there exists at least one environment in which they can interact together in a safe way. A timed extension of the theory of interface automata has been motivated by the fact that real time can be a crucial parameter in practice, for example in embedded systems~\cite{alfaro_timed_2002}. 

There have been several other attempts to propose an interface theory for timed systems (see~\cite{alfaro_timed_2002,alfaro_accelerated_2007,bertrand_refinement_2009,bertrand_compositional_2009,cerans_timed_1993,thiele_real-time_2006,lee_handbook_2007,hampus_formally_2022} for some examples). 
Our model shall definitely be viewed as an extension of the timed input/output automaton model proposed by Lynch and Tuttle~\cite{lynch_introduction_1988} and Kaynar et al.~\cite{kaynar_timed_2003}. 
The majors differences are in the game-based treatment of interactions and the addition of quotient and conjunction operators.

In~\cite{alfaro_timed_2002}, de Alfaro et al. suggested {\em timed interfaces}, a model that is similar to the one of TIOTSs. 
Our definition of composition builds on the one proposed in there. 
However, the work in~\cite{alfaro_timed_2002} is incomplete. 
Indeed, there is no notion of implementation and refinement. 
Moreover, conjunction and quotient are not studied. 
Finally, the theory has only been implemented in a prototype tool~\cite{alfaro_accelerated_2007} which does not handle continuous time, while our contribution takes advantages of the powerful Difference Bound Matrices (DBM) representation~\cite{bellman_dynamic_1957,dill_timing_1990,behrmann_uppaal_2002}.

In~\cite{larsen_modal_1989}, Larsen proposes {\em modal automata}, which are deterministic automata equipped with transitions of the following two types: \emph{may} and \emph{must}. 
The components that implement such interfaces are simple labeled transition systems. 
Roughly, a must transition is available in every component that implements the modal specification, while a may transition need not be. 
Recently, \cite{bertrand_refinement_2009,bertrand_compositional_2009} a timed extension of \emph{modal} automata was proposed, which embeds all the operations presented in the present paper. 
However, modalities are orthogonal to inputs and outputs, and it is well-known~\cite{larsen_modal_2007} that, contrary to the game-semantic approach, they cannot be used to distinguish between the behaviors of the component and those of the environment. 

\emph{Component interface specification and consistency.} We represent    specifications by timed input/output transition systems~\cite{kaynar_timed_2003}, i.e., timed  transitions systems whose sets  of discrete transitions are split into input and output transitions. Contrary to~de Alfaro and colleagues \cite{alfaro_timed_2002} and unlike Kaynar et al.\ \cite{kaynar_timed_2003} we distinguish between implementations and  specifications.  This is done by assuming that the  former have fixed timing behavior and they can always advance either by producing an output or delaying. We also provide a game-based methodology to decide whether a specification is consistent, i.e., whether it has at least one implementation. The  latter reduces to deciding existence of a strategy that despite the behavior of the environment will avoid states that cannot possibly satisfy the implementation requirements.

\emph{Refinement and logical conjunction.} A specification $S_1$ {\em refines} a specification $S_2$ iff it is possible to replace $S_2$ with $S_1$ in every environment and obtain a system satisfying the same high-level specification (the substitutability principle). In the input/output setting, checking refinement reduces to deciding an alternating timed simulation between the two specifications~\cite{alfaro_interface_2001}. In our timed extension, checking such simulation can be done with a slight modification of the theory proposed by Bulychev and coauthors~\cite{Bulychev_efficient_2009}. As implementations are specifications, refinement coincides with the satisfaction relation.
Our refinement operator has the {\em model inclusion property}, i.e., $S_1$ refines $S_2$ iff the set of implementations satisfied by $S_1$ is included in the set of implementations satisfied by $S_2$. We also propose a {\em logical conjunction} operator between specifications. Given two specifications, the operator will compute a specification whose implementations are satisfied by both operands. The operation may introduce error states that do not satisfy the implementation requirement. Those states are pruned by synthesizing a strategy for the component to avoid reaching them. \new{As we assume that we want to avoid reaching error states with any possible environment, hence this pruning is called \emph{adversarial pruning}}. We also show that conjunction coincides with shared refinement, i.e., it corresponds to the greatest specification that refines both $S_1$ and $S_2$.

\emph{Structural composition.} Following Timed Interfaces~\cite{alfaro_timed_2002}, specifications interact by synchronizing on inputs and outputs. However, like in I/O \hbox{Automata\,\cite{kaynar_timed_2003,lynch_introduction_1988},} we restrict ourselves to input-enabled systems. This makes it impossible to reach an immediate {\em deadlock state}, where a component proposes an output that cannot be captured by the other component. However, unlike in I/O Automata, input-enabledness shall not be seen as a way to avoid error states. Indeed, such error states can \new{still} be designated by the designer as states which do not warrant desirable temporal properties. \new{When composing specifications together, one would like to simplify the composition as much as possible before continuing the compositional analysis. We show that adversarial pruning does not distribute over the parallel composition operator. Therefore, we introduce the notion of \emph{cooperative pruning}.} \new{Finally, we show that} our composition operator is associative and that refinement is a precongruence with respect to it.

\emph{Quotient.} We propose a quotient operator dual to composition. Intuitively,  given a global specification $T$ of a composite system as well as the specification of an already realized component $S$, the \emph{quotient} will return the most liberal specification $X$ for the missing component, i.e., $X$ is the largest specification  such that $S$ in parallel with $X$ refines $T$.

\new{\emph{Extension over the earlier versions of this paper.} This journal paper is an extended and revised version of earlier conference papers~\cite{david_timed_2010,david_methodologies_2010} and the journal paper~\cite{david_real-time_2015}. In this journal paper, we clarify the notion and effect of pruning by introducing adversarial pruning and cooperative pruning, we show that adversarial pruning (previously just called pruning~\cite{david_timed_2010,david_methodologies_2010,david_real-time_2015}) does not distribute over the parallel composition so we no longer want to and require pruning after each composition, we corrected several definitions, including the one of the quotient, we removed the notion of strictly undesirable states, we included proofs for all theorems, and we updated the section on tool support. In the rest of the paper, we try to indicate changes to the theory with respect to these original works as much as possible.}

\new{\emph{Structure of the paper.} The paper is organized as follows. We continue first by providing a motivating example in Section~\ref{sec:example}. Parts of this example are used later in the paper to illustrate the theory. Section~\ref{sec:specandref} introduces the general framework of timed input/output transition systems and timed input/output automata, the notions of specification and implementation, and the concept of refinement. Section~\ref{sec:consandconj} continues by introducing consistency, the conjunction operator, and adversarial pruning. Then, in Section~\ref{sec:compandcomp} we introduce and discuss parallel composition and in Section~\ref{sec:quotient} the quotient operator. Finally, Section~\ref{sec:conclusion} concludes the paper. }

\section{Motivating Example}\label{sec:example}

Universities operate under increasing pressure and competition. One of the popular factors used in determining the level of national funding is that of societal impact, which is approximated by the number of \new{news articles published based on research outcomes}.  Clearly one would expect that the number (and size) of grants given to a university has a (positive) influence on the number of news articles.

Figure~\ref{fig:university} gives the insight as to the organization of a very small \textsf{University} comprising three components \textsf{Administration}, \textsf{Machine} and \textsf{Researcher}. The \textsf{Administration} is responsible for interaction with society in terms of acquiring  grants (\textsf{grant}) and  \new{writing news articles (\textsf{news})}. However, the other components are necessary for \new{news articles to be obtained}. The \textsf{Researcher} will produce the crucial publications (\textsf{pub}) within given time intervals, provided timely stimuli in terms of  coffee (\textsf{cof}) or tea (\textsf{tea}). Here coffee is clearly preferred over tea. The beverage is provided by a \textsf{Machine}, which given a coin (\textsf{coin}) will provide either coffee or tea within some time interval, or even the possibility of free tea after some time.

\begin{figure}
  \centering

	\begin{tikzpicture}[->,>=stealth',shorten >=1pt, font=\small, scale=1, transform shape]
		\node[draw, shape=rectangle, rounded corners] (adm) {\begin{tikzpicture}[->,>=stealth',shorten >=1pt, align=left,node distance=3cm, scale=.7]
				\node[main node, initial, initial text={}, initial where=left] (1) {};
				\node[main node, label={right:$z \leq 2$}] (2) [right = of 1] {};
				\node[main node] (3) [below = of 2] {};
				\node[main node, label={left:$z \leq 2$}] (4) [below = of 1] {};

				\path[every node/.style={action}]
				(1) edge node[above] {grant?, $z := 0$} (2)
				(2) edge[dashed] node[left] {coin!} (3)
				(3) edge node[above] {pub?, $z := 0$} (4)
				(4) edge[dashed] node[right] {news!} (1)
				;
				\path[bend right = 15, every node/.style={action}]
				(1) edge node[left,pos=.4] {pub?} node[left,pos=.6] {$z:= 0$} (4)
				;
				\path[every node/.style={action}]
				(2) edge[loop above] node[above] {grant?, pub?} (2)
				(3) edge[loop right] node[below] {grant?} (3)
				(4) edge[loop below] node[below] {grant?, pub?} (4)
				;
		\end{tikzpicture}};
		\node[anchor=north west] () at (adm.north west) {Administration};

		\node[draw, shape=rectangle, rounded corners] (machine) [right = 2cm of adm] {\begin{tikzpicture}[->,>=stealth',shorten >=1pt,align=left,node distance=3cm, scale=.7]
				\node[main node, initial, initial text={}, initial where=left] (1) {};
				\node[main node, label={below right:$y \leq 6$}] (2) [below = of 1] {};

				\path[every node/.style={action}]
				(1) edge node[left,pos=.5] {coin?} node[left,pos=.65] {$y := 0$} (2)
				;
				\draw[dashed, rounded corners, every node/.style={action}]
				(2) -- node[above] {cof!} node[below] {$y \geq 4$} ++(180:3) -- (1)
				;
				\draw[dashed, rounded corners, every node/.style={action}]
				(2) -- node[above] {tea!} ++(0:3) -- (1)
				;
				\path[every node/.style={action}]
				(1) edge[loop right, dashed] node[right] {tea!\\$y\geq 2$} (1)
				(2) edge[loop below] node[below] {coin?} (2)
				;
		\end{tikzpicture}};
		\node[anchor=north west] () at (machine.north west) {Machine};

		\coordinate (help) at ($(adm.east)!0.5!(machine.west)$);
		\node[draw, shape=rectangle, rounded corners] (researcher) [below = 3cm of help] {\begin{tikzpicture}[->,>=stealth',shorten >=1pt,align=left,node distance=3cm, scale=.7]
				\node[main node, initial, initial text={}, initial where=left] (1) {};
				\node (mid) [below = of 1] {};
				\node[main node, label={right:$x \leq 4$}] (2) [left = 1.5cm of mid] {};
				\node[main node, label={left:$x \leq 8$}] (3) [right = 1.5cm of mid] {};
				\node[main node, label={right:Err}] (4) [right = of 1] {};

				\path[every node/.style={action}]
				(1) edge node[right, pos=.7] {cof?} node[right, pos=.85] {$x := 0$} (2)
				(1) edge node[left, pos=0.55] {$x \leq 15$} node[left, pos=.7] {tea?} node[left, pos=.85] {$x := 0$} (3)
				(1) edge node[above] {tea?, $x > 15$} (4)
				;
				\draw[dashed, rounded corners, every node/.style={action}]
				(2) -- ++(180:1.5) -- node[left, pos=.3] {pub!} node[left, pos=.5] {$x \geq 2$} node[left, pos=.7] {$x := 0$} (1)
				; 
				\draw[dashed, rounded corners]
				(3) -- ++(0:1.5)     -- node[right, pos=.3] {pub!} node[right, pos=.5] {$x \geq 4$} node[right, pos=.7] {$x := 0$} (1)
				;
				\path[every node/.style={action}]
				(2) edge[loop below] node[below] {cof?, tea?} (2)
				(3) edge[loop below] node[below] {cof?, tea?} (3)
				(4) edge[loop above] node[above] {cof?, tea?} (4)
				(4) edge[loop below, dashed] node[below] {pub!} (4)
				;
		\end{tikzpicture}};
		\node[anchor=north west] () at (researcher.north west) {Researcher};

		\node[connector] (coin1) at (adm.20) {};
		\node[connector] (coin2) at (coin1 -| {machine.west}) {};
		\node[connector] (cof1) at (machine.-135) {};
		\node[connector] (cof2) at (cof1 |- {researcher.north}) {};
		\node[connector] (tea1) at (machine.-120) {};
		\node[connector] (tea2) at (tea1 |- {researcher.north}) {};
		\node[connector] (pub1) at (researcher.130) {};
		\node[connector] (pub2) at (pub1 |- {adm.south}) {};
		\node[connector] (grant1) at (adm.-130) {};
		\node[connector] (news1) at (adm.-120) {};

		\draw[->,>=stealth',shorten >=1pt, every node/.style={action}]
			(coin1) edge node[above] {coin} (coin2)
			(cof1) edge node[left] {cof} (cof2)
			(tea1) edge node[right] {tea} (tea2)
			(pub1) edge node[right] {pub} (pub2)
			(news1) edge node[right] {news} ++(-90:0.7)
		;
		\draw[<-,>=stealth',shorten >=1pt, every node/.style={action}]
			(grant1) edge node[left] {grant} ++(-90:0.7)
		;
	\end{tikzpicture}
  \caption{Specifications  for and  interconnections between  the three
    main components of a modern \textsf{University}: \textsf{Administration}, \textsf{Machine} and \textsf{Researcher}.}
  \label{fig:university}
\end{figure}

In Figure~\ref{fig:university} the three components are specifications, each allowing for a multitude of incomparable, actual implementations differing with respect to exact timing behavior (e.g., at what time are publications actually produced by the \textsf{Researcher} given a coffee) and exact output produced (e.g., does the \textsf{Machine} offer tea or coffee given a coin).

As a first  property, we may want to check that the composition of the three components comprising our \textsf{University} is compatible: we notice that the specification of the \textsf{Researcher} contains an \textsf{Err} state, essentially not providing any guarantees as to what  behavior to expect if tea is offered at a late stage. Now, compatibility checking amounts simply to deciding whether the user of the  \textsf{University} (i.e., the society) has such a strategy for using it that the \textsf{Researcher} will avoid ever entering this error state.

\begin{figure}
	\centering

	\begin{tikzpicture}[->,>=stealth',shorten >=1pt, font=\small, scale=1, transform shape]
		\node[draw, shape=rectangle, rounded corners] (spec) {\begin{tikzpicture}[->,>=stealth',shorten >=1pt, align=left,node distance=3cm, scale=.7]
				\node[main node, initial, initial text={}, initial where=below] (1) {};
				\node[main node, label={right:$u \leq 20$}] (2) [right = of 1] {};
				\node[main node] (3) [left = of 1] {};
				\node () [above= 1.3cm of 1] {};

				\path[every node/.style={action}]
				(1) edge node[above] {grant?} node[below] {$u > 2$} (3)
				;
				\path[bend left = 15, every node/.style={action}]
				(1) edge node[above, align=center] {grant?, $u \leq 2$\\$u := 0$} (2)
				(2) edge[dashed] node[below] {news!, $u := 0$} (1)
				;
				\path[every node/.style={action}]
				(2) edge[loop above] node[above] {grant?} (2)
				(3) edge[loop above] node[above] {grant?} (3)
				(3) edge[loop below, dashed] node[below] {news!} (3)
				;
		\end{tikzpicture}};
		\node[anchor=north west] () at (spec.north west) {Specification};

		\node[connector] (grant1) at (spec.-140) {};
		\node[connector] (news1) at (spec.-40) {};

		\draw[->,>=stealth',shorten >=1pt]
		(news1) edge node[action, right] {news} ++(-90:0.7)
		;
		\draw[<-,>=stealth',shorten >=1pt]
		(grant1) edge node[action,left] {grant} ++(-90:0.7)
		;
	\end{tikzpicture}
  \caption{Overall specification for a \textsf{University}.}
  \label{fig:university_spec}
\end{figure}
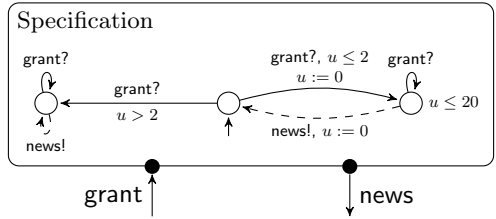

As a second property, we may want to show that the composition of arbitrary implementations conforming to respective component specification is guaranteed to satisfy some overall specification. Here Figure~\ref{fig:university_spec} provides an overall specification (essentially saying that whenever grants are given to the \textsf{University} sufficiently often then news articles are also guaranteed within a certain upper time-bound). 
Checking this property amounts to establishing a refinement between the composition of the three component specifications and the overall specification. We leave the reader in suspense until the concluding section before we reveal whether the refinement actually holds or not!




\section{Specifications and refinements}
\label{sec:specandref}

\old{
\noindent
Throughout the presentation of our specification theory, we continuously switch the mode of discussion between the semantic and syntactic levels. In general, the formal framework is developed for the semantic objects, \emph{Timed I/O Transition Systems} (TIOTSs in short)~\cite{henzinger_timed_1991}, and \new{lifted to the} syntactic constructions for \emph{Timed I/O Automata} (TIOAs), which act as a symbolic and finite representation for TIOTSs. However, it is important to emphasize that the theory for TIOTSs does not rely in any way on the TIOAs representation -- one can build TIOTSs that cannot be represented by TIOAs, and the theory remains sound for them (although we do not know how to manipulate them automatically).
}

\begin{definition}\label{def:tiots}
	A \emph{Timed Input Output Transition System} (TIOTS) is a tuple $S = (Q^S,q_0^S,\mathit{Act}^S,\rightarrow^S)$, where \new{$Q^S$} is \new{usually an infinite} set of states, $q_0\in Q$ the initial state, \new{$\mathit{Act}^S=\mathit{Act}_i^S\uplus\mathit{Act}_o^S$} a finite set of actions partitioned into inputs ($\mathit{Act}_i^S$) and outputs ($\mathit{Act}_o^S$), and $\rightarrow^S \subseteq Q^S \times (\mathit{Act}^S \cup \mathbb{R}_{\geq 0})\times Q^S$ a transition relation satisfying the following conditions: 
	\begin{itemize}[leftmargin=\parindent,align=left,labelwidth=\parindent,labelsep=0pt,itemsep=0pt]
		\item[{[time determinism]}]\ whenever $q\xlongrightarrow{d}{}^{\!\!S} q'$ and $q\xlongrightarrow{d}{}^{\!\!S} q''$, then $q' = q''$
		\item[{[time reflexivity]}]\ $q\xlongrightarrow{0}{}^{\!\!S} q$ for all $q \in Q^S$
		\item[{[time additivity]}]\ for all $q,q'' \in Q^S$ and all $d_1, d_2\in\mathbb{R}_{\geq 0}$ we have $q\xlongrightarrow{d_1+d_2}{}^{\!\!S} q''$ iff $q\xlongrightarrow{d_1}{}^{\!\!S} q'$ and $q'\xlongrightarrow{d_2}{}^{\!\!S} q''$ for some $q'\in Q^S$.
	\end{itemize}
\end{definition}

\new{We write $q\xlongrightarrow{a}{}^{\!\!S} q'$ instead of $(q,a,q')\in\rightarrow^S$ and use $i?$, $o!$, and $d$ to range over inputs, outputs, and $\mathbb{R}_{\geq 0}$, respectively.
When no confusion can arise, for example when only a single specification is given in a definition, we might drop the superscript for readability, like $Q$ instead of $Q^S$ if $S$ is the only given TIOTS.
We write $q\xlongrightarrow{a}$ to indicate that there exists a $q' \in Q$ s.t. $q\xlongrightarrow{a} q'$, and $q\arrownot\xlongrightarrow{a}$ to indicate that there does not exist $q' \in Q$ s.t. $q\xlongrightarrow{a} q'$. 
}%
\old{
In the interest of simplicity, we work with \emph{deterministic} TIOTSs: 
for all $a\in\mathit{Act}\cup\mathbb{R}_{\geq 0}$ whenever $q\xlongrightarrow{a}{}^{\!\!S} q'$ and $q\xlongrightarrow{a}{}^{\!\!S} q''$, we have $q'=q''$ (determinism is required
not only for timed transitions but also for discrete transitions).  In the
rest of the paper, we often drop the adjective `deterministic'.
Finally, the action set $\mathit{Act}^S$ is also called the alphabet.
}

For a TIOTS $S$ and a set of states $X$, we write
\begin{equation*}
	\mathrm{pred}_a^S(X) = \{q \in Q^S \mid \exists q' \in X: q\xlongrightarrow{a}{}^{\!\!S} q'\}
\end{equation*}
for the set of all $a$-predecessors of states in $X$. We write $\mathrm{ipred}^S(X)$ for the set of all input predecessors and $\mathrm{opred}^S(X)$ for all output predecessors of $X$:
\begin{equation*}
	\mathrm{ipred}^S(X) = \bigcup_{a\in\mathit{Act}_i^S} \mathrm{pred}_a^S(X)
\end{equation*}
\begin{equation*}
	\mathrm{opred}^S(X) = \bigcup_{a\in\mathit{Act}_o^S} \mathrm{pred}_a^S(X).
\end{equation*}
Furthermore, $\mathrm{post}_d^S(q)$ is the set of all time successors of a state $q$ that can be reached by delays smaller than $d$:
\begin{equation*}
	\mathrm{post}_d^S(q) = \{q' \in Q^S \mid \exists d'\in[0,d\rangle: q\xlongrightarrow{d'}{}^{\!\!S} q'\}.
\end{equation*}
\noindent \new{Note that $\mathrm{post}_d^S(q)$ is defined using the open interval $[0,d\rangle$, while our previous work used the closed interval $[0,d]$. This subtle difference is needed for its application in adversarial pruning (Section~\ref{sec:consandconj}).}

We shall now introduce a symbolic representation for TIOTSs in terms of Timed I/O Automata (TIOAs). Let $\mathit{Clk}$ be a finite set of {\em clocks}. A {\em clock valuation} over $\mathit{Clk}$ is a mapping $v \in [\mathit{Clk} \mapsto \mathbb{R}_{\geq 0}]$. We write $v + d$ to denote a valuation such that for any clock $r$ we have $(v+d)(r) = \new{v(r)}+d$. Given $d\in \mathbb{R}_{\geq 0}$ \new{and a set of clocks $c$}, we write $v[r \mapsto 0]_{r\in c}$ for a valuation which agrees with $v$ on all values for clocks not in $c$, and returns 0 for all clocks in $c$. \new{So this notation resets the clocks in $c$. For example, $\{x\mapsto 3, y\mapsto 4.5\}[r\mapsto 0]_{r\in\{x\}} \equiv \{x\mapsto 0, y\mapsto 4.5\}$.} 
A guard over $\mathit{Clk}$ is a finite \new{Boolean formula with the usual propositional connectives where clauses are} expressions of the form $x \prec n$, where $x\in \mathit{Clk}$, $\prec\in\{<,\leq,>, \geq, \new{=} \}$, and $n\in\mathbb{N}$. We write $\mathcal{B}(\mathit{Clk})$ for the set of all guards over $\mathit{Clk}$. \new{The notation $\mathbf{T}$ is used for the logical true and $\mathbf{F}$ for the logical false. The reset of a guard $q \in\mathcal{B}(\mathit{Clk})$, denoted by $g[r\mapsto 0]_{r\in c}$, is again a guard where each occurrence of clock $x\in c$ is replaced by $0$. For example $(x < 4 \wedge y > 2)[x \mapsto 0] \equiv 0 < 4 \wedge y > 2 \equiv y > 2$.}

\begin{definition}
	A \emph{Timed Input Output Automaton} (TIOA) is a tuple $A = (\mathit{Loc}, l_0, \mathit{Act}, \mathit{Clk}, E, \mathit{Inv})$ where $\mathit{Loc}$ is a finite set of locations, $l_0\in \mathit{Loc}$ the initial location, $\mathit{Act} = \mathit{Act}_i \uplus \mathit{Act}_o$ is a finite set of actions partitioned into inputs ($\mathit{Act}_i$) and outputs ($\mathit{Act}_o$), $\mathit{Clk}$ a finite set of clocks, $E\subseteq \mathit{Loc} \times \mathit{Act} \times \mathcal{B}(\mathit{Clk}) \times 2^{\mathit{Clk}} \times \mathit{Loc}$ a set of edges, and $\mathit{Inv} : \mathit{Loc} \mapsto \mathcal{B}(\mathit{Clk})$ a location invariant function.
\end{definition}

If $(l,a,\varphi,c,l')\in E$ is an edge, then $l$ is a \new{source} location, $a$ is an action label, $\varphi$ is a \new{guard} over clocks that must be satisfied when the edge is executed, $c$ is a set of clocks to be reset, and $l'$ is a target location. Examples of TIOAs have been shown in the introduction. \new{Note that, contrary to standard definitions of timed automata, guards and invariants are allowed to be a Boolean formula using all usual propositional connectives, including the disjunction. The disjunction connective naturally arises in adversarial pruning (Section~\ref{sec:consandconj}) and the quotient (Section~\ref{sec:quotient}).}

\begin{definition}\label{def:semanticTIOA}
	The \emph{semantic} of a TIOA $A = (\mathit{Loc}, l_0, \mathit{Act}, \mathit{Clk}, E, \mathit{Inv})$ is the TIOTS $\sem{A} = (\mathit{Loc}\times [\mathit{Clk}\mapsto \mathbb{R}_{\geq 0}], (l_0,\bm{0}), \allowbreak \mathit{Act}, \rightarrow)$, where $\bm{0}$ is a constant function mapping all clocks to zero, \new{$\bm{0}\models \mathit{Inv}(l_0)$}, and $\rightarrow$ is the largest transition relation generated by the following rules:
	\begin{itemize}
		\item Each $(l,a,\varphi, c, l') \in E$ gives rise to $(l, v)\xlongrightarrow{a}(l', v')$ for each clock valuation $v\in [\mathit{Clk} \mapsto \mathbb{R}_{\geq 0}]$ such that $v\models \varphi$ and $v' = v[r\mapsto 0]_{r\in c}$ and $v' \models \mathit{Inv}(l')$.
		\item Each location $l\in \mathit{Loc}$ with a valuation $v\in [\mathit{Clk} \mapsto \mathbb{R}_{\geq 0}]$ gives rise to a transition $(l,v)\xlongrightarrow{d}(l,v+d)$ for each delay $d\in\mathbb{R}_{\geq 0}$ such that $v+d \models \mathit{Inv}(l)$ \new{and $\forall d' \in\mathbb{R}_{\geq 0}, d' < d: v+d' \models \mathit{Inv}(l)$}.
	\end{itemize}
\end{definition}

\new{Compared to the definition of the semantic of TIOAs from previous works~\cite{david_timed_2010,david_methodologies_2010,david_real-time_2015}, we additionally require 1) for a delay transition that $\forall d' \in\mathbb{R}_{\geq 0}, d' < d: v+d' \models \mathit{Inv}(l)$, since guards and invariants are relaxed to Boolean formulas using conjunction \emph{and} disjunction, and 2) $\bm{0}\models \mathit{Inv}(l_0)$ to prevent an undesirable edge case for the initial state.}
Note that the TIOTSs induced by TIOAs satisfy the axioms 1--3 of Definition~\ref{def:tiots}. In order to guarantee determinism, the TIOA has to be deterministic: for each action--location pair, \new{if more than} one edge is enabled at the same time, \new{the resets and target locations need to be the same}. This is a standard check. We assume that all TIOAs below are deterministic.

Having introduced a syntactic representation for TIOTSs, we now turn back to the semantic level in order to define the basic concepts of implementation and specification.

\begin{definition}\label{def:specification}
	A TIOTS $S$ is a \emph{specification} if each of its states $q\in Q$ is input-enabled: $\forall i?\in \mathit{Act}_i : \exists q'\in Q$ s.t. $q\xlongrightarrow{i?} q'$. A TIOA $A$ is a \emph{specification automaton} if its semantic $\sem{A}$ is a specification.
\end{definition}

The assumption of input-enabledness, also seen in many interface theories~\cite{lynch_automata_1988,garland_ioa_1998,stark_process-algebraic_2003,vaandrager_relationship_1991,nicola_process_1995}, reflects our belief that an input cannot be prevented from being sent to a system, but it might be unpredictable how the system behaves after receiving it. Input-enbledness encourages explicit modeling of this unpredictability, and compositional reasoning about it; for example, deciding if an unpredictable behavior of one component induces unpredictability of the entire system.

In practice tools can interpret absent input transitions in at least two reasonable ways. First, they can be interpreted as ignored inputs, corresponding to location loops in the automaton. Second, they may be seen as unavailable (`blocking') inputs, which can be achieved by assuming implicit transitions to a designated error state. 

The role of specifications in a specification theory is to abstract, or underspecify, sets of possible implementations. \emph{Implementations} are concrete executable realizations of systems.  We will assume that implementations of timed systems have fixed timing behavior (outputs occur at predictable times) and systems can always advance either by producing an output or delaying. This is formalized using axioms of \emph{output-urgency} and \emph{independent-progress} below.

\begin{definition}\label{def:implementation}
	A specification $P = (Q, q_0, \mathit{Act},\rightarrow)$ is an \emph{implementation} if for each state $q\in Q$ we have
	\begin{itemize}[leftmargin=\parindent,align=left,labelwidth=\parindent,labelsep=0pt,itemsep=0pt]
		\item[{[output urgency]}]\ $\forall q',q''\in Q$, if $q\xlongrightarrow{o!}{}^{\!\!P} q'$ and $q\xlongrightarrow{d}{}^{\!\!P} q''$ for some $o!\in\mathit{Act}_o$ and $d\in\mathbb{R}_{\geq 0}$, then $d = 0$.
		\item[{[independent progress]}]\ either $\forall d \in \mathbb{R}_{\geq 0} : q\xlongrightarrow{d}{}^{\!\!P}$ or $\exists d\in\mathbb{R}_{\geq 0}, \exists o!\in\mathit{Act}_o$ s.t. $q\xlongrightarrow{d}{}^{\!\!P}q'$ and $q'\xlongrightarrow{o!}{}^{\!\!P}$. 
	\end{itemize}
	\new{A specification automaton $A$ is an \emph{implementation automaton} if its semantic $\sem{A}$ is an implementation.}
\end{definition}

Independent progress is one of the central properties in our theory: it states that an implementation cannot ever get stuck in a state where it is up to the environment to induce progress. \new{So in every state there either exists an ability to delay until an output is possible or the state can delay indefinitely.} An implementation cannot wait for an input from the environment without letting time pass. \new{Unfortunately, implementations might contain zeno behavior, for example, a state having an output action as a self-loop might stop time by firing this transition infinitely often. So time should be able to diverge~\cite{alfaro_element_2003}. Yet, to verify whether an implementation has time divergence, we need to analyze it in the context of an environment to form a closed-system. Environments could both ensure or prevent time to diverge, so one cannot determine time divergence by analyzing the system without an environment. In this paper, we focus on specifying components as part of a system. Therefore, we ignore the problem of time divergence for now and postpone it to future work.}

A notion of \emph{refinement} allows to compare two specifications as well as to relate an implementation to a specification. Refinement should \new{be a pre-congruence when we compose several specifications of a system together.} This is formalized with Theorem~\ref{thm:precongruence} in Section~\ref{sec:compandcomp}.

\begin{figure}
	\centering
	\begin{tikzpicture}
		\draw (0,0) -- (0,2.4) -- (1,2.4) -- (1,0);
		
		\node[scale=1.5] () at (0.5,0.4) {$\xlongrightarrow{d}$};
		\node[scale=1.5] () at (0.5,1.2) {$\xlongrightarrow{o!}$};
		\node[scale=1.5] () at (0.5,2.0) {$\xlongleftarrow{i?}$};
		\node[scale=1.5,anchor=south] () at (0.5,2.5) {$\leq$};
		
		\node[scale=2] () at (-0.5,1.2) {$S$};
		\node[scale=2] () at (1.5,1.2) {$T$};
	\end{tikzpicture}
	\caption{\new{Visual representation of the simulation relation defined by refinement.}}
	\label{fig:ref}
\end{figure}
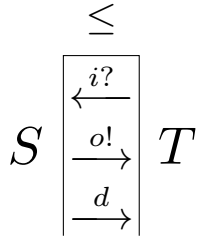

We study these kind of properties in later sections. It is well known from the literature~\cite{alfaro_interface_2001,alfaro_interface-based_2005,Bulychev_efficient_2009} that in order to give these kind of guarantees a refinement should have the flavour of {\em alternating (timed) simulation}~\cite{alur_alternating_1998}. \new{Figure~\ref{fig:ref} shows a visual representation of the direction of the simulation relation captures by refinement. While it is typical to define simulation relations on transitions systems that have equal alphabet, we relaxed that in our definition of refinement below. Then it fits the main theorem of quotient in Section~\ref{sec:quotient} and it matches the usage in practical examples, see for example the university example in this paper.}

\begin{definition}\label{def:refinement}
	Given specifications $S = (Q^S,q_0^S,\mathit{Act}^S,\rightarrow^S)$ and $T = (Q^T,q_0^T,\mathit{Act}^T,\rightarrow^T)$ \new{where $\mathit{Act}_i^S \cap \mathit{Act}_o^T =\emptyset$, $\mathit{Act}_o^S \cap \mathit{Act}_i^T =\emptyset$, $\mathit{Act}_i^S \subseteq \mathit{Act}_i^T$, and $\mathit{Act}_o^T\subseteq \mathit{Act}_o^S$}. \emph{$S$ refines $T$}, denoted by $S\leq T$, iff there exists a binary relation $R\subseteq Q^S\times Q^T$ such that $(q_0^S,q_0^T) \in R$ and for each pair of states $(s,t)\in R$ it holds that
	\begin{enumerate}[itemsep=0pt]
		\item Whenever $t\xlongrightarrow{i?}{}^{\!\!T} t'$ for some $t'\in Q^T$ and $i?\in\mathit{Act}_i^T \cap \mathit{Act}_i^S$, then $s\xlongrightarrow{i?}{}^{\!\!S}s'$ and $(s',t')\in R$ for some $s'\in Q^S$
		\new{\item Whenever $t\xlongrightarrow{i?}{}^{\!\!T} t'$ for some $t'\in Q^T$ and $i?\in\mathit{Act}_i^T \setminus \mathit{Act}_i^S$, then $(s,t')\in R$}
		\item Whenever $s\xlongrightarrow{o!}{}^{\!\!S} s'$ for some $s'\in Q^S$ and $o!\in\mathit{Act}_o^S \cap \mathit{Act}_o^T$, then $t\xlongrightarrow{o!}{}^{\!\!T}t'$ and $(s',t')\in R$ for some $t'\in Q^T$
		\new{\item Whenever $s\xlongrightarrow{o!}{}^{\!\!S} s'$ for some $s'\in Q^S$ and $o!\in\mathit{Act}_o^S \setminus \mathit{Act}_o^T$, then $(s',t)\in R$}
		\item Whenever $s\xlongrightarrow{d}{}^{\!\!S} s'$ for some $s'\in Q^S$ and $d\in \mathbb{R}_{\geq 0}$, then $t\xlongrightarrow{d}{}^{\!\!T}t'$ and $(s',t')\in R$ for some $t'\in Q^T$
	\end{enumerate}
	A specification automaton $A$ \emph{refines} another specification automaton $B$, denoted by $A\leq B$, iff $\sem{A} \leq \sem{B}$.
\end{definition}

It is easy to see that the refinement is reflexive. \new{Refinement is only transitive under specific conditions. These conditions are captured in Lemma~\ref{lemma:refinementtransitive}. A special case satisfying these conditions is when the action sets of all specifications are the same.}  Refinement can be checked for specification automata by reducing the problem to a specific refinement game, and using a symbolic representation to reason about it. Figure~\ref{fig:machine2} shows a coffee machine that is a refinement of the one in Figure~\ref{fig:university}. It has been refined in two ways: one output transition has been completely dropped and one state invariant has been tightened.

\begin{lemma}\label{lemma:refinementtransitive}
	\new{Given specifications $S^i = (Q^i,q_0^i,\mathit{Act}^i,\rightarrow^i)$ with $i\in \{1,2,3\}$. If $S^1\leq S^2$, $S^2\leq S^3$, $\mathit{Act}_i^1 \cap \mathit{Act}_o^3 =\emptyset$, and $\mathit{Act}_o^1 \cap \mathit{Act}_i^3 =\emptyset$, then $S^1 \leq S^3$.}
\end{lemma}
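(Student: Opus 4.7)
The plan is to build a witnessing relation for $S^1 \leq S^3$ by composing the two given witnesses. Let $R^{12} \subseteq Q^1 \times Q^2$ and $R^{23} \subseteq Q^2 \times Q^3$ be the relations witnessing $S^1 \leq S^2$ and $S^2 \leq S^3$ according to Definition~\ref{def:refinement}. Define
\[
R \;=\; \bigl\{(s_1,s_3) \in Q^1 \times Q^3 \;\big|\; \exists s_2 \in Q^2:\ (s_1,s_2)\in R^{12} \ \wedge\ (s_2,s_3)\in R^{23}\bigr\}.
\]
I would first discharge the alphabet side-conditions required by Definition~\ref{def:refinement} for $S^1 \leq S^3$. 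From $S^1 \leq S^2$ and $S^2 \leq S^3$ we get $\mathit{Act}_i^1 \subseteq \mathit{Act}_i^2 \subseteq \mathit{Act}_i^3$ and $\mathit{Act}_o^3 \subseteq \mathit{Act}_o^2 \subseteq \mathit{Act}_o^1$ by chaining inclusions; the remaining disjointness requirements $\mathit{Act}_i^1 \cap \mathit{Act}_o^3 = \emptyset$ and $\mathit{Act}_o^1 \cap \mathit{Act}_i^3 = \emptyset$ are exactly the extra hypotheses of the lemma. Since $(q_0^1,q_0^2)\in R^{12}$ and $(q_0^2,q_0^3)\in R^{23}$, we have $(q_0^1,q_0^3)\in R$.

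Next I would verify the five clauses of Definition~\ref{def:refinement} for an arbitrary pair $(s_1,s_3)\in R$ with witness $s_2$. Clauses 1, 3, and 5 (shared-alphabet inputs, shared-alphabet outputs, and delays) are straightforward stitching arguments: apply the relevant clause on one side to hop through $s_2$ to a fresh intermediate state $s_2'$, then apply the analogous clause on the other side to produce the matching $s_1'$ or $s_3'$, and conclude $(s_1',s_3')\in R$ via $s_2'$. For delays this uses clause 5 of $R^{12}$ followed by clause 5 of $R^{23}$; for shared inputs one travels from $s_3$ down to $s_1$; for shared outputs one travels from $s_1$ up to $s_3$.

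The part that I expect to require the most care is clauses 2 and 4, which handle actions that appear only in the higher specification's alphabet, because one must do a case split on whether the mid-level alphabet $\mathit{Act}^2$ contains the action. For clause 2, assume $s_3 \xlongrightarrow{i?}{}^{\!\!3}\! s_3'$ with $i? \in \mathit{Act}_i^3 \setminus \mathit{Act}_i^1$. If $i? \in \mathit{Act}_i^2$, use clause 1 of $R^{23}$ to obtain $s_2 \xlongrightarrow{i?}{}^{\!\!2}\! s_2'$ with $(s_2',s_3')\in R^{23}$, and then clause 2 of $R^{12}$ (since $i? \in \mathit{Act}_i^2 \setminus \mathit{Act}_i^1$) gives $(s_1,s_2')\in R^{12}$, hence $(s_1,s_3')\in R$. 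Otherwise $i? \in \mathit{Act}_i^3 \setminus \mathit{Act}_i^2$, so clause 2 of $R^{23}$ gives $(s_2,s_3')\in R^{23}$ directly, and $(s_1,s_2)\in R^{12}$ is unchanged, so again $(s_1,s_3')\in R$. Clause 4 is handled symmetrically, starting from $s_1 \xlongrightarrow{o!}{}^{\!\!1}\! s_1'$ with $o! \in \mathit{Act}_o^1 \setminus \mathit{Act}_o^3$ and splitting on whether $o! \in \mathit{Act}_o^2$, using clauses 3 and 4 of $R^{12}$ and $R^{23}$ appropriately.

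Having verified all clauses, $R$ witnesses $S^1 \leq S^3$. The two disjointness hypotheses cannot be omitted: they are what ensure that an action kept internal at the middle level but reintroduced on the opposite polarity at level 3 does not create an alphabet clash between $S^1$ and $S^3$, which is why they appear as premises rather than as consequences.
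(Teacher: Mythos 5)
Your proof is correct and follows essentially the same route as the paper's: both discharge the alphabet side-conditions by chaining the inclusions from the two refinements together with the lemma's disjointness hypotheses, and both take $R$ to be the relational composition of the two witnesses, the paper merely summarizing the clause-by-clause verification as ``a standard co-inductive argument'' where you spell out the case split on membership in $\mathit{Act}^2$ for the non-shared-action clauses.
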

\begin{proof}
	
	\new{
	($\Rightarrow$)
	We first show that the action sets of $S^1$ and $S^3$ satisfy the conditions of refinement. From $S^1\leq S^2$ it follows that $\mathit{Act}_i^1 \subseteq \mathit{Act}_i^2$, and $\mathit{Act}_o^2\subseteq \mathit{Act}_o^1$; similarly, from $S^2\leq S^3$ it follows that $\mathit{Act}_i^2 \subseteq \mathit{Act}_i^3$, and $\mathit{Act}_o^3\subseteq \mathit{Act}_o^2$. Combining this results in $\mathit{Act}_i^1 \subseteq \mathit{Act}_i^3$, and $\mathit{Act}_o^3\subseteq \mathit{Act}_o^1$. Together with the antecedent and Definition~\ref{def:refinement} of refinement we can conclude that action sets of $S^1$ and $S^3$ satisfy the conditions of refinement.
	}

	\new{
	It remains to show that there exists a relation $R^{13}$ witnessing $S^1 \leq S^3$. Let $R^{12}$ and $R^{23}$ the relations witnessing $S^1 \leq S^2$ and $S^2\leq S^3$, respectively. Using a standard co-inductive argument it can be shown that
	\begin{equation*}
		R^{13} = \left\{ (q^1, q^3) \in R^{13} \mid \exists q^2 \in Q^2: (q^1, q^2)\in R^{12} \wedge (q^2, q^3)\in R^{23} \right\}
	\end{equation*}
	witnesses $S^1 \leq S^3$.
	}

\end{proof}

\begin{figure}
  \centering
	
	\begin{tikzpicture}[->,>=stealth',shorten >=1pt, font=\small, scale=1, transform shape]
		\node[draw, shape=rectangle, rounded corners] (machine) [right = 2cm of adm] {\begin{tikzpicture}[->,>=stealth',shorten >=1pt,align=left,node distance=3cm, scale=.7]
				\node[main node, initial, initial text={}, initial where=left] (1) {};
				\node[main node, label={below right:$y \leq 5$}] (2) [below = of 1] {};
				
				\path[every node/.style={action}]
				(1) edge node[left,pos=.5] {coin?} node[left,pos=.65] {$y := 0$} (2)
				;
				\draw[dashed, rounded corners, every node/.style={action}]
				(2) -- node[above] {cof!} node[below] {$y \geq 4$} ++(180:3) -- (1)
				;
				\path[every node/.style={action}]
				(1) edge[loop right, dashed] node[right] {tea!\\$y\geq 2$} (1)
				(2) edge[loop below] node[below] {coin?} (2)
				;
		\end{tikzpicture}};
		\node[anchor=north west] () at (machine.north west) {Machine};
		
		\node[connector] (coin) at (machine.-120) {};
		\node[connector] (cof) at (machine.-90) {};
		\node[connector] (tea) at (machine.-60) {};
		
		\draw[->,>=stealth',shorten >=1pt] 
		(cof) edge node[action, right] {cof} ++(-90:0.7)
		(tea) edge node[action, right] {tea} ++(-90:0.7)
		;
		\draw[<-,>=stealth',shorten >=1pt]
		(coin) edge node[action,left] {coin} ++(-90:0.7)
		;
	\end{tikzpicture}
  \caption{A coffee machine specification that refines the coffee machine in Figure~\ref{fig:university}.}
  \label{fig:machine2}
\end{figure}
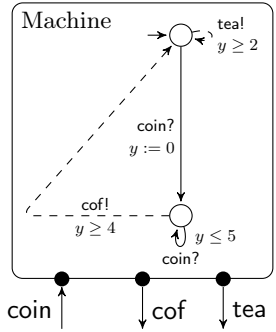

Since our implementations are a subclass of specifications, we simply use \emph{refinement} as an implementation relation.
\begin{definition}\label{def:satisfaction} 
    An implementation $P$ satisfies a specification $S$ iff $P \leq S$. We write $\mod{S}$ for the set of all implementations of $S$, so $\mod{S} = \{ P \mid P \sat S \}$.  
\end{definition}

From a logical perspective, specifications are like formulae, and implementations are their models. This analogy leads us to a classical notion of consistency as existence of models.

\begin{definition}\label{def:consistency}
	A specification $S$ is \emph{consistent} iff there exists an implementation $P$ such that $P\leq S$. \new{A specification automaton $A$ is \emph{consistent} iff its semantic $\sem{A}$ is consistent.}
\end{definition}

All specification automata in Figure~\ref{fig:university} are consistent. An example of an inconsistent specification can be found in Figure~\ref{fig:inconsistent}. Notice that the invariant in the second state ($x \leq 4$) is stronger than the guard ($x \geq 5$) on the \textsf{cof} edge. \new{This location violates the independent progress property.}

\begin{figure}
  \centering

	\begin{tikzpicture}[->,>=stealth',shorten >=1pt, font=\small, scale=1, transform shape]
		\node[draw, shape=rectangle, rounded corners] (machine) [right = 2cm of adm] {\begin{tikzpicture}[->,>=stealth',shorten >=1pt,align=left,node distance=3cm, scale=.7]
				\node[main node, initial, initial text={}, initial where=left] (1) {};
				\node[main node, label={above:$x \leq 4$}] (2) [right = of 1] {};
				\node [above=.7cm of 1] {}; 
				
				\path[bend left=15,every node/.style={action}]
				(1) edge node[above,pos=.5] {coin?} (2)
				(2) edge[dashed] node[below] {cof!, $x \geq 5$} (1)
				;
				\path[every node/.style={action}]
				(2) edge[loop below] node[below] {coin?} (2)
				;
		\end{tikzpicture}};
		\node[anchor=north west] () at (machine.north west) {Inconsistent};
		
		\node[connector] (coin) at (machine.-140) {};
		\node[connector] (cof) at (machine.-90) {};
		\node[connector] (tea) at (machine.-40) {};
		
		\draw[->,>=stealth',shorten >=1pt] 
		(cof) edge node[action, right] {cof} ++(-90:0.7)
		(tea) edge node[action, right] {tea} ++(-90:0.7)
		;
		\draw[<-,>=stealth',shorten >=1pt]
		(coin) edge node[action,left] {coin} ++(-90:0.7)
		;
	\end{tikzpicture}
  \caption{An inconsistent specification.}
  \label{fig:inconsistent}
\end{figure}
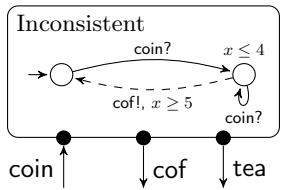

We also define a stricter, more syntactic, notion of consistency, which requires that all states are consistent.

\begin{definition}\label{def:local-consistency}
	A specification $S$ is \emph{locally consistent} iff every state $s\in Q$ allows independent progress. \new{A specification automaton $A$ is \emph{locally consistent} iff its semantic $\sem{A}$ is locally consistent.}
\end{definition}

\begin{theorem}\label{thm:local-consistency} 
    Every locally consistent specification is consistent in the sense of Definition~\ref{def:consistency}.  
\end{theorem}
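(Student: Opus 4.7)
The plan is to exhibit, for an arbitrary locally consistent specification $S = (Q, q_0, \mathit{Act}, \rightarrow)$, a concrete implementation $P$ together with a witnessing relation $R$ establishing $P \leq S$. The idea is to construct $P$ as a scheduled restriction of $S$: keep every input transition (to preserve input-enabledness) but resolve the non-determinism of outputs and delays using witnesses supplied by local consistency.

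First, I would invoke independent progress at every state $q \in Q$: either (i) $q \xlongrightarrow{d}$ for all $d \in \mathbb{R}_{\geq 0}$, or (ii) there exist $d_q \in \mathbb{R}_{\geq 0}$ and $o_q! \in \mathit{Act}_o$ with $q \xlongrightarrow{d_q} q_q' \xlongrightarrow{o_q!} q_q''$. Using the axiom of choice, fix one such witness $(d_q, o_q!, q_q')$ per state in case (ii), and mark $q$ as delay-type in case (i). Second, I would take the state space of $P$ to be pairs $(q, \tau) \in Q \times \mathbb{R}_{\geq 0}$ where $\tau$ records the time elapsed since the last input/output transition, with initial state $(q_0, 0)$. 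Transitions of $P$ are defined as follows: an input $i?$ from $(q,\tau)$ mirrors $q \xlongrightarrow{i?}^S q'$ and resets $\tau$ to $0$, a delay $d$ from $(q,\tau)$ is allowed provided $q \xlongrightarrow{d}^S q^{+d}$ and, if $q$ belongs to case (ii), $\tau + d \leq d_{q_0^{\mathrm{epoch}}}$ where the epoch-base $q_0^{\mathrm{epoch}}$ is the $S$-state at the last input/output, and the output $o_{q_0^{\mathrm{epoch}}}!$ fires precisely when $\tau = d_{q_0^{\mathrm{epoch}}}$, resetting $\tau$ and moving to $(q_{q_0^{\mathrm{epoch}}}', 0)$.

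Third, I would verify that $P$ is a TIOTS (time determinism, reflexivity, and additivity all descend from $S$'s axioms together with the additive accounting of $\tau$), is a specification (input-enabledness is preserved by construction since we kept every input transition of $S$), and is an implementation: output urgency holds because outputs fire only at the scheduled boundary $\tau = d_{q_0^{\mathrm{epoch}}}$ where no positive delay is enabled, and independent progress holds in case (i) because $P$ inherits unbounded delay, and in case (ii) by the scheduled output. Fourth, I would exhibit
\[ R = \{\, ((q,\tau),\, q^\star) \,\mid\, q^\star \text{ is the (unique, by time determinism) state with } q \xlongrightarrow{\tau}^S q^\star \,\} \]
and check the four clauses of Definition~\ref{def:refinement}, which reduce to reading off the corresponding transitions in $S$ from the construction of $P$.

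The main obstacle is making the epoch bookkeeping rigorous so that time additivity survives in $P$ and the scheduled output aligns correctly across split delays. One safer alternative, which I would fall back on if the direct construction turns out awkward, is to unfold $S$ into a strategy tree rooted at $(q_0, 0)$: nodes are finite histories of input/output transitions in $S$, branching follows the fixed witnesses for outputs and \emph{all} successors for inputs, and delays label edges within a node's epoch. This pays the price of a larger state space but eliminates the subtleties of global bookkeeping, and the refinement relation $R$ then becomes the obvious projection onto $S$-states.
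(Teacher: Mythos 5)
Your overall strategy is the same as the paper's: restrict $S$ to a deterministic ``schedule'' that keeps every input transition, caps delays at a chosen witness of independent progress, fires the witnessed output exactly at that cap, and then relates the result back to $S$ by (essentially) projection. The paper, however, realizes this without any extra clock or epoch bookkeeping: it defines $\delta(s)$ as the \emph{infimum} delay after which some output is enabled from $s$ (or $+\infty$), observes that $\delta$ is automatically time-additive along delay transitions of $S$ (if $s\xrightarrow{d} s'$ with $d\le\delta(s)$ then $\delta(s')=\delta(s)-d$, by time additivity of $\rightarrow^S$), and builds $P$ directly on the state set $Q^S$ with rules ``allow delay $d$ iff $d\le\delta(s)$'' and ``allow output iff $\delta(s)=0$''; the witnessing relation is then the identity on $Q^S$. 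Because $\delta$ is a canonical function of the state alone, the state of $S$ already carries all the scheduling information, and your worry about additivity surviving the epoch accounting simply dissolves. Your per-state chosen witnesses $d_q$ are \emph{not} canonical in this way, which is exactly why you are forced into the product with $\mathbb{R}_{\geq 0}$ and the epoch machinery.

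One concrete defect you should repair if you keep your version: the construction is internally inconsistent about what the first component of $(q,\tau)$ denotes. Your relation $R$ maps $(q,\tau)$ to the $\tau$-successor $q^\star$ of $q$, so $q$ must be the epoch base; but your input rule mirrors transitions $q\xrightarrow{i?} q'$ of the \emph{base} state rather than of $q^\star$, and your delay rule tests $q\xrightarrow{d}$ rather than $q^\star\xrightarrow{d}$ (equivalently $q\xrightarrow{\tau+d}$). As written, $P$ either simulates the wrong state or has a transition relation that is not a function of its own states (the epoch base is not recoverable from the current $S$-state, since time-determinism gives unique successors, not unique predecessors). All three discrete/delay rules must be read off $q^\star$; with that fix, and with the refinement clauses checked against $q^\star$, the argument goes through and matches the paper's.
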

\noindent The proof of Theorem~\ref{thm:local-consistency} can be found in Appendix~\ref{app:proofs-spec}.

The opposite implication in the theorem does not hold as we shall see later. Local consistency, or independent progress, can be checked for specification automata establishing local consistency for the syntactic representation. Technically it suffices to check for each location that if the supremum of all solutions of every location invariant exists then it satisfies the invariant itself and allows at least one enabled output transition.

Prior specification theories for discrete time~\cite{larsen_modal_1989} and probabilistic~\cite{caillaud_compositional_2009} systems reveal two main requirements for a definition of implementation. These are the same requirements that are typically imposed on a definition of a model as a special case of a logical formula. First, implementations should be consistent specifications (logically, models correspond to some consistent formulae). Second, implementations should be most specified (models cannot be refined by non-models), as opposed to proper specifications, which should be \emph{underspecified}. For example, in propositional logics, a model is represented as a complete consistent term. Any implicant of such a term is also a model (in propositional logics, it is actually equivalent to it).

Our definition of implementation satisfies both requirements, and to the best of our knowledge, it is the first example of a proper notion of implementation for timed specifications. As the refinement is reflexive we get $P\sat P$ for any implementation and thus each implementation is consistent as per Definition~\ref{def:consistency}. Furthermore, each implementation cannot be refined anymore by any underspecified specifications.
\begin{theorem}\label{thm:implementations-minimum}
	Any locally consistent specification $S$ refining an implementation $P$ is an implementation as per Definition~\ref{def:implementation}.
\end{theorem}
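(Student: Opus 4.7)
The plan is to verify both defining conditions of Definition~\ref{def:implementation} for $S$ separately. Independent progress at every state of $S$ is immediate, since it is by definition exactly what local consistency (Definition~\ref{def:local-consistency}) asserts. The bulk of the work is thus to establish output urgency for $S$, and I would do this by transporting the corresponding property from $P$ along the refinement $S \leq P$.

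Let $R \subseteq Q^S \times Q^P$ witness $S \leq P$. To show output urgency, I fix any reachable state $s \in Q^S$ together with an $R$-related state $p \in Q^P$ (reachability in $S$ always lifts to a related state in $P$ through $R$, starting from $(q_0^S, q_0^P) \in R$ and using clauses 1, 3 and 5 to match steps along any witness run). Suppose $s \xlongrightarrow{o!}{}^{\!\!S} s'$ for some $o! \in \mathit{Act}_o^S$ and $s \xlongrightarrow{d}{}^{\!\!S} s''$. Clause 5 of Definition~\ref{def:refinement} yields $p \xlongrightarrow{d}{}^{\!\!P} p''$ with $(s'', p'') \in R$, and, in the main case $o! \in \mathit{Act}_o^S \cap \mathit{Act}_o^P$, clause 3 yields $p \xlongrightarrow{o!}{}^{\!\!P} p'$ with $(s', p') \in R$. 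Since $P$ is an implementation, output urgency of $P$ at $p$ forces $d = 0$, which is exactly what output urgency of $S$ demands.

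The main obstacle I foresee is the residual case $o! \in \mathit{Act}_o^S \setminus \mathit{Act}_o^P$: clause 4 only produces $(s', p) \in R$ without exhibiting a matching output transition at $p$, so the argument above cannot be invoked verbatim. I expect this case to be handled either by the implicit convention that the refined implementation $P$ shares the output alphabet with $S$ (in the standard reading of implementation-of-a-specification, the output alphabets coincide), or by an auxiliary argument that uses local consistency of $S$ at $s'$ together with the fact that the time-successor $p''$ must still support independent progress in $P$, to derive $d = 0$ indirectly. A minor secondary point is that Definition~\ref{def:implementation} quantifies over all states, whereas $R$ only constrains reachable pairs; this is resolved by restricting attention to reachable states, since only they can actually witness a violation of the axiom, with the obvious convention that unreachable states trivially satisfy an implication whose hypothesis is never observed.
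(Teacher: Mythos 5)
Your proof takes essentially the same route as the paper's: independent progress follows directly from local consistency, and output urgency is transported from $P$ to $S$ along the refinement relation after restricting attention to reachable states (the paper likewise assumes without loss of generality that $S$ contains only reachable states and argues that each must be $R$-related to some state of $P$). The residual case $o!\in\mathit{Act}_o^S\setminus\mathit{Act}_o^P$ that you flag is silently glossed over in the paper's proof as well---it simply asserts that a matching $p\xlongrightarrow{o!}{}^{\!\!P}p'$ must exist, implicitly assuming the output alphabets coincide---so your explicit identification of that obstacle is, if anything, more careful than the published argument.
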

\noindent The proof of Theorem~\ref{thm:implementations-minimum} can be found in Appendix~\ref{app:proofs-spec}.

We conclude the section with the first major theorem. Observe that every preorder $\preceq$ is intrinsically complete in the following sense: $S\preceq T$ iff for every smaller element $P \preceq S$ also $P \preceq T$. This means that a refinement of two specifications coincides with inclusion of sets of all the specifications refining each of them. However, since out of all specifications only the implementations correspond to real world objects, another completeness question is more relevant: does the refinement coincide with the inclusion of implementation sets? This property, which does not hold for any preorder in general, turns out to hold for our refinement.
\begin{theorem} \label{thm:completeness} 
    For any two locally consistent specifications $S$, $T$ \new{having the same action set} we have that $S \leq T$ iff $\mod{S} \subseteq \mod{T}$.
\end{theorem}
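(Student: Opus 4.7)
The forward direction is immediate: if $S\leq T$ and $P\in \mod{S}$, then $P\leq S$ and $S\leq T$, so Lemma~\ref{lemma:refinementtransitive} applies directly (with $\mathit{Act}^S = \mathit{Act}^T$, its side conditions on action-set disjointness hold trivially), giving $P\leq T$ and hence $P\in\mod{T}$.

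For the converse I argue by contraposition. Assuming $S\not\leq T$, the plan is to construct an implementation $P$ with $P\in\mod{S}\setminus\mod{T}$. Since no relation satisfying Definition~\ref{def:refinement} contains $(q_0^S,q_0^T)$, the refinement ``game'' admits a finite refuting play: an alternating sequence of matched moves from $(q_0^S,q_0^T)$ to some pair $(s,t)$ at which one of the four clauses of Definition~\ref{def:refinement} definitively fails. The failure falls into one of three shapes: (a)~$T$ has an input $i?$ whose every corresponding $S$-successor is itself already refuted against the matching $T$-successor; (b)~$s$ has an output $o!$ that $t$ cannot match; or (c)~$s$ admits a delay $d$ that $t$ cannot match.

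The implementation $P$ is then built in two layers. Along the refuting play, $P$ deterministically replays the $S$-side moves (well defined since $S$ is deterministic), and at the terminal pair $(s,t)$ it performs the refuting action; in case (a) this requires a coinductive choice of an $i?$-successor of $s$ whose sub-behaviour itself refutes refinement against $t'$. Away from the refuting play, $P$ is completed by selecting, at each reachable state of $S$, a most-deterministic output-urgent continuation; such a continuation exists by local consistency of $S$, using the same construction that underpins Theorem~\ref{thm:local-consistency}. Input-enabledness is inherited from $S$ in both layers, so $P$ is a specification satisfying the implementation axioms, and a projection-style relation witnesses $P\leq S$, while the refutation preserved at $(s,t)$ witnesses $P\not\leq T$.

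The main obstacle is reconciling output-urgency with the refuting delay of case (c): $P$ must be able to wait $d$ time units at $s$, yet must also commit to a scheduled output at $s$ by independent progress. This is resolved by observing that local consistency at $s$, together with the fact that $t$ cannot match delay $d$, forces the delay-successors of $s$ in $S$ to extend strictly beyond $d$ (otherwise the matching forced by independent progress of $T$ would succeed); so we can safely schedule $P$'s next output strictly after $d$. A routine coinductive argument then propagates the witness of non-refinement through the whole construction, completing the proof.
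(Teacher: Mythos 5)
Your forward direction is exactly the paper's. Your converse, however, takes a genuinely different route: the paper argues directly, defining the candidate relation $R = \{(s,t) \mid \forall P: p_0 \sat s \Rightarrow p_0 \sat t\}$ and verifying each closure clause of Definition~\ref{def:refinement} locally, by extending an arbitrary model of a successor state deterministically to a model of $s$, invoking $\mod{S}\subseteq\mod{T}$ to conclude it models $t$, and reading off the successor by determinism. You instead argue by contraposition, extracting a finite refuting play from $S\not\leq T$ and building one implementation that replays it, completed elsewhere by the $\delta$-scheduling construction underlying Theorem~\ref{thm:local-consistency}. Both proofs rest on the same ingredients (determinism plus input-enabledness to make successors unique, and local consistency to manufacture implementations), but the paper localises the work to one closure check per clause, whereas you concentrate it in a single global construction; your version yields an explicit separating implementation, at the price of having to verify the implementation axioms along the entire play.

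Two steps need repair. First, the existence of a \emph{finite} refuting play is asserted, not proved; since delay transitions branch uncountably, you need the least-fixpoint/ordinal-rank argument (non-refinement is the least fixpoint of the dual operator, each refutation step strictly decreases rank, so the play terminates). This is standard but should be stated. Second, your resolution of the delay case is wrong as written: nothing forces the delay-successors of $s$ to extend \emph{strictly} beyond $d$ --- $s$ may be able to delay exactly $d$ and no further while $t$ cannot reach $d$ --- and independent progress of $T$ plays no role here. The correct (and simpler) argument is that $s\xrightarrow{d}s''$ holds by the premise of the case, $s''$ is locally consistent, so either $s''$ delays forever or some further delay $d''\geq 0$ enables an output; scheduling $P$'s output at time $d+d''$ (or never) satisfies output urgency and independent progress while still letting $p$ perform the refuting delay $d$.
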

\begin{proof}
    ($\Rightarrow$)
    Assume existence of relations $R_1$ and $R_2$ witnessing satisfaction of $S$ by \new{the implementation} $P$ and refinement of $T$ by $S$, respectively. Use a standard co-inductive argument \new{and Lemma~\ref{lemma:refinementtransitive}} to show that
    \begin{equation*}
        R = \left\{ (p,t) \in Q^P \times Q^T \mid \exists s\in Q^S: (p,s)\in R_1 \wedge (s,t)\in R_2 \right\}
    \end{equation*}
    is a relation witnessing satisfaction of $T$ by $P$. Also observe that $(p_0, t_0)\in R$.
    
    ($\Leftarrow$)
    In the following we write $p\sat s$ for states $p$ and $s$ meaning that there exists a relation $R'$ witnessing $P\sat S$ that contains $(p,s)$.
    
    We construct a binary relation $R\subseteq Q^S\times Q^T$:
    \begin{equation*}
        R = \left\{ (s,t)  \mid \forall P: p_0 \sat s \implies p_0 \sat t \right\},
    \end{equation*}
    where $p_0$ is the initial state of $P$. We shall argue that $R$ witnesses $S \le T$. Consider a pair $(s,t)\in R$. There are two cases to be considered.
    \begin{itemize}
        \item \new{Consider} any input $i?$. \new{Due to input-enabledness,} there exists $t'\in Q^T$ such that $t\xlongrightarrow{i?}{}^{\!\!T} t'$. We need to show existence of a state $s'\in Q^S$ such that $s\xlongrightarrow{i?}{}^{\!\!S} s'$ and $(s',t')\in R$, \new{so $\forall P: p_0 \sat s' \implies p_0 \sat t'$}.
        
        Due to input-enabledness, for the same $i?$ there exists a state $s'\in Q^S$ such that $s\xlongrightarrow{i?}{}^{\!\!S} s'$. We need to show that $(s',t')\in R$. By Theorem~\ref{thm:local-consistency} \new{applied to $Q^S$} we have that there exists an implementation $P$ and its state $p_0\in Q^P$ such that $p_0\sat s'$ (technically speaking $s$ may not be an initial state of $S$, but we can consider a version of $S$ with initial state changed to $s$ to apply Theorem~\ref{thm:local-consistency}, concluding existence of an implementation).
        
        Consider an arbitrary implementation $Q\sat S$ and its state $q_0\in Q^Q$ such that $q_0\sat s'$. We need to show that also $q_0\sat t'$. \new{We do this by extending $Q$ deterministically to a model of $s$, showing that this is also a model of $t$, and then arguing that the only $i?$ successor state models $t'$.}
        Create an implementation $Q'$ by merging $Q$ and $P$ above and adding a fresh state $q$ with transition $q\xlongrightarrow{i?}{}^{\!\!Q'} q_0$ and transitions $q\xlongrightarrow{j?}{}^{\!\!Q'} p_0$ for all $j?\neq i?$, $j?\in\mathit{Act}_i$\footnote{\new{State $q$ allows independent progress if you combine the construction of $q$ with the second case for action $a$.}}. Now $q \sat s$ as \new{$q\xlongrightarrow{i?}{}^{\!\!Q'} q_0$ with $q_0\sat s'$ and $q\xlongrightarrow{j?}{}^{\!\!Q'} p_0$ with $p_0\sat s'$ for $j?\ne i?$}. By assumption, every implementation of $S$ is also an implementation of $T$, so $q\sat t$ and consequently $q_0\sat t'$ as $q$ is deterministic on $i?$. Summarizing, for any implementation $q_0\sat s'$ we are able to argue that $q_0\sat t'$, thus necessarily $(s',t')\in R$.
        
        \item Consider any action $a$ (which is an output or a delay) for which there exists $s'$ such that $s\xlongrightarrow{a}{}^{\!\!S} s'$. Using a construction similar to the one above it is not hard to see that one can actually construct (and thus postulate existence of) an implementation $P$ containing $p\in Q^P$ such that $p\sat s$ that has a transition $p\xlongrightarrow{a}{}^{\!\!P} p'$. Since also $p\sat t$, we have that there exists $t'\in Q^T$ such that $t\xlongrightarrow{a}{}^{\!\!T} t'$. It remains to argue that $(s',t')\in R$. This is done in the same way as with the first case, by considering any model of $s'$, then by extending it deterministically to a model of $s$, concluding that it is now a model of $t$ and the only $a$-derivative, which is $p'$, must be a model of $t'$. Consequently $(s',t')\in R$.  
    \end{itemize}
    It follows directly from the definition of $R$ with $\sem{S}\subseteq\sem{T}$ that $(s_0,t_0)\in R$.
\end{proof}

The restriction of the theorem to locally consistent specifications is not a serious one. As we shall see \new{in Theorem~\ref{thm:prune}}, any consistent specification can be transformed into a locally consistent one preserving the set of implementations. 

\section{Consistency and conjunction}
\label{sec:consandconj}

An \emph{immediate error} occurs in a state of a specification if the specification disallows progress of time and output transitions in a given state -- such a specification will break if the environment does not send an input. For a specification $S$ we define the set of immediate error states $\mathrm{imerr}$ as follows\footnote{In our previous work~\cite{david_timed_2010,david_methodologies_2010,david_real-time_2015} immediate error states were represented by $\mathrm{err}$, a symbol we have re-purposed in this work, see Definition~\ref{def:error}.}.

\begin{figure}
	\centering
	\begin{tikzpicture}[->,>=stealth',shorten >=1pt,align=left,node distance=3cm, scale=.7]
			\node[main node, initial, initial text={}, initial where=left, label={above:$q_1$}] (1) {};
			\node[main node, label={above:$q_2$}, label={below:$x\leq 0$}] (2) [right = of 1] {};
			\node[main node, label={above:$q_3$}, label={below:$x \leq 0$}] (3) [right = of 2] {};
			
			\path[dashed, every node/.style={action}]
			(1) edge node[above] {$a!$} node[below] {$x := 0$} (2)
			(2) edge node[above] {$a!$} node[below] {$x := 0$} (3)
			;
	\end{tikzpicture}
	\caption{\new{Example of a specification that illustrates difference between immediate error states ($q_3$) and error states ($q_2$ and $q_3$).}}
	\label{fig:inducederrorexample}
\end{figure}
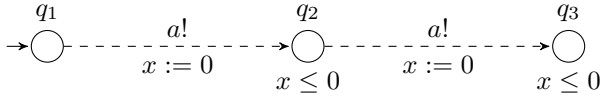

\begin{definition}\label{def:immediateerror}
	Given a specification $S = (Q,q_0,\mathit{Act},\rightarrow)$, the set of \emph{immediate error} states, denoted by $\mathrm{err}$, is defined as
	\begin{align*}
		\mathrm{imerr} = \left\{\vphantom{\xlongrightarrow{d}} q \in Q \mid \right. &(\exists d\in\mathbb{R}_{\geq 0}: q\arrownot\xlongrightarrow{d})\ \wedge\\
		&\left. \forall d\in\mathbb{R}_{\geq 0}\forall o!\in\mathit{Act}_o\forall q'\in Q: q\xlongrightarrow{d} q' \Rightarrow q'\arrownot\xlongrightarrow{o!}\right\}.
	\end{align*}
\end{definition}
\noindent It follows that no immediate error states can occur in implementations, or in locally consistent specifications. \new{In Figure~\ref{fig:inducederrorexample} state $q_3$ is an immediate error state, as it does not allow independent progress. Error states can also be created when output actions are disabled, for example by pruning away immediate error states, see Definition~\ref{def:adversarialpruning} below. Therefore, we extend the definition of immediate error states into error states $\mathrm{err}$ as follows, which was not done in our previous work.}
\begin{definition}\label{def:error}
	\new{Given a specification $S = (Q,q_0,\mathit{Act},\rightarrow)$ and a set of states $X \subseteq Q$, the set of \emph{error} states, denoted by $\mathrm{err}$, is defined as
	\begin{align*}
		\mathrm{err}(X) = \left\{ \vphantom{\xlongrightarrow{d}} q \in Q \mid \right. &(\exists d\in\mathbb{R}_{\geq 0}: q\arrownot\xlongrightarrow{d}) \wedge \forall d\in\mathbb{R}_{\geq 0}\forall o!\in\mathit{Act}_o\forall q'\in Q: \\ 
		&\left. q\xlongrightarrow{d} q' \Rightarrow (q'\arrownot\xlongrightarrow{o!} \vee \forall q'' \in Q: q'\xlongrightarrow{o!} q'' \Rightarrow q''\in X) \right\}.
	\end{align*}}
\end{definition}
\noindent \new{Note that $\mathrm{err}(\emptyset) = \mathrm{imerr}$, thus for any $X$ we have that $\mathrm{imerr} \subseteq \mathrm{err}(X)$. For the example in Figure~\ref{fig:inducederrorexample} we have that $\mathrm{err}(\mathrm{imerr}) = \{q_2, q_3\}$. State $q_2$ is an error state as all outgoing output transitions go to immediate error states and time cannot progress indefinitely. Thus, while $q_2$ allows independent progress in the current form of the specification, disabling all transitions going to immediate error states (something that adversarial pruning will do) will make $q_2$ violate independent progress.}

\begin{figure}
  \centering

	\begin{tikzpicture}[->,>=stealth',shorten >=1pt, font=\small, scale=1, transform shape]
		\node[draw, shape=rectangle, rounded corners] (machine) [right = 2cm of adm] {\begin{tikzpicture}[->,>=stealth',shorten >=1pt,align=left,node distance=3cm, scale=.7]
				\node[main node, initial, initial text={}, initial where=left] (1) {};
				\node[main node, label={below right:$y \leq 6$}] (2) [below = of 1] {};
				\node[main node, label={right:$y \leq 0$}] (3) [below right = 1cm and 1cm of 1] {};
				\node [above=.5cm of 1] {}; 
				
				\path[every node/.style={action}]
				(1) edge node[left,pos=.5] {coin?} node[left,pos=.65] {$y := 0$} (2)
				;
				\draw[dashed, rounded corners, every node/.style={action}]
				(2) -- node[above] {cof!} node[below] {$y \geq 4$} ++(180:3) -- (1)
				;
				\draw[dashed, rounded corners, every node/.style={action}]
				(2) -- node[above] {tea!} ++(0:3) -- node[left,pos=.4] {$y := 0$} (3)
				;
				\path[every node/.style={action}]
				(2) edge[loop below] node[below] {coin?} (2)
				(3) edge[loop above] node[above] {coin?} (3)
				;
		\end{tikzpicture}};
		\node[anchor=north west] () at (machine.north west) {Partially Inconsistent};
		
		\node[connector] (coin) at (machine.-124) {};
		\node[connector] (cof) at (machine.-90) {};
		\node[connector] (tea) at (machine.-55) {};
		
		\draw[->,>=stealth',shorten >=1pt] 
		(cof) edge node[action, right] {cof} ++(-90:0.7)
		(tea) edge node[action, right] {tea} ++(-90:0.7)
		;
		\draw[<-,>=stealth',shorten >=1pt]
		(coin) edge node[action,left] {coin} ++(-90:0.7)
		;
	\end{tikzpicture}
  \caption{A partially inconsistent specification.}
  \label{fig:inconsistentmachine}
\end{figure}

In general, error states in a specification do not necessarily mean that a specification cannot be implemented. Figure~\ref{fig:inconsistentmachine} shows a partially inconsistent specification, a version of the coffee machine that becomes inconsistent if it ever outputs tea. The inconsistency can be possibly avoided by some implementations, who would not implement delay or output transitions leading to it. More precisely an implementation will exist if there is a strategy for the output player in a safety game to avoid $\mathrm{err}$. In order to be able to build on existing formalizations~\cite{cassez_efficident_2005} we will consider a dual reachability game, asking for a strategy of the input player to reach $\mathrm{err}$. We first define a timed predecessor operator~\cite{alfaro_symbolic_2001,maler_synthesis_1995,cassez_efficident_2005}, which gives all the states that can delay into $X$ while avoiding $Y$:
\begin{equation*}
	\mathrm{Pred}_t^S(X,Y) = \left\{q \in Q^S \mid \exists d\in\mathbb{R}_{\geq 0} \wedge \exists q'\in X \mbox{ s.t. } q\xlongrightarrow{d}{}^{\!\!S} q' \wedge \mathrm{post}_d^S(q)\subseteq \overline{Y} \right\}.
\end{equation*}
\new{Since $\mathrm{post}_d^S(q)$ is defined on an open interval, we have that $X \cap Y \subseteq \mathrm{Pred}_t^S(X,Y)$. This means that the input player has priority over the output player when both could do an action from a state. For example, if a certain state has an outgoing input action going to an error state and an outgoing output action to a non-error state, the output action cannot disable the input action, thus the error state is still reachable.} The controllable predecessors operator, denoted by $\pi^S(X)$, which extends the set of states that can reach an error state uncontrollably, is defined by
\begin{equation*}
	\pi^S(X) = \mathrm{err}^S(X) \cup \mathrm{Pred}_t^S(X \cup \mathrm{ipred}^S(X), \mathrm{opred}^S(\overline{X})).
\end{equation*}
The set of all inconsistent states $\new{\mathrm{incons}^S}\subseteq Q^S$ of specification $S$ (i.e.\ the states for which the environment has a winning strategy \new{for reaching an error state}) is defined as the least fixpoint of $\pi^S$: \(\mathrm{incons}^S = \pi^S(\mathrm{incons}^S)\), which is guaranteed to exist by monotonicity of $\pi^S$ and completeness of the powerset lattice due to the theorem of Knaster and Tarski~\cite{tarski_lattice-theoretical_1955}. For transitions systems enjoying finite symbolic representations, automata specifications included, the fixpoint computation converges after a finite number of iterations~\cite{cassez_efficident_2005}.

Now we define the set of consistent states, $\mathrm{cons}^S$, simply as the complement of $\mathrm{incons}^S$, i.e.\ $\mathrm{cons}^S = \overline{\mathrm{incons}^S}$. We obtain it by complementing the result of the above fixpoint computation for $\mathrm{incons}^S$. For the purpose of proofs it is more convenient to formalize the dual operator, say $\Theta^S$, whose \emph{greatest} fixpoint directly and equivalently characterizes $\mathrm{cons}^S$. While least fixpoints are convenient for implementation of on-the-fly algorithms, characterizations with greatest fixpoint are useful in proofs as they allow use of coinduction. Unlike induction on the number of iterations, coinduction is a sound proof principle without assuming finite symbolic representation for the transition system (and thus finite convergence of the fixpoint computation). 
\new{
We define $\Theta^S$ as
\begin{align*}
	\Theta^S(X) = \overline{\mathrm{err}^S(\overline{X})} \cap \left\{\vphantom{\xlongrightarrow{d}} q\in Q^S \mid \forall d \geq 0: \right.&[\forall q'\in Q^S: q\xlongrightarrow{d}{}^{\!\!S} q' \Rightarrow q'\in X\ \wedge \\ 
	&\forall i?\in \mathit{Act}_i^S: \exists q''\in X: q'\xlongrightarrow{i?}{}^{\!\!S} q'']\ \vee \\ 
	& [\exists d'\leq d \wedge \exists q',q''\in X \wedge \exists o!\in\mathit{Act}_o^S: \\
	& q\xlongrightarrow{d'}{}^{\!\!S} q' \wedge q'\xlongrightarrow{o!}{}^{\!\!S} q'' \wedge \\
	& \left. \forall i?\in\mathit{Act}_i^S: \exists q'''\in X: q'\xlongrightarrow{i?}{}^{\!\!S} q'''] \right\},
\end{align*}
so the greatest fixpoint becomes $\mathrm{cons}^S = \Theta^S(\mathrm{cons}^S)$.
}

\begin{theorem}\label{thm:consistency}
    A specification \(S = (Q,s_0,\mathit{Act},\rightarrow)\) is consistent iff $s_0\in\mathrm{cons}$.
\end{theorem}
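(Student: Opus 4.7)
The strategy is to use the characterization $\mathrm{cons} = \Theta^S(\mathrm{cons})$ as the greatest fixpoint of $\Theta^S$, so that implementations of $S$ correspond exactly to post-fixpoints of $\Theta^S$.

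For the $(\Rightarrow)$ direction, I would assume an implementation $P$ with $P \leq S$ witnessed by a relation $R$ containing $(p_0, s_0)$, and define $W = \{s \in Q^S \mid \exists p \in Q^P: (p,s) \in R\}$. The plan is to show $W \subseteq \Theta^S(W)$; then the Knaster--Tarski theorem gives $W \subseteq \mathrm{cons}$ and in particular $s_0 \in \mathrm{cons}$. For any $s \in W$ with witness $p$, the independent progress of $P$ at $p$ provides two cases: either $p$ can delay for every $d \geq 0$, or there exist $d^*, o!$ with $p \xlongrightarrow{d^*} p' \xlongrightarrow{o!} p''$. Using the delay, output and input clauses of Definition~\ref{def:refinement} together with input-enabledness of $S$, these behaviors transfer to $s$ with successors remaining in $W$; a short case split then simultaneously verifies $s \notin \mathrm{err}^S(\overline{W})$ and the body disjunction of $\Theta^S$.

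For the $(\Leftarrow)$ direction, I would assume $s_0 \in \mathrm{cons}$ and synthesize an implementation. Using $\mathrm{cons} \subseteq \Theta^S(\mathrm{cons})$, I would pick for each $s \in \mathrm{cons}$ one of two local strategies: the delay-forever regime when case~(a) of the body of $\Theta^S$ holds for every $d \geq 0$, or otherwise a specific case~(b) witness $(d_s, o!_s, q'_s, q''_s)$. Then $P$ is built as a TIOTS whose states index both the current $s$ and the elapsed waiting time, so that delay additivity and time determinism are preserved and, in case~(b), the chosen output $o!_s$ becomes enabled precisely at elapsed time $d_s$; input transitions are inherited from $S$ and hand control to the strategy attached to the target state. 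The natural relation pairing each $P$-state with its $S$-projection should witness $P \leq S$, and the construction will enforce output-urgency and independent progress by design.

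The main obstacle will be the synthesis in the $(\Leftarrow)$ direction. Since $\Theta^S$ quantifies universally over $d$, case~(a) and case~(b) may apply for different delays, so one must commit to a single firing time $d_s$ coherent with the requirement that throughout the waiting window all inputs lead back into $\mathrm{cons}$. I would resolve this by appealing to the axiom of choice on $\mathrm{cons}$ to fix one strategy per state, and then discharge the implementation axioms together with the refinement relation simultaneously by coinduction on the constructed relation.
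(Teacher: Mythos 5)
Your plan matches the paper's proof in both directions: the forward direction shows that a set of states covered by implementations is a post-fixpoint of $\Theta^S$ and invokes Knaster--Tarski, and the backward direction synthesizes an implementation by committing, per consistent state, to either indefinite delay or a delay-then-output strategy, with the identity-style relation witnessing refinement. The coherence obstacle you flag at the end is resolved in the paper not by elapsed-time-indexed states but by defining the firing time as the \emph{infimum} delay $\delta(s)$ after which an output into $\mathrm{cons}$ is available, whose time-additivity ($\delta(s') + d = \delta(s)$ whenever $s \xlongrightarrow{d} s'$ with $d \leq \delta(s)$) makes the per-state choices automatically consistent on the original state set $Q^S$.
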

\noindent The proof of Theorem~\ref{thm:consistency} can be found in Appendix~\ref{app:proofs-conj}. The set of (in)consistent states can be computed for timed games, and thus for specification automata, using controller synthesis algorithms~\cite{cassez_efficident_2005}. 

The inconsistent states can be pruned from a consistent $S$ leading to a locally consistent specification. Adversarial pruning is applied in practice to decrease the size of specifications.
\begin{definition}\label{def:adversarialpruning}
	Given a specification $S = (Q,q_0,\mathit{Act},\rightarrow)$, the result of \emph{adversarial pruning}, denoted by $S^{\Delta}$, is specification $(\mathrm{cons},q_0,\mathit{Act},\rightarrow^{\Delta})$ where $\rightarrow^{\Delta} = \rightarrow \cap (\mathrm{cons} \times (\mathit{Act}\cup\mathbb{R}_{\geq 0}) \times \mathrm{cons})$.
\end{definition}

For specification automata adversarial pruning is realized by applying a controller synthesis algorithm, obtaining a maximum winning strategy, which is then presented as a specification automaton itself. \new{Theorem~\ref{thm:prune} captures the main result of adversarial pruning. It also explains the reason of the name of adversarial pruning: the pruned specification contains all winning strategies independently of an environment, including those that are adversarial. This contrasts with cooperative pruning, which we define in Section~\ref{sec:compandcomp} later in the paper.} 


\begin{theorem}\label{thm:prune} 
    For a consistent specification $S$, $S^{\Delta}$ is locally consistent and $\mod{S} = \mod{S^{\Delta}}$.
\end{theorem}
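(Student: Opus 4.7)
The plan is to establish the two claims separately, both leveraging the greatest fixpoint characterization $\mathrm{cons} = \Theta^S(\mathrm{cons})$ together with Theorem~\ref{thm:consistency}.

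For local consistency of $S^{\Delta}$, I would first verify that $S^{\Delta}$ remains a well-formed (input-enabled) specification. Taking any $q \in \mathrm{cons}$ and using $q \in \Theta^S(\mathrm{cons})$ instantiated with $d = 0$ (so $q \xlongrightarrow{0}{}^{\!\!S} q$ by time reflexivity), the clause $\forall i? : \exists q'' \in \mathrm{cons} : q \xlongrightarrow{i?} q''$ must hold, meaning every input leads to a consistent state and thus survives in $\rightarrow^{\Delta}$. Independent progress then follows by reading off the two disjuncts of $\Theta^S(\mathrm{cons})$: either all delay successors stay in $\mathrm{cons}$ (giving unbounded delay in $S^{\Delta}$), or a finite delay followed by an output transition is available with both intermediate states and the target in $\mathrm{cons}$ (giving a delay-and-output in $S^{\Delta}$).

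For $\mod{S^{\Delta}} \subseteq \mod{S}$, I would argue that any relation $R \subseteq Q^P \times \mathrm{cons}$ witnessing $P \leq S^{\Delta}$ also witnesses $P \leq S$ when viewed as a subset of $Q^P \times Q^S$. The delay and output clauses are immediate since $\rightarrow^{\Delta} \subseteq \rightarrow^S$. The only subtlety is matching input transitions of $S$ on the left side: but by the argument above, every input from a consistent state $s$ has its target back in $\mathrm{cons}$, so such transitions are present in $S^{\Delta}$ and therefore matched in $P$.

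For $\mod{S} \subseteq \mod{S^{\Delta}}$, let $R$ witness $P \leq S$ and define $R' = R \cap (Q^P \times \mathrm{cons})$. The initial pair belongs to $R'$ because $S$ is consistent, so $s_0 \in \mathrm{cons}$ by Theorem~\ref{thm:consistency}. The key lemma I expect to prove is that every $(p, s) \in R$ in fact satisfies $s \in \mathrm{cons}$: restricting to the sub-system rooted at $p$ gives an implementation of the sub-specification rooted at $s$, so the latter is consistent and hence $s \in \mathrm{cons}$ (applying Theorem~\ref{thm:consistency} pointwise). With this lemma the input and output clauses transfer from $R$ to $R'$ directly. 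The hardest step, and the one requiring the most care, is the delay clause: a matching delay $s \xlongrightarrow{d}{}^{\!\!S} s'$ is only a transition of $S^{\Delta}$ if every intermediate state $s''$ with $s \xlongrightarrow{d''}{}^{\!\!S} s''$ for $d'' < d$ lies in $\mathrm{cons}$. I plan to recover this from time additivity: each such $s''$ is matched via $R$ by some $p''$ with $p \xlongrightarrow{d''}{}^{\!\!P} p''$, and the sub-system at $p''$ witnesses the consistency of the sub-specification at $s''$, placing $s''$ in $\mathrm{cons}$ and thus keeping the whole delay within $S^{\Delta}$.
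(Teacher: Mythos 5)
Your proposal is correct, and its skeleton matches the paper's: show local consistency by reading off the fixpoint equation $\mathrm{cons}=\Theta^S(\mathrm{cons})$, and prove $\mod{S}=\mod{S^{\Delta}}$ by showing that any refinement relation witnessing $P\leq S$ already lives in $Q^P\times\mathrm{cons}$. The genuine difference is in how you establish that key lemma (every $(p,s)\in R$ has $s\in\mathrm{cons}$): you re-root $P$ at $p$ and $S$ at $s$, observe that the re-rooted $P$ is still an implementation of the re-rooted $S$, and invoke Theorem~\ref{thm:consistency} pointwise, whereas the paper argues by contradiction through the least-fixpoint iteration of $\pi$ --- an inconsistent state cannot be prevented from reaching an error state, and tracking that forced path through $R$ eventually clashes with the independent progress of the matching implementation state. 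Your route is shorter and reuses already-proven machinery (it is essentially the same coinductive content, since the converse direction of Theorem~\ref{thm:consistency} is itself proved by showing the pointwise-consistent states form a post-fixed point of $\Theta$); the paper's route is self-contained at the level of the game-theoretic intuition. You are also more careful than the paper in two places it glosses over: input-enabledness of $S^{\Delta}$ (your $d=0$ instantiation of $\Theta$) and the matching of $S$'s input transitions in the direction $P\leq S^{\Delta}\Rightarrow P\leq S$.

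Two small imprecisions worth fixing. First, in the local-consistency argument, the first disjunct of $\Theta$ is an implication over delay successors and so holds vacuously when $q$ cannot delay by $d$; to conclude ``unbounded delay in $S^{\Delta}$'' you also need the conjunct $q\in\overline{\mathrm{err}^S(\overline{\mathrm{cons}})}$, which either yields $\forall d:q\xlongrightarrow{d}$ or hands you back a delay-then-output witness that (combined with the first disjunct) re-activates the second disjunct. Second, Definition~\ref{def:adversarialpruning} only intersects $\rightarrow$ with $\mathrm{cons}\times(\mathit{Act}\cup\mathbb{R}_{\geq 0})\times\mathrm{cons}$, so membership of a delay transition in $\rightarrow^{\Delta}$ depends only on its endpoints, not on intermediate states; your extra argument that intermediate states are consistent is not needed for the refinement clause itself, though it is what guarantees that $S^{\Delta}$ still satisfies time additivity, so it is worth keeping --- just not under the heading of ``the transition is only in $S^{\Delta}$ if \ldots''.
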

\noindent The proof of Theorem~\ref{thm:prune} can be found in Appendix~\ref{app:proofs-conj}.

Consistency guarantees realizability of a single specification. It is of further interest whether several specifications can be \emph{simultaneously} met by the same component, without reaching error states of any of them. We formalize this notion by defining a logical conjunction for specifications.

\begin{definition}\label{def:conjunctionTIOTS}
	Given two \new{TIOTSs} $S^i = (Q^i,q_0^i,\mathit{Act}^i,\rightarrow^i), i=1,2$ where \new{$\mathit{Act}_i^1 \cap \mathit{Act}_o^2 =\emptyset \wedge \mathit{Act}_o^1 \cap \mathit{Act}_i^2 =\emptyset$}, the \new{\emph{conjunction}} of $S^1$ and $S^2$, denoted by $S^1 \wedge S^2$, is TIOTS $(Q^1\times Q^2, (q_0^1,q_0^2),\mathit{Act},\rightarrow)$ where $\mathit{Act} = \mathit{Act}_i \uplus \mathit{Act}_o$ with $\mathit{Act}_i = \mathit{Act}_i^1 \cup \mathit{Act}_i^2$ and $\mathit{Act}_o= \mathit{Act}_o^1 \cup \mathit{Act}_o^2$, and $\rightarrow$ is defined as
	\begin{itemize}
		\item $(q_1^1,q_1^2)\xlongrightarrow{a} (q_2^1,q_2^2)$ if $a\in\mathit{Act}^1\cap\mathit{Act}^2$, $q_1^1\xlongrightarrow{a}{}^{\!\! 1} q_2^1$, and $q_1^2\xlongrightarrow{a}{}^{\!\! 2} q_2^2$
		\new{\item $(q_1^1,q^2)\xlongrightarrow{a} (q_2^1,q^2)$ if $a\in\mathit{Act}^1\setminus\mathit{Act}^2$, $q_1^1\xlongrightarrow{a}{}^{\!\! 1} q_2^1$, and $q^2\in Q^2$
		\item $(q^1,q_1^2)\xlongrightarrow{a} (q^1,q_2^2)$ if $a\in\mathit{Act}^2\setminus\mathit{Act}^1$, $q_1^2\xlongrightarrow{a}{}^{\!\! 2} q_2^2$, and $q^1\in Q^1$
		}
		\item $(q_1^1,q^2)\xlongrightarrow{d} (q_2^1,q^2)$ if $d \in \mathbb{R}_{\geq 0}$, $q_1^1\xlongrightarrow{d}{}^{\!\! 1} q_2^1$, and $q_1^2\xlongrightarrow{d}{}^{\!\! 2} q_2^2$
	\end{itemize}
\end{definition}
\noindent \new{Compared to definitions from previous work, we 1) define the old product operator to be directly the conjunction operator, thereby eliminating the build-in adversarial pruning like in previous work, because adversarial pruning does not distribute over parallel composition, see the upcoming discussion in Section~\ref{sec:compandcomp}, and 2) relax the definition to specifications with unequal alphabets.}

In general, a result of the conjunction may be locally inconsistent, or even inconsistent. To guarantee consistency, one could apply a consistency check to the result, checking if $(s_0,t_0)\in\mathrm{cons}^{S\times T}$ and, possibly, adversarially pruning the inconsistent parts.
Clearly conjunction is commutative \new{and associative}. 

\begin{lemma}\label{lem:common-implementation}
    For two specifications $S$, $T$, and their states $s$ and $t$, respectively, if there exists an implementation $P$ and its state $p$ such that simultaneously $p\sat s$ and $p\sat t$ then $(s,t)\in\mathrm{cons}^{S\wedge T}$.  
\end{lemma}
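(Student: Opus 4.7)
The plan is to exploit the greatest-fixpoint characterization $\mathrm{cons}^{S\wedge T} = \Theta^{S\wedge T}(\mathrm{cons}^{S\wedge T})$ and argue coinductively, avoiding any finite-convergence assumption. Concretely, I would define
\begin{equation*}
    R = \{ (s',t') \in Q^S \times Q^T \mid \exists \text{ implementation } P', \exists p' \in Q^{P'} : p' \sat s' \wedge p' \sat t' \}
\end{equation*}
and show $R \subseteq \Theta^{S\wedge T}(R)$. Since $(s,t)\in R$ by the hypothesis, this immediately gives $(s,t)\in\mathrm{cons}^{S\wedge T}$ by Knaster–Tarski.

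Pick $(s',t')\in R$ with witness $p'$ and invoke independent progress of the implementation $P'$ at $p'$. This splits into two cases that I intend to map directly onto the two disjuncts of the body of $\Theta^{S\wedge T}$. In the \emph{delay-forever} case, for every $d\geq 0$ we have $p'\xlongrightarrow{d}{}^{\!\!P'}p'_d$; by clauses 4 and 5 of Definition~\ref{def:refinement} applied to $p'\sat s'$ and $p'\sat t'$ together with time determinism, there are unique $s'_d, t'_d$ with $s'\xlongrightarrow{d}s'_d$, $t'\xlongrightarrow{d}t'_d$, and $p'_d\sat s'_d$, $p'_d\sat t'_d$, so $(s'_d,t'_d)\in R$. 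Then for any input $i?$, input-enabledness of $P'$ provides $p'_d\xlongrightarrow{i?}p''$, and the input clauses of refinement supply matching $s''_d, t''_d$ with $p''\sat s''_d, p''\sat t''_d$, giving $(s''_d, t''_d) \in R$ via the conjunction transition from Definition~\ref{def:conjunctionTIOTS}. This verifies the first disjunct of $\Theta$ for every $d$. In the \emph{output available} case, there is some $d^\star$ and output $o!$ with $p'\xlongrightarrow{d^\star}p'_{d^\star}\xlongrightarrow{o!}p''$; for $d < d^\star$ the first disjunct is satisfied exactly as above (since $p'$ can still delay by $d$), while for $d\geq d^\star$ I use the second disjunct with $d'=d^\star$: the refinement lifts the output to matching $s'_{d^\star}\xlongrightarrow{o!}s''$ and $t'_{d^\star}\xlongrightarrow{o!}t''$ with $p''\sat s''$ and $p''\sat t''$, so both the intermediate state $(s'_{d^\star}, t'_{d^\star})$ and the successor $(s'',t'')$ sit in $R$, and the input closure is handled as before.

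Finally, for the outer $\overline{\mathrm{err}^{S\wedge T}(\overline{R})}$ factor of $\Theta$, the same case split works: either $(s',t')$ can delay indefinitely (so it cannot satisfy the first conjunct in the definition of $\mathrm{err}$), or the output transition $(s'_{d^\star}, t'_{d^\star})\xlongrightarrow{o!}(s'',t'')$ exhibited above lands in $R$, violating the universal quantifier in Definition~\ref{def:error}. Thus $R \subseteq \Theta^{S\wedge T}(R)$, completing the argument.

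The main subtle point I expect to be careful about is the lifting of output and delay transitions to the \emph{product}: one must check that whenever the implementation performs an output $o!$, the refinements to $S$ and to $T$ both yield matching $o!$-transitions so that the synchronised transition in $S\wedge T$ exists (this relies on $o!$ being in both alphabets and on time determinism of the specifications to keep the intermediate delay state canonical). Once that bookkeeping is in place, the coinductive verification is routine.
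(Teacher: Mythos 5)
Your proposal is correct and follows essentially the same route as the paper: both define the set of state pairs admitting a common implementation, show it is a postfixed point of $\Theta^{S\wedge T}$ by invoking independent progress of the implementation together with the delay/output/input clauses of refinement, and conclude by coinduction via Knaster--Tarski. The only differences are cosmetic (you case-split on the two branches of independent progress first and then on $d$ versus $d^\star$, whereas the paper fixes $d_0$ and compares it with the implementation's delay bound, and you additionally spell out the $\overline{\mathrm{err}^{S\wedge T}(\overline{R})}$ conjunct, which the paper leaves implicit).
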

\noindent The proof of Lemma~\ref{lem:common-implementation} can be found in Appendix~\ref{app:proofs-conj}.

\begin{theorem}\label{thm:conjunction}
    For any locally consistent specifications $S$, $T$  and $U$ over the same alphabet:
    \begin{enumerate}
        \item $S \land T \leq S$ and $S \land T \leq T$
        \item $(U\leq S)$ and $(U\leq T)$ implies $U\leq (S \land T)$
        \item $\mod{S \land T} = \mod{S} \cap \mod{T}$
    \end{enumerate}
\end{theorem}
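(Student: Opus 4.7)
The plan is to handle the three parts in sequence, building each on top of the previous. Parts (1) and (2) are proven by exhibiting concrete witness relations for the refinement and verifying the clauses of Definition~\ref{def:refinement} by case analysis on the transition rules of Definition~\ref{def:conjunctionTIOTS}. Part (3) then follows from parts (1) and (2) combined with Lemma~\ref{lemma:refinementtransitive}; since the statement restricts to a common alphabet, the side conditions on action sets in the refinement and transitivity lemmas collapse to trivial equalities, which simplifies the bookkeeping substantially.

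For part (1), I would propose the witness relation $R = \{((s,t),s) \mid (s,t) \in Q^{S\wedge T}\}$, clearly containing the initial pair. Given $((s,t),s)\in R$: an input transition $s \xlongrightarrow{i?} s'$ in $S$ is matched by a conjunction input transition $(s,t)\xlongrightarrow{i?}(s',t')$, using input-enabledness of $T$ to supply $t'$, and the first rule of Definition~\ref{def:conjunctionTIOTS}; an output or delay transition of the conjunction always projects to an output or delay transition of $S$ by the first and fourth rules. The symmetric argument yields $S\wedge T \leq T$. The main subtlety here is verifying that the conjunction is actually input-enabled (hence a specification) when $S$ and $T$ are, which follows from the same argument.

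For part (2), let $R_S$ and $R_T$ be the relations witnessing $U\leq S$ and $U\leq T$. I would define $R = \{(u,(s,t)) \mid (u,s)\in R_S \wedge (u,t)\in R_T\}$ and check the three non-vacuous refinement clauses. For an input $i?$ enabled in $(s,t)$, the only conjunction rule that applies (over a common alphabet) forces $s\xlongrightarrow{i?}s'$ and $t\xlongrightarrow{i?}t'$; refinements supply a matching $u\xlongrightarrow{i?}u'$ with $(u',s')\in R_S$ and, by determinism of $U$, the \emph{same} $u'$ satisfies $(u',t')\in R_T$, so $(u',(s',t'))\in R$. An output transition of $U$ is matched in $S$ and $T$ separately, and determinism of $S$ and $T$ forces the two matching successors to form a legitimate conjunction transition; delays are handled analogously. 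The main obstacle is precisely this appeal to determinism to align the two successor witnesses into a single conjunction state.

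For part (3), the forward inclusion is: if $P\leq S\wedge T$, then part (1) and Lemma~\ref{lemma:refinementtransitive} give $P\leq S$ and $P\leq T$; transitivity's alphabet hypotheses are immediate since all specifications share one alphabet. The reverse inclusion is immediate from part (2). I would briefly note that implementations are trivially locally consistent, so no separate hypothesis is needed on $P$, while local consistency of $S$ and $T$ is assumed in the theorem statement; the conjunction $S\wedge T$ need not be locally consistent, but $\mod{S\wedge T}$ is well-defined regardless, so the argument goes through.
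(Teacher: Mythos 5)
Your proposal is correct and follows essentially the same route as the paper: part (1) via the projection relation $\{((s,t),s)\}$, part (2) via the pairing relation $\{(u,(s,t))\mid (u,s)\in R_S \wedge (u,t)\in R_T\}$ with determinism of $U$ aligning the two input successors, and part (3) by combining (1), (2) and Lemma~\ref{lemma:refinementtransitive}, whose alphabet side conditions are trivial here. The one divergence is in part (2): the paper additionally invokes Lemma~\ref{lem:common-implementation} to argue that the successor pair $(s',t')$ lies in $\mathrm{cons}^{S\wedge T}$, because its witness relation is restricted to $Q^U\times\mathrm{cons}^{S\wedge T}$ --- a step inherited from the earlier definition of conjunction with built-in adversarial pruning. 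Under Definition~\ref{def:conjunctionTIOTS} as it now stands (a pure product, so $Q^{S\wedge T}=Q^S\times Q^T$ with no pruning of transitions into inconsistent states), your omission of that step is harmless and your relation witnesses the refinement as is; so this is a simplification consistent with the current definition rather than a gap.
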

\begin{proof}
We will prove the three items separately.
    \begin{enumerate}
        \item We will prove that $S\land T$ refines $S$ (the other refinement is entirely symmetric).

        
        Let $S\land T = (Q^S \times Q^T, (s_0,t_0),\mathit{Act},\rightarrow)$ constructed according to the definition of conjunction. We abbreviate the set of states of $S\land T$ as $Q^{S \land T}$. It is easy to see that the following relation on states of $S\land T$ and states of $T$ witnesses refinement of \new{$S$} by $S\land T$:
        \begin{equation*} 
            R = \left\{ ((s_1,t), s_2) \in Q^{S \land T} \times Q^S \mid s_1 = s_2 \right\} 
        \end{equation*}
        The argument is standard, and it takes into account that $Q^{S \land T}=\mathrm{cons}^{S\wedge T}$ is a fixpoint of $\Theta$. How $\Theta$ is taken into account is demonstrated in more detail in the proof for the next item.
        
        \item Assume that $U\leq S$ and $U\leq T$. Then $U \leq S\land T$. The first refinement is witnessed by some relation $R_1$, the second refinement by $R_2$. Then the third refinement is witnessed by the following relation $R\subseteq Q^U \times Q^{S\land T}$:
        \begin{equation*}
            R = \left \{ (u,(s,t)) \in Q^U \times \mathrm{cons}^{S\wedge T} \mid (u,s) \in R_1 \land (u,t) \in R_2 \right\}.
        \end{equation*}
        The argument that $R$ is a refinement is standard again, relying on the fact that $\mathrm{cons}^{S\wedge T}$ is a fixed point of $\Theta$. 
        
        Consider an output case when $u\xlongrightarrow{o!}{}^{\!\!U} u'$ for some output $o!$ and the target state $u'$. Then $s\xlongrightarrow{o!}{}^{\!\!S} s'$ and $t\xlongrightarrow{o!}{}^{\!\!T} t'$ for some states $s'$ and $t'$ and $(u',s')\in R_1$ and $(u',t')\in R_2$. This means that $(s,t)\xlongrightarrow{o!}{}^{\!\!S\wedge T} (s', t')$. In order to finish the case we need to argue that $(s',t')\in Q^{S\land T} = \mathrm{cons}^{S\wedge T}$. This follows from Lemma~\ref{lem:common-implementation} since $U$, and thus $u'$, is locally consistent, and by transitivity any implementation satisfying $u'$ would be a common implementation of $s'$ and $t'$.
        
        The case for delay is identical, while the case for inputs is unsurprising (since adversarial pruning in the computation of conjunction never removes input transitions from consistent to inconsistent states -- there are no such transitions).
        
        \item The 3rd statement follows from the above facts. \new{First assume that $U$ is an implementation (and thus also a specification) such that $U\in \mod{S \wedge T}$. This means that $U \leq S\wedge T$. Using statement 1 and Lemma~\ref{lemma:refinementtransitive} we can extend this to $U \leq S \wedge T \leq S$. Therefore, $U \in \mod{S}$. With the same argument we can also show $U \in \mod{T}$, thus $U\in\mod{S} \cap \mod{T}$.}
        
        \new{The reverse of the 3rd statement can be shown by assuming that $U\in \mod{S} \cap \mod{T}$. This implies that $U\leq S$ and $U \leq T$. Now, using statement 2 we have $U \leq S \wedge T$, which concludes that $U \in \mod{S\wedge T}$.}
        
    \end{enumerate}
\end{proof}

We turn our attention to syntactic representations again. 
\begin{definition}\label{def:conjunctionTIOA}
	Given two TIOAs $A^i = (\mathit{Loc}^i, l_0^i, \mathit{Act}^i, \mathit{Clk}^i, E^i, \mathit{Inv}^i), i=1,2$ where \new{$\mathit{Act}_i^1 \cap \mathit{Act}_o^2 =\emptyset \wedge \mathit{Act}_o^1 \cap \mathit{Act}_i^2 =\emptyset$}\footnote{Formulated differently, $\nexists a \in\bigcup_{i\in I}\mathit{Act}^i$ s.t. $a\in Act_i^i \wedge a \in\mathit{Act}_o^j, i,j\in I, i\neq j$ and $I=\{1,2\}$. This is a more direct formulation of the desired property and can be extended easily for the conjunction of more than two TIOAs.}, the \new{\emph{conjunction}} of $A^1$ and $A^2$, denoted by $A^1\wedge  A^2$, is TIOA $(\mathit{Loc}^1\times\mathit{Loc}^2, (l_0^1,l_0^2), \mathit{Act}, \mathit{Clk}^1\uplus\mathit{Clk}^2, E, \mathit{Inv})$ where $\mathit{Act} = \mathit{Act}_i \uplus \mathit{Act}_o$ with $\mathit{Act}_i = \mathit{Act}_i^1 \cup \mathit{Act}_i^2$ and $\mathit{Act}_o= \mathit{Act}_o^1 \cup \mathit{Act}_o^2$, $\mathit{Inv}((l^1,l^2)) = \mathit{Inv}^1(l^1) \wedge \mathit{Inv}^2(l^2)$, and $E$ is defined as
	\begin{itemize}
		\item $((l_1^1,l_1^2),a, \varphi^1\wedge\varphi^2, c^1\cup c^2,(l_2^1,l_2^2)) \in E$ if $a \in \mathit{Act}^1\cap\mathit{Act}^2$, $(l_1^1, a, \varphi^1, c^1,l_2^1) \in E^1$, and $(l_1^2,a, \varphi^2, c^2,l_2^2) \in E^2$
		\new{
		\item $((l_1^1,l^2),a, \varphi^1, c^1,(l_2^1,l^2)) \in E$ if $a \in \mathit{Act}^1\setminus\mathit{Act}^2$, $(l_1^1, a, \varphi^1, c^1,l_2^1) \in E^1$, and $l^2\in\mathit{Loc}^2$
		\item $((l^1,l_1^2),a, \varphi^2, c^2,(l^1,l_2^2)) \in E$ if $a \in \mathit{Act}^2\setminus\mathit{Act}^1$, $(l_1^2, a, \varphi^2, c^2,l_2^2) \in E^2$, and $l^1\in\mathit{Loc}^1$}
	\end{itemize}
\end{definition}

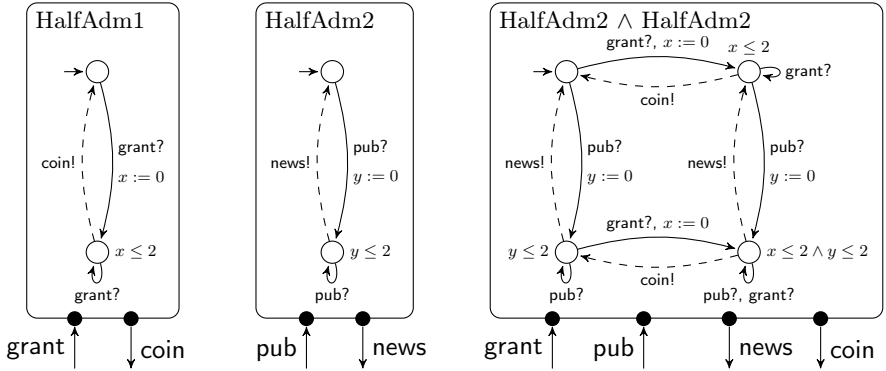
\begin{figure}
	\centering
	\begin{tikzpicture}[->,>=stealth',shorten >=1pt, font=\small, scale=1, transform shape]
		\node[draw, shape=rectangle, rounded corners] (adm1) {\begin{tikzpicture}[->,>=stealth',shorten >=1pt,main node/.style={circle,draw,font=\sffamily\small\bfseries,minimum size=12pt}, align=left,node distance=3cm, scale=.7]
				\node[main node, initial, initial text={}, initial where=left] (1) {};
				\node[main node, label={right:$x \leq 2$}] (2) [below = of 1] {};
				\node [above=.7cm of 1] {}; 
				
				\path[bend left = 15, every node/.style={action}]
				(1) edge node[right,pos=.4] {grant?} node[right,pos=.6] {$x:= 0$} (2)
				(2) edge[dashed] node[left] {coin!} (1)
				;
				\path[every node/.style={action}]
				(2) edge[loop below] node[below] {grant?} (2)
				;
		\end{tikzpicture}};
		\node[anchor=north west] () at (adm1.north west) {HalfAdm1};
		
		\node[draw, shape=rectangle, rounded corners] (adm2) [right = of adm1] {\begin{tikzpicture}[->,>=stealth',shorten >=1pt,main node/.style={circle,draw,font=\sffamily\small\bfseries,minimum size=12pt}, align=left,node distance=3cm, scale=.7]
				\node[main node, initial, initial text={}, initial where=left] (1) {};
				\node[main node, label={right:$y \leq 2$}] (2) [below = of 1] {};
				\node [above=.7cm of 1] {}; 
				
				\path[bend left = 15, every node/.style={action}]
				(1) edge node[right,pos=.4] {pub?} node[right,pos=.6] {$y := 0$} (2)
				(2) edge[dashed] node[left] {news!} (1)
				;
				\path[every node/.style={action}]
				(2) edge[loop below] node[below] {pub?} (2)
				;
		\end{tikzpicture}};
		\node[anchor=north west] () at (adm2.north west) {HalfAdm2};
		
		\node[draw, shape=rectangle, rounded corners] (adm12) [right = of adm2] {\begin{tikzpicture}[->,>=stealth',shorten >=1pt,main node/.style={circle,draw,font=\sffamily\small\bfseries,minimum size=12pt}, align=left,node distance=3cm, scale=.7]
				\node[main node, initial, initial text={}, initial where=left] (1) {};
				\node[main node, label={left:$y \leq 2$}] (2) [below = of 1] {};
				\node[main node, label={above:$x \leq 2$}] (3) [right = of 1] {};
				\node[main node, label={right:$x\leq 2 \wedge y \leq 2$}] (4) [below = of 3] {};
				\node [above=.7cm of 1] {}; 
				
				\path[bend left = 15, every node/.style={action}]
				(1) edge node[right,pos=.4] {pub?} node[right,pos=.6] {$y := 0$} (2)
				(2) edge[dashed] node[left] {news!} (1)
				(3) edge node[right,pos=.4] {pub?} node[right,pos=.6] {$y := 0$} (4)
				(4) edge[dashed] node[left] {news!} (3)
				(1) edge node[above] {grant?, $x := 0$} (3)
				(3) edge[dashed] node[below] {coin!} (1)
				(2) edge node[above] {grant?, $x := 0$} (4)
				(4) edge[dashed] node[below] {coin!} (2)
				;
				\path[every node/.style={action}]
				(2) edge[loop below] node[below] {pub?} (2)
				(3) edge[loop right] node[right] {grant?} (3)
				(4) edge[loop below] node[below] {pub?, grant?} (4)
				;
		\end{tikzpicture}};
		\node[anchor=north west] () at (adm12.north west) {HalfAdm2 $\wedge$ HalfAdm2};
		
		\node[connector] (grant) at (adm1.-100) {};
		\node[connector] (coin) at (adm1.-80) {};
		\node[connector] (pub) at (adm2.-100) {};
		\node[connector] (news) at (adm2.-80) {};
		\node[connector] (grant2) at (adm12.-130) {};
		\node[connector] (coin2) at (adm12.-50) {};
		\node[connector] (pub2) at (adm12.-105) {};
		\node[connector] (news2) at (adm12.-75) {};
		
		\draw[->,>=stealth',shorten >=1pt, every node/.style={action}] 
		(coin) edge node[right] {coin} ++(-90:0.7)
		(news) edge node[right] {news} ++(-90:0.7)
		(coin2) edge node[right] {coin} ++(-90:0.7)
		(news2) edge node[right] {news} ++(-90:0.7)
		;
		\draw[<-,>=stealth',shorten >=1pt, every node/.style={action}]
		(grant) edge node[left] {grant} ++(-90:0.7)
		(pub) edge node[left] {pub} ++(-90:0.7)
		(grant2) edge node[left] {grant} ++(-90:0.7)
		(pub2) edge node[left] {pub} ++(-90:0.7)
		;
	\end{tikzpicture}
	\caption{\new{Example of two specifications each handling one aspect of the administration and their conjunction.}}
	\label{fig:adm1adm2}
\end{figure}

It might appear as if two systems can only advance on an input if both are ready to receive an input, but because of input enableness this is always the case.
\new{An example of a conjunction is shown in Figure~\ref{fig:adm1adm2}. The two aspects of the administration, handing out coins and writing news articles, is split into two specifications. \textsf{HalfAdm1} describes the alternation between \textsf{grant?} and \textsf{coin!}, while \textsf{HalfAdm2} describes the alternation between \textsf{pub?} and \textsf{news!}. Together they form \textsf{HalfAdm1} $\wedge$ \textsf{HalfAdm2}. Observe that this is an alternative and slightly more loose specification of the administration than the one in Figure~\ref{fig:university}. Yet it is the case that \textsf{Administration} refines \textsf{HalfAdm1} $\wedge$ \textsf{HalfAdm2}, while the opposite is not true.}
 
The following theorem lifts all the results from the TIOTSs level to the symbolic representation level\footnote{\new{Where we now include adversarial pruning on both sides instead of just on the left-hand side in previous works.}}:
\begin{theorem}\label{thrm:conjunctionTSandA}
	Given two TIOAs $A^i = (\mathit{Loc}^i, l_0^i, \mathit{Act}^i, \mathit{Clk}^i, E^i, \mathit{Inv}^i), i=1,2$ where $\mathit{Act}_i^1 \cap \mathit{Act}_o^2 =\emptyset \wedge \mathit{Act}_o^1 \cap \mathit{Act}_i^2 =\emptyset$. Then $(\sem{A^1 \wedge A^2})^{\Delta} = (\sem{A^1} \wedge \sem{A^2})\new{^{\Delta}}$.
\end{theorem}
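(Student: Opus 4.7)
The plan is to show that the two TIOTSs $\sem{A^1 \wedge A^2}$ and $\sem{A^1} \wedge \sem{A^2}$ are isomorphic as TIOTSs, and then invoke Definition~\ref{def:adversarialpruning} to conclude that their adversarial prunings coincide (since $\mathrm{cons}$ is intrinsic to a TIOTS, so isomorphic TIOTSs have isomorphic prunings).

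First I would construct the candidate isomorphism. A state of $\sem{A^1 \wedge A^2}$ has the form $((l^1, l^2), v)$ where $v : \mathit{Clk}^1 \uplus \mathit{Clk}^2 \to \mathbb{R}_{\geq 0}$. A state of $\sem{A^1} \wedge \sem{A^2}$ has the form $((l^1, v^1), (l^2, v^2))$ with $v^i : \mathit{Clk}^i \to \mathbb{R}_{\geq 0}$. Define $\Phi((l^1, l^2), v) = ((l^1, v|_{\mathit{Clk}^1}), (l^2, v|_{\mathit{Clk}^2}))$; this is clearly a bijection on states since $\mathit{Clk}^1$ and $\mathit{Clk}^2$ are disjoint. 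The initial states map to each other since both are the all-zero valuation.

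Next I would verify that $\Phi$ preserves and reflects transitions, splitting into three cases. For a discrete transition on a shared action $a \in \mathit{Act}^1 \cap \mathit{Act}^2$, the syntactic conjunction (Definition~\ref{def:conjunctionTIOA}) gives an edge with combined guard $\varphi^1 \wedge \varphi^2$ and combined reset $c^1 \cup c^2$; since the guards/resets refer to disjoint clock sets, satisfaction of $\varphi^1 \wedge \varphi^2$ by $v$ is equivalent to satisfaction of $\varphi^i$ by $v|_{\mathit{Clk}^i}$ for each $i$, which matches the conjunction of the two component transitions at the semantic level (Definition~\ref{def:conjunctionTIOTS}, first clause). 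The non-shared action cases match the second and third semantic clauses analogously, using the fact that the syntactic construction leaves the other location unchanged and touches no clocks of the other component. For delay transitions, $((l^1,l^2), v) \xlongrightarrow{d} ((l^1,l^2), v+d)$ holds in $\sem{A^1 \wedge A^2}$ iff $v + d' \models \mathit{Inv}^1(l^1) \wedge \mathit{Inv}^2(l^2)$ for all $0 \leq d' \leq d$, which by disjointness of clock sets factors into the two conditions needed on each side to fire the corresponding delay in $\sem{A^i}$, matching the last clause of Definition~\ref{def:conjunctionTIOTS}.

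Having established the isomorphism, I would conclude by noting that the operators $\mathrm{err}$, $\mathrm{Pred}_t$, $\mathrm{ipred}$, $\mathrm{opred}$ and hence $\pi$ (and its dual $\Theta$) are defined purely in terms of the transition structure and the input/output partition of the alphabet. Since $\Phi$ preserves both, $\Phi(\mathrm{cons}^{\sem{A^1\wedge A^2}}) = \mathrm{cons}^{\sem{A^1}\wedge\sem{A^2}}$, and restricting the transition relations accordingly yields isomorphic TIOTSs $(\sem{A^1 \wedge A^2})^{\Delta}$ and $(\sem{A^1} \wedge \sem{A^2})^{\Delta}$. The only mildly delicate point, and the one I would write out with some care, is the delay case: one has to ensure the intermediate-delay invariant condition from Definition~\ref{def:semanticTIOA} factors cleanly over the two disjoint clock sets and matches the semantic conjunction's requirement that both components delay synchronously by $d$.
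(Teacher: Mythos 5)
Your proposed isomorphism does not exist, and the place it breaks is precisely the reason the theorem carries ${}^{\Delta}$ on both sides. The shared-action and delay cases do factor cleanly over the disjoint clock sets, as you say (these are the paper's Lemmas~\ref{lemma:conjunctionTSandAsharedaction} and~\ref{lemma:conjunctionTSandAdelay}, both genuine iffs). But for a non-shared action $a \in \mathit{Act}^1 \setminus \mathit{Act}^2$ the two sides genuinely disagree. In $\sem{A^1} \wedge \sem{A^2}$, the asynchronous clause of Definition~\ref{def:conjunctionTIOTS} only requires $q^2 \in Q^2$, i.e.\ mere membership of $(l^2,v^2)$ in the state set $\mathit{Loc}^2 \times [\mathit{Clk}^2 \mapsto \mathbb{R}_{\geq 0}]$ --- it does \emph{not} require $v^2 \models \mathit{Inv}^2(l^2)$. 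In $\sem{A^1 \wedge A^2}$, by contrast, the semantic rule for the edge $((l_1^1,l^2),a,\varphi^1,c^1,(l_2^1,l^2))$ demands that the target valuation satisfy $\mathit{Inv}^1(l_2^1) \wedge \mathit{Inv}^2(l^2)$. So whenever $v^2 \not\models \mathit{Inv}^2(l^2)$, the transition exists in $\sem{A^1}\wedge\sem{A^2}$ but not in $\sem{A^1 \wedge A^2}$; Figure~\ref{fig:example_condition_conjunction} in the paper is exactly such a counterexample (the extra transition $(1,4)\xrightarrow{a!}(2,4)$). This is also why the paper's Lemma~\ref{lemma:conjunctionTSandAnonsharedaction} needs the side condition $v_2 \models \mathit{Inv}^2(l_2)$ while the unconditional Corollary~\ref{cor:conjunctionTSandAnonsharedaction} gives only one implication, and why the no-pruning equality is stated only as a corollary under the assumption of equal alphabets.

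The paper's actual route has to work harder than transporting $\mathrm{cons}$ along a bijection: it first shows the state sets coincide (Lemma~\ref{lemma:conjunctionTSandAsamestateset}), then that $\mathrm{err}$ and $\mathrm{cons}$ coincide on the two systems despite the mismatched transition relations (Lemmas~\ref{lemma:conjunctionTSandAsameerror} and~\ref{lemma:conjunctionTSandAsamecons}), and only then matches transitions of the \emph{pruned} systems. The key step bridging the gap in the problematic direction is that if a transition of $(\sem{A^1}\wedge\sem{A^2})^{\Delta}$ has a consistent target, then by time reflexivity the target admits a $0$-delay, which through Definition~\ref{def:semanticTIOA} forces $v^2 \models \mathit{Inv}^2(l^2)$ and re-enables Lemma~\ref{lemma:conjunctionTSandAnonsharedaction}. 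So the pruning is load-bearing in the argument, not a formality you can append after an isomorphism; your final paragraph's claim that $\Phi$ preserves the transition structure is the missing (and false) step.
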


\new{Before we can prove this theorem, we have to introduce several lemmas. The proofs of these lemmas can all be found in Appendix~\ref{app:proofs-conj}. The first lemma shows that the state set of $\sem{A^1 \wedge A^2}$ and $\sem{A^1} \wedge \sem{A^2}$ are the same, including the initial state.}

\begin{lemma}\label{lemma:conjunctionTSandAsamestateset}
	\new{Given two TIOAs $A^i = (\mathit{Loc}^i, l_0^i, \mathit{Act}^i, \mathit{Clk}^i, E^i, \mathit{Inv}^i), i=1,2$ where $\mathit{Act}_i^1 \cap \mathit{Act}_o^2 =\emptyset \wedge \mathit{Act}_o^1 \cap \mathit{Act}_i^2 =\emptyset$. Then $Q^{\sem{A^1 \wedge A^2}} = Q^{\sem{A^1} \wedge \sem{A^2}}$ and $q_0^{\sem{A^1 \wedge A^2}} = q_0^{\sem{A^1} \wedge \sem{A^2}}$.}
\end{lemma}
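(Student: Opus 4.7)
The plan is to unfold both sides using Definitions~\ref{def:semanticTIOA}, \ref{def:conjunctionTIOTS}, and~\ref{def:conjunctionTIOA}, and then observe that the resulting state sets coincide under the canonical identification of a valuation on a disjoint union of clock sets with the pair of its restrictions.

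First I would unfold the left-hand side. Definition~\ref{def:conjunctionTIOA} gives $A^1 \wedge A^2$ the location set $\mathit{Loc}^1 \times \mathit{Loc}^2$, initial location $(l_0^1, l_0^2)$, and clock set $\mathit{Clk}^1 \uplus \mathit{Clk}^2$. Applying Definition~\ref{def:semanticTIOA} then yields
\begin{equation*}
Q^{\sem{A^1 \wedge A^2}} = (\mathit{Loc}^1 \times \mathit{Loc}^2) \times [\mathit{Clk}^1 \uplus \mathit{Clk}^2 \mapsto \mathbb{R}_{\geq 0}],
\end{equation*}
with initial state $((l_0^1, l_0^2), \bm{0})$ where $\bm{0}$ is the zero valuation on the combined clocks. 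For the right-hand side, Definition~\ref{def:semanticTIOA} gives $Q^{\sem{A^i}} = \mathit{Loc}^i \times [\mathit{Clk}^i \mapsto \mathbb{R}_{\geq 0}]$ with initial state $(l_0^i, \bm{0}^i)$ for each $i$, and Definition~\ref{def:conjunctionTIOTS} then produces $Q^{\sem{A^1} \wedge \sem{A^2}} = Q^{\sem{A^1}} \times Q^{\sem{A^2}}$ with initial state $((l_0^1, \bm{0}^1), (l_0^2, \bm{0}^2))$.

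The remaining step is the standard identification: since $\mathit{Clk}^1$ and $\mathit{Clk}^2$ are disjoint, a valuation $v \in [\mathit{Clk}^1 \uplus \mathit{Clk}^2 \mapsto \mathbb{R}_{\geq 0}]$ is in bijection with the pair $(v|_{\mathit{Clk}^1}, v|_{\mathit{Clk}^2})$ of its restrictions, and after reassociating the product this turns $((l^1, l^2), v)$ on the left into $((l^1, v|_{\mathit{Clk}^1}), (l^2, v|_{\mathit{Clk}^2}))$ on the right. Under this identification, which the paper already assumes implicitly by writing $\uplus$ for the combined clock set, the two state sets are literally equal, and the zero valuation $\bm{0}$ restricts to $\bm{0}^1$ and $\bm{0}^2$, matching the initial states.

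I do not anticipate any serious obstacle: the lemma is essentially bookkeeping and serves to set up later lemmas that will do the real work of comparing the transition relations in Theorem~\ref{thrm:conjunctionTSandA}. The only point requiring some care is to make the disjoint-union-of-clocks identification explicit, and to verify that the initial state of $A^1 \wedge A^2$ satisfies the combined invariant $\mathit{Inv}((l_0^1, l_0^2)) = \mathit{Inv}^1(l_0^1) \wedge \mathit{Inv}^2(l_0^2)$ under $\bm{0}$ — which follows immediately from the corresponding requirements $\bm{0}^i \models \mathit{Inv}^i(l_0^i)$ inherited from each $A^i$ being a well-formed TIOA.
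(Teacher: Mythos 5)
Your proposal is correct and follows essentially the same route as the paper's proof: unfold both sides via the definitions of the semantic and the two conjunction operators, then identify $(\mathit{Loc}^1\times\mathit{Loc}^2)\times[\mathit{Clk}^1\uplus\mathit{Clk}^2\mapsto\mathbb{R}_{\geq 0}]$ with $(\mathit{Loc}^1\times[\mathit{Clk}^1\mapsto\mathbb{R}_{\geq 0}])\times(\mathit{Loc}^2\times[\mathit{Clk}^2\mapsto\mathbb{R}_{\geq 0}])$ using disjointness of the clock sets. Your extra remark about $\bm{0}\models\mathit{Inv}((l_0^1,l_0^2))$ is a harmless addition the paper leaves implicit.
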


\new{Lemmas~\ref{lemma:conjunctionTSandAdelay} and~\ref{lemma:conjunctionTSandAsharedaction} show that $\sem{A^1 \wedge A^2}$ and $\sem{A^1} \wedge \sem{A^2}$ mimic each other with delays and shared actions.}
\begin{lemma}\label{lemma:conjunctionTSandAdelay}
	\new{Given two TIOAs $A^i = (\mathit{Loc}^i, l_0^i, \mathit{Act}^i, \mathit{Clk}^i, E^i, \mathit{Inv}^i), i=1,2$ where $\mathit{Act}_i^1 \cap \mathit{Act}_o^2 =\emptyset \wedge \mathit{Act}_o^1 \cap \mathit{Act}_i^2 =\emptyset$. Denote $X = \sem{A^1 \wedge A^2}$ and $Y = \sem{A^1} \wedge \sem{A^2}$, and let $d\in\mathbb{R}_{\geq 0}$ and $q_1,q_2\in Q^X\cap Q^Y$. Then $q_1\xlongrightarrow{d}{}^{\!\! X} q_2$ if and only if $q_1\xlongrightarrow{d}{}^{\!\! Y} q_2$.}
\end{lemma}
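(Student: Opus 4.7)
The strategy is to unfold the definitions of the two semantic constructions on both sides and observe that both reduce to the same condition, namely that the invariants of the two component locations jointly hold along the entire delay. The only subtle ingredient is the bookkeeping of clock valuations, since the clock set of the conjoined automaton is $\mathit{Clk}^1 \uplus \mathit{Clk}^2$ while each component automaton only sees its own clocks.

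First, using Lemma~\ref{lemma:conjunctionTSandAsamestateset}, a state common to $Q^X$ and $Q^Y$ has the form $((l^1,l^2), v)$ where $v$ is a valuation over $\mathit{Clk}^1 \uplus \mathit{Clk}^2$; equivalently, $v = v^1 \cup v^2$ where $v^i$ is the restriction of $v$ to $\mathit{Clk}^i$. I would then argue, using that $\mathit{Inv}^i(l^i)$ is a guard over $\mathit{Clk}^i$ only, the elementary fact that for every $d' \in \mathbb{R}_{\geq 0}$,
\begin{equation*}
  v + d' \models \mathit{Inv}^i(l^i) \iff v^i + d' \models \mathit{Inv}^i(l^i),
\end{equation*}
and similarly $(v+d)|_{\mathit{Clk}^i} = v^i + d$, so that target states are matched on the two sides.

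Second, for the forward direction, assume $q_1 \xlongrightarrow{d}{}^{\!\!X} q_2$. By Definition~\ref{def:semanticTIOA} applied to $A^1 \wedge A^2$, and by Definition~\ref{def:conjunctionTIOA} which specifies $\mathit{Inv}((l^1,l^2)) = \mathit{Inv}^1(l^1) \wedge \mathit{Inv}^2(l^2)$, this means $q_2 = ((l^1,l^2), v+d)$ and for every $d' < d$ (and at $d$), $v + d' \models \mathit{Inv}^1(l^1) \wedge \mathit{Inv}^2(l^2)$. Splitting the conjunction and applying the restriction identity above yields $v^i + d' \models \mathit{Inv}^i(l^i)$ for $i=1,2$, which by Definition~\ref{def:semanticTIOA} applied to each $A^i$ gives delay transitions $(l^i, v^i) \xlongrightarrow{d}{}^{\!\!\sem{A^i}} (l^i, v^i + d)$. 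Combining these via the delay clause of Definition~\ref{def:conjunctionTIOTS} yields $q_1 \xlongrightarrow{d}{}^{\!\!Y} q_2$.

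Third, the reverse direction follows by running the same calculation backwards: a delay in $Y$ gives, via the TIOTS conjunction definition, matching delays in each $\sem{A^i}$; unfolding Definition~\ref{def:semanticTIOA} for each component produces the pointwise invariant satisfaction of each $\mathit{Inv}^i(l^i)$ along the delay; reassembling via the restriction identity yields $v + d' \models \mathit{Inv}^1(l^1) \wedge \mathit{Inv}^2(l^2) = \mathit{Inv}((l^1,l^2))$, which is exactly what Definition~\ref{def:semanticTIOA} requires for a delay in $\sem{A^1 \wedge A^2}$. The only real obstacle is notational: keeping the restriction of $v$ to the two disjoint clock sets consistent between the two sides; once this is set up, the argument is purely definitional and no induction or fixpoint reasoning is needed.
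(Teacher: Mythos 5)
Your proposal is correct and follows essentially the same route as the paper's proof: both reduce the claim to the identity $\mathit{Inv}^{A^1\wedge A^2}((l^1,l^2)) = \mathit{Inv}^1(l^1)\wedge\mathit{Inv}^2(l^2)$ from Definition~\ref{def:conjunctionTIOA}, split the invariant satisfaction along the delay using disjointness of $\mathit{Clk}^1$ and $\mathit{Clk}^2$, and match the component delay transitions via the delay clause of Definition~\ref{def:conjunctionTIOTS}, with Lemma~\ref{lemma:conjunctionTSandAsamestateset} handling the state identification. No gaps.
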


\begin{lemma}\label{lemma:conjunctionTSandAsharedaction}
    \new{Given two TIOAs $A^i = (\mathit{Loc}^i, l_0^i, \mathit{Act}^i, \mathit{Clk}^i, E^i, \mathit{Inv}^i), i=1,2$ where $\mathit{Act}_i^1 \cap \mathit{Act}_o^2 =\emptyset \wedge \mathit{Act}_o^1 \cap \mathit{Act}_i^2 =\emptyset$. Denote $X = \sem{A^1 \wedge A^2}$ and $Y = \sem{A^1} \wedge \sem{A^2}$, and let $a\in\mathit{Act}^1\cap\mathit{Act}^2$ and $q_1,q_2\in Q^X\cap Q^Y$. Then $q_1\xlongrightarrow{a}{}^{\!\! X} q_2$ if and only if $q_1\xlongrightarrow{a}{}^{\!\! Y} q_2$.}
\end{lemma}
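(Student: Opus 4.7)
The plan is to unfold the definitions on both sides and exhibit a clean decomposition that relies crucially on the disjointness $\mathit{Clk}^1 \uplus \mathit{Clk}^2$. First I would invoke Lemma~\ref{lemma:conjunctionTSandAsamestateset} to identify each state $q \in Q^X \cap Q^Y$ with a triple $((l^1,l^2),v)$ on the $X$ side and the pair $((l^1, v|_{\mathit{Clk}^1}),(l^2, v|_{\mathit{Clk}^2}))$ on the $Y$ side, so that throughout the rest of the argument a single valuation $v$ on $\mathit{Clk}^1 \uplus \mathit{Clk}^2$ represents both views. Under this identification I can write $q_i = ((l_i^1, l_i^2), v_i)$ uniformly for $i=1,2$.

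For the forward direction, I would assume $q_1 \xlongrightarrow{a}{}^{\!\!X} q_2$. By Definition~\ref{def:semanticTIOA} applied to $A^1 \wedge A^2$, this transition is generated by an edge of $A^1 \wedge A^2$; since $a \in \mathit{Act}^1 \cap \mathit{Act}^2$, Definition~\ref{def:conjunctionTIOA} forces this edge to have the form $((l_1^1,l_1^2), a, \varphi^1 \wedge \varphi^2, c^1 \cup c^2, (l_2^1,l_2^2))$ with $(l_1^i, a, \varphi^i, c^i, l_2^i) \in E^i$. The semantic conditions then read $v_1 \models \varphi^1 \wedge \varphi^2$, $v_2 = v_1[r\mapsto 0]_{r\in c^1 \cup c^2}$, and $v_2 \models \mathit{Inv}^1(l_2^1) \wedge \mathit{Inv}^2(l_2^2)$. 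Because $\mathit{Clk}^1$ and $\mathit{Clk}^2$ are disjoint, these three conditions split cleanly per component: restricting to $\mathit{Clk}^i$ gives $v_1|_{\mathit{Clk}^i} \models \varphi^i$, $v_2|_{\mathit{Clk}^i} = v_1|_{\mathit{Clk}^i}[r\mapsto 0]_{r\in c^i}$, and $v_2|_{\mathit{Clk}^i} \models \mathit{Inv}^i(l_2^i)$. Applying Definition~\ref{def:semanticTIOA} to $A^i$ individually gives the transitions $(l_1^i, v_1|_{\mathit{Clk}^i}) \xlongrightarrow{a}{}^{\!\!\sem{A^i}} (l_2^i, v_2|_{\mathit{Clk}^i})$, and the first rule of Definition~\ref{def:conjunctionTIOTS} (applicable because $a \in \mathit{Act}^1 \cap \mathit{Act}^2$) then assembles them into $q_1 \xlongrightarrow{a}{}^{\!\!Y} q_2$.

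For the reverse direction I would run exactly the same chain of equivalences backwards: a transition in $Y$ on the shared action $a$ requires simultaneous $a$-transitions in $\sem{A^1}$ and $\sem{A^2}$, each of which is witnessed by an edge in $E^i$ with the associated guard, reset, and target invariant conditions. The two edges combine, by Definition~\ref{def:conjunctionTIOA}, into an edge in $E$ of $A^1 \wedge A^2$, and the component guards, resets, and invariants reassemble into their conjunctions on the joint clock set, producing the $X$-transition.

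The main technical obstacle is bookkeeping rather than conceptual: one must verify that satisfaction of a guard or invariant that mentions only clocks from $\mathit{Clk}^i$ is insensitive to the values of clocks in the other component, and that the joint reset $v[r\mapsto 0]_{r\in c^1 \cup c^2}$ decomposes as the pair of componentwise resets. Both facts are immediate from $\mathit{Clk}^1 \cap \mathit{Clk}^2 = \emptyset$ and the syntactic fact that each $\varphi^i$ and each $\mathit{Inv}^i(l^i)$ is a guard over $\mathit{Clk}^i$ only, but they must be stated explicitly so that the guard/invariant/reset conditions in the two definitions line up without hidden interaction between components.
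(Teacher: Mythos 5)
Your proposal matches the paper's own proof essentially step for step: both directions unfold Definition~\ref{def:semanticTIOA} and Definitions~\ref{def:conjunctionTIOA}/\ref{def:conjunctionTIOTS}, use Lemma~\ref{lemma:conjunctionTSandAsamestateset} to identify the state sets, and exploit $\mathit{Clk}^1 \cap \mathit{Clk}^2 = \emptyset$ to split the guard, reset, and target-invariant conditions componentwise and reassemble them. The argument is correct and no different in substance from the paper's.
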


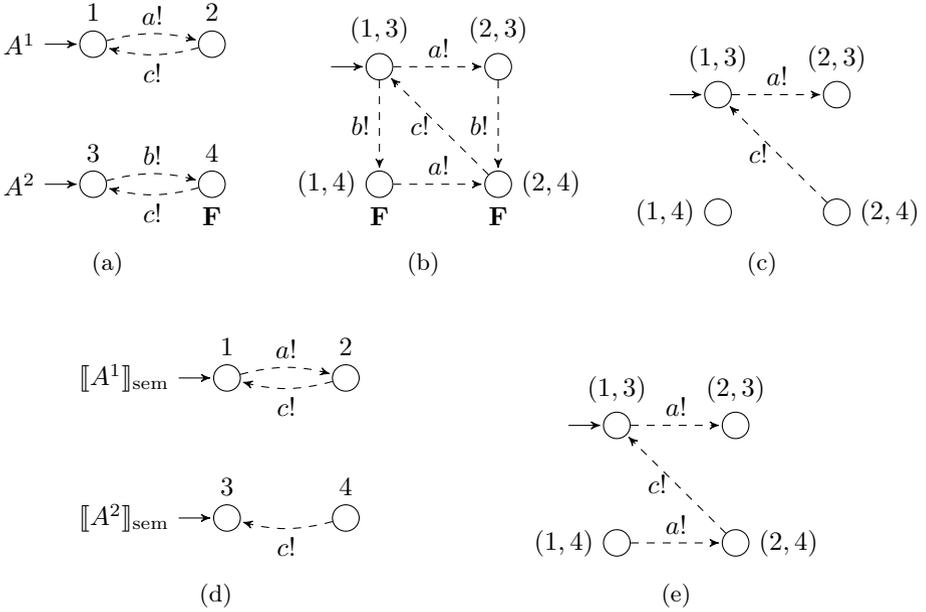
\begin{figure}   
    \begin{subfigure}[b]{0.25\textwidth}
    \centering
    \begin{tikzpicture}[->,>=stealth',shorten >=1pt,main node/.style={circle,draw,font=\sffamily\small\bfseries,minimum size=10pt},align=left,node distance=1.2cm]
		\node[main node, initial, initial text={$A^1$}, initial where=left, label={above:$1$}] (1) {};
		\node[main node, label={above:$2$}] (2) [right = of 1] {};
		
		\path[bend left = 15, dashed]
		(1) edge node[above] {$a!$} (2)
		(2) edge node[below] {$c!$} (1)
		;
		
		\node[main node, initial, initial text={$A^2$}, initial where=left, label={above:$3$}] (3) [below = 1.5 cm of 1] {};
		\node[main node, label={above:$4$}, label={below:$\mathbf{F}$}] (4) [right = of 3] {};
		
		\path[bend left = 15, dashed]
		(3) edge node[above] {$b!$} (4)
		(4) edge node[below] {$c!$} (3)
		;
	\end{tikzpicture}
	\caption{}
	\end{subfigure}
	\hfill
	\begin{subfigure}[b]{0.3\textwidth}
	\centering
	\begin{tikzpicture}[->,>=stealth',shorten >=1pt,main node/.style={circle,draw,font=\sffamily\small\bfseries,minimum size=10pt},align=left,node distance=1.2cm]
		\node[main node, initial, initial text={}, initial where=left, label={above:$(1,3)$}] (1) {};
		\node[main node, label={above:$(2,3)$}] (2) [right = of 1] {};
		\node[main node, label={left:$(1,4)$}, label={below:$\mathbf{F}$}] (3) [below = of 1] {};
		\node[main node, label={right:$(2,4)$}, label={below:$\mathbf{F}$}] (4) [right = of 3] {};
		
		\path[dashed]
		(1) edge node[above] {$a!$} (2)
		(3) edge node[above] {$a!$} (4)
		(1) edge node[left] {$b!$} (3)
		(2) edge node[left] {$b!$} (4)
		(4) edge node[left] {$c!$} (1)
		;
	\end{tikzpicture}
	\caption{}
	\end{subfigure}
	\hfill
	\begin{subfigure}[b]{0.3\textwidth}
	\centering
	\begin{tikzpicture}[->,>=stealth',shorten >=1pt,main node/.style={circle,draw,font=\sffamily\small\bfseries,minimum size=10pt},align=left,node distance=1.2cm]
		\node[main node, initial, initial text={}, initial where=left, label={above:$(1,3)$}] (1) {};
		\node[main node, label={above:$(2,3)$}] (2) [right = of 1] {};
		\node[main node, label={left:$(1,4)$}] (3) [below = of 1] {};
		\node[main node, label={right:$(2,4)$}] (4) [right = of 3] {};
		
		\path[dashed]
		(1) edge node[above] {$a!$} (2)
		(4) edge node[left] {$c!$} (1)
		;
	\end{tikzpicture}
	\caption{}
	\end{subfigure}
	
	\bigskip
	\begin{subfigure}[b]{0.49\textwidth}
	\centering
	\begin{tikzpicture}[->,>=stealth',shorten >=1pt,main node/.style={circle,draw,font=\sffamily\small\bfseries,minimum size=10pt},align=left,node distance=1.2cm]
		\node[main node, initial, initial text={$\llbracket A^1\rrbracket_{\mathrm{sem}}$}, initial where=left, label={above:$1$}] (1) {};
		\node[main node, label={above:$2$}] (2) [right = of 1] {};
		
		\path[bend left = 15, dashed]
		(1) edge node[above] {$a!$} (2)
		(2) edge node[below] {$c!$} (1)
		;
		
		\node[main node, initial, initial text={$\llbracket A^2\rrbracket_{\mathrm{sem}}$}, initial where=left, label={above:$3$}] (3) [below = 1.5 cm of 1] {};
		\node[main node, label={above:$4$}] (4) [right = of 3] {};
		
		\path[bend left = 15, dashed]
		(4) edge node[below] {$c!$} (3)
		;
	\end{tikzpicture}
	\caption{}
	\end{subfigure}
	\hfill
	\begin{subfigure}[b]{0.49\textwidth}
	\centering
	\begin{tikzpicture}[->,>=stealth',shorten >=1pt,main node/.style={circle,draw,font=\sffamily\small\bfseries,minimum size=10pt},align=left,node distance=1.2cm]
		\node[main node, initial, initial text={}, initial where=left, label={above:$(1,3)$}] (1) {};
		\node[main node, label={above:$(2,3)$}] (2) [right = of 1] {};
		\node[main node, label={left:$(1,4)$}] (3) [below = of 1] {};
		\node[main node, label={right:$(2,4)$}] (4) [right = of 3] {};
		
		\path[dashed]
		(1) edge node[above] {$a!$} (2)
		(3) edge node[above] {$a!$} (4)
		(4) edge node[left] {$c!$} (1)
		;
	\end{tikzpicture}
	\caption{}
	\end{subfigure}
    \caption{\new{Example demonstrating additional condition in Lemma~\ref{lemma:conjunctionTSandAnonsharedaction}. In (a) two TIOA $A^1$ and $A^2$ are shown, where location $4$ has a $\mathbf{F}$ invariant. In (b) the conjunction $A^1 \wedge A^2$ is shown. In (c) the semantic representation $\sem{A^1 \wedge A^2}$ is shown (ignoring the delays for simplicity). In (d) the semantic representations $\sem{A^1}$ and $\sem{A^2}$ are shown. And finally, in (e) the conjunction $\sem{A^1} \wedge \sem{A^2}$ is shown.}}
    \label{fig:example_condition_conjunction}
\end{figure}

\new{Lemma~\ref{lemma:conjunctionTSandAnonsharedaction} considers transitions in $\sem{A^1 \wedge A^2}$ and $\sem{A^1} \wedge \sem{A^2}$ labeled by non-shared actions. A special case of this lemma is captured with Corollary~\ref{cor:conjunctionTSandAnonsharedaction}. Compared to Lemma~\ref{lemma:conjunctionTSandAsharedaction}, we can see that we need the additional condition $v_2\models\mathit{Inv}^2(l_2)$ in order to show that transitions can be mimicked. A simple example demonstrating the necessity of this condition is shown in Figure~\ref{fig:example_condition_conjunction}. From two TIOA $A^1$ and $A^2$, the TIOTSs $\sem{A^1 \wedge A^2}$ in (c) and $\sem{A^1} \wedge \sem{A^2}$ in (e) are calculated. As can be seen, $\sem{A^1} \wedge \sem{A^2}$ has an additional transition $(1,4)\xrightarrow{a!} (2,4)$, which is not present in $\sem{A^1 \wedge A^2}$. The reason for this is that the location invariant $\mathit{Inv}(4) = \mathbf{F}$ is processed by the semantic operator before $\sem{A^2}$ is combined with $\sem{A^1}$ by the conjunction operator. Therefore, it is suddenly possible to reach location $(2,4)$ with $a!$ in $\sem{A^1} \wedge \sem{A^2}$. Looking at Lemma~\ref{lemma:conjunctionTSandAnonsharedaction}, we can see that the condition $v_2\models\mathit{Inv}^2(l_2)$ is not satisfied for $q_2=(l_2^1,l_2^2,v_2) = (2,4)$, as $\mathit{Inv}^2(4) = \mathbf{F}$ and no valuation $v_2$ can satisfy a false invariant. So, the additional condition in the lemma `remembers' the original invariant in case we first go to the semantic representation before we perform the conjunction operation.}

\begin{lemma}\label{lemma:conjunctionTSandAnonsharedaction}
    \new{Given two TIOAs $A^i = (\mathit{Loc}^i, l_0^i, \mathit{Act}^i, \mathit{Clk}^i, E^i, \mathit{Inv}^i), i=1,2$ where $\mathit{Act}_i^1 \cap \mathit{Act}_o^2 =\emptyset \wedge \mathit{Act}_o^1 \cap \mathit{Act}_i^2 =\emptyset$. Denote $X = \sem{A^1 \wedge A^2}$ and $Y = \sem{A^1} \wedge \sem{A^2}$, and let $a\in\mathit{Act}^1\setminus\mathit{Act}^2$ and $q_1,q_2\in Q^X\cap Q^Y$, where $q_2=(l_2^1,l_2^2,v_2)$. If $v_2\models\mathit{Inv}^2(l_2)$, then $q_1\xlongrightarrow{a}{}^{\!\! X} q_2$ if and only if $q_1\xlongrightarrow{a}{}^{\!\! Y} q_2$.}
\end{lemma}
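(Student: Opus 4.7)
The plan is to prove both directions by straightforward unfolding of Definitions~\ref{def:conjunctionTIOTS},~\ref{def:semanticTIOA}, and~\ref{def:conjunctionTIOA}, carefully tracking when invariant checks from the semantic operator $\sem{\cdot}$ are applied relative to the conjunction operator. The essential difference between $X$ and $Y$ for a non-shared action $a\in\mathit{Act}^1\setminus\mathit{Act}^2$ lies in exactly one invariant check: in $X$, applying $\sem{\cdot}$ \emph{after} conjunction forces the target valuation to satisfy $\mathit{Inv}^1(l_2^1)\wedge\mathit{Inv}^2(l_2^2)$, whereas in $Y$ the second conjunct is never checked, since the second component is frozen during the $a$-transition and $\sem{A^2}$ imposes no constraint on a state it merely passes through. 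The hypothesis $v_2\models\mathit{Inv}^2(l_2^2)$ exactly closes this gap.

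For the forward direction ($\Rightarrow$), I would assume $q_1\xlongrightarrow{a}{}^{\!\!X}q_2$ with $q_1=(l_1^1,l_1^2,v_1)$ and $q_2=(l_2^1,l_2^2,v_2)$. By Definition~\ref{def:semanticTIOA}, this transition is induced by an edge of $A^1\wedge A^2$ labelled $a$ with some guard $\varphi$, reset $c$, and target $(l_2^1,l_2^2)$, where $v_1\models\varphi$, $v_2=v_1[r\mapsto 0]_{r\in c}$, and $v_2\models\mathit{Inv}^1(l_2^1)\wedge\mathit{Inv}^2(l_2^2)$. Since $a\in\mathit{Act}^1\setminus\mathit{Act}^2$, Definition~\ref{def:conjunctionTIOA} tells us this edge originates from an edge $(l_1^1,a,\varphi,c,l_2^1)\in E^1$ together with $l_1^2=l_2^2\in\mathit{Loc}^2$. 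Applying Definition~\ref{def:semanticTIOA} to $A^1$ yields $(l_1^1,v_1|_{\mathit{Clk}^1})\xlongrightarrow{a}{}^{\!\!\sem{A^1}}(l_2^1,v_2|_{\mathit{Clk}^1})$, and then the non-shared-action clause of Definition~\ref{def:conjunctionTIOTS} delivers $q_1\xlongrightarrow{a}{}^{\!\!Y}q_2$.

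For the reverse direction ($\Leftarrow$), I would unfold $q_1\xlongrightarrow{a}{}^{\!\!Y}q_2$ by Definition~\ref{def:conjunctionTIOTS} to get $l_1^2=l_2^2$ and a transition $(l_1^1,v_1|_{\mathit{Clk}^1})\xlongrightarrow{a}{}^{\!\!\sem{A^1}}(l_2^1,v_2|_{\mathit{Clk}^1})$, which by Definition~\ref{def:semanticTIOA} is induced by some edge $(l_1^1,a,\varphi,c,l_2^1)\in E^1$ with $v_1\models\varphi$, $v_2|_{\mathit{Clk}^1}=v_1|_{\mathit{Clk}^1}[r\mapsto 0]_{r\in c}$, and $v_2\models\mathit{Inv}^1(l_2^1)$. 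Since $\mathit{Clk}^1$ and $\mathit{Clk}^2$ are disjoint and the second component's clocks are unchanged, also $v_2=v_1[r\mapsto 0]_{r\in c}$ holds on the combined clock set. Definition~\ref{def:conjunctionTIOA} then produces an edge $((l_1^1,l_1^2),a,\varphi,c,(l_2^1,l_2^2))$ in $A^1\wedge A^2$ with invariant $\mathit{Inv}^1(l_2^1)\wedge\mathit{Inv}^2(l_2^2)$ on the target. To apply Definition~\ref{def:semanticTIOA} and conclude $q_1\xlongrightarrow{a}{}^{\!\!X}q_2$, I need $v_2$ to satisfy both conjuncts; the first is already established, and the second is precisely the hypothesis $v_2\models\mathit{Inv}^2(l_2^2)$.

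There is no real obstacle: the proof is mechanical unfolding, and the only subtlety is keeping straight the (implicit) isomorphism between states of $X$ and $Y$ and respecting the disjointness of $\mathit{Clk}^1$ and $\mathit{Clk}^2$ when splitting a combined valuation. The hypothesis enters only in the $(\Leftarrow)$ direction and its necessity is already witnessed by Figure~\ref{fig:example_condition_conjunction}.
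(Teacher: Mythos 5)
Your proposal is correct and follows essentially the same route as the paper's proof: both directions proceed by mechanically unfolding Definitions~\ref{def:semanticTIOA}, \ref{def:conjunctionTIOA}, and~\ref{def:conjunctionTIOTS}, using disjointness of $\mathit{Clk}^1$ and $\mathit{Clk}^2$ to split and recombine valuations, and invoking the hypothesis $v_2\models\mathit{Inv}^2(l_2^2)$ only to close the missing invariant check in the $(\Leftarrow)$ direction. You also correctly observe that the forward direction holds unconditionally, which is exactly what the paper records as Corollary~\ref{cor:conjunctionTSandAnonsharedaction}.
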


\begin{corollary}\label{cor:conjunctionTSandAnonsharedaction}
    \new{Given two TIOAs $A^i = (\mathit{Loc}^i, l_0^i, \mathit{Act}^i, \mathit{Clk}^i, E^i, \mathit{Inv}^i), i=1,2$ where $\mathit{Act}_i^1 \cap \mathit{Act}_o^2 =\emptyset \wedge \mathit{Act}_o^1 \cap \mathit{Act}_i^2 =\emptyset$. Denote $X = \sem{A^1 \wedge A^2}$ and $Y = \sem{A^1} \wedge \sem{A^2}$, and let $a\in\mathit{Act}^1\setminus\mathit{Act}^2$ and $q_1,q_2\in Q^X\cap Q^Y$. If $q_1\xlongrightarrow{a}{}^{\!\! X} q_2$, then $q_1\xlongrightarrow{a}{}^{\!\! Y} q_2$.}
\end{corollary}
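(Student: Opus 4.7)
The plan is to derive the corollary as an immediate consequence of Lemma~\ref{lemma:conjunctionTSandAnonsharedaction} by showing that the extra hypothesis $v_2 \models \mathit{Inv}^2(l_2^2)$ is automatically satisfied whenever the transition in $X$ exists. Since the lemma is already an iff, the hard work has been done; the corollary only claims one direction and drops the side condition, so all that is left is to justify why that side condition comes for free here.

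First, I would unpack what it means for $q_1 \xlongrightarrow{a}{}^{\!\! X} q_2$ to hold, using Definition~\ref{def:semanticTIOA} applied to the conjunction automaton $A^1 \wedge A^2$. Writing $q_2 = ((l_2^1, l_2^2), v_2)$, this transition must arise from an edge in $E$ of $A^1 \wedge A^2$. Since $a \in \mathit{Act}^1 \setminus \mathit{Act}^2$, Definition~\ref{def:conjunctionTIOA} forces this edge to be of the form $((l_1^1, l_1^2), a, \varphi^1, c^1, (l_2^1, l_2^2))$ with $l_2^2 = l_1^2$ (only the $A^1$-component moves). Crucially, the semantic rule of Definition~\ref{def:semanticTIOA} requires the resulting valuation to satisfy the invariant of the target location, so $v_2 \models \mathit{Inv}((l_2^1, l_2^2))$.

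By the invariant rule of conjunction in Definition~\ref{def:conjunctionTIOA}, we have $\mathit{Inv}((l_2^1, l_2^2)) = \mathit{Inv}^1(l_2^1) \wedge \mathit{Inv}^2(l_2^2)$. In particular, $v_2 \models \mathit{Inv}^2(l_2^2)$. This is precisely the additional hypothesis required to invoke the forward direction of Lemma~\ref{lemma:conjunctionTSandAnonsharedaction}. Applying the lemma then yields $q_1 \xlongrightarrow{a}{}^{\!\! Y} q_2$, completing the argument.

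I do not anticipate any real obstacle: the only subtlety is recognising that in the conjunction $A^1 \wedge A^2$ the invariant at the target location is the conjunction of the two component invariants, so satisfying it in $X$ automatically establishes the missing hypothesis of the lemma. The corollary can thus be written as a two- or three-line proof.
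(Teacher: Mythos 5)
Your proposal is correct and matches the paper's own proof of the corollary essentially verbatim: both unpack the $X$-transition via Definition~\ref{def:semanticTIOA}, use the fact that $\mathit{Inv}^{A^1\wedge A^2}(l_2) = \mathit{Inv}^1(l_2^1)\wedge\mathit{Inv}^2(l_2^2)$ from Definition~\ref{def:conjunctionTIOA} to obtain $v_2\models\mathit{Inv}^2(l_2^2)$ for free, and then invoke Lemma~\ref{lemma:conjunctionTSandAnonsharedaction}. No gaps.
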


\new{The following two lemmas consider the error states and consistent states in $\sem{A^1 \wedge A^2}$ and $\sem{A^1} \wedge \sem{A^2}$, respectively. We can show that both sets are the same for $\sem{A^1 \wedge A^2}$ and $\sem{A^1} \wedge \sem{A^2}$.}
\begin{lemma}\label{lemma:conjunctionTSandAsameerror}
	\new{Given two TIOAs $A^i = (\mathit{Loc}^i, l_0^i, \mathit{Act}^i, \mathit{Clk}^i, E^i, \mathit{Inv}^i), i=1,2$ where $\mathit{Act}_i^1 \cap \mathit{Act}_o^2 =\emptyset \wedge \mathit{Act}_o^1 \cap \mathit{Act}_i^2 =\emptyset$. Let $Q \subseteq \mathit{Loc}^1 \times \mathit{Loc}^2 \times [(\mathit{Clk}^1 \cup \mathit{Clk}^2)\mapsto \mathbb{R}_{\geq 0}]$. Then $\mathrm{err}^{\sem{A^1 \wedge A^2}}(Q) = \mathrm{err}^{\sem{A^1} \wedge \sem{A^2}}(Q)$.}
\end{lemma}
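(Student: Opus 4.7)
The plan is to prove the equality by double inclusion, relying heavily on the preceding lemmas (\ref{lemma:conjunctionTSandAsamestateset}, \ref{lemma:conjunctionTSandAdelay}, \ref{lemma:conjunctionTSandAsharedaction}, \ref{lemma:conjunctionTSandAnonsharedaction}) and Corollary~\ref{cor:conjunctionTSandAnonsharedaction}. Let $X=\sem{A^1\wedge A^2}$ and $Y=\sem{A^1}\wedge\sem{A^2}$. Since $Q^X=Q^Y$ by Lemma~\ref{lemma:conjunctionTSandAsamestateset}, for any candidate state $q$ we only need to show that the two logical conditions defining $\mathrm{err}(Q)$ (the $\exists d: q\arrownot\xlongrightarrow{d}$ clause, and the universal clause over delay-then-output successors) hold on one side iff they hold on the other. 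The first clause is immediate from Lemma~\ref{lemma:conjunctionTSandAdelay}, so the real work concerns the universal clause.

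For the inclusion $\mathrm{err}^Y(Q)\subseteq\mathrm{err}^X(Q)$, I would proceed by contradiction: assume some witness $q\xlongrightarrow{d}{}^{\!\!X}q'\xlongrightarrow{o!}{}^{\!\!X}q''$ with $q''\notin Q$ exists in $X$. Lemma~\ref{lemma:conjunctionTSandAdelay} lifts the delay to $Y$; for the output $o!$, if $o!\in\mathit{Act}^1\cap\mathit{Act}^2$ I invoke Lemma~\ref{lemma:conjunctionTSandAsharedaction}, and if $o!$ is non-shared I invoke Corollary~\ref{cor:conjunctionTSandAnonsharedaction} to transfer the transition to $Y$, contradicting $q\in\mathrm{err}^Y(Q)$.

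The reverse inclusion $\mathrm{err}^X(Q)\subseteq\mathrm{err}^Y(Q)$ is the subtle direction. Again, by contradiction, suppose $q\xlongrightarrow{d}{}^{\!\!Y}q'\xlongrightarrow{o!}{}^{\!\!Y}q''$ with $q''\notin Q$. The delay transfers to $X$ by Lemma~\ref{lemma:conjunctionTSandAdelay}, and a shared $o!$ transfers back by Lemma~\ref{lemma:conjunctionTSandAsharedaction}. For a non-shared $o!$, however, only the conditional Lemma~\ref{lemma:conjunctionTSandAnonsharedaction} is available: to transfer $q'\xlongrightarrow{o!}{}^{\!\!Y}q''$ back to $X$, I must verify the side condition that if $o!\in\mathit{Act}^1\setminus\mathit{Act}^2$ and $q''=(l_2^1,l_2^2,v_2)$, then $v_2\models\mathit{Inv}^2(l_2^2)$ (and symmetrically for the other alphabet). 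This is the main obstacle, and it is precisely the phenomenon illustrated by Figure~\ref{fig:example_condition_conjunction}.

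To discharge the side condition, I would reason about how $q''$ arose: since $q\xlongrightarrow{d}{}^{\!\!Y}q'$ is a componentwise delay, the delay rule in $\sem{A^2}$ forces $v_{q'}|_{\mathit{Clk}^2}\models\mathit{Inv}^2(l^2_{q'})$. Then the non-shared $o!$ transition in $Y$ touches only the $A^1$ component, so $l^2_{q''}=l^2_{q'}$ and $v_{q''}$ agrees with $v_{q'}$ on the (disjoint) clocks $\mathit{Clk}^2$ used by $\mathit{Inv}^2$. Hence $v_{q''}\models\mathit{Inv}^2(l^2_{q''})$, unlocking Lemma~\ref{lemma:conjunctionTSandAnonsharedaction} and producing the required transition $q'\xlongrightarrow{o!}{}^{\!\!X}q''$ to contradict $q\in\mathrm{err}^X(Q)$. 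This delay-then-invariant-preservation argument, which exploits the disjointness $\mathit{Clk}^1\uplus\mathit{Clk}^2$ built into Definition~\ref{def:conjunctionTIOA}, is the crux of the proof.
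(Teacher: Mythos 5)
Your proof is correct and follows essentially the same route as the paper's: both reduce to the shared state set (Lemma~\ref{lemma:conjunctionTSandAsamestateset}), dispatch the delay clause of $\mathrm{err}$ via Lemma~\ref{lemma:conjunctionTSandAdelay}, and split the output clause into the three alphabet cases, with the crux of the $\mathrm{err}^{\sem{A^1 \wedge A^2}}(Q)\subseteq\mathrm{err}^{\sem{A^1}\wedge\sem{A^2}}(Q)$ direction being exactly your observation that the preceding delay in $\sem{A^2}$ forces $\mathit{Inv}^2$ to hold at $q'$ and clock disjointness propagates it to $q''$, discharging the side condition of Lemma~\ref{lemma:conjunctionTSandAnonsharedaction}. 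Your contrapositive witness-transfer phrasing merely compresses the paper's exhaustive edge-by-edge case analysis; the mathematical content is the same.
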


\begin{lemma}\label{lemma:conjunctionTSandAsamecons}
	\new{Given two TIOAs $A^i = (\mathit{Loc}^i, l_0^i, \mathit{Act}^i, \mathit{Clk}^i, E^i, \mathit{Inv}^i), i=1,2$ where $\mathit{Act}_i^1 \cap \mathit{Act}_o^2 =\emptyset \wedge \mathit{Act}_o^1 \cap \mathit{Act}_i^2 =\emptyset$. Then $\mathrm{cons}^{\sem{A^1 \wedge A^2}} = \mathrm{cons}^{\sem{A^1} \wedge \sem{A^2}}$.}
\end{lemma}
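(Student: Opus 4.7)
My plan is to prove both inclusions separately using the coinductive structure of $\mathrm{cons}$ as the greatest fixpoint of $\Theta$. Let $X = \sem{A^1 \wedge A^2}$ and $Y = \sem{A^1} \wedge \sem{A^2}$; by Lemma~\ref{lemma:conjunctionTSandAsamestateset} they share the same underlying state space. To show $\mathrm{cons}^X \subseteq \mathrm{cons}^Y$ it suffices to establish that $\mathrm{cons}^X$ is a post-fixpoint of $\Theta^Y$, that is, $\mathrm{cons}^X \subseteq \Theta^Y(\mathrm{cons}^X)$; since $\mathrm{cons}^Y$ is the greatest fixpoint, the inclusion follows. Fix $q \in \mathrm{cons}^X$. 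The avoidance of error states transfers directly by Lemma~\ref{lemma:conjunctionTSandAsameerror}. For the independent-progress disjunct, delay successors coincide by Lemma~\ref{lemma:conjunctionTSandAdelay}, transitions labelled with shared actions coincide by Lemma~\ref{lemma:conjunctionTSandAsharedaction}, and non-shared-action transitions existing in $X$ also exist in $Y$ by Corollary~\ref{cor:conjunctionTSandAnonsharedaction}. Hence whichever disjunct witnesses the condition for $q$ in $\Theta^X(\mathrm{cons}^X)$ is also available in $\Theta^Y(\mathrm{cons}^X)$.

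For the reverse inclusion $\mathrm{cons}^Y \subseteq \mathrm{cons}^X$, I would analogously show $\mathrm{cons}^Y \subseteq \Theta^X(\mathrm{cons}^Y)$. The main obstacle is that Lemma~\ref{lemma:conjunctionTSandAnonsharedaction} requires the additional hypothesis $v_2 \models \mathit{Inv}^2(l_2)$ (respectively $v_1\models\mathit{Inv}^1(l_1)$) on the target state in order to transport a non-shared-action transition from $Y$ to $X$. To overcome this I would first establish the auxiliary observation that any state $q = (l_1,l_2,v)$ belonging to $\mathrm{cons}^Y$ (or to $\mathrm{cons}^X$) necessarily satisfies $v \models \mathit{Inv}^1(l_1) \wedge \mathit{Inv}^2(l_2)$. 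The argument is that if one of the invariants is violated, then $q$ admits no delay transitions whatsoever (not even $d=0$, by Definition~\ref{def:semanticTIOA}), so inspection of Definition~\ref{def:error} shows that $q \in \mathrm{err}^Y(Z)$ for every $Z$, contradicting consistency.

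With this observation in hand, the reverse inclusion runs smoothly. Given $q \in \mathrm{cons}^Y$, any witness provided by $\Theta^Y$ consists of delay successors $q'$ and, in the output branch, a further state $q''$, all of which lie in $\mathrm{cons}^Y$ and therefore have invariant-satisfying valuations. Delay transitions are preserved by Lemma~\ref{lemma:conjunctionTSandAdelay}, shared-action transitions by Lemma~\ref{lemma:conjunctionTSandAsharedaction}, and the non-shared-action transitions by Lemma~\ref{lemma:conjunctionTSandAnonsharedaction} whose invariant hypothesis is now guaranteed by the auxiliary observation. Thus the same witness establishes $q \in \Theta^X(\mathrm{cons}^Y)$ and the fixpoint argument closes. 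Combining both inclusions yields the desired equality.
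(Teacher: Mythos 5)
Your proof is correct and follows the same overall strategy as the paper's: both inclusions are obtained by showing that $\mathrm{cons}$ of one system is a post-fixpoint of $\Theta$ of the other, with error states handled by Lemma~\ref{lemma:conjunctionTSandAsameerror}, delays by Lemma~\ref{lemma:conjunctionTSandAdelay}, shared actions by Lemma~\ref{lemma:conjunctionTSandAsharedaction}, and non-shared actions by Corollary~\ref{cor:conjunctionTSandAnonsharedaction} in the easy direction. Where you genuinely depart from the paper is the delicate step of the reverse inclusion: discharging the hypothesis $v_2\models\mathit{Inv}^2(l_2)$ of Lemma~\ref{lemma:conjunctionTSandAnonsharedaction}. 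The paper derives it locally: the state from which the action is taken is reached by a delay transition that, by Lemma~\ref{lemma:conjunctionTSandAdelay}, also exists in $\sem{A^1\wedge A^2}$, so its valuation satisfies the conjoined invariant; and since an action in $\mathit{Act}^1\setminus\mathit{Act}^2$ neither resets clocks of $A^2$ nor changes its location, the target inherits $\mathit{Inv}^2$. You instead prove the global fact that every state of $\mathrm{cons}$ satisfies both location invariants, because a state violating one admits no delay transition at all (not even $d=0$, as the semantics requires the invariant to hold along the whole delay including its start) and is therefore in $\mathrm{err}(Z)$ for every $Z$, hence excluded from any fixpoint of $\Theta$; since the witness states supplied by $\Theta^Y$ all lie in $\mathrm{cons}^Y$, the hypothesis is then automatic. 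Both arguments are sound. Yours isolates a clean, reusable invariant of $\mathrm{cons}$ and avoids the clock-reset bookkeeping; the paper's stays at the level of individual transitions and does not require revisiting the definition of $\mathrm{err}$.
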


\new{Finally, we are ready to proof Theorem~\ref{thrm:conjunctionTSandA}. The reason why adversarial pruning is needed becomes apparent in the second half of the proof where we consider non-shared events. To further illustrate this, consider again the example in Figure~\ref{fig:example_condition_conjunction}, where we show that $\sem{A^1} \wedge \sem{A^2}$ has an additional transition $(1,4)\xrightarrow{a!} (2,4)$, which is not present in $\sem{A^1 \wedge A^2}$. We can `remove' this transition with adversarial pruning by realizing that the target state $(2,4)$ is an inconsistent state (you can see this by noticing that no time delay, including a zero time delay, is possible). }

\begin{proof}[Proof of Theorem~\ref{thrm:conjunctionTSandA}]
	\new{We will prove this theorem by showing that $(\sem{A^1 \wedge A^2})^{\Delta}$ and $(\sem{A^1} \wedge \sem{A^2})^{\Delta}$ have the same set of states, same initial state, same set of actions, and same transition relation.
	}

	\new{It follows from Lemma~\ref{lemma:conjunctionTSandAsamestateset} that $\sem{A^1 \wedge A^2}$ and $\sem{A^1} \wedge \sem{A^2}$ have the same state set and initial state. As $\mathrm{cons}^{\sem{A^1 \wedge A^2}} = \mathrm{cons}^{\sem{A^1} \wedge \sem{A^2}} = \mathrm{cons}$ from Lemma~\ref{lemma:conjunctionTSandAsamecons}, it follows that $(\sem{A^1 \wedge A^2})^{\Delta}$ and $(\sem{A^1} \wedge \sem{A^2})^{\Delta}$ have the same state set and initial state. Also, observe that the semantic of a TIOA and adversarial pruning do not alter the action set. Therefore, it follows directly that $(\sem{A^1 \wedge A^2})^{\Delta}$ and $(\sem{A^1} \wedge \sem{A^2})^{\Delta}$ have the same action set and partitioning into input and output actions.
	}

	\new{It remains to show that $(\sem{A^1 \wedge A^2})^{\Delta}$ and $(\sem{A^1} \wedge \sem{A^2})^{\Delta}$ have the same transition relation. In the remainder of the proof, we will use $v^1$ and $v^2$ to indicate the part of a valuation $v$ of only the clocks of $A^1$ and $A^2$, respectively. Also, for brevity we write $X=(\sem{A^1 \wedge A^2})^{\Delta}$, $Y = (\sem{A^1} \wedge \sem{A^2})^{\Delta}$, and $\mathit{Clk} = \mathit{Clk}^1 \uplus\mathit{Clk}^2$ in the rest of this proof.
	}

	\new{($\Rightarrow$) Assume a transition $q_1^X\xrightarrow{a} q_2^X$ in $X$. From Definition~\ref{def:adversarialpruning} it follows that $q_1^X\xrightarrow{a} q_2^X$ in $\sem{A^1 \wedge A^2}$ and $q_2^X \in \mathrm{cons}$. Following Definition~\ref{def:semanticTIOA} of the semantic, it follows that there exists an edge $(l_1,a,\varphi, c, l_2)\in E^{A^1\wedge A^2}$ with $q_1^X = (l_1,v_1)$, $q_2^X = (l_2,v_2)$, $l_1,l_2 \in \mathit{Loc}^{A^1\wedge A^2}$, $v_1, v_2 \in [\mathit{Clk} \mapsto \mathbb{R}_{\geq 0}]$, $v_1\models \varphi$, $v_2 = v_1[r\mapsto 0]_{r\in c}$, and $v_2\models \mathit{Inv}(l_2)$. Now we consider the three cases of Definition~\ref{def:conjunctionTIOA} of the conjunction for TIOA.
	}
	\begin{itemize}
		\item \new{$a\in\mathit{Act}^1\cap\mathit{Act}^2$. It follows directly from Lemma~\ref{lemma:conjunctionTSandAsharedaction} that $q_1^X \xrightarrow{a} q_2^X$ is a transition in $\sem{A^1} \wedge \sem{A^2}$. Since $q_2^X\in\mathrm{cons}$, it holds that $q_1^X \xrightarrow{a} q_2^X$ is a transition in $Y$.
		}

		\item \new{$a\in\mathit{Act}^1\setminus\mathit{Act}^2$. It follows directly from Corollary~\ref{cor:conjunctionTSandAnonsharedaction} that $q_1^X \xrightarrow{a} q_2^X$ is a transition in $\sem{A^1} \wedge \sem{A^2}$. Since $q_2^X\in\mathrm{cons}$, it holds that $q_1^X \xrightarrow{a} q_2^X$ is a transition in $Y$.
		}

		\item \new{$a\in\mathit{Act}^2\setminus\mathit{Act}^1$. It follows directly from Corollary~\ref{cor:conjunctionTSandAnonsharedaction} (where we switched $A^1$ and $A^2$) that $q_1^X \xrightarrow{a} q_2^X$ is a transition in $\sem{A^1} \wedge \sem{A^2}$. Since $q_2^X\in\mathrm{cons}$, it holds that $q_1^X \xrightarrow{a} q_2^X$ is a transition in $Y$.}
	\end{itemize}
	\new{Now consider that $a$ is a delay $d$. It follows directly from Lemma~\ref{lemma:conjunctionTSandAdelay} that $q_1^X \xrightarrow{d} q_2^X$ is a transition in $\sem{A^1} \wedge \sem{A^2}$. Since $q_2^X\in\mathrm{cons}$, it holds that $q_1^X \xrightarrow{d} q_2^X$ is a transition in $Y$.
	}

	\new{We have shown that when $q_1^X \xrightarrow{a} q_2^X$ is a transition in $X = (\sem{A^1 \wedge A^2})^{\Delta}$, it holds that $q_1^X \xrightarrow{a} q_2^X$ is a transition in $Y = (\sem{A^1} \wedge \sem{A^2})^{\Delta}$. Since the transition is arbitrarily chosen, it holds for all transitions in $X$.
	}

	\new{($\Leftarrow$) Assume a transition $q_1^Y\xrightarrow{a} q_2^Y$ in $Y$. From Definition~\ref{def:adversarialpruning} it follows that $q_1^Y\xrightarrow{a} q_2^Y$ in $\sem{A^1} \wedge \sem{A^2}$ and $q_2^Y \in \mathrm{cons}$. Now we consider the three cases of Definition~\ref{def:conjunctionTIOTS} of the conjunction for TIOTS.
	}
	\begin{itemize}
		\item \new{$a\in\mathit{Act}^1\cap\mathit{Act}^2$. It follows directly from Lemma~\ref{lemma:conjunctionTSandAsharedaction} that $q_1^Y \xrightarrow{a} q_2^Y$ is a transition in $\sem{A^1 \wedge A^2}$. Since $q_2^Y\in\mathrm{cons}$, it holds that $q_1^Y \xrightarrow{a} q_2^Y$ is a transition in $X$.
		}

		\item \new{$a\in\mathit{Act}^1\setminus\mathit{Act}^2$. From time reflexivity of Definition~\ref{def:tiots} we have that $q_2^Y\xlongrightarrow{d}$ with $d = 0$. From Definitions~\ref{def:adversarialpruning} and~\ref{def:conjunctionTIOTS} it follows that $q_2^{\sem{A^1}} \xlongrightarrow{d}$ and $q^{\sem{A^2}} \xlongrightarrow{d}$. Now, from Definition~\ref{def:semanticTIOA} it follows that $v^2 + d \models \mathit{Inv}^2(l^2)$, i.e., $v^2 \models \mathit{Inv}^2(l^2)$.
		}

		\new{It now follows directly from Lemma~\ref{lemma:conjunctionTSandAnonsharedaction} that $q_1^Y \xrightarrow{a} q_2^Y$ is a transition in $\sem{A^1 \wedge A^2}$. Since $q_2^Y\in\mathrm{cons}$, it holds that $q_1^Y \xrightarrow{a} q_2^Y$ is a transition in $X$.
		}

		\item \new{$a\in\mathit{Act}^2\setminus\mathit{Act}^1$. From time reflexivity of Definition~\ref{def:tiots} we have that $q_2^Y\xlongrightarrow{d}$ with $d = 0$. From Definitions~\ref{def:adversarialpruning} and~\ref{def:conjunctionTIOTS} it follows that $q^{\sem{A^1}} \xlongrightarrow{d}$ and $q_2^{\sem{A^2}} \xlongrightarrow{d}$. Now, from Definition~\ref{def:semanticTIOA} it follows that $v^1 + d \models \mathit{Inv}^1(l^1)$, i.e., $v^1 \models \mathit{Inv}^1(l^1)$.
		}

		\new{It now follows directly from Lemma~\ref{lemma:conjunctionTSandAnonsharedaction} (where we switched $A^1$ and $A^2$) that $q_1^Y \xrightarrow{a} q_2^Y$ is a transition in $\sem{A^1 \wedge A^2}$. Since $q_2^Y\in\mathrm{cons}$, it holds that $q_1^Y \xrightarrow{a} q_2^Y$ is a transition in $X$.}
	\end{itemize}
	\new{Now consider that $a$ is a delay $d$. It follows directly from Lemma~\ref{lemma:conjunctionTSandAdelay} that $q_1^Y \xrightarrow{d} q_2^Y$ is a transition in $\sem{A^1\wedge A^2}$. Since $q_2^Y\in\mathrm{cons}$, it holds that $q_1^Y \xrightarrow{d} q_2^Y$ is a transition in $X$.
	}

	\new{We have shown that when $q_1^Y \xrightarrow{a} q_2^Y$ is a transition in $Y = (\sem{A^1} \wedge \sem{A^2})^{\Delta}$, it holds that $q_1^Y \xrightarrow{a} q_2^Y$ is a transition in $X = (\sem{A^1 \wedge A^2})^{\Delta}$. Since the transition is arbitrarily chosen, it holds for all transitions in $Y$.}
\end{proof}

\new{The following corollary describes a special case of Theorem~\ref{thrm:conjunctionTSandA}, which happens to be one of the unproven main theorems in our previous work~\cite{david_real-time_2015}.}
\begin{corollary}
	\new{Given two TIOAs $A^i = (\mathit{Loc}^i, l_0^i, \mathit{Act}^i, \mathit{Clk}^i, E^i, \mathit{Inv}^i), i=1,2$ where $\mathit{Act}_i^1 = \mathit{Act}_i^2  \wedge \mathit{Act}_o^1 =\mathit{Act}_o^2$. Then $\sem{A^1 \wedge A^2} = \sem{A^1} \wedge \sem{A^2}$.}
\end{corollary}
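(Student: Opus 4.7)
The plan is to exploit the fact that the only source of discrepancy between $\sem{A^1 \wedge A^2}$ and $\sem{A^1} \wedge \sem{A^2}$ identified in the proof of Theorem~\ref{thrm:conjunctionTSandA} was the non-shared action case (illustrated in Figure~\ref{fig:example_condition_conjunction}), which is precisely why adversarial pruning was required there. Under the assumption $\mathit{Act}_i^1 = \mathit{Act}_i^2$ and $\mathit{Act}_o^1 = \mathit{Act}_o^2$, we have $\mathit{Act}^1 = \mathit{Act}^2$, so both sets $\mathit{Act}^1 \setminus \mathit{Act}^2$ and $\mathit{Act}^2 \setminus \mathit{Act}^1$ are empty. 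Hence only the shared-action rule of Definitions~\ref{def:conjunctionTIOTS} and~\ref{def:conjunctionTIOA} is ever triggered, and the problematic case disappears.

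First I would establish equality of states and the initial state directly via Lemma~\ref{lemma:conjunctionTSandAsamestateset}, which holds without any restriction on the alphabets and therefore applies as is. Equality of the action alphabets (together with their partitioning into inputs and outputs) is immediate, since $\sem{\cdot}$ and $\wedge$ both preserve $\mathit{Act}$ and the hypothesis gives $\mathit{Act}_i^1 \cup \mathit{Act}_i^2 = \mathit{Act}_i^1 = \mathit{Act}_i^2$ and analogously for outputs.

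Next I would show the two transition relations coincide. Any transition $q_1 \xlongrightarrow{a} q_2$ has $a$ being either a delay or an element of $\mathit{Act}^1 = \mathit{Act}^2 = \mathit{Act}^1 \cap \mathit{Act}^2$. For delays, Lemma~\ref{lemma:conjunctionTSandAdelay} yields a biconditional between the two TIOTSs. For shared discrete actions, Lemma~\ref{lemma:conjunctionTSandAsharedaction} does the same. Since no non-shared actions exist, Lemma~\ref{lemma:conjunctionTSandAnonsharedaction} is not needed at all, and therefore no extra transitions can appear in $\sem{A^1} \wedge \sem{A^2}$ that are absent in $\sem{A^1 \wedge A^2}$.

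I do not expect any serious obstacle: the corollary is essentially the reading of the main theorem with the ``adversarial pruning'' step stripped away, justified by the observation that pruning was only required to collapse the extra behaviors created by unshared actions. The only minor subtlety to be careful with is to state explicitly that under equal alphabets the hypothesis $\mathit{Act}_i^1 \cap \mathit{Act}_o^2 = \emptyset$ and $\mathit{Act}_o^1 \cap \mathit{Act}_i^2 = \emptyset$ required by the cited lemmas still holds, which follows directly from well-formedness of the TIOAs (inputs and outputs of a single TIOA are disjoint).
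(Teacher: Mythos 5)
Your proposal is correct and follows essentially the same route as the paper: both reduce the corollary to the observation that with equal alphabets the non-shared-action case (and hence the need for adversarial pruning) vanishes, so the result follows from Lemma~\ref{lemma:conjunctionTSandAsamestateset} together with the shared-action and delay lemmas. If anything, your version is slightly more careful than the paper's, which cites only Lemmas~\ref{lemma:conjunctionTSandAsamestateset} and~\ref{lemma:conjunctionTSandAsharedaction} and leaves the delay transitions (Lemma~\ref{lemma:conjunctionTSandAdelay}) and the verification of the disjointness hypothesis implicit.
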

\begin{proof}
	\new{This corollary follows directly as a special case from the proof of Theorem~\ref{thrm:conjunctionTSandA}. The special case only depends on Lemmas~\ref{lemma:conjunctionTSandAsamestateset} and~\ref{lemma:conjunctionTSandAsharedaction}, which do not require adversarial pruning to be applied.}
\end{proof}

\section{\new{Parallel} composition}
\label{sec:compandcomp}


We shall now define {\em structural composition}, also called {\em parallel composition}, between specifications. We follow the optimistic approach of~\cite{alfaro_timed_2002}, i.e., {\em two specifications can be composed if there exists at least one environment in which they can work together}. 
Before going further, we would like to contrast the structural and logical composition.

The main use case for parallel composition is in fact dual to the one for conjunction. Indeed, as observed in the previous section, conjunction is used to reason about internal properties of an implementation set, so if a local inconsistency arises in conjunction we limit the implementation set to avoid it in implementations. A pruned specification can be given to a designer, who chooses a particular implementation satisfying conjoined requirements. A conjunction is consistent if the output player can avoid inconsistencies, and its main theorem states that its set of implementation coincides with the intersection of implementation sets of the conjuncts.

In contrast, parallel composition is used to reason about external use of two (or more) components. We assume an independent implementation scenario, where the two composed components are implemented by independent designers. The designer of any of the components can only assume that the \new{other} composed implementations will adhere to \new{the} original specifications being composed. Consequently if an error occurs in parallel composition of the two specifications, \new{the independent designers receive additional information on how to restrict their specifications to avoid reaching the error states in the composed system.}

We now propose our formal definition for parallel composition, which roughly corresponds to the one defined on timed input/output automata~\cite{kaynar_timed_2003}. We consider two \new{TIOTSs} $S = (Q^S,q_0^S, \mathit{Act}^S,\allowbreak \rightarrow^S)$ and $T = (Q^T,q_0^T,\mathit{Act}^T, \allowbreak\rightarrow^T)$ and we say that they are \emph{composable} iff their output alphabets are disjoint $\mathit{Act}_o^S \cap \mathit{Act}_o^T = \emptyset$. \new{Similarly, we say that two specifications are composable if their semantics are composable.}

\begin{definition}\label{def:parallelcompositionTIOTS}
	Given two specifications $S^i = (Q^i,q_0^i,\mathit{Act}^i,\rightarrow^i), i=1,2$ where $\mathit{Act}_o^1 \cap \mathit{Act}_o^2 =\emptyset$, the \emph{parallel composition} of $S^1$ and $S^2$, denoted by $S^1 \parallel S^2$, is TIOTS $(Q^1\times Q^2, (q_0^1,q_0^2),\mathit{Act},\rightarrow)$ where $\mathit{Act} = \mathit{Act}^1\cup \mathit{Act}^2 = \mathit{Act}_i \uplus \mathit{Act}_o$ with $\mathit{Act}_i = (\mathit{Act}_i^1\setminus\mathit{Act}_o^2) \cup (\mathit{Act}_i^2\setminus\mathit{Act}_o^1)$ and $\mathit{Act}_o= \mathit{Act}_o^1 \cup \mathit{Act}_o^2$, and $\rightarrow$ is defined as
	\begin{itemize}
		\item $(q_1^1,q_1^2)\xlongrightarrow{a} (q_2^1,q_2^2)$ if \new{$a\in\mathit{Act}^1\cap\mathit{Act}^2$}, $q_1^1\xlongrightarrow{a}{}^{\!\! 1} q_2^1$, and $q_1^2\xlongrightarrow{a}{}^{\!\! 2} q_2^2$
		\item $(q_1^1,q^2)\xlongrightarrow{a} (q_2^1,q^2)$ if $a\in\mathit{Act}^1\setminus\mathit{Act}^2$, $q_1^1\xlongrightarrow{a}{}^{\!\! 1} q_2^1$, and $q^2\in Q^2$
		\item $(q^1,q_1^2)\xlongrightarrow{a} (q^1,q_2^2)$ if $a\in\mathit{Act}^2\setminus\mathit{Act}^1$, $q_1^2\xlongrightarrow{a}{}^{\!\! 2} q_2^2$, and $q^1\in Q^1$
		\item $(q_1^1,q_1^2)\xlongrightarrow{d} (q_2^1,q_2^2)$ if $d \in \mathbb{R}_{\geq 0}$, $q_1^1\xlongrightarrow{d}{}^{\!\! 1} q_2^1$, and $q_1^2\xlongrightarrow{d}{}^{\!\! 2} q_2^2$
	\end{itemize}
\end{definition}

Observe that if we compose two locally specifications using the above product rules, then the resulting product is also locally consistent. \new{This is formalized in Lemma~\ref{lemma:parallelcompositionTSconsistent}.} Furthermore, observe that parallel composition is commutative, and that two specifications composed give rise to well-formed specifications. It is also associative in the following sense:
\begin{equation*}
	\mod{(S\parallel T) \parallel U} = \mod{S \parallel (T\parallel U)}
\end{equation*}

\begin{lemma}\label{lemma:parallelcompositionTSconsistent}
	\new{Given two locally consistent specifications $S^i = (Q^i,q_0^i,\mathit{Act}^i,\rightarrow^i), i=1,2$ where $\mathit{Act}_o^1 \cap \mathit{Act}_o^2 =\emptyset$. Then $S^1 \parallel S^2$ is locally consistent.}
\end{lemma}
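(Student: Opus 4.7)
The plan is to show independent progress for every state $(q^1,q^2)$ of $S^1\parallel S^2$ by exploiting the independent progress property of $q^1$ in $S^1$ and $q^2$ in $S^2$, together with the fact that specifications are input-enabled (Definition~\ref{def:specification}). First, I would unfold local consistency of each $S^i$: for each $q^i$, either (a$_i$) $q^i$ can delay indefinitely, or (b$_i$) there exist $d_i\in\mathbb{R}_{\geq 0}$, a state $q^{i'}$, and an output $o_i!\in\mathit{Act}_o^i$ with $q^i\xlongrightarrow{d_i}{}^{\!\! i} q^{i'}$ and $q^{i'}\xlongrightarrow{o_i!}{}^{\!\! i}$.

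Next I would do a case analysis on the four combinations of (a$_1$)/(b$_1$) with (a$_2$)/(b$_2$). If both (a$_1$) and (a$_2$) hold, then by the delay rule of Definition~\ref{def:parallelcompositionTIOTS} the composite state $(q^1,q^2)$ can delay indefinitely as well, giving independent progress directly. For the mixed case (a$_1$) and (b$_2$), I would first delay the composite by $d_2$, which is possible since $q^1$ can delay by any amount and $q^2$ can delay by $d_2$ by assumption; this reaches some $(q^{1''},q^{2'})$. Then I would exhibit the output $o_2!$ from this state: if $o_2!\notin\mathit{Act}^1$, the second rule of the composition applies; if $o_2!\in\mathit{Act}^1$, then because $\mathit{Act}_o^1\cap\mathit{Act}_o^2=\emptyset$, we must have $o_2!\in\mathit{Act}_i^1$, and input-enabledness of $S^1$ gives $q^{1''}\xlongrightarrow{o_2!}{}^{\!\! 1}$ so the synchronization rule applies. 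The case (b$_1$) and (a$_2$) is symmetric.

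The main case is (b$_1$) and (b$_2$). Without loss of generality assume $d_1\le d_2$; delay the composite by $d_1$ to reach $(q^{1'},q^{2''})$, which is possible since $q^2$ can delay by $d_2\ge d_1$ so by time additivity it can delay by $d_1$ first. Then fire $o_1!$ by the same dichotomy as above: either $o_1!\notin\mathit{Act}^2$ and the non-synchronized rule applies, or $o_1!\in\mathit{Act}_i^2$ (using disjointness of output alphabets) and input-enabledness of $S^2$ ensures $q^{2''}\xlongrightarrow{o_1!}{}^{\!\! 2}$, so the synchronized rule applies.

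I do not anticipate a real obstacle here; the argument is essentially bookkeeping across the four cases. The only point where one has to be careful is the output case, where one must invoke the hypothesis $\mathit{Act}_o^1\cap\mathit{Act}_o^2=\emptyset$ to conclude that an output of one component, if shared, must be an input of the other, so that input-enabledness produces the required synchronizing transition. Time additivity (Definition~\ref{def:tiots}) is used implicitly to split the longer of the two delays so that both components can participate in a common delay.
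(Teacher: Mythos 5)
Your proof is correct and rests on exactly the same ingredients as the paper's: disjointness of the output alphabets forces a shared output of one component to be an input of the other, input-enabledness then guarantees the synchronizing partner transition, and time additivity lets the shorter delay be split off so both components participate. The only difference is direction of presentation — the paper argues the contrapositive (a state of $S^1\parallel S^2$ violating independent progress projects to an error state of $S^1$ or $S^2$), while you lift independent progress from the components to the composite directly — so this counts as essentially the same approach.
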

\noindent The proof of Lemma~\ref{lemma:parallelcompositionTSconsistent} can be found in Appendix~\ref{app:proofs-comp}.

\begin{theorem}\label{thm:precongruence}
  Refinement is a pre-congruence with respect to parallel composition: for any specifications $S^1$, $S^2$, and $T$ such that $S^1 \leq S^2$ and $S^1$ is composable with $T$, we have that $S^2$ is composable with $T$ and $S^1\parallel T \leq S^2\parallel T$. 
\end{theorem}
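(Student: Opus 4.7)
The plan is to build a witness relation for $S^1 \parallel T \leq S^2 \parallel T$ directly from the relation witnessing $S^1 \leq S^2$, using the natural ``lift along the $T$-coordinate'' construction. I would structure the proof in three stages: discharge the alphabet preconditions, define the candidate relation, then check the five simulation clauses of Definition~\ref{def:refinement} by case analysis on which parallel-composition rule produced the transition.

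First, for composability and the alphabet conditions of the refinement to be meaningful: from $S^1 \leq S^2$ we have $\mathit{Act}_o^{S^2} \subseteq \mathit{Act}_o^{S^1}$, so combined with $\mathit{Act}_o^{S^1} \cap \mathit{Act}_o^T = \emptyset$ we get $\mathit{Act}_o^{S^2} \cap \mathit{Act}_o^T = \emptyset$, giving composability of $S^2$ and $T$. Unfolding the definition of parallel composition alphabets and using the four refinement alphabet conditions (disjointness of input/output across the two sides plus the input/output containments) yields $\mathit{Act}_i^{S^1 \parallel T} \subseteq \mathit{Act}_i^{S^2 \parallel T}$ and $\mathit{Act}_o^{S^2 \parallel T} \subseteq \mathit{Act}_o^{S^1 \parallel T}$, together with the required disjointness of inputs on one side and outputs on the other.

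Let $R_{12}$ be the relation witnessing $S^1 \leq S^2$. Define the candidate relation $R = \{ ((s^1,t),(s^2,t)) \mid (s^1,s^2) \in R_{12},\ t \in Q^T \}$. The initial pair lies in $R$ since $(s_0^1, s_0^2) \in R_{12}$. For the five conditions of Definition~\ref{def:refinement} I would split each incoming transition according to which rule of Definition~\ref{def:parallelcompositionTIOTS} produced it (shared, $S$-only, $T$-only, or delay). For an input $i?$ fired at $(s^2,t)$ in $S^2 \parallel T$: if $T$ participates, input-enabledness of $T$ together with clause~1 (when $i? \in \mathit{Act}_i^{S^1}$) or clause~2 (when $i? \in \mathit{Act}_i^{S^2} \setminus \mathit{Act}_i^{S^1}$) of $R_{12}$ gives the matching transition in $S^1 \parallel T$; if only $S^2$ participates, the same clauses of $R_{12}$ directly supply the matching transition (possibly none, closed by the ``stationary'' variant of clause~2). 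For an output $o!$ fired at $(s^1,t)$ in $S^1 \parallel T$, the dual case analysis uses clauses~3 and~4 of $R_{12}$: when $o! \in \mathit{Act}_o^{S^1} \cap \mathit{Act}_o^{S^2}$ the $S^2$-side mimics via clause~3 and synchronizes with $T$ as needed; when $o! \in \mathit{Act}_o^{S^1} \setminus \mathit{Act}_o^{S^2}$ clause~4 gives $(s^{1'},s^2) \in R_{12}$, and the $T$-component either idles or performs the same step as on the $S^1$-side. The delay clause is the easiest: clause~5 of $R_{12}$ combined with the fact that both components must delay by the same $d$ in parallel composition.

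The main obstacle is the careful bookkeeping in the case analysis when the alphabets $\mathit{Act}^{S^1}$ and $\mathit{Act}^{S^2}$ differ: one must verify for each of the synchronization/independence subcases that the action's status (input versus output) is preserved between $S^1 \parallel T$ and $S^2 \parallel T$, and that the corresponding refinement clause on $R_{12}$ can indeed be invoked. Input-enabledness of $T$ is used silently throughout to guarantee that $T$ can always match inputs when synchronization is required, so no input transition on the $S^2 \parallel T$ side can fail to be mirrored on the $S^1 \parallel T$ side for lack of a $T$-move.
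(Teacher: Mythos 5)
Your overall strategy coincides with the paper's: the same composability argument via $\mathit{Act}_o^{S^2}\subseteq\mathit{Act}_o^{S^1}$, the same witness relation $R=\{((s^1,t),(s^2,t))\mid (s^1,s^2)\in R_{12}\}$, and the same case analysis over the clauses of Definition~\ref{def:refinement} split by the rules of Definition~\ref{def:parallelcompositionTIOTS}. However, there is a concrete gap in your output case. You split outputs of $S^1\parallel T$ only into $o!\in\mathit{Act}_o^{S^1}\cap\mathit{Act}_o^{S^2}$ and $o!\in\mathit{Act}_o^{S^1}\setminus\mathit{Act}_o^{S^2}$, i.e.\ you only treat outputs originating from $S^1$. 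But $\mathit{Act}_o^{S^1\parallel T}=\mathit{Act}_o^{S^1}\cup\mathit{Act}_o^{T}$, so the composition also emits actions $o!\in\mathit{Act}_o^{T}$ that are \emph{inputs} (or absent) in $S^1$ and $S^2$. For these, clauses~3 and~4 of $R_{12}$ are useless: the relevant clauses are the \emph{input} clauses~1--2, and they run in the wrong direction (from a given $S^2$-transition to an $S^1$-transition). To even produce the required $S^2$-move on $o?$ you must first invoke input-enabledness of $S^2$ (Definition~\ref{def:specification}) to obtain some $q_1^2\xlongrightarrow{o?}q_2^2$, then apply clause~1 of $R_{12}$ and determinism of $S^1$ to identify its $S^1$-successor with the one in the given transition. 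This is exactly where the paper's proof explicitly appeals to input-enabledness of $S^2$ (twice, in the subcases $o?\in\mathit{Act}_i^{S^1}\cap\mathit{Act}_i^{S^2}$, $o!\in\mathit{Act}_o^T$ and $o\notin\mathit{Act}^{S^1}$, $o?\in\mathit{Act}_i^{S^2}$, $o!\in\mathit{Act}_o^T$).

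Relatedly, your closing remark that ``input-enabledness of $T$ is used silently throughout'' misattributes the needed hypothesis. In the input clause a transition of $S^2\parallel T$ is \emph{given}, so $T$'s participation is already witnessed; in the output clause a transition of $S^1\parallel T$ is given, so again $T$'s move is supplied. Input-enabledness of $T$ is never the crux; input-enabledness of $S^2$ is, and it is missing from your argument. Adding the $T$-originated output subcase with that ingredient would close the gap and make your proof match the paper's.
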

\noindent The proof of Theorem~\ref{thm:precongruence} can be found in Appendix~\ref{app:proofs-comp}.

\begin{figure}
    \begin{subfigure}[t]{0.44\textwidth}
	    \vskip 0pt
        \centering
        \begin{tikzpicture}[->,>=stealth',shorten >=1pt,main node/.style={circle,draw,font=\sffamily\small\bfseries,minimum size=10pt},node distance=1.2cm]
        	\node[main node, initial, initial text={}, initial where=left, label={above:$1$}] (1) {};
        	\node[main node, label={above:$2$}] (2) [right = of 1] {};
        	\node[main node, label={above:$3$}, label={right:$x \leq 0$}] (3) [right = of 2] {};
        	
        	\path[]
        	(1) edge node[above] {$a!$} (2)
        	(2) edge node[above] {$b?$} (3)
        	;
        	
        	\path[loop below]
            (1) edge node[below] {$b?$} (1)
            (3) edge node[below] {$b?$} (3)
            ;
        \end{tikzpicture}
	    \caption{$S$}
	\end{subfigure}
	\hfill
	\begin{subfigure}[t]{0.44\textwidth}
	    \vskip 0pt
        \centering
        \begin{tikzpicture}[->,>=stealth',shorten >=1pt,main node/.style={circle,draw,font=\sffamily\small\bfseries,minimum size=10pt},node distance=1.2cm]
        	\node[main node, initial, initial text={}, initial where=left, label={above:$4$}] (4) [] {};
        	\node[main node, label={above:$5$}] (5) [right = of 4] {};
        	\node[main node, label={above:$6$}] (6) [right = of 5] {};
        	
        	\path[]
        	(4) edge node[above] {$a?$} (5)
        	(5) edge node[above] {$b!$} (6)
        	;
        	
        	\path[loop below]
            (5) edge node[below] {$a?$} (5)
            (6) edge node[below] {$a?$} (6)
            ;
        \end{tikzpicture}
	    \caption{$T = T^{\Delta}$}
	\end{subfigure}
	\hfill
	\begin{subfigure}[t]{0.1\textwidth}
	    \vskip 0pt
        \centering
        \begin{tikzpicture}[->,>=stealth',shorten >=1pt,main node/.style={circle,draw,font=\sffamily\small\bfseries,minimum size=10pt},node distance=1.2cm]
        	\node[main node, initial, initial text={}, initial where=left, label={above:$1$}] (1) {};
        	
        	\path[loop below]
            (1) edge node[below] {$b?$} (1)
            ;
        \end{tikzpicture}
	    \caption{$S^{\Delta}$}
	\end{subfigure}
	
	\bigskip
	\begin{subfigure}[t]{0.1\textwidth}
	    \vskip 0pt
        \centering
        \begin{tikzpicture}[->,>=stealth',shorten >=1pt,main node/.style={circle,draw,font=\sffamily\small\bfseries,minimum size=10pt},node distance=1.2cm]
        	\node[main node, initial, initial text={}, initial where=left, label={above:$(1,4)$}] (1) {};
        	
        \end{tikzpicture}
	    \caption{$S^{\Delta} \parallel T^{\Delta}$}\label{fig:example_adversarial_pruning_parallel_d}
	\end{subfigure}
	\hfill
	\begin{subfigure}[t]{0.5\textwidth}
	    \vskip 0pt
        \centering
        \begin{tikzpicture}[->,>=stealth',shorten >=1pt,main node/.style={circle,draw,font=\sffamily\small\bfseries,minimum size=10pt},node distance=1.2cm]
        	\node[main node, initial, initial text={}, initial where=left, label={above:$(1,4)$}] (1) {};
        	\node[main node, label={above:$(2,5)$}] (2) [right = of 1] {};
        	\node[main node, label={above:$(3,6)$}, label={right:$x \leq 0$}] (3) [right = of 2] {};
        	
        	\path[]
        	(1) edge node[above] {$a!$} (2)
        	(2) edge node[above] {$b!$} (3)
        	;
        \end{tikzpicture}
	    \caption{$S\parallel T$}
	\end{subfigure}
	\hfill
	\begin{subfigure}[t]{0.28\textwidth}
	    \vskip 0pt
        \centering
        \begin{tikzpicture}[->,>=stealth',shorten >=1pt,main node/.style={circle,draw,font=\sffamily\small\bfseries,minimum size=10pt},node distance=1.2cm]
        	\node[main node, initial, initial text={}, initial where=left, label={above:$(1,4)$}] (1) {};
        	\node[main node, label={above:$(2,5)$}] (2) [right = of 1] {};
        	
        	\path[]
        	(1) edge node[above] {$a!$} (2)
        	;
        \end{tikzpicture}
	    \caption{$(S \parallel T)^{\Delta}$}\label{fig:example_adversarial_pruning_parallel_f}
	\end{subfigure}
    \caption{\new{Example showing that adversarial pruning does not distribute over the parallel composition operator. Observe that the result in (d) differs from the one in (f).}}
    \label{fig:example_adversarial_pruning_parallel}
\end{figure}
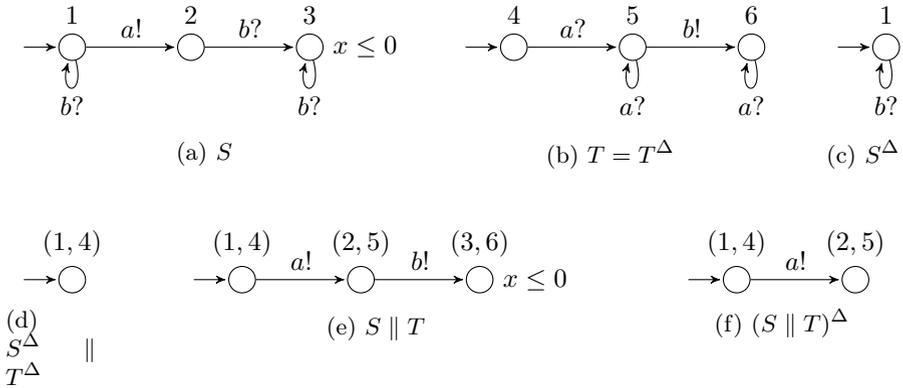

\new{Adversarial pruning does not distribute over the parallel composition operator. Consider two composable specifications $S$ and $T$:  $S^{\Delta} \parallel T^{\Delta} \neq (S \parallel T)^{\Delta}$. An example is shown in Figure~\ref{fig:example_adversarial_pruning_parallel}. Observe that $S^{\Delta} \parallel T^{\Delta}$ (Figure~\ref{fig:example_adversarial_pruning_parallel_d}) does not allow any behavior from the initial state, while $(S\parallel T)^{\Delta}$ (Figure~\ref{fig:example_adversarial_pruning_parallel_f}) still allows action $a$ to be performed. If we want specification $S$ to never reach the error state for \emph{all} possible environments, we have to disable output action $a!$ from location 1. Yet, in the example we are composing $S$ with the specific environment $T$, which can help $S$ in avoiding the error state.
Therefore, as long as we are composing components of a system together, we should not apply adversarial pruning on intermediate specifications.}

\new{We still would like to simplify intermediate specifications as much as possible before and after performing parallel composition without any loss of possible implementations. This is captured in the following definition of cooperative pruning. 
}
\begin{definition}\label{def:cooperativepruning}
	\new{Given a specification $S = (Q,s_0,\mathit{Act},\rightarrow)$, the result of \emph{cooperative pruning} of $S$, denoted by $S^{\forall}$, is a subspecification $S^{\forall} = (Q^{\forall},s_0,\mathit{Act},\rightarrow^{\forall})$ with $S^{\forall}\subseteq S$ and $\rightarrow^{\forall} \subseteq \rightarrow$ such that for all specifications $T$ composable with $S$ it holds that $\mod{S\parallel T} = \mod{ S^{\forall} \parallel T}$}
\end{definition}

\new{Unfortunately, the best we can do, in the sense of removing states, transitions, or both, is to remove nothing, i.e., cooperative pruning is the identity transformation. We prove this with the following lemma.
}

\begin{lemma}\label{lemma:cooperativepruning}
	\new{Given a specification $S = (Q^S,s_0,\mathit{Act}^S,\rightarrow^S)$ and its cooperatively pruned subspecification $S^{\forall}$. It holds that $S = S^{\forall}$.}
\end{lemma}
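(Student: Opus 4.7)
The plan is to proceed by contradiction: assume $S^{\forall} \subsetneq S$ and construct a specific composable specification $T$ that witnesses $\mod{S\parallel T}\neq \mod{S^{\forall}\parallel T}$, contradicting Definition~\ref{def:cooperativepruning}. The natural candidate for $T$ is the \emph{trivial passive environment} $T=(\{t_0\},t_0,\emptyset,\rightarrow^T)$ with $\rightarrow^T = \{(t_0,d,t_0)\mid d\in\mathbb{R}_{\geq 0}\}$. This $T$ is trivially input-enabled (its input alphabet is empty), and since $\mathit{Act}_o^T=\emptyset$ it is composable with any $S$.

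Next I would verify that $S\parallel T$ and $S^{\forall}\parallel T$ are isomorphic (via $(q,t_0)\leftrightarrow q$) to $S$ and $S^{\forall}$ respectively. Under Definition~\ref{def:parallelcompositionTIOTS}, every action of $S$ lies in $\mathit{Act}^S\setminus\mathit{Act}^T$ and so fires independently, while delays require both components to agree --- but $t_0$ can delay arbitrarily. Hence the product collapses to $S$ (respectively $S^{\forall}$) up to the trivial product with the singleton state. From Definition~\ref{def:cooperativepruning} we then conclude $\mod{S}=\mod{S^{\forall}}$.

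The core of the proof then reduces to showing that no proper subspecification $S^{\forall}\subsetneq S$ can share its model set with $S$. Pick any reachable transition $q_1 \xlongrightarrow{a}{}^{\!\!S} q_2$ that is present in $S$ but absent from $\rightarrow^{\forall}$ (either because the transition itself is missing, or because $q_2\notin Q^{\forall}$). I split on the label: for an output $o!$ or a delay, I would construct an implementation $P$ that faithfully mirrors a run of $S$ reaching $q_1$ and then exercises the missing transition; this $P$ lies in $\mod{S}$ by construction but cannot refine $S^{\forall}$ because the corresponding output/delay move is not matched in $S^{\forall}$ (Definition~\ref{def:refinement} conditions 3/5 fail), contradicting $\mod{S}=\mod{S^{\forall}}$. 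The symmetric case (transition missing from $S$ but present in $S^{\forall}$) cannot arise since $\rightarrow^{\forall}\subseteq\rightarrow$.

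The main obstacle is the treatment of \emph{input} transitions, since input-enabledness forces $S^{\forall}$ to retain \emph{some} $i?$-transition from every state in $Q^{\forall}$, possibly to a different target. Here I would argue that if $S$ has $q_1\xlongrightarrow{i?}{}^{\!\!S} q_2$ while $S^{\forall}$ has $q_1\xlongrightarrow{i?}{}^{\!\!S^{\forall}} q_2'$ with $q_2\neq q_2'$ (or with $q_2'$ admitting strictly less subsequent behaviour), then an implementation built to realise the full downstream behaviour from $q_2$ lies in $\mod{S}\setminus\mod{S^{\forall}}$. A secondary subtlety is that unreachable states of $S$ are observationally invisible through any composable $T$; the proof therefore is really about the reachable fragment, which the paper implicitly identifies with the specification itself.
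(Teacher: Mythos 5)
Your reduction to the passive environment is fine as far as it goes: with $T$ a single delaying state and empty alphabet, $S\parallel T$ is indeed isomorphic to $S$, so Definition~\ref{def:cooperativepruning} forces $\mod{S}=\mod{S^{\forall}}$. The gap is in the step you treat as the ``core'' of the argument, namely that no proper subspecification of $S$ can share its implementation set with $S$. That claim is false, and the paper itself provides the counterexample: Theorem~\ref{thm:prune} states that $\mod{S}=\mod{S^{\Delta}}$, and $S^{\Delta}$ is a \emph{proper} subspecification of $S$ whenever $S$ has inconsistent states (e.g.\ the partially inconsistent machine of Figure~\ref{fig:inconsistentmachine}, where the \textsf{tea!} transition and its target can be removed without losing a single implementation). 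Your construction for the output/delay case --- ``build an implementation $P$ that mirrors a run to $q_1$ and then exercises the missing transition'' --- silently assumes that the target $q_2$ of the missing transition admits an implementation; precisely when $q_2$ is inconsistent no such $P$ exists, so conditions 3/5 of Definition~\ref{def:refinement} never get violated and $\mod{S}=\mod{S^{\forall}}$ survives the removal. In short, the passive environment is too weak to witness the difference between $S$ and, say, $S^{\Delta}$.

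The paper's proof avoids this by choosing an \emph{active} environment: $T$ has a single state with output self-loops on every input of $S$ plus a fresh output $\tau$, and arbitrary delays. In the composition $S\parallel T$ every state then has an enabled output at all times, so $\mathrm{imerr}^{S\parallel T}=\emptyset$, hence $\mathrm{incons}^{S\parallel T}=\emptyset$, and no state or transition of $S$ can be removed without changing $\mod{S\parallel T}$. That is the idea your proof is missing: you must exhibit an environment under which the otherwise-unimplementable (inconsistent) fragments of $S$ become consistent and therefore observable in the implementation set of the composition. With only the passive $T$, exactly those fragments remain invisible, and your contradiction never materialises.
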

\noindent The proof of Lemma~\ref{lemma:cooperativepruning} can be found in Appendix~\ref{app:proofs-comp}.


We now switch to the symbolic representation. Parallel composition of two TIOA is defined in the following way. 
\begin{definition}\label{def:parallelcompositionTIOA}
	Given two specification automata $A^i = (\mathit{Loc}^i, q_0^i, \allowbreak\mathit{Act}^i, \mathit{Clk}^i,\allowbreak E^i, \allowbreak\mathit{Inv}^i), i=1,2$ where $\mathit{Act}_o^1 \cap \mathit{Act}_o^2 =\emptyset$, the \emph{parallel composition} of $A^1$ and $A^2$, denoted by $A^1\parallel A^2$, is TIOA $(\mathit{Loc}^1\times\mathit{Loc}^2, (q_0^1,q_0^2), \mathit{Act}, \mathit{Clk}^1\uplus\mathit{Clk}^2, E, \mathit{Inv})$ where $\mathit{Act} = \mathit{Act}_i \uplus \mathit{Act}_o$ with $\mathit{Act}_i = (\mathit{Act}_i^1\setminus\mathit{Act}_o^2) \cup (\mathit{Act}_i^2\setminus\mathit{Act}_o^1)$ and $\mathit{Act}_o= \mathit{Act}_o^1 \cup \mathit{Act}_o^2$, $\mathit{Inv}((q^1,q^2)) = \mathit{Inv}(q^1) \wedge \mathit{Inv}(q^2)$, and $E$ is defined as
	\begin{itemize}
		\item $((q_1^1,q_1^2),a, \varphi^1\wedge\varphi^2, c^1\cup c^2,(q_2^1,q_2^2)) \in E$ if $a \in \mathit{Act}^1\cap\mathit{Act}^2$, $(q_1^1, a, \varphi^1, c^1,q_2^1) \in E^1$, and $(q_1^1,a, \varphi^2, c^2,q_2^1) \in E^1$
		\item $((q_1^1,q^2),a, \varphi^1, c^1,(q_2^1,q^2)) \in E$ if $a \in \mathit{Act}^1\setminus\mathit{Act}^2$, $(q_1^1, a, \varphi^1, c^1,q_2^1) \in E^1$, and $q^2\in\mathit{Loc}^2$
		\item $((q^1,q_1^2),a, \varphi^2, c^2,(q^1,q_2^2)) \in E$ if $a \in \mathit{Act}^2\setminus\mathit{Act}^1$, $(q_1^2, a, \varphi^2, c^2,q_2^2) \in E^2$, and $q^1\in\mathit{Loc}^1$
	\end{itemize}
\end{definition}

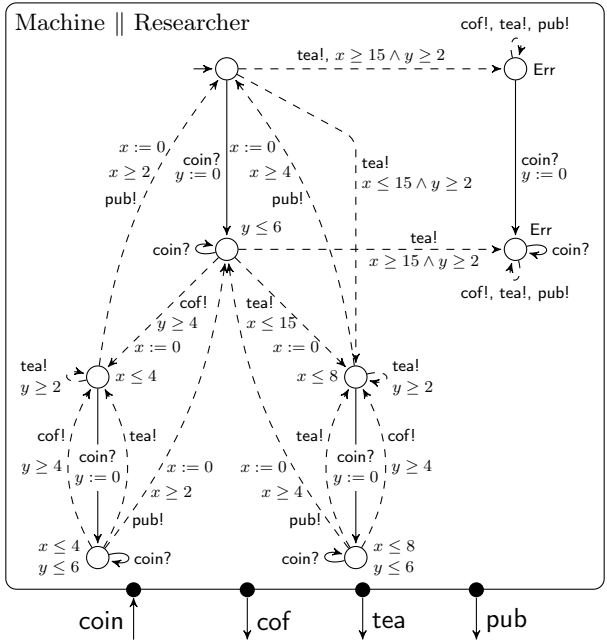
\begin{figure}
	\centering
	
	\begin{tikzpicture}[->,>=stealth',shorten >=1pt, font=\small, scale=1, transform shape]
		\node[draw, shape=rectangle, rounded corners] (machres) {\begin{tikzpicture}[->,>=stealth',shorten >=1pt, align=left,node distance=3cm, scale=.7]
				\node[main node, initial, initial text={}, initial where=left] (1) {};
				\node[main node, label={above right:$y \leq 6$}] (2) [below = of 1] {};
				\node[main node, label={right:$x \leq 4$}] (3) [below left = of 2] {};
				\node[main node, label={left:$x \leq 8$}] (4) [below right = of 2] {};
				\node[main node, label={left:$x \leq 4$\\$y \leq 6$}] (5) [below = of 3] {};
				\node[main node, label={right:$x \leq 8$\\$y \leq 6$}] (6) [below = of 4] {};
				\node[main node, label={right:Err}] (7) [right = 5cm of 1] {};
				\node[main node, label={above right:Err}] (8) [right = 5cm of 2] {};
				
				\path[every node/.style={action}]
				(1) edge node[left] {coin?} node[left, pos=0.6] {$y := 0$}  (2)
				(2) edge[dashed] node[right,pos=.4] {cof!} node[right,pos=.6] {$y \geq 4$} node[right,pos=.8] {$x := 0$} (3)
				(2) edge[dashed] node[left,pos=.4] {tea!} node[left,pos=.6] {$x \leq 15$} node[left,pos=.8] {$x := 0$} (4)
				(3) edge node[pos=.43,fill=white] {coin?} node[pos=0.57,fill=white] {$y := 0$} (5)
				(4) edge node[pos=.43,fill=white] {coin?} node[ pos=0.57,fill=white] {$y := 0$} (6)
				(1) edge[dashed] node[above] {tea!, $x \geq 15\wedge y\geq 2$} (7)
				(2) edge[dashed] node[above,pos=.7] {tea!} node[below,pos=.7] {$x \geq 15\wedge y\geq 2$} (8)
				(7) edge node[right] {coin?} node[right, pos=0.6] {$y := 0$}  (8)
				;
				\path[dashed,every node/.style={action}]
				(3) edge[bend left = 15] node[left,pos=.55] {pub!} node[left,pos=.65] {$x \geq 2$} node[left,pos=.75] {$x := 0$} (1)
				(4) edge[bend right=15] node[left,pos=.55] {pub!} node[left,pos=.65] {$x \geq 4$} node[left,pos=.75] {$x := 0$} (1)
				(5) edge[bend right = 15] node[right,pos=.1] {pub!} node[right,pos=.2] {$x \geq 2$} node[right,pos=.3] {$x := 0$} (2)
				(6) edge[bend left = 15] node[left,pos=.1] {pub!} node[left,pos=.2] {$x \geq 4$} node[left,pos=.3] {$x := 0$} (2)
				(5) edge[bend right = 30] node[right,pos=.7] {tea!} (3)
				(6) edge[bend left = 30] node[left,pos=.7] {tea!} (4)
				(5) edge[bend left = 30] node[left,pos=.7] {cof!} node[left,pos=.5] {$y \geq 4$} (3)
				(6) edge[bend right = 30] node[right,pos=.7] {cof!} node[right,pos=.5] {$y \geq 4$} (4)
				;
				\node (help) [below = 1cm of 1] {};
				\draw[dashed, rounded corners, every node/.style={action}]
				(1) -- (help -| 4) -- node[right, pos=.1] {tea!} node[right, pos=0.2] {$x\leq 15 \wedge y \geq 2$} (4)
				;
				\path[every node/.style={action}]
				(2) edge[loop left] node[left] {coin?} (2)
				(3) edge[loop left,dashed] node[left] {tea!\\$y\geq2$} (3)
				(4) edge[loop right,dashed] node[right] {tea!\\$y\geq2$} (4)
				(5) edge[loop right] node[right] {coin?} (5)
				(6) edge[loop left] node[left] {coin?} (6)
				(7) edge[loop above,dashed] node[above] {cof!, tea!, pub!} (7)
				(8) edge[loop below,dashed] node[below] {cof!, tea!, pub!} (8)
				(8) edge[loop right] node[right] {coin?} (8)
				;
		\end{tikzpicture}};
		\node[anchor=north west] () at (machres.north west) {Machine $\parallel$ Researcher};
		
		\node[connector] (coin) at (machres.-120) {};
		\node[connector] (cof) at (machres.-101) {};
		\node[connector] (tea) at (machres.-79) {};
		\node[connector] (pub) at (machres.-60) {};
		
		\draw[->,>=stealth',shorten >=1pt, every node/.style={action}]
		(cof) edge node[right] {cof} ++(-90:0.7)
		(tea) edge node[right] {tea} ++(-90:0.7)
		(pub) edge node[right] {pub} ++(-90:0.7)
		;
		\draw[<-,>=stealth',shorten >=1pt, every node/.style={action}]
		(coin) edge node[left] {coin} ++(-90:0.7)
		;
	\end{tikzpicture}
	\caption{\new{The parallel composition of the \textsf{Machine} and \textsf{Researcher} from Figure~\ref{fig:university}.}}
	\label{fig:machineResearcher}
\end{figure}

\new{Figure~\ref{fig:machineResearcher} shows the parallel composition \textsf{Machine}$\parallel$\textsf{Researcher} where \textsf{Machine} and \textsf{Researcher} are from Figure~\ref{fig:university}. As typical for composing automata, the parallel composition of \textsf{Machine} and \textsf{Researcher} looks much more complicated that the two individual specifications. Furthermore, the actions \textsf{cof} and \textsf{tea}, which were outputs in \textsf{Machine} and inputs in \textsf{Researcher}, have become outputs in the combined specification.}

Finally, the following theorem lifts all the results from timed input/output transition systems to the symbolic representation level. \new{Similarly to Theorem~\ref{thrm:conjunctionTSandA}, we need to take the special case from Figure~\ref{fig:example_condition_conjunction} into account (but now consider action $c$ to be an input for $A^{2}$). The transition in Figure~\ref{fig:example_condition_conjunction} (e) from $(1,4)\xlongrightarrow{a!}(2,4)$ can be `removed' with adversarial pruning by realizing that the target state $(2,4)$ is an inconsistent state (you can see this by noticing that no time delay, including a zero time delay, is possible).}

\begin{theorem}\label{thrm:parallelcompositionTSandA}
  Given two specification automata $A^i = (\mathit{Loc}^i, l_0^i, \mathit{Act}^i, \mathit{Clk}^i, E^i,\allowbreak \mathit{Inv}^i), i=1,2$ where $\mathit{Act}_o^1 \cap \mathit{Act}_o^2 =\emptyset$. Then \new{$(\sem{A^1 \parallel A^2})^{\Delta} = (\sem{A^1} \parallel\sem{A^2})^{\Delta}$}.
\end{theorem}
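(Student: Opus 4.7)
The plan is to mirror the structure of the proof of Theorem~\ref{thrm:conjunctionTSandA}. The first step is to establish analogs of Lemmas~\ref{lemma:conjunctionTSandAsamestateset}--\ref{lemma:conjunctionTSandAsamecons} for parallel composition: namely that $\sem{A^1 \parallel A^2}$ and $\sem{A^1} \parallel \sem{A^2}$ agree on the state space $\mathit{Loc}^1 \times \mathit{Loc}^2 \times [(\mathit{Clk}^1 \cup \mathit{Clk}^2)\mapsto\mathbb{R}_{\geq 0}]$ and the initial state $((l_0^1,l_0^2),\bm{0})$ (using that $\bm{0}\models\mathit{Inv}^1(l_0^1)\wedge\mathit{Inv}^2(l_0^2)$), that delays can be mimicked in either direction (since both enforce $v+d'\models \mathit{Inv}^1(l^1)\wedge\mathit{Inv}^2(l^2)$ for all $d'\leq d$), and that shared actions can be mimicked in either direction (since both definitions require simultaneous local edges with combined guards and resets). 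Finally, the set of inconsistent states coincides on both sides, since the fixpoint operator $\Theta$ is determined by the agreed-upon transitions.

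The second step is to handle non-shared actions $a\in\mathit{Act}^1\setminus\mathit{Act}^2$ (and symmetrically for $\mathit{Act}^2\setminus\mathit{Act}^1$). Here the same subtlety as in conjunction arises: in $\sem{A^1 \parallel A^2}$, the composed invariant $\mathit{Inv}^1(l_2^1)\wedge\mathit{Inv}^2(l^2)$ is enforced on the target, while in $\sem{A^1}\parallel\sem{A^2}$ the semantics of $A^2$ in isolation already imposes $v^2\models\mathit{Inv}^2(l^2)$ at every reachable state of $\sem{A^2}$, so the joint transition in $\sem{A^1}\parallel\sem{A^2}$ exists exactly when both $\sem{A^1}$ permits its move and the $A^2$-component state $(l^2,v^2)$ is legal. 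I would prove a lemma analogous to Lemma~\ref{lemma:conjunctionTSandAnonsharedaction}: under the additional hypothesis $v_2\models\mathit{Inv}^2(l^2)$ on the target, the two relations agree on non-shared transitions, and a corollary giving the unconditional forward implication.

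The third step assembles the theorem. For the $(\Rightarrow)$ direction, a transition $q_1\xrightarrow{a}q_2$ in $(\sem{A^1\parallel A^2})^\Delta$ has $q_2\in\mathrm{cons}$, and the lemmas of steps 1 and 2 transfer it to $\sem{A^1}\parallel\sem{A^2}$, hence to $(\sem{A^1}\parallel\sem{A^2})^\Delta$. For the $(\Leftarrow)$ direction, a transition $q_1\xrightarrow{a}q_2$ in $(\sem{A^1}\parallel\sem{A^2})^\Delta$ has $q_2\in\mathrm{cons}$; by time reflexivity $q_2\xrightarrow{0}$, and unwinding the zero-delay through Definitions~\ref{def:parallelcompositionTIOTS} and~\ref{def:semanticTIOA} gives $v^2\models\mathit{Inv}^2(l^2)$ (and $v^1\models\mathit{Inv}^1(l^1)$), supplying the missing hypothesis of the non-shared-action lemma so that the transition lifts to $\sem{A^1\parallel A^2}$, and then to $(\sem{A^1\parallel A^2})^\Delta$ since $q_2\in\mathrm{cons}$.

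The principal obstacle is the non-shared-action case together with the role of adversarial pruning in reconciling it, exactly as in the conjunction proof and illustrated by Figure~\ref{fig:example_condition_conjunction} (reread with action $c$ viewed as an input of $A^2$): a transition into a state violating $\mathit{Inv}^2$ can appear in $\sem{A^1}\parallel\sem{A^2}$ but not in $\sem{A^1\parallel A^2}$. Showing that every such `phantom' target is inconsistent, and is therefore deleted by $(\cdot)^\Delta$, is what makes the two pruned systems coincide; once this is carefully stated and proved via a zero-delay argument, the remaining bookkeeping (state space, initial state, action partition, shared actions, delays) is routine.
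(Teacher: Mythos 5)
Your proposal is correct and follows essentially the same route as the paper: the same five auxiliary lemmas (state set/initial state, delays, shared actions, the conditional non-shared-action lemma with its unconditional forward corollary, and equality of $\mathrm{cons}$), the same two-directional transfer of transitions, and the same zero-delay/time-reflexivity argument to recover $v^2\models\mathit{Inv}^2(l^2)$ in the $(\Leftarrow)$ direction so that pruning eliminates exactly the phantom transitions into invariant-violating targets.
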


\new{Before we can prove this theorem, we have to introduce several lemmas. These lemmas are almost identical to the ones in Section~\ref{sec:consandconj} for the conjunction. Therefore, we have omitted the proof.}


\begin{lemma}\label{lemma:parallelcompositionTSandAsamestateset}
	\new{Given two TIOAs $A^i = (\mathit{Loc}^i, l_0^i, \mathit{Act}^i, \mathit{Clk}^i, E^i, \mathit{Inv}^i), i=1,2$ where $\mathit{Act}_o^1 \cap \mathit{Act}_o^2 =\emptyset$. Then $Q^{\sem{A^1 \parallel A^2}} = Q^{\sem{A^1} \parallel \sem{A^2}}$ and $q_0^{\sem{A^1 \parallel A^2}} = q_0^{\sem{A^1} \parallel \sem{A^2}}$.}
\end{lemma}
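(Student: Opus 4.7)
The plan is to unfold both sides using the three relevant definitions (the semantic of a TIOA, the parallel composition for TIOAs, and the parallel composition for TIOTSs) and observe that both state sets, viewed up to the canonical isomorphism, are the same. First I would compute the state set of $\sem{A^1 \parallel A^2}$: by Definition~\ref{def:parallelcompositionTIOA} the composed TIOA has location set $\mathit{Loc}^1 \times \mathit{Loc}^2$ and clock set $\mathit{Clk}^1 \uplus \mathit{Clk}^2$, so by Definition~\ref{def:semanticTIOA} the state set equals $(\mathit{Loc}^1 \times \mathit{Loc}^2) \times [(\mathit{Clk}^1 \uplus \mathit{Clk}^2) \mapsto \mathbb{R}_{\geq 0}]$. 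Next I would compute the state set of $\sem{A^1} \parallel \sem{A^2}$: Definition~\ref{def:semanticTIOA} gives $Q^{\sem{A^i}} = \mathit{Loc}^i \times [\mathit{Clk}^i \mapsto \mathbb{R}_{\geq 0}]$, and Definition~\ref{def:parallelcompositionTIOTS} then yields $Q^{\sem{A^1} \parallel \sem{A^2}} = (\mathit{Loc}^1 \times [\mathit{Clk}^1 \mapsto \mathbb{R}_{\geq 0}]) \times (\mathit{Loc}^2 \times [\mathit{Clk}^2 \mapsto \mathbb{R}_{\geq 0}])$.

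Second, I would invoke the canonical bijection between $[(\mathit{Clk}^1 \uplus \mathit{Clk}^2) \mapsto \mathbb{R}_{\geq 0}]$ and $[\mathit{Clk}^1 \mapsto \mathbb{R}_{\geq 0}] \times [\mathit{Clk}^2 \mapsto \mathbb{R}_{\geq 0}]$, which exists precisely because $\mathit{Clk}^1$ and $\mathit{Clk}^2$ are disjoint: a joint valuation $v$ corresponds to the pair $(v^1, v^2)$ of its restrictions to each clock subset, and every pair of restrictions glues back to a unique joint valuation. Up to rearranging the Cartesian product, this identifies the two state sets, which is the sense of equality used throughout the paper (and the same identification tacitly used in Lemma~\ref{lemma:conjunctionTSandAsamestateset}).

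Third, I would verify the equality of initial states. Definition~\ref{def:semanticTIOA} gives $q_0^{\sem{A^1 \parallel A^2}} = ((l_0^1, l_0^2), \bm{0})$ with $\bm{0}$ the constant zero valuation over $\mathit{Clk}^1 \uplus \mathit{Clk}^2$, while Definition~\ref{def:parallelcompositionTIOTS} gives $q_0^{\sem{A^1} \parallel \sem{A^2}} = ((l_0^1, \bm{0}^1), (l_0^2, \bm{0}^2))$ with $\bm{0}^i$ the zero valuation over $\mathit{Clk}^i$. Under the bijection above these coincide. I would also note that the initial-state invariant constraint $\bm{0} \models \mathit{Inv}((l_0^1, l_0^2))$ imposed by Definition~\ref{def:semanticTIOA} unfolds via $\mathit{Inv}((l_0^1, l_0^2)) = \mathit{Inv}^1(l_0^1) \wedge \mathit{Inv}^2(l_0^2)$ to exactly the two conjuncts required for the initial states of $\sem{A^1}$ and $\sem{A^2}$ to be well defined, so the two sides impose the same well-definedness condition.

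I do not expect any significant obstacle; the statement is essentially bookkeeping about the two ways of assembling a state space. The only subtlety worth flagging is the canonical identification of a valuation on a disjoint union of clock sets with a pair of valuations, and the fact that this identification relies crucially on the disjoint-union hypothesis implicit in Definition~\ref{def:parallelcompositionTIOA}. Everything else is an immediate unfolding of the relevant definitions.
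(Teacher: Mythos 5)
Your proof is correct and follows essentially the same route as the paper, which omits a dedicated proof for this lemma and instead points to the proof of Lemma~\ref{lemma:conjunctionTSandAsamestateset}: unfold Definitions~\ref{def:semanticTIOA}, \ref{def:parallelcompositionTIOA}, and \ref{def:parallelcompositionTIOTS}, and identify $[(\mathit{Clk}^1 \uplus \mathit{Clk}^2) \mapsto \mathbb{R}_{\geq 0}]$ with $[\mathit{Clk}^1 \mapsto \mathbb{R}_{\geq 0}] \times [\mathit{Clk}^2 \mapsto \mathbb{R}_{\geq 0}]$ via disjointness of the clock sets. Your explicit remark on the initial-state invariant condition decomposing as $\mathit{Inv}^1(l_0^1) \wedge \mathit{Inv}^2(l_0^2)$ is a welcome extra detail that the paper's proof glosses over.
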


\begin{lemma}\label{lemma:parallelcompositionTSandAdelay}
	\new{Given two TIOAs $A^i = (\mathit{Loc}^i, l_0^i, \mathit{Act}^i, \mathit{Clk}^i, E^i, \mathit{Inv}^i), i=1,2$ where $\mathit{Act}_o^1 \cap \mathit{Act}_o^2 =\emptyset$. Denote $X = \sem{A^1 \parallel A^2}$ and $Y = \sem{A^1} \parallel \sem{A^2}$, and let $d\in\mathbb{R}_{\geq 0}$ and $q_1,q_2\in Q^X\cap Q^Y$. Then $q_1\xlongrightarrow{d}{}^{\!\! X} q_2$ if and only if $q_1\xlongrightarrow{d}{}^{\!\! Y} q_2$.}
\end{lemma}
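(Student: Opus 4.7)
The plan is to unfold both sides of the biconditional using the relevant definitions and observe that the conjunction of invariants that appears in $A^1\parallel A^2$ exactly mirrors the pairwise invariant satisfaction required by the delay rule of the parallel composition of TIOTSs. So essentially the result will follow by logical rearrangement, not by any real combinatorial argument.

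First I would fix notation: a state $q_1 \in Q^X\cap Q^Y$ has the form $((l^1,l^2),v)$ with $v\in [(\mathit{Clk}^1\uplus\mathit{Clk}^2)\mapsto \mathbb{R}_{\geq 0}]$, and write $v^i$ for the restriction of $v$ to $\mathit{Clk}^i$. By time determinism of the semantic operator, any delay transition out of $q_1$ must land in $((l^1,l^2),v+d)$, and similarly for delays in $\sem{A^i}$. Then for the forward direction, I would assume $q_1\xlongrightarrow{d}{}^{\!\!X} q_2$ and apply Definition~\ref{def:semanticTIOA} to $A^1\parallel A^2$: this yields $v+d'\models \mathit{Inv}(l^1,l^2)$ for all $d'\le d$. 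By Definition~\ref{def:parallelcompositionTIOA}, $\mathit{Inv}(l^1,l^2) = \mathit{Inv}^1(l^1)\wedge \mathit{Inv}^2(l^2)$, hence $v^i+d'\models \mathit{Inv}^i(l^i)$ for $i=1,2$ and every $d'\le d$. Applying Definition~\ref{def:semanticTIOA} in the other direction yields $(l^i,v^i)\xlongrightarrow{d}{}^{\!\!\sem{A^i}}(l^i,v^i+d)$ for each $i$, and then the delay rule of Definition~\ref{def:parallelcompositionTIOTS} gives $q_1\xlongrightarrow{d}{}^{\!\!Y} q_2$.

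The reverse direction is essentially the same chain of implications read backwards: assume $q_1\xlongrightarrow{d}{}^{\!\!Y} q_2$, use the delay rule of Definition~\ref{def:parallelcompositionTIOTS} to obtain delay transitions in each $\sem{A^i}$, extract invariant satisfaction $v^i+d'\models\mathit{Inv}^i(l^i)$ for all $d'\le d$ from Definition~\ref{def:semanticTIOA}, conjoin them to get $v+d'\models \mathit{Inv}^1(l^1)\wedge \mathit{Inv}^2(l^2) = \mathit{Inv}(l^1,l^2)$, and reassemble a delay transition in $X$ via the delay rule of Definition~\ref{def:semanticTIOA}.

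I do not anticipate a real obstacle here. The only thing that requires a moment of attention is that the invariant quantification is over the half-open prefix of the delay ($\forall d'<d$ in the delay rule, plus $v+d\models\mathit{Inv}$ at the endpoint), and this has to be preserved intact on both sides of the argument; since the conjunction $\mathit{Inv}^1\wedge\mathit{Inv}^2$ is pointwise in $d'$, this is immediate. No appeal to composability ($\mathit{Act}_o^1\cap\mathit{Act}_o^2=\emptyset$), to adversarial pruning, or to any other subtlety of the discrete transitions is needed, which is why this lemma can be stated and proved in isolation from its counterparts for shared and non-shared discrete actions.
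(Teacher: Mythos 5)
Your proof is correct and follows essentially the same route as the paper, which omits the proof of this lemma and refers to the proof of the analogous conjunction version (Lemma~\ref{lemma:conjunctionTSandAdelay}): unfold Definition~\ref{def:semanticTIOA} on both sides, split the conjoined invariant $\mathit{Inv}^1(l^1)\wedge\mathit{Inv}^2(l^2)$ using disjointness of the clock sets, and reassemble via the delay rule of Definition~\ref{def:parallelcompositionTIOTS}. Your observations that the half-open quantification over $d'$ passes through the conjunction pointwise and that composability of the action sets plays no role here are both consistent with the paper's argument.
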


\begin{lemma}\label{lemma:parallelcompositionTSandAsharedaction}
	\new{Given two TIOAs $A^i = (\mathit{Loc}^i, l_0^i, \mathit{Act}^i, \mathit{Clk}^i, E^i, \mathit{Inv}^i), i=1,2$ where $\mathit{Act}_o^1 \cap \mathit{Act}_o^2 =\emptyset$. Denote $X = \sem{A^1 \parallel A^2}$ and $Y = \sem{A^1} \parallel \sem{A^2}$, and let $a\in\mathit{Act}^1\cap\mathit{Act}^2$ and $q_1,q_2\in Q^X\cap Q^Y$. Then $q_1\xlongrightarrow{a}{}^{\!\! X} q_2$ if and only if $q_1\xlongrightarrow{a}{}^{\!\! Y} q_2$.}
\end{lemma}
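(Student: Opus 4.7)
The plan is to prove both directions by unfolding the definitions of $\sem{\cdot}$ (Definition~\ref{def:semanticTIOA}) and of parallel composition for TIOAs and TIOTSs (Definitions~\ref{def:parallelcompositionTIOA} and~\ref{def:parallelcompositionTIOTS}). I would first invoke Lemma~\ref{lemma:parallelcompositionTSandAsamestateset} to identify each state $q\in Q^X\cap Q^Y$ with both a pair $((l^1,l^2),v)$ where $v : \mathit{Clk}^1\uplus\mathit{Clk}^2 \to \mathbb{R}_{\geq 0}$ and a pair $((l^1,v^1),(l^2,v^2))$, writing $v^i$ for the restriction of $v$ to $\mathit{Clk}^i$. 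Under this identification a reset $v[r\mapsto 0]_{r\in c^1\cup c^2}$ factors as $(v^1[r\mapsto 0]_{r\in c^1}, v^2[r\mapsto 0]_{r\in c^2})$, and $v \models \varphi^1\wedge\varphi^2$ iff $v^1\models \varphi^1$ and $v^2\models\varphi^2$ (because $\varphi^i$ only mentions clocks from $\mathit{Clk}^i$), and likewise for invariants.

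For the $(\Rightarrow)$ direction, assume $q_1 \xlongrightarrow{a}{}^{\!\!X} q_2$ with $q_1=((l_1^1,l_1^2),v_1)$ and $q_2=((l_2^1,l_2^2),v_2)$. Since $a\in\mathit{Act}^1\cap\mathit{Act}^2$, by Definition~\ref{def:semanticTIOA} applied to $A^1\parallel A^2$ there is an edge $((l_1^1,l_1^2),a,\varphi^1\wedge\varphi^2,c^1\cup c^2,(l_2^1,l_2^2))$ in the composition, which by the first clause of Definition~\ref{def:parallelcompositionTIOA} arises from edges $(l_1^i,a,\varphi^i,c^i,l_2^i)\in E^i$ for $i=1,2$. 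Moreover $v_1\models\varphi^1\wedge\varphi^2$, $v_2 = v_1[r\mapsto 0]_{r\in c^1\cup c^2}$ and $v_2\models \mathit{Inv}^1(l_2^1)\wedge\mathit{Inv}^2(l_2^2)$. Decomposing as above, both edges fire in their respective semantics, giving $(l_1^i,v_1^i)\xlongrightarrow{a}{}^{\!\!\sem{A^i}}(l_2^i,v_2^i)$. The first clause of Definition~\ref{def:parallelcompositionTIOTS} then yields $q_1\xlongrightarrow{a}{}^{\!\!Y} q_2$.

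For the $(\Leftarrow)$ direction, assume $q_1\xlongrightarrow{a}{}^{\!\!Y} q_2$. Since $a$ is shared, the first clause of Definition~\ref{def:parallelcompositionTIOTS} applies, providing transitions $(l_1^i,v_1^i)\xlongrightarrow{a}{}^{\!\!\sem{A^i}}(l_2^i,v_2^i)$ for $i=1,2$. Each such transition, by Definition~\ref{def:semanticTIOA}, comes from an edge $(l_1^i,a,\varphi^i,c^i,l_2^i)\in E^i$ with $v_1^i\models\varphi^i$, $v_2^i = v_1^i[r\mapsto 0]_{r\in c^i}$, and $v_2^i\models\mathit{Inv}^i(l_2^i)$. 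The first clause of Definition~\ref{def:parallelcompositionTIOA} then produces the composite edge $((l_1^1,l_1^2),a,\varphi^1\wedge\varphi^2,c^1\cup c^2,(l_2^1,l_2^2))$, and re-assembling the two valuations gives $v_1\models\varphi^1\wedge\varphi^2$, $v_2 = v_1[r\mapsto 0]_{r\in c^1\cup c^2}$, and $v_2\models\mathit{Inv}(l_2^1,l_2^2)$, hence $q_1\xlongrightarrow{a}{}^{\!\!X} q_2$.

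The only mildly subtle step is the clock-valuation bookkeeping, i.e., checking that restriction and recomposition commute with resets, guard satisfaction, and invariant satisfaction; this works because $\mathit{Clk}^1$ and $\mathit{Clk}^2$ are disjoint and each $\varphi^i$, $c^i$, and $\mathit{Inv}^i$ only refers to clocks of component $i$. No adversarial pruning or consistency reasoning is needed here, in contrast with the non-shared case treated separately (and analogously to Lemma~\ref{lemma:conjunctionTSandAsharedaction} in the conjunction section).
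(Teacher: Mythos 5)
Your proof is correct and follows essentially the same route as the paper: the paper omits the proof of this lemma, stating it is almost identical to Lemma~\ref{lemma:conjunctionTSandAsharedaction} for conjunction, whose appendix proof unfolds the same definitions and uses the same disjoint-clock decomposition of guards, resets, and invariants in both directions. Your observation that no pruning or consistency reasoning is needed for shared actions also matches the paper's treatment.
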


\begin{lemma}\label{lemma:parallelcompositionTSandAnonsharedaction}
	\new{Given two TIOAs $A^i = (\mathit{Loc}^i, l_0^i, \mathit{Act}^i, \mathit{Clk}^i, E^i, \mathit{Inv}^i), i=1,2$ where $\mathit{Act}_o^1 \cap \mathit{Act}_o^2 =\emptyset$. Denote $X = \sem{A^1 \parallel A^2}$ and $Y = \sem{A^1} \parallel \sem{A^2}$, and let $a\in\mathit{Act}^1\setminus\mathit{Act}^2$ and $q_1,q_2\in Q^X\cap Q^Y$, where $q_2=(l_2^1,l_2^2,v_2)$. If $v_2\models\mathit{Inv}^2(l_2)$, then $q_1\xlongrightarrow{a}{}^{\!\! X} q_2$ if and only if $q_1\xlongrightarrow{a}{}^{\!\! Y} q_2$.}
\end{lemma}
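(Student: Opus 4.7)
The plan is to mirror the proof of Lemma~\ref{lemma:conjunctionTSandAnonsharedaction}, since parallel composition behaves exactly like conjunction on non-shared actions: in both definitions the non-shared action case leaves the other component's location untouched and constrains only clocks of the acting component. Writing $q_1=(l_1^1,l_1^2,v_1)$ and $q_2=(l_2^1,l_2^2,v_2)$ and exploiting that $\mathit{Clk}^1\uplus\mathit{Clk}^2$ is a disjoint union, I would let $v^1,v^2$ denote the restrictions of a valuation $v$ to $\mathit{Clk}^1,\mathit{Clk}^2$ respectively, so that every constraint splits cleanly into a part on $\mathit{Clk}^1$ and a part on $\mathit{Clk}^2$.

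For the forward implication, assume $q_1\xlongrightarrow{a}{}^{\!\!X}q_2$. Unfolding Definition~\ref{def:semanticTIOA} gives an edge of $A^1\parallel A^2$ witnessing the step. Since $a\in\mathit{Act}^1\setminus\mathit{Act}^2$, only the second clause of Definition~\ref{def:parallelcompositionTIOA} applies, which forces $l_1^2=l_2^2$ and supplies a corresponding edge $(l_1^1,a,\varphi,c,l_2^1)\in E^1$ with $\varphi\in\mathcal{B}(\mathit{Clk}^1)$ and $c\subseteq\mathit{Clk}^1$. By disjointness of the clock sets, $v_1^2=v_2^2$, the premise $v_1\models\varphi$ reduces to $v_1^1\models\varphi$, and $v_2\models\mathit{Inv}^1(l_2^1)\wedge\mathit{Inv}^2(l_2^2)$ splits into the two component invariants. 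This yields $(l_1^1,v_1^1)\xlongrightarrow{a}{}^{\!\sem{A^1}}(l_2^1,v_2^1)$, and the non-shared clause of Definition~\ref{def:parallelcompositionTIOTS} then assembles $q_1\xlongrightarrow{a}{}^{\!\!Y}q_2$; the extra hypothesis is not needed here.

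For the reverse implication, assume $q_1\xlongrightarrow{a}{}^{\!\!Y}q_2$. The non-shared clause of Definition~\ref{def:parallelcompositionTIOTS} fixes $l_1^2=l_2^2$ and $v_1^2=v_2^2$, and supplies $(l_1^1,v_1^1)\xlongrightarrow{a}{}^{\!\sem{A^1}}(l_2^1,v_2^1)$. Unfolding $\sem{A^1}$ yields an edge $(l_1^1,a,\varphi,c,l_2^1)\in E^1$ with $v_1^1\models\varphi$, $v_2^1=v_1^1[r\mapsto 0]_{r\in c}$, and $v_2^1\models\mathit{Inv}^1(l_2^1)$, which lifts via Definition~\ref{def:parallelcompositionTIOA} to an edge in $A^1\parallel A^2$. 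To reconstruct the semantic transition in $X$, I must exhibit $v_2\models\mathit{Inv}^1(l_2^1)\wedge\mathit{Inv}^2(l_2^2)$; the first conjunct is inherited from $v_2^1\models\mathit{Inv}^1(l_2^1)$, while the second is exactly the extra hypothesis $v_2\models\mathit{Inv}^2(l_2)$.

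The main, and essentially only, obstacle is explaining why the hypothesis $v_2\models\mathit{Inv}^2(l_2)$ is unavoidable: the operator $\sem{\cdot}$ bakes invariants into the reachable state set of each component, so $\sem{A^1}\parallel\sem{A^2}$ no longer rechecks $\mathit{Inv}^2(l_2^2)$ when firing a non-shared action, whereas $\sem{A^1\parallel A^2}$ does via $\mathit{Inv}((l_2^1,l_2^2))=\mathit{Inv}^1(l_2^1)\wedge\mathit{Inv}^2(l_2^2)$. This is precisely the phenomenon illustrated in Figure~\ref{fig:example_condition_conjunction} (read with the appropriate input/output relabelling for parallel composition). Once this point is made, the rest is a routine chase through the three relevant definitions.
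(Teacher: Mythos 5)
Your proposal is correct and matches the paper's intent exactly: the paper omits this proof on the grounds that it is ``almost identical'' to Lemma~\ref{lemma:conjunctionTSandAnonsharedaction}, and your argument is precisely that appendix proof transcribed to Definitions~\ref{def:parallelcompositionTIOA} and~\ref{def:parallelcompositionTIOTS}, whose non-shared-action clauses coincide with the conjunction ones. Your observation that the hypothesis $v_2\models\mathit{Inv}^2(l_2^2)$ is only needed in the reverse direction, and why, is also the point the paper makes via Figure~\ref{fig:example_condition_conjunction}.
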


\begin{corollary}\label{cor:parallelcompositionTSandAnonsharedaction}
	\new{Given two TIOAs $A^i = (\mathit{Loc}^i, l_0^i, \mathit{Act}^i, \mathit{Clk}^i, E^i, \mathit{Inv}^i), i=1,2$ where $\mathit{Act}_o^1 \cap \mathit{Act}_o^2 =\emptyset$. Denote $X = \sem{A^1 \parallel A^2}$ and $Y = \sem{A^1} \parallel \sem{A^2}$, and let $a\in\mathit{Act}^1\setminus\mathit{Act}^2$ and $q_1,q_2\in Q^X\cap Q^Y$. If $q_1\xlongrightarrow{a}{}^{\!\! X} q_2$, then $q_1\xlongrightarrow{a}{}^{\!\! Y} q_2$.}
\end{corollary}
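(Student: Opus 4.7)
The plan is to derive this corollary as a direct specialization of Lemma~\ref{lemma:parallelcompositionTSandAnonsharedaction}, showing that the extra hypothesis $v_2 \models \mathit{Inv}^2(l_2^2)$ required by that lemma is automatically guaranteed whenever the antecedent transition lives in $X = \sem{A^1 \parallel A^2}$. In other words, the asymmetry between the two directions in Lemma~\ref{lemma:parallelcompositionTSandAnonsharedaction} disappears in this direction because the product invariant in $A^1 \parallel A^2$ already enforces the component invariant of $A^2$ at the target state.

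Concretely, I would proceed as follows. Assume $q_1 \xlongrightarrow{a}{}^{\!\! X} q_2$ with $a \in \mathit{Act}^1 \setminus \mathit{Act}^2$ and $q_2 = (l_2^1, l_2^2, v_2)$. By Definition~\ref{def:semanticTIOA} applied to $A^1 \parallel A^2$, the existence of this transition requires an edge in $E^{A^1 \parallel A^2}$ of the form $((l_1^1, l_1^2), a, \varphi, c, (l_2^1, l_2^2))$ together with $v_2 \models \mathit{Inv}^{A^1 \parallel A^2}(l_2^1, l_2^2)$. Unfolding Definition~\ref{def:parallelcompositionTIOA}, the product invariant is $\mathit{Inv}^1(l_2^1) \wedge \mathit{Inv}^2(l_2^2)$, so in particular $v_2 \models \mathit{Inv}^2(l_2^2)$.

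With this condition in hand, Lemma~\ref{lemma:parallelcompositionTSandAnonsharedaction} applies directly: since $q_1, q_2 \in Q^X \cap Q^Y$, $a \in \mathit{Act}^1 \setminus \mathit{Act}^2$, $v_2 \models \mathit{Inv}^2(l_2^2)$, and $q_1 \xlongrightarrow{a}{}^{\!\! X} q_2$, we conclude $q_1 \xlongrightarrow{a}{}^{\!\! Y} q_2$, as desired.

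I do not expect any real obstacles here; the statement is essentially a corollary by construction, paralleling the role of Corollary~\ref{cor:conjunctionTSandAnonsharedaction} relative to Lemma~\ref{lemma:conjunctionTSandAnonsharedaction} in the conjunction case. The only thing worth stating carefully is the unfolding of the product invariant, since it is precisely this unfolding that lets the side condition $v_2 \models \mathit{Inv}^2(l_2^2)$ be extracted automatically from the semantic rule for $X$, which is not possible when starting from a transition in $Y$ (explaining why no analogous statement holds for the converse direction).
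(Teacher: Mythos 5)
Your proposal is correct and matches the paper's argument exactly: the paper proves the conjunction analogue (Corollary~\ref{cor:conjunctionTSandAnonsharedaction}) by unfolding Definition~\ref{def:semanticTIOA} to obtain the target-invariant condition, unfolding the product invariant as the conjunction of the component invariants to extract $v_2 \models \mathit{Inv}^2(l_2^2)$, and then invoking the corresponding lemma, and it explicitly states that the parallel-composition versions are proved identically. Your observation about why the side condition is automatic in this direction but not the converse is also the paper's intended point.
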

\begin{lemma}\label{lemma:parallelcompositionTSandAsameerror}
	\new{Given two TIOAs $A^i = (\mathit{Loc}^i, l_0^i, \mathit{Act}^i, \mathit{Clk}^i, E^i, \mathit{Inv}^i), i=1,2$ where $\mathit{Act}_o^1 \cap \mathit{Act}_o^2 =\emptyset$. Let $Q \subseteq \mathit{Loc}^1 \times \mathit{Loc}^2 \times [(\mathit{Clk}^1 \cup \mathit{Clk}^2)\mapsto \mathbb{R}_{\geq 0}]$. Then $\mathrm{err}^{\sem{A^1 \parallel A^2}}(Q) = \mathrm{err}^{\sem{A^1} \parallel \sem{A^2}}(Q)$.}
\end{lemma}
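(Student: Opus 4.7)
The plan is to establish the set equality $\mathrm{err}^{X}(Q) = \mathrm{err}^{Y}(Q)$, where $X = \sem{A^1 \parallel A^2}$ and $Y = \sem{A^1} \parallel \sem{A^2}$, by reducing membership for each state $q$ to the transition-level equivalences already established for $X$ and $Y$. By Lemma~\ref{lemma:parallelcompositionTSandAsamestateset} the two TIOTSs share the same underlying state set, so the comparison is element-wise well-posed and amounts to unfolding Definition~\ref{def:error} in each system.

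Membership of $q$ in $\mathrm{err}^{\cdot}(Q)$ is the conjunction of two clauses: (i) $q$ cannot delay indefinitely, and (ii) for every delay $d$, every output $o! \in \mathit{Act}_o$, and every delay-successor $q'$ of $q$, either $q'$ has no $o!$-successor or every $o!$-successor of $q'$ belongs to the parameter set $Q$. Clause (i) transfers between $X$ and $Y$ by Lemma~\ref{lemma:parallelcompositionTSandAdelay}. For clause (ii), the same lemma matches up the delay-successors $q'$ of $q$, so it suffices to show, for each such $q'$ and each $o!$, that the $o!$-successor set of $q'$ coincides in $X$ and $Y$; the criterion ``successor lies in $Q$'' then carries over verbatim since $Q$ is merely a set of states and the state sets agree.

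Since $\mathit{Act}_o^1 \cap \mathit{Act}_o^2 = \emptyset$, each output $o! \in \mathit{Act}_o = \mathit{Act}_o^1 \cup \mathit{Act}_o^2$ is either shared (appearing as an input on the other side) or non-shared with the other automaton. Shared output transitions coincide in $X$ and $Y$ by Lemma~\ref{lemma:parallelcompositionTSandAsharedaction}. For non-shared outputs, Corollary~\ref{cor:parallelcompositionTSandAnonsharedaction} supplies the $X \Rightarrow Y$ direction unconditionally, while the reverse direction is provided by Lemma~\ref{lemma:parallelcompositionTSandAnonsharedaction}, which carries an additional invariant-satisfaction hypothesis on the target state.

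Discharging this hypothesis is the main technical point. Consider $q' \xlongrightarrow{o!}{}^{\!\!Y} q''$ with $o! \in \mathit{Act}_o^1 \setminus \mathit{Act}^2$: in $Y$ only the $\sem{A^1}$-component moves, so the $\sem{A^2}$-component's location and valuation at $q''$ are identical to those at $q'$. Since $q'$ is a delay-successor of $q$ (including the degenerate case $d = 0$ handled by time reflexivity), the semantic rule for delays in $\sem{A^2}$ forces its location invariant to hold at $q'$, and therefore also at $q''$. This supplies the precondition of Lemma~\ref{lemma:parallelcompositionTSandAnonsharedaction} and yields the matching transition $q' \xlongrightarrow{o!}{}^{\!\!X} q''$. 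The case $o! \in \mathit{Act}_o^2 \setminus \mathit{Act}^1$ is entirely symmetric, completing the verification of both inclusions and hence the equality.
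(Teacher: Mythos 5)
Your proof is correct and follows essentially the same route as the paper, which omits this proof and refers to the verbatim-analogous conjunction lemma: both reduce the two clauses of Definition~\ref{def:error} to the delay, shared-action and non-shared-action transition correspondences between $\sem{A^1 \parallel A^2}$ and $\sem{A^1}\parallel\sem{A^2}$. Your explicit discharge of the invariant hypothesis of Lemma~\ref{lemma:parallelcompositionTSandAnonsharedaction} --- the target's $A^2$-component is unchanged by a non-shared $A^1$-output and already satisfies its invariant because $q'$ was reached by a delay (possibly $d=0$) in $\sem{A^1}\parallel\sem{A^2}$ --- is exactly the step the paper's template handles only implicitly, and it is the only genuinely delicate point.
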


\begin{lemma}\label{lemma:parallelcompositionTSandAsamecons}
	\new{Given two TIOAs $A^i = (\mathit{Loc}^i, l_0^i, \mathit{Act}^i, \mathit{Clk}^i, E^i, \mathit{Inv}^i), i=1,2$ where $\mathit{Act}_o^1 \cap \mathit{Act}_o^2 =\emptyset$. Then $\mathrm{cons}^{\sem{A^1 \parallel A^2}} = \mathrm{cons}^{\sem{A^1} \parallel \sem{A^2}}$.}
\end{lemma}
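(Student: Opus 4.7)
The plan is to show $\mathrm{cons}^X = \mathrm{cons}^Y$, writing $X = \sem{A^1 \parallel A^2}$ and $Y = \sem{A^1} \parallel \sem{A^2}$, via the greatest-fixpoint characterization of $\mathrm{cons}$: it suffices to establish $\mathrm{cons}^X \subseteq \Theta^Y(\mathrm{cons}^X)$ and $\mathrm{cons}^Y \subseteq \Theta^X(\mathrm{cons}^Y)$ and then invoke the greatest-fixpoint property in each direction. Lemma~\ref{lemma:parallelcompositionTSandAsamestateset} aligns the state spaces, and Lemma~\ref{lemma:parallelcompositionTSandAsameerror} aligns the error operators on complements, so the first clause of $\Theta$ matches trivially; all the work lies in verifying the disjunctive clause pointwise in $d$.

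The easy direction is $\mathrm{cons}^X \subseteq \Theta^Y(\mathrm{cons}^X)$. Fix $q \in \Theta^X(\mathrm{cons}^X)$ and $d \geq 0$. By Lemma~\ref{lemma:parallelcompositionTSandAdelay} the $d$-successors of $q$ coincide in $X$ and $Y$, so the universal delay-successor clause of Option~(a) transfers literally. For the action witnesses required by Options~(a) and~(b) of $\Theta^X$, Lemma~\ref{lemma:parallelcompositionTSandAsharedaction} lifts shared action transitions verbatim, and Corollary~\ref{cor:parallelcompositionTSandAnonsharedaction} lifts every non-shared action transition of $X$ into $Y$ with the same target. Consequently every input and output witness of $\Theta^X(\mathrm{cons}^X)$ remains a witness of $\Theta^Y(\mathrm{cons}^X)$, so $q \in \Theta^Y(\mathrm{cons}^X)$.

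The hard direction $\mathrm{cons}^Y \subseteq \Theta^X(\mathrm{cons}^Y)$ is the main obstacle, since $Y$ may carry non-shared action transitions that are absent in $X$ whenever the target state violates the other component's invariant (cf.\ the analogous phenomenon in Figure~\ref{fig:example_condition_conjunction}). The key observation is that such pathological targets cannot lie in $\mathrm{cons}^Y$: any $q'' = ((l^1,l^2),v) \in \mathrm{cons}^Y \subseteq Q^Y$ must admit the $0$-delay self-loop by time reflexivity (Definition~\ref{def:tiots}), and by Definitions~\ref{def:parallelcompositionTIOTS} and~\ref{def:semanticTIOA} this forces $v^1 \models \mathit{Inv}^1(l^1)$ and $v^2 \models \mathit{Inv}^2(l^2)$. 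Hence the precondition of Lemma~\ref{lemma:parallelcompositionTSandAnonsharedaction} is met at every witness target drawn from $\mathrm{cons}^Y$, so its $Y \Rightarrow X$ direction---combined with Lemma~\ref{lemma:parallelcompositionTSandAsharedaction} for shared actions and Lemma~\ref{lemma:parallelcompositionTSandAdelay} for delays---converts each witness in $\Theta^Y(\mathrm{cons}^Y)$ into the same-target witness in $\Theta^X(\mathrm{cons}^Y)$. The two inclusions together yield $\mathrm{cons}^X = \mathrm{cons}^Y$.
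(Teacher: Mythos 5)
Your proposal is correct and follows essentially the same route as the paper, which omits the proof of this lemma by reference to the near-identical conjunction analogue (Lemma~\ref{lemma:conjunctionTSandAsamecons}): the same postfixed-point argument with $\Theta$, the same use of Lemmas~\ref{lemma:parallelcompositionTSandAsamestateset}, \ref{lemma:parallelcompositionTSandAsameerror}, \ref{lemma:parallelcompositionTSandAdelay}, \ref{lemma:parallelcompositionTSandAsharedaction} and Corollary~\ref{cor:parallelcompositionTSandAnonsharedaction}, with the only work in the $Y\Rightarrow X$ direction for non-shared actions. The single local difference is how you discharge the invariant precondition of Lemma~\ref{lemma:parallelcompositionTSandAnonsharedaction}: you use the $0$-delay self-loop at the witness target in $\mathrm{cons}^Y$ (more precisely justified by noting that a state without it is an immediate error state and so cannot lie in $\mathrm{cons}^Y$), whereas the appendix proof of the conjunction analogue derives it from delay-reachability of the action's \emph{source} in $X$ plus the fact that a non-shared action leaves the other component's clocks and location untouched; both are sound, and yours is the same device the paper employs in the proof of Theorem~\ref{thrm:parallelcompositionTSandA}.
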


\new{Finally, we are ready to proof Theorem~\ref{thrm:parallelcompositionTSandA}.}
 
\begin{proof}[Proof of Theorem~\ref{thrm:parallelcompositionTSandA}]
	\new{We will prove this theorem by showing that $(\sem{A^1 \parallel A^2})^{\Delta}$ and $(\sem{A^1} \parallel \sem{A^2})^{\Delta}$ have the same set of states, same initial state, same set of actions, and same transition relation.}
	
	\new{It follows from Lemma~\ref{lemma:parallelcompositionTSandAsamestateset} that $\sem{A^1 \parallel A^2}$ and $\sem{A^1} \parallel \sem{A^2}$ have the same state set and initial state. As $\mathrm{cons}^{\sem{A^1 \parallel A^2}} = \mathrm{cons}^{\sem{A^1} \parallel \sem{A^2}} = \mathrm{cons}$ from Lemma~\ref{lemma:parallelcompositionTSandAsamecons}, it follows that $(\sem{A^1 \parallel A^2})^{\Delta}$ and $(\sem{A^1} \parallel \sem{A^2})^{\Delta}$ have the same state set and initial state. Also, observe that the semantic of a TIOA and adversarial pruning do not alter the action set. Therefore, it follows directly that $(\sem{A^1 \parallel A^2})^{\Delta}$ and $(\sem{A^1} \parallel \sem{A^2})^{\Delta}$ have the same action set and partitioning into input and output actions.}
	
	\new{It remains to show that $(\sem{A^1 \parallel A^2})^{\Delta}$ and $(\sem{A^1} \parallel \sem{A^2})^{\Delta}$ have the same transition relation. In the remainder of the proof, we will use $v^1$ and $v^2$ to indicate the part of a valuation $v$ of only the clocks of $A^1$ and $A^2$, respectively. Also, for brevity we write $X=(\sem{A^1 \parallel A^2})^{\Delta}$, $Y = (\sem{A^1} \parallel \sem{A^2})^{\Delta}$, and $\mathit{Clk} = \mathit{Clk}^1 \uplus\mathit{Clk}^2$ in the rest of this proof.}
	
	\new{($\Rightarrow$) Assume a transition $q_1^X\xrightarrow{a} q_2^X$ in $X$. From Definition~\ref{def:adversarialpruning} it follows that $q_1^X\xrightarrow{a} q_2^X$ in $\sem{A^1 \parallel A^2}$ and $q_2^X \in \mathrm{cons}$. Following Definition~\ref{def:semanticTIOA} of the semantic, it follows that there exists an edge $(l_1,a,\varphi, c, l_2)\in E^{A^1\parallel A^2}$ with $q_1^X = (l_1,v_1)$, $q_2^X = (l_2,v_2)$, $l_1,l_2 \in \mathit{Loc}^{A^1\parallel A^2}$, $v_1, v_2 \in [\mathit{Clk} \mapsto \mathbb{R}_{\geq 0}]$, $v_1\models \varphi$, $v_2 = v_1[r\mapsto 0]_{r\in c}$, and $v_2\models \mathit{Inv}(l_2)$. Now we consider the three cases of Definition~\ref{def:parallelcompositionTIOA} of the parallel composition for TIOA.}
	\begin{itemize}
		\item \new{$a\in\mathit{Act}^1\cap\mathit{Act}^2$. It follows directly from Lemma~\ref{lemma:parallelcompositionTSandAsharedaction} that $q_1^X \xrightarrow{a} q_2^X$ is a transition in $\sem{A^1} \parallel \sem{A^2}$. Since $q_2^X\in\mathrm{cons}$, it holds that $q_1^X \xrightarrow{a} q_2^X$ is a transition in $Y$.}
		
		\item \new{$a\in\mathit{Act}^1\setminus\mathit{Act}^2$. It follows directly from Corollary~\ref{cor:parallelcompositionTSandAnonsharedaction} that $q_1^X \xrightarrow{a} q_2^X$ is a transition in $\sem{A^1} \parallel \sem{A^2}$. Since $q_2^X\in\mathrm{cons}$, it holds that $q_1^X \xrightarrow{a} q_2^X$ is a transition in $Y$.}
		
		\item \new{$a\in\mathit{Act}^2\setminus\mathit{Act}^1$. It follows directly from Corollary~\ref{cor:parallelcompositionTSandAnonsharedaction} (where we switched $A^1$ and $A^2$) that $q_1^X \xrightarrow{a} q_2^X$ is a transition in $\sem{A^1} \parallel \sem{A^2}$. Since $q_2^X\in\mathrm{cons}$, it holds that $q_1^X \xrightarrow{a} q_2^X$ is a transition in $Y$.}
	\end{itemize}
	\new{Now consider that $a$ is a delay $d$. It follows directly from Lemma~\ref{lemma:parallelcompositionTSandAdelay} that $q_1^X \xrightarrow{d} q_2^X$ is a transition in $\sem{A^1} \parallel \sem{A^2}$. Since $q_2^X\in\mathrm{cons}$, it holds that $q_1^X \xrightarrow{d} q_2^X$ is a transition in $Y$.
	}

	\new{We have shown that when $q_1^X \xrightarrow{a} q_2^X$ is a transition in $X = (\sem{A^1 \parallel A^2})^{\Delta}$, it holds that $q_1^X \xrightarrow{a} q_2^X$ is a transition in $Y = (\sem{A^1} \parallel \sem{A^2})^{\Delta}$. Since the transition is arbitrarily chosen, it holds for all transitions in $X$.}
	
	\new{($\Leftarrow$) Assume a transition $q_1^Y\xrightarrow{a} q_2^Y$ in $Y$. From Definition~\ref{def:adversarialpruning} it follows that $q_1^Y\xrightarrow{a} q_2^Y$ in $\sem{A^1} \wedge \sem{A^2}$ and $q_2^Y \in \mathrm{cons}$. Now we consider the three cases of Definition~\ref{def:parallelcompositionTIOTS} of the parallel composition for TIOTS.}
	\begin{itemize}
		\item \new{$a\in\mathit{Act}^1\cap\mathit{Act}^2$. It follows directly from Lemma~\ref{lemma:parallelcompositionTSandAsharedaction} that $q_1^Y \xrightarrow{a} q_2^Y$ is a transition in $\sem{A^1 \parallel A^2}$. Since $q_2^Y\in\mathrm{cons}$, it holds that $q_1^Y \xrightarrow{a} q_2^Y$ is a transition in $X$.}
		
		\item \new{$a\in\mathit{Act}^1\setminus\mathit{Act}^2$. From time reflexivity of Definition~\ref{def:tiots} we have that $q_2^Y\xlongrightarrow{d}$ with $d = 0$. From Definitions~\ref{def:adversarialpruning} and~\ref{def:parallelcompositionTIOTS} it follows that $q_2^{\sem{A^1}} \xlongrightarrow{d}$ and $q^{\sem{A^2}} \xlongrightarrow{d}$. Now, from Definition~\ref{def:semanticTIOA} it follows that $v^2 + d \models \mathit{Inv}^2(l^2)$, i.e., $v^2 \models \mathit{Inv}^2(l^2)$.}
		
		\new{It now follows directly from Lemma~\ref{lemma:parallelcompositionTSandAnonsharedaction} that $q_1^Y \xrightarrow{a} q_2^Y$ is a transition in $\sem{A^1 \parallel A^2}$. Since $q_2^Y\in\mathrm{cons}$, it holds that $q_1^Y \xrightarrow{a} q_2^Y$ is a transition in $X$.}
		
		\item \new{$a\in\mathit{Act}^2\setminus\mathit{Act}^1$. From time reflexivity of Definition~\ref{def:tiots} we have that $q_2^Y\xlongrightarrow{d}$ with $d = 0$. From Definitions~\ref{def:adversarialpruning} and~\ref{def:parallelcompositionTIOTS} it follows that $q^{\sem{A^1}} \xlongrightarrow{d}$ and $q_2^{\sem{A^2}} \xlongrightarrow{d}$. Now, from Definition~\ref{def:semanticTIOA} it follows that $v^1 + d \models \mathit{Inv}^1(l^1)$, i.e., $v^1 \models \mathit{Inv}^1(l^1)$.}
		
		\new{It now follows directly from Lemma~\ref{lemma:parallelcompositionTSandAnonsharedaction} (where we switched $A^1$ and $A^2$) that $q_1^Y \xrightarrow{a} q_2^Y$ is a transition in $\sem{A^1 \parallel A^2}$. Since $q_2^Y\in\mathrm{cons}$, it holds that $q_1^Y \xrightarrow{a} q_2^Y$ is a transition in $X$.}
	\end{itemize}
	\new{Now consider that $a$ is a delay $d$. It follows directly from Lemma~\ref{lemma:parallelcompositionTSandAdelay} that $q_1^Y \xrightarrow{d} q_2^Y$ is a transition in $\sem{A^1 \parallel A^2}$. Since $q_2^Y\in\mathrm{cons}$, it holds that $q_1^Y \xrightarrow{d} q_2^Y$ is a transition in $X$.
	}

	\new{We have shown that when $q_1^Y \xrightarrow{a} q_2^Y$ is a transition in $Y = \sem{A^1} \parallel \sem{A^2}$, it holds that $q_1^Y \xrightarrow{a} q_2^Y$ is a transition in $X = \sem{A^1 \parallel A^2}$. Since the transition is arbitrarily chosen, it holds for all transitions in $Y$.}
\end{proof}

\new{Finally, the following corollary describes a special case of Theorem~\ref{thrm:parallelcompositionTSandA}, which happens to be one of the unproven main theorems in our previous work~\cite{david_real-time_2015}.}
\begin{corollary}
  \new{Given two specification automata $A^i = (\mathit{Loc}^i, l_0^i, \mathit{Act}^i, \mathit{Clk}^i, E^i, \allowbreak\mathit{Inv}^i), i=1,2$ where $\mathit{Act}_o^1 \cap \mathit{Act}_o^2 =\emptyset$ and $\mathit{Act}^1 = \mathit{Act}^2$. Then \new{$\sem{A^1 \parallel A^2} = \sem{A^1} \parallel\sem{A^2}$}.}
\end{corollary}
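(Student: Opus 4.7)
The plan is to follow exactly the strategy used for the analogous corollary of Theorem~\ref{thrm:conjunctionTSandA}, namely to observe that under the extra hypothesis $\mathit{Act}^1 = \mathit{Act}^2$ all actions are shared, so the single source of discrepancy between $\sem{A^1\parallel A^2}$ and $\sem{A^1}\parallel\sem{A^2}$ (the invariant-driven edge case illustrated in Figure~\ref{fig:example_condition_conjunction}) cannot arise, and therefore adversarial pruning is not needed to reconcile the two.

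First, I would unfold the proof of Theorem~\ref{thrm:parallelcompositionTSandA} but skip the uses of Definition~\ref{def:adversarialpruning}. Lemma~\ref{lemma:parallelcompositionTSandAsamestateset} gives immediately that $Q^{\sem{A^1\parallel A^2}} = Q^{\sem{A^1}\parallel\sem{A^2}}$ and that the two TIOTSs share the same initial state. The alphabets and their partition into inputs and outputs coincide by Definitions~\ref{def:parallelcompositionTIOTS} and~\ref{def:parallelcompositionTIOA}, because these operations only depend on $\mathit{Act}_i^j$ and $\mathit{Act}_o^j$.

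Next, I would argue equality of the transition relations by case analysis on the label $a$. The hypothesis $\mathit{Act}^1 = \mathit{Act}^2$ makes $\mathit{Act}^1\setminus\mathit{Act}^2 = \mathit{Act}^2\setminus\mathit{Act}^1 = \emptyset$, so only two cases remain: $a\in\mathit{Act}^1\cap\mathit{Act}^2$ and $a$ being a delay. For shared actions, Lemma~\ref{lemma:parallelcompositionTSandAsharedaction} gives the biconditional $q_1\xlongrightarrow{a}{}^{\!\!\sem{A^1\parallel A^2}} q_2 \Leftrightarrow q_1\xlongrightarrow{a}{}^{\!\!\sem{A^1}\parallel\sem{A^2}} q_2$ directly; for delays, the same holds by Lemma~\ref{lemma:parallelcompositionTSandAdelay}. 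Crucially, neither of these two lemmas requires the auxiliary invariant condition $v_2\models\mathit{Inv}^2(l_2)$ that Lemma~\ref{lemma:parallelcompositionTSandAnonsharedaction} needs, which is precisely the condition that forces adversarial pruning in the general statement.

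There is essentially no obstacle here: the whole point of the corollary is that the pathological behaviour of Figure~\ref{fig:example_condition_conjunction} is driven by a non-shared action firing in one component while the other component sits in a location whose invariant is unsatisfiable; when $\mathit{Act}^1 = \mathit{Act}^2$ such transitions do not exist at all. Consequently the proof reduces to two or three lines that cite Lemmas~\ref{lemma:parallelcompositionTSandAsamestateset}, \ref{lemma:parallelcompositionTSandAsharedaction}, and~\ref{lemma:parallelcompositionTSandAdelay}, mirroring the short proof given for the conjunction corollary.
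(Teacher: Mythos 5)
Your proposal is correct and follows essentially the same route as the paper's own proof, which likewise reduces the claim to a special case of Theorem~\ref{thrm:parallelcompositionTSandA} in which only Lemmas~\ref{lemma:parallelcompositionTSandAsamestateset} and~\ref{lemma:parallelcompositionTSandAsharedaction} are needed because no non-shared actions exist. If anything, your version is slightly more complete, since you also explicitly invoke Lemma~\ref{lemma:parallelcompositionTSandAdelay} for the delay transitions, which the paper's two-line proof leaves implicit.
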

\begin{proof}
	\new{This corollary follows directly as a special case from the proof of Theorem~\ref{thrm:parallelcompositionTSandA}. The special case only depends on Lemmas~\ref{lemma:parallelcompositionTSandAsamestateset} and~\ref{lemma:parallelcompositionTSandAsharedaction}, which do not require adversarial pruning to be applied.}
\end{proof}

\section{Quotient}
\label{sec:quotient}


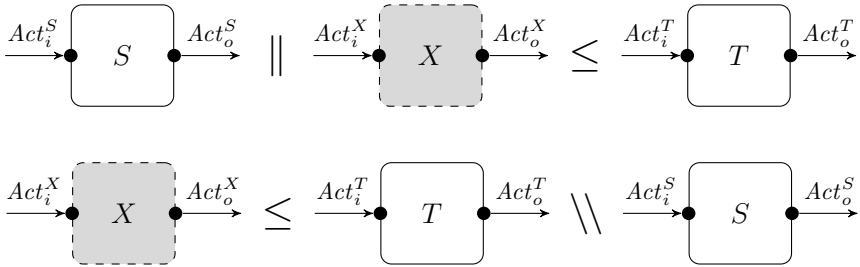
\begin{figure}
	\centering
	\begin{tikzpicture}[->,>=stealth',shorten >=1pt, font=\small, node distance=3cm, scale=.9, transform shape]
		\node[draw, shape=rectangle, minimum size=1.5cm, rounded corners] (S) {\large $S$};
		\node[draw, shape=rectangle, minimum size=1.5cm, rounded corners, dashed, fill=gray!30!white] (X) [right = of S] {\large $X$};
		\node[draw, shape=rectangle, minimum size=1.5cm, rounded corners] (T) [right = of X] {\large $T$};
		
		\path (S) -- node {\Large $\parallel$} (X);
		\path (X) -- node {\Large $\leq$} (T);
		
		\node[connector] (Sin) at (S.180) {};
		\node[connector] (Sout) at (S.0) {};
		\node[connector] (Xin) at (X.180) {};
		\node[connector] (Xout) at (X.0) {};
		\node[connector] (Tin) at (T.180) {};
		\node[connector] (Tout) at (T.0) {};
		
		\draw[->,>=stealth',shorten >=1pt, every node/.style={action}] 
		(Sout) edge node[above] {$\mathit{Act}_o^S$} ++(0:1)
		(Xout) edge node[above] {$\mathit{Act}_o^X$} ++(0:1)
		(Tout) edge node[above] {$\mathit{Act}_o^T$} ++(0:1)
		;
		\draw[<-,>=stealth',shorten >=1pt, every node/.style={action}] 
		(Sin) edge node[above] {$\mathit{Act}_i^S$} ++(180:1)
		(Xin) edge node[above] {$\mathit{Act}_i^X$} ++(180:1)
		(Tin) edge node[above] {$\mathit{Act}_i^T$} ++(180:1)
		;
	\end{tikzpicture}
	
	\vspace{2em}
	\begin{tikzpicture}[->,>=stealth',shorten >=1pt, font=\small, node distance=3cm, scale=.9, transform shape]
		\node[draw, shape=rectangle, minimum size=1.5cm, rounded corners, dashed, fill=gray!30!white] (X) {\large $X$};
		\node[draw, shape=rectangle, minimum size=1.5cm, rounded corners] (T) [right = of X] {\large $T$};
		\node[draw, shape=rectangle, minimum size=1.5cm, rounded corners] (S) [right = of T] {\large $S$};
		
		\path (X) -- node {\Large $\leq$} (T);
		\path (T) -- node {\Large $\quotient$} (S);
		
		\node[connector] (Sin) at (S.180) {};
		\node[connector] (Sout) at (S.0) {};
		\node[connector] (Xin) at (X.180) {};
		\node[connector] (Xout) at (X.0) {};
		\node[connector] (Tin) at (T.180) {};
		\node[connector] (Tout) at (T.0) {};
		
		\draw[->,>=stealth',shorten >=1pt, every node/.style={action}] 
		(Sout) edge node[above] {$\mathit{Act}_o^S$} ++(0:1)
		(Xout) edge node[above] {$\mathit{Act}_o^X$} ++(0:1)
		(Tout) edge node[above] {$\mathit{Act}_o^T$} ++(0:1)
		;
		\draw[<-,>=stealth',shorten >=1pt, every node/.style={action}] 
		(Sin) edge node[above] {$\mathit{Act}_i^S$} ++(180:1)
		(Xin) edge node[above] {$\mathit{Act}_i^X$} ++(180:1)
		(Tin) edge node[above] {$\mathit{Act}_i^T$} ++(180:1)
		;
	\end{tikzpicture}
	\caption{\new{Concept of quotient for given specifications $S$ and $T$ and the unknown implementation $X$.}}
	\label{fig:quotient}
\end{figure}

An essential operator in a complete specification theory is the one of {\em quotienting}. It allows for factoring out behavior from a larger component. If one has a large component specification $T$ and a small one $S$ then $T \quotient S$ is the specification of exactly those components that when composed with $S$ refine $T$. (\new{In this context, larger does not necessarily means bigger in terms of the size of the state set, action set or transition relation, but more like higher in the refinement hierarchy.}) In other words, $T\quotient S$ specifies the work that still needs to be done, given availability of an implementation of $S$, in order to provide an implementation of $T$.

\new{
Figure~\ref{fig:quotient} shows the conceptual idea behind the quotient operator, which is formalized in Theorem~\ref{thm:quotient_correct} later in this section. Given two specification $S$ and $T$, each with its own action sets, the purpose of the quotient operator is to calculate the specification for the missing implementation $X$. Once $X$ is put parallel to $S$, the combined system should refine $T$. From this figure, we can derive the true minimal requirement on the quotient operator (our previous work had stricter requirements): $\mathit{Act}_o^S \cap \mathit{Act}_i^T = \emptyset$, i.e., an action cannot be classified as an output action in $S$ and an input action in $T$. 
}

We proceed like for structural and logical compositions and start with a \new{quotient} that may introduce error states. Those errors \new{can} then pruned \new{if desired}. 
\begin{definition}\label{def:quotientTIOTS}
	Given specifications $S = (Q^S,q_0^S,\mathit{Act}^S,\rightarrow^S)$ and $T = (Q^T,q_0^T,\allowbreak\mathit{Act}^T,\allowbreak\rightarrow^T)$ where $\mathit{Act}_o^S\cap\mathit{Act}_i^T=\emptyset$. The \emph{quotient of $T$ and $S$}, denoted by $T\quotient S$, is a specification $(Q^T\times Q^S\cup\{u, e\}, (q_0^T,q_0^S), \mathit{Act},\allowbreak \rightarrow)$ where $u$ is the universal state, $e$ the inconsistent state, $\mathit{Act}=\mathit{Act}_i\uplus\mathit{Act}_o$ with $\mathit{Act}_i=\mathit{Act}_i^T\cup\mathit{Act}_o^S$ and $\mathit{Act}_o=\mathit{Act}_o^T\setminus\mathit{Act}_o^S \cup \mathit{Act}_i^S\setminus\mathit{Act}_i^T$, and $\rightarrow$ is defined as
	\begin{enumerate}
		\item $(q_1^T,q_1^S)\xlongrightarrow{a} (q_2^T,q_2^S)$ if $a\in\mathit{Act}^S\cap\mathit{Act}^T$, $q_1^T\xlongrightarrow{a}{}^{\!\! T} q_2^T$, and $q_1^S\xlongrightarrow{a}{}^{\!\! S} q_2^S$
		\item $(q^T,q_1^S)\xlongrightarrow{a} (q^T,q_2^S)$ if $a\in\mathit{Act}^S\setminus\mathit{Act}^T$, $q^T\in Q^T$, and $q_1^S\xlongrightarrow{a}{}^{\!\! S} q_2^S$
		\item $(q_1^T,q^S)\xlongrightarrow{a} (q_2^T,q^S)$ if $a\in\mathit{Act}^T\setminus\mathit{Act}^S$, $q^S\in Q^S$, and $q_1^T\xlongrightarrow{a}{}^{\!\! T} q_2^T$
		\item $(q_1^T,q_1^S)\xlongrightarrow{d} (q_2^T,q_2^S)$ if $d \in \mathbb{R}_{\geq 0}$, $q_1^T\xlongrightarrow{d}{}^{\!\! T} q_2^T$, and $q_1^S\xlongrightarrow{d}{}^{\!\! S} q_2^S$
		\item \label{enum:unreachable_action} $(q^T,q^S)\xlongrightarrow{a} u$ if $a\in\mathit{Act}_o^S$, $q^T\in Q^T$, and $q^S\arrownot\xlongrightarrow{a}{}^{\!\! S}$
		\item \label{enum:unreachable_delay} $(q^T,q^S)\xlongrightarrow{d} u$ if $d \in \mathbb{R}_{\geq 0}$, $q^T\in Q^T$, and $q^S\arrownot\xlongrightarrow{d}{}^{\!\! S}$
		\item $(q^T,q^S)\xlongrightarrow{a} e$ if $a\in\mathit{Act}_o^S\cap\mathit{Act}_o^T$, $q^T\arrownot\xlongrightarrow{a}{}^{\!\! T}$, and $q^S\xlongrightarrow{a}{}^{\!\! S}$
		\item $u\xlongrightarrow{a}u$ if $a\in\mathit{Act}\cup\mathbb{R}_{\geq 0}$
		\item $e\xlongrightarrow{a}e$ if $a\in\mathit{Act}_i$
	\end{enumerate}
\end{definition}
In this definition, $u$ and $e$ are fresh states such that $u$ is universal (allows arbitrary behavior) and $e$ is inconsistent (no output-controllable behavior can satisfy it). State $e$ disallows progress of time and has no output transitions. The universal state guarantees nothing about the behavior of its implementations (thus any refinement with a suitable alphabet is possible), and dually the inconsistent state  allows no implementations.

\new{The first four rules are part of the standard rules of parallel composition, see Definition~\ref{def:parallelcompositionTIOTS}. Rules 5 and 6 capture the situation where $S$ does not allow a particular output action or delay, respectively, so the parallel composition of $S$ and the quotient also does not allow this to happen. Therefore, it technically does not matter what the quotient does after performing these transitions, hence they go to the universal state $u$. Rule 7 captures the situation that an output shared between $S$ and $T$ as causes a problem in the refinement $S\leq T$ as $T$ is blocking the output. Thus the quotient, representing the missing component put into parallel composition with $S$, needs to block $S$ from performing this output action. But the output action has become an input action in the quotient, so we redirect this output to the error state to `memorize' this problem. Finally, rules 8 and 9 simply express what we mean by universal and error state, respectively.}

Theorem~\ref{thm:quotient_correct} states that the proposed quotient operator has exactly the property that it is dual of structural composition with regards to refinement.

\begin{theorem}\label{thm:quotient_correct} 
    For any two specifications $S$ and $T$ such that the quotient $T\quotient S$ is defined, and for any implementation $X$ over the same alphabet as $T\quotient S$, we have that $S \parallel X$ is defined and $S \parallel X \leq T$ iff $X \leq T \quotient S$.     
\end{theorem}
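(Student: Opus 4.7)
My plan is to prove both directions of the biconditional by constructing explicit refinement relations that mirror the shape of the quotient definition. As a preliminary sanity check I verify the alphabet constraints: using $\mathit{Act}_o^S \cap \mathit{Act}_i^T = \emptyset$ together with the definition of $T \quotient S$'s alphabet, one obtains $\mathit{Act}_o^S \cap \mathit{Act}_o^X = \emptyset$ (so $S \parallel X$ is defined), $\mathit{Act}_i^{S\parallel X} = \mathit{Act}_i^T$, and $\mathit{Act}_o^T \subseteq \mathit{Act}_o^{S\parallel X}$, which gives well-typedness of the refinement $S \parallel X \leq T$.

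For the direction $X \leq T \quotient S \Rightarrow S \parallel X \leq T$, given a witness $R_X$ I define
\[
R \;=\; \bigl\{\,((s,x),t) \in (Q^S \times Q^X) \times Q^T \,:\, (x,(t,s)) \in R_X\,\bigr\}.
\]
The initial pair is related because $(q_0^X,(q_0^T,q_0^S)) \in R_X$. For simulation I take $((s,x),t) \in R$ and do a case analysis on the type of action. For an input $i? \in \mathit{Act}_i^T$ of $t$, the quotient's rule~1 (if $i? \in \mathit{Act}^S$) or rule~3 (otherwise) provides a matching input of $T \quotient S$ from $(t,s)$, which $R_X$ lifts into $X$, and combined with $S$'s input-enabled step this yields the required joint transition in $S \parallel X$. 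For an output or delay of $(s,x)$, one of quotient rules~1, 2, 3, or~4 provides the matching transition in $T \quotient S$, and $R_X$ then forces $T$ to supply a matching target in $Q^T$.

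For the reverse direction $S \parallel X \leq T \Rightarrow X \leq T \quotient S$, given a witness $R_{SX}$ I define
\[
R \;=\; \bigl\{\,(x,(t,s)) \,:\, ((s,x),t) \in R_{SX}\,\bigr\} \,\cup\, \bigl\{\,(x,u) \,:\, x \in Q^X\,\bigr\}.
\]
The universal-state component is essential: the quotient transitions to $u$ (via rules~5 and~6) exactly when $S$ refuses an output or delay that $X$ could still perform, and rule~8 makes $u$ self-loop on every action and delay, so any such transition of $X$ is matched vacuously by the $(x,u)$ pairs. For pairs from the first component, the analysis mirrors the other direction, translating between parallel-composition rules and quotient rules.

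The hard part in both directions is the inconsistent state $e$, introduced by rule~7 along input edges from $(t,s)$ exactly when $s$ can emit a shared output $o! \in \mathit{Act}_o^S \cap \mathit{Act}_o^T$ that $t$ refuses; since $R$ contains no $e$-pair, I must argue this rule is never triggered for pairs in $R$. In the $(\Leftarrow)$ direction, if $((s,x),t) \in R$ and $(t,s)$ had such an edge to $e$, then input-enabledness of $X$ would yield a successor $x'$ with $(x',e) \in R_X$; matching $e$, which has no outputs and no delays, would force $x'$ to violate independent progress, contradicting that $X$ is an implementation. In the $(\Rightarrow)$ direction, if $(x,(t,s)) \in R$ came from $((s,x),t) \in R_{SX}$ and $(t,s)$ had such an $e$-edge, then $s \xlongrightarrow{o!}$ together with $X$'s input-enabledness gives the joint transition $(s,x) \xlongrightarrow{o!} (s',x')$, and refinement condition~3 of $S \parallel X \leq T$ then forces $t \xlongrightarrow{o!}$, contradicting $t \arrownot\xlongrightarrow{o!}$. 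Thus no $e$-pair is ever required, and the case analyses close.
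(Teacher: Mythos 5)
Your proposal is correct and follows essentially the same route as the paper's own proof: the same two witnessing relations (including the extra $\{(x,u)\mid x\in Q^X\}$ component for the direction producing $X\leq T\quotient S$), the same preliminary alphabet computation (the paper isolates it as a separate lemma), and the same two contradiction arguments ruling out the inconsistent state $e$ — via independent progress of $X$ in one direction and via refinement condition~3 plus input-enabledness in the other. No substantive differences to report.
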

\noindent The proof of Theorem~\ref{thm:quotient_correct} can be found in Appendix~\ref{app:proofs-quotient}.


Quotienting for TIOA is defined in the following way.
\begin{definition}\label{def:quotientTIOA}
	Given specification automata $S = (\mathit{Loc}^S,l_0^S,\mathit{Act}^S,\mathit{Clk}^S, E^S,\mathit{Inv}^S)$ and $T = (\mathit{Loc}^T,l_0^T,\allowbreak\mathit{Act}^T,\mathit{Clk}^T, E^T,\allowbreak\mathit{Inv}^T)$ where $\mathit{Act}_o^S\cap\mathit{Act}_i^T=\emptyset$. The \emph{quotient of $T$ and $S$}, denoted by $T\quotient S$, is a specification automaton $(\mathit{Loc}^T\times \mathit{Loc}^S\cup\{l_u, l_e\}, (l_0^T,l_0^S), \mathit{Act}, \mathit{Clk}^T \uplus \mathit{Clk}^S \uplus \{x_{\mathit{new}}\}, E, \mathit{Inv})$ where $l_u$ is the universal state, $l_e$ the inconsistent state, $\mathit{Act}=\mathit{Act}_i\uplus\mathit{Act}_o$ with $\mathit{Act}_i=\mathit{Act}_i^T\cup\mathit{Act}_o^S\cup \{i_{\mathit{new}}\}$ and $\mathit{Act}_o=\mathit{Act}_o^T\setminus\mathit{Act}_o^S \cup \mathit{Act}_i^S\setminus\mathit{Act}_i^T$, $\mathit{Inv}((l^T,l^S)) = \mathit{Inv}(l_u) = \mathbf{T}$, $\mathit{Inv}(l_e) = x_{\mathit{new}} \leq 0$ and $E$ is defined as
	\begin{enumerate}
		\item $((l_1^T,l_1^S), a, \varphi^T\wedge \new{\mathit{Inv}(l_2^T)[r\mapsto 0]_{r\in c^T}} \wedge \varphi^S \wedge \new{\mathit{Inv}(l_1^S)} \wedge \new{\mathit{Inv}(l_2^S)[r\mapsto 0]_{r\in c^S}}, c^T\cup c^S, (l_2^T,l_2^S)) \in E$ if $a\in\mathit{Act}^S\cap\mathit{Act}^T$, $(l_1^T, a, \varphi^T,c^T, l_2^T)\in E^T$, and $(l_1^S, a, \varphi^S, c^S, l_2^S)\in E^S$\footnote{\new{Only the target invariant of $T$ matters. $\mathit{Inv}(l_1^S)$ is used to force the complementary edge to the universal state (which depends on $S$, see rules~\ref{enum:unreachable_action} and~\ref{enum:unreachable_delay} in Definition~\ref{def:quotientTIOTS} of quotient for TIOTS), $\mathit{Inv}(l_2^S)[r\mapsto 0]_{r\in c^S}$ is used to ensure the transition only appears in feasible states in the semantic representation as the location invariants are removed.}}
		\item $((l^T,l_1^S), a, \varphi^S \wedge \new{\mathit{Inv}(l_1^S)} \wedge \new{\mathit{Inv}(l_2^S)[r\mapsto 0]_{r\in c^S}}, c^S, (l^T,l_2^S)) \in E$ if $a\in\mathit{Act}^S\setminus\mathit{Act}^T$, $l^T\in \mathit{Loc}^T$, and $(l_1^S, a, \varphi^S, c^S, l_2^S)\in E^S$
		\item $((l^T,l_1^S), a, \neg G_S, \emptyset, l_u) \in E$ if $a\in\mathit{Act}_o^S$, $l^T\in \mathit{Loc}^T$ and $G_S = \bigvee \{\varphi^S \wedge \new{\mathit{Inv}(l_2^S)[r\mapsto 0]_{r\in c^S}} \mid (l_1^S,a,\varphi^S,c^S,l_2^S) \in E^S\}$
		\item $((l^T,l^S), a, \neg \mathit{Inv}(l^S), \new{\emptyset}, l_u) \in E$ if $a \in \mathit{Act}$, $l^T\in \mathit{Loc}^T$, and $l^S\in\mathit{Loc}^S$
		\item $((l_1^T,l_1^S), a, \varphi^S \wedge \new{\mathit{Inv}(l_1^S)} \wedge \new{\mathit{Inv}(l_2^S)[r\mapsto 0]_{r\in c^S}} \wedge \neg G_T, \{x_{\mathit{new}}\}, l_e) \in E$ if $a\in\mathit{Act}_o^S\cap\mathit{Act}_o^T$, $(l^S, a, \varphi^S, c^S, l_2^S)\in E^S$, and $G_T = \bigvee \{\varphi^T \wedge \new{\mathit{Inv}(l_2^T)[r\mapsto 0]_{r\in c^T}} \mid (l_1^T,a,\varphi^T,c^T,l_2^T) \in E^T\}$
		\item $((l^T,l^S), i_{\mathit{new}}, \neg \mathit{Inv}(l^T) \wedge \mathit{Inv}(l^S), \{x_{\mathit{new}}\}, l_e) \in E$ if $l^T\in \mathit{Loc}^T$ and $l^S\in\mathit{Loc}^S$
		\item \new{$((l^T,l^S), i_{\mathit{new}}, \mathit{Inv}(l^T) \vee \neg \mathit{Inv}(l^S), \emptyset, (l^T,l^S)) \in E$ if $l^T\in \mathit{Loc}^T$ and $l^S\in\mathit{Loc}^S$}
		\item $((l_1^T,l^S), a, \varphi^T \wedge \new{\mathit{Inv}(l_2^T)[r\mapsto 0]_{r\in c^T} \wedge \mathit{Inv}(l^S)}, c^T, (l_2^T,l^S)) \in E$ if $a\in\mathit{Act}^T\setminus\mathit{Act}^S$, $l^S\in \mathit{Loc}^S$, and $(l_1^T, a, \varphi^T,c^T, l_2^T)\in E^T$\footnote{\new{Location invariant $\neg \mathit{Inv}(l^S)$ is added to this transition to avoid nondeterminism caused by rule 4. This problem is not present in Definition~\ref{def:quotientTIOTS} of the quotient for TIOTS, as there we can directly refer to the delay action $d$ in rule 5.}}
		\item $(l_u, a, \mathbf{T}, \emptyset, l_u) \in E$ if $a\in\mathit{Act}$
		\item $(l_e, a, x_{\mathit{new}}=0, \emptyset, l_e) \in E$ if $a\in\mathit{Act}_i$
	\end{enumerate}
	and the conjunction of an empty set equals false ($\bigvee\emptyset = \mathbf{F}$). 
\end{definition}

\new{Compared to definitions of the quotient for TIOAs in our previous works, we made several changes to correct minor mistakes. 1) Location invariants in the quotient are simply $\mathbf{T}$, hence the location invariants of the specifications $S$ and $T$ are now included in the transitions of the quotient. For example, rule~1 captures rule~1 of the quotient for TIOTS (Definition~\ref{def:quotientTIOTS}) where transitions are both possible in $S$ and $T$. A transition is possible when the guard is satisfied, captured by the quotient definitions in previous works, \emph{and} the updated valuation satisfies the target location's invariant, see Definition~\ref{def:semanticTIOA} of the semantics. 2) We resolved a potential nondeterminism caused by the combination of rules~4 and~8, i.e., in Definition~\ref{def:quotientTIOA} for any given state either the edge emanating from rule~4 is enabled or the one from rule~8, or none, but never both. 3) Similarly, we resolved a potential nondeterminism caused by the combination of rules~4 and~5. 4) Rule~7 is added to ensure that the quotient is actually input enabled by construction for the new input action $i_{\mathit{new}}$.}

\begin{figure}
	\centering
	\begin{tikzpicture}[->,>=stealth',shorten >=1pt, font=\small, scale=1, transform shape]
		\node[draw, shape=rectangle, rounded corners] (adm) {\begin{tikzpicture}[->,>=stealth',shorten >=1pt, align=left,node distance=3.5cm, align=center, scale=.7]
				\node[main node, initial, initial text={}, initial where=above] (1) {};
				\node[main node] (2) [right = of 1] {};
				\node[main node] (3) [below = of 2] {};
				\node[main node] (4) [below = of 1] {};
				
				\path[every node/.style={action}]
				(1) edge node[above] {grant?\\ $u \leq 2$ \\$u:= 0, z := 0$} (2)
				(2) edge node[pos=.55,fill=white] {coin?\\$z\leq 2$} (3)
				(3) edge[dashed] node[above] {pub!, $z := 0$} (4)
				(4) edge node[right] {news?\\$z\leq 2$\\$u:=0$} (1)
				;
				\path[every node/.style={action}]
				(2) edge[loop above] node[above] {grant?\\$z\leq 2 \wedge u \leq 20$} (2)
				(2) edge[loop right,dashed] node[right] {pub!\\$z\leq 2$} (2)
				(3) edge[loop right] node[right] {grant?\\$u \leq 20$} (3)
				;
				\path[every node/.style={action}] 
				(4) edge [out=330,in=300,looseness=8, min distance=5mm] node[below right] {grant?} node[below right=12pt and -15pt,fill=white] {$z\leq 2 \wedge u\leq 20$} (4)
				(4) edge [out=330,in=300,looseness=8, min distance=5mm] (4) 
				(4) edge [out=240,in=210,looseness=8, min distance=5mm,dashed] node[left] {pub!\\$z\leq 2$} (4)
				;
				
				\node[main node] (2s) [left = of 1] {};
				\node[main node] (1s) [left = of 2s] {};
				\node[main node] (3s) [below = of 2s] {};
				\node[main node] (4s) [below = of 1s] {};
				
				\path[every node/.style={action}]
				(1s) edge node[above] {grant?\\ $u \leq 2$ \\$z := 0$} (2s)
				(2s) edge node[right, pos=.2] {coin?\\$z\leq 2$} (3s)
				(3s) edge[dashed] node[above] {pub!, $z := 0$} (4s)
				(4s) edge node[right] {news?\\$z\leq 2$} (1s)
				;
				\path[bend right = 15, every node/.style={action}]
				(1s) edge[dashed] node[left,pos=.4] {pub!} node[left,pos=.6] {$z:= 0$} (4s)
				;
				\path[every node/.style={action}]
				(3s) edge[loop right] node[right] {grant?} (3s)
				;
				\path[every node/.style={action}] 
				(4s) edge [out=330,in=300,looseness=8, min distance=5mm] node[below right,fill=white] {grant?\\$z\leq 2$} (4s)
				(4s) edge [out=330,in=300,looseness=8, min distance=5mm] (4s) 
				(4s) edge [out=240,in=210,looseness=8, min distance=5mm,dashed] node[left] {pub!\\$z\leq 2$} (4s)
				(2s) edge[out=150,in=120,looseness=8, min distance=6mm,dashed] node[above left = 0pt and -10pt] {pub!\\$z\leq 2$} (2s)
				(2s) edge[out=60,in=30,looseness=8, min distance=6mm] node[above right = 0pt and -10pt] {grant?\\$z\leq 2$} (2s)
				;
				
				\path (1) -- node[main node] (c) {} (3s);
				\path[every node/.style={action}]
				(1) edge node[above] {grant?\\$u > 2, z:= 0$} (2s)
				(1) edge[dashed] node[fill=white] {pub!\\$z:= 0$} (c)
				(c) edge node[fill=white,rectangle,minimum width=7em,minimum height=1em] {} node[fill=white,rectangle,minimum width=3.3em,minimum height=3.3em] {} node[] {grant?\\$u \leq 2 \wedge z\leq 2$\\$u:= 0$} (4)
				(c) edge[loop above,dashed] node[above] {pub!\\$z\leq 2$} (c)
				;
				
				\path (2s) -- coordinate[pos=0.4] (x) (4s);
				\draw 
				(c) .. controls (3s) and (x) .. node[above=3.5pt,pos=0.35,fill=white] {grant?\\$u > 2 \wedge z \leq 2$} (4s);
				
				\path (1) -- node[main node, label={above:$l_u$}] (u1) {} (3);
				\path[every node/.style={action, fill=white}]
				(1) edge node {coin?, news?} (u1)
				(2) edge[bend left=15,pos=.55] node {news?} (u1)
				(2) edge[bend right=15] node {$\mathit{Act}, z > 2$} (u1)
				(3) edge node {coin?, news?} (u1)
				(4) edge node {coin?} (u1)
				;
				
				\node[main node, label={left:$l_u$}] (u3) [below = 2cm of 4] {};
				\begin{scope}[on background layer]
					\path[every node/.style={action}]
					(4) edge node[right, pos=.8] {$\mathit{Act}, z > 2$} (u3)
					;
				\end{scope}
				
				\path (1s) -- node[main node, label={above:$l_u$}] (u1s) {} (3s);
				\path[every node/.style={action, fill=white}]
				(1s) edge node {coin?, news?} (u1s)
				(2s) edge[bend left=15,pos=.55] node {news?} (u1s)
				(2s) edge[bend right=15] node {$\mathit{Act}, z > 2$} (u1s)
				(3s) edge node[pos=.35] {coin?, news?} (u1s)
				(4s) edge node[pos=.7] {coin?} (u1s)
				;
			
				\node[main node, label={right:$l_u$}] (u3s) [below = 2cm of 4s] {};
				\begin{scope}[on background layer]
					\path[every node/.style={action}]
					(4s) edge node[right, pos=.8] {$\mathit{Act}, z > 2$} (u3s)
					;
				\end{scope}
			
				\path[every node/.style={action}]
				(u1) edge[loop right] node[right] {$\mathit{Act}$} (u1)
				(u1s) edge[loop right] node[right] {$\mathit{Act}$} (u1s)
				(u3) edge[loop right] node[right] {$\mathit{Act}$} (u3)
				(u3s) edge[loop left] node[left] {$\mathit{Act}$} (u3s)
				;
			
				\node[main node, label={right:$l_e$}, label={below:$x_{\mathit{new}} \leq 0$}] (e1) [left = of u3] {};
				
				\begin{scope}[on background layer]
					\path[every node/.style={action}]
					(c) edge node[fill=white,rectangle,minimum width=3em,minimum height=2.5em, pos=.44] {} node[left,pos=.8] {news?, $z \leq 2$\\$x_{\mathit{new}} := 0$} (e1)
					(c) edge node[fill=white,rectangle,minimum width=3em,minimum height=2.5em, pos=.55] {} node[left,pos=.9] {$\mathit{Act}, z>2$} (u3)
					;
				\end{scope}
				\path[every node/.style={action}]
				(e1) edge[loop left] node[left] {$\mathit{Act}_i$\\$x_{\mathit{new}} = 0$} (e1)
				;
				
				\node[main node, label={left:$l_e$}, label={below right:$x_{\mathit{new}} \leq 0$}] (e2) [right = 4cm of u1] {};
				\begin{scope}[on background layer]
					\path[every node/.style={action}]
					(2) edge node[pos=.55, fill=white] {$i_{\mathit{new}}, x_{\mathit{new}} := 0$\\$u > 20 \wedge z \leq 2$} (e2)
					(3) edge node[pos=.5, fill=white] {$i_{\mathit{new}}, u > 20$\\$x_{\mathit{new}} := 0$} (e2)
					(e2) edge[out=60,in=30,looseness=8, min distance=6mm] node[right] {$\mathit{Act}_i$\\$x_{\mathit{new}} = 0$} (e2)
					;
				\end{scope}
			
				\node[main node, label={left:$l_e$}, label={below:$x_{\mathit{new}} \leq 0$}] (e3) [right = of u3] {};
				\begin{scope}[on background layer]
					\path[every node/.style={action}]
					(4) edge[bend left=20] node[right, pos=.6, fill=white] {$i_{\mathit{new}}$\\$u > 20 \wedge z \leq 2$\\$x_{\mathit{new}} := 0$} (e3)
					(e3) edge[loop right] node[right] {$\mathit{Act}_i$\\$x_{\mathit{new}} = 0$} (e3)
					;
				\end{scope}
		\end{tikzpicture}};
		
		\path (adm.south west) -- node[connector, pos=0.167] (coin) {} (adm.south east);
		\path (adm.south west) -- node[connector, pos=0.333] (news) {} (adm.south east);
		\path (adm.south west) -- node[connector, pos=0.5] (grant) {} (adm.south east);
		\path (adm.south west) -- node[connector, pos=0.667] (inew) {} (adm.south east);
		\path (adm.south west) -- node[connector, pos=0.833] (pub) {} (adm.south east);
		
		\draw[<-,>=stealth',shorten >=1pt, every node/.style={action}]
		(coin) edge node[right] {coin} ++(-90:0.7)
		(news) edge node[right] {news} ++(-90:0.7)
		(grant) edge node[right] {grant} ++(-90:0.7)
		(inew) edge node[right] {$i_{\mathit{new}}$} ++(-90:0.7)
		;
		\draw[->,>=stealth',shorten >=1pt, every node/.style={action}]
		(pub) edge node[left] {pub} ++(-90:0.7)
		;
	\end{tikzpicture}
	\caption{\new{The quotient \textsf{University} $\quotient$ \textsf{Administration}. To increase readability, we included multiple universal locations and error locations, and omitted the sefloops labeled with $i_{\mathit{new}}$ (see rule 7 of Definition~\ref{def:quotientTIOA}).}}
	\label{fig:quotientexample}
\end{figure}

\new{Figure~\ref{fig:quotientexample} shows the quotient \textsf{University} $\quotient$ \textsf{Administration} as an example. Note that, to increase the readability of the figure, we included multiple universal locations and error locations, while in theory there is only a single universal location and a single error location. Furthermore, we omitted the selfloops labeled with $i_{\mathit{new}}$ that are generated by rule~7 from Definition~\ref{def:quotientTIOA}. As can be seen from this example, there are two potential problems that can result in errors for the university specification: it can take too long ($u > 20$) to create a \textsf{news} item on the research after a \textsf{grant} has been received, or a \textsf{pub} is produces followed by a \textsf{news} item before an actual \textsf{grant} has been received. Any further refinements, including implementations, of \textsf{University} $\quotient$ \textsf{Administration} should resolve these two problems.
}

\new{Observe that $\sem{T\quotient S}$ and $\sem{T}  \quotient \sem{S}$ have different state and action sets. For example, $\sem{T\quotient S}$ has a set of error states $\{(l_e, v) \mid v \in [\mathit{Clk}^{T\quotient S}\mapsto \mathbb{R}_{\geq 0}]\}$, while $\sem{T} \quotient \sem{S}$ only has a single error state $e$. Or $\sem{T\quotient S}$ contains the input action $i_{\mathit{new}}$ which $\sem{T} \quotient \sem{S}$ lacks. This makes relating the quotient for TIOTS and TIOA much more tedious that those theorems in previous sections. Therefore, we have to use bisimulation, Definition~\ref{def:bisimulation}, in the following main theorem, Theorem~\ref{thrm:quotientTSandA}, that lifts all the results from timed input/output transition systems to the symbolic representation level\footnote{\new{In previous works, the bisimulation was replaced by an equation sign, which is technically incorrect.}}.
}

\new{
	\begin{definition}\label{def:bisimulation}
		Given specifications $S = (Q^S,q_0^S,\mathit{Act}^S,\rightarrow^S)$ and $T = (Q^T,q_0^T,\allowbreak\mathit{Act}^T,\allowbreak\rightarrow^T)$. $S$ and $T$ are \emph{bisimilar}, denoted by $S\simeq T$, iff there exists a bisimulation relation $R\subseteq Q^S\times Q^T$ containing $(q_0^S,q_0^T)$ such that for each pair of states $(s,t)\in R$ it holds that
		\begin{enumerate}[itemsep=0pt]
			\item whenever $s\xlongrightarrow{a}{}^{\!\!S} s'$ for some $s'\in Q^S$ and $a\in\mathit{Act}^S\cap\mathit{Act}^T$, then $t\xlongrightarrow{a}{}^{\!\!T}t'$ and $(s',t')\in R$ for some $t'\in Q^T$
			\item whenever $s\xlongrightarrow{a}{}^{\!\!S} s'$ for some $s'\in Q^S$ and $a\in\mathit{Act}^S\setminus\mathit{Act}^T$, then $(s',t)\in R$
			\item whenever $t\xlongrightarrow{a}{}^{\!\!T} t'$ for some $t'\in Q^T$ and $a\in\mathit{Act}^T\cap\mathit{Act}^S$, then $s\xlongrightarrow{a}{}^{\!\!S}s'$ and $(s',t')\in R$ for some $s'\in Q^S$
			\item whenever $t\xlongrightarrow{a}{}^{\!\!T} t'$ for some $t'\in Q^T$ and $a\in\mathit{Act}^T\setminus\mathit{Act}^S$, then $(s,t')\in R$
			\item whenever $s\xlongrightarrow{d}{}^{\!\!S} s'$ for some $s'\in Q^S$ and $d\in \mathbb{R}_{\geq 0}$, then $t\xlongrightarrow{d}{}^{\!\!T}t'$ and $(s',t')\in R$ for some $t'\in Q^T$
			\item whenever $t\xlongrightarrow{d}{}^{\!\!T} t'$ for some $t'\in Q^T$ and $d\in \mathbb{R}_{\geq 0}$, then $s\xlongrightarrow{d}{}^{\!\!S}s'$ and $(s',t')\in R$ for some $s'\in Q^S$
		\end{enumerate}
		Two specification automata $A$ and $B$ are bisimilar, denoted by $A\simeq B$, iff $\sem{A} \simeq \sem{B}$.
	\end{definition}
}

\begin{theorem}\label{thrm:quotientTSandA}
	Given specification automata $S = (\mathit{Loc}^S,l_0^S,\mathit{Act}^S,\mathit{Clk}^S, E^S,\mathit{Inv}^S)$ and $T = (\mathit{Loc}^T,l_0^T,\allowbreak\mathit{Act}^T,\mathit{Clk}^T, E^T,\allowbreak\mathit{Inv}^T)$ where $\mathit{Act}_o^S\cap\mathit{Act}_i^T=\emptyset$. Then \new{$(\sem{T\quotient S})^{\Delta} \simeq (\sem{T}  \quotient \sem{S})^{\Delta}$}.
\end{theorem}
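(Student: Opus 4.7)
The plan is to follow the same template as the proofs of Theorem~\ref{thrm:conjunctionTSandA} and Theorem~\ref{thrm:parallelcompositionTSandA}, but to exhibit an explicit bisimulation rather than equality, because $\sem{T\quotient S}$ and $\sem{T}\quotient\sem{S}$ differ in three structural ways: (i) the TIOA quotient adds a fresh clock $x_{\mathit{new}}$ and a fresh input $i_{\mathit{new}}$, neither of which exists on the TIOTS side; (ii) the TIOA quotient contains a whole family of semantic states $(l_u,v)$ (and $(l_e,v)$) whereas the TIOTS quotient contains a single $u$ (and a single $e$); (iii) the TIOA quotient encodes invariants inside guards, so some semantic transitions live in different places. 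The candidate relation $R$ forgets $x_{\mathit{new}}$, splits a pair valuation $v$ into $v|_{\mathit{Clk}^T}$ and $v|_{\mathit{Clk}^S}$, and collapses all $l_u$-based states to $u$ and all $l_e$-based states to $e$. Formally, we restrict $R$ to pairs of consistent states in the two pruned systems.

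The key steps are, first, to prove analogues of Lemmas~\ref{lemma:conjunctionTSandAsamestateset}--\ref{lemma:conjunctionTSandAsamecons} for the quotient. Concretely, I would prove: a state-set correspondence (after collapsing the $l_u$- and $l_e$-families), a delay lemma (matching rule~4 of Definition~\ref{def:quotientTIOTS} with the implicit delays of Definition~\ref{def:semanticTIOA} and using that invariants of $T\quotient S$ are trivial and therefore preserved), a shared-action lemma (rules~1--3 of Definition~\ref{def:quotientTIOTS} vs.\ rules~1, 2, 8 of Definition~\ref{def:quotientTIOA}), and dedicated lemmas for the universal and error edges (rules~5--7 of Definition~\ref{def:quotientTIOTS} vs.\ rules~3--6 of Definition~\ref{def:quotientTIOA}). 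The conjuncts $\mathit{Inv}(l_1^S)$ and $\mathit{Inv}(l_2^S)[r\mapsto 0]_{r\in c^S}$ attached to the TIOA rules are exactly what is needed to recover the semantic-level guard, so these matchings will be bookkeeping-heavy but routine, along the lines of the proof sketch in Figure~\ref{fig:example_condition_conjunction}. Finally I would prove an $\mathrm{err}$/$\mathrm{cons}$ correspondence lemma so that adversarial pruning on the two sides removes corresponding states; this uses that $\mathit{Inv}(l_e)=x_{\mathit{new}}\leq 0$ together with the forced reset $x_{\mathit{new}}:=0$ on every entry into $l_e$ makes $(l_e,v)$ inconsistent exactly when $e$ is.

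The main obstacle will be handling the fresh input $i_{\mathit{new}}$, which lives only in $\sem{T\quotient S}$. By clauses~2 and~4 of Definition~\ref{def:bisimulation}, every $i_{\mathit{new}}$-transition on the TIOA side must relate the target back to the \emph{same} TIOTS-side state. The argument splits according to rules~6 and~7 of Definition~\ref{def:quotientTIOA}: rule~7 produces self-loops $(l^T,l^S,v)\xlongrightarrow{i_{\mathit{new}}}(l^T,l^S,v)$ under the guard $\mathit{Inv}(l^T)\vee\neg\mathit{Inv}(l^S)$, which trivially preserves $R$; rule~6 produces a transition to $(l_e,v[x_{\mathit{new}}\mapsto 0])$ under the complementary guard $\neg\mathit{Inv}(l^T)\wedge\mathit{Inv}(l^S)$. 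The delicate observation is that in every pair-state reachable in $\sem{T\quotient S}$ one of the two guards is satisfied (since they are complementary), so input-enabledness for $i_{\mathit{new}}$ is preserved even after the pruned error state is removed; and on the unpruned side, rule~6 always targets a state that is inconsistent (no output, cannot delay because of the $x_{\mathit{new}}\leq 0$ invariant), so it is removed by $(\cdot)^{\Delta}$ and never breaks the bisimulation. The source state of rule~6, however, need not itself become inconsistent, because the $i_{\mathit{new}}$ edge surviving through rule~7 suffices for input-enabledness.

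With these pieces in place, the remaining argument mirrors the proofs of Theorems~\ref{thrm:conjunctionTSandA} and~\ref{thrm:parallelcompositionTSandA}: for every transition in $(\sem{T\quotient S})^{\Delta}$, case-analyse the originating rule to produce a matching transition (or same-state loop) in $(\sem{T}\quotient\sem{S})^{\Delta}$, and symmetrically; in each case use the $\mathrm{cons}$-correspondence lemma to carry the consistency side-condition across. I would conclude by noting that, as for the conjunction and parallel composition, the restricted special case $\mathit{Act}^S = \mathit{Act}^T$ and $\mathit{Clk}^S\cap\mathit{Clk}^T=\emptyset$ (together with the absence of $l_u$) would yield the strictly weaker but more classical statement claimed without proof in~\cite{david_real-time_2015}.
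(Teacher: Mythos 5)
Your overall architecture matches the paper's: the paper likewise collapses the $(l_u,v)$ and $(l_e,v)$ families to $u$ and $e$, forgets $x_{\mathit{new}}$ (via an auxiliary $\sim$-quotient and a bijection on the resulting state sets), proves delay/shared-action/$\mathrm{cons}$ transfer lemmas in the style of the conjunction case, and finishes with a rule-by-rule case analysis. However, your candidate relation is too small in one essential place. The paper's witnessing relation additionally contains $A\times\{u\}$ where $A=\{((l^T,l^S),v)\mid v\not\models\mathit{Inv}(l^S)\}$, i.e.\ ordinary pair-states whose $S$-invariant has been outrun are related to the \emph{universal} state, not to their TIOTS counterparts. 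This is forced by the delay clause of Definition~\ref{def:bisimulation}: in $\sem{T}\quotient\sem{S}$, rule~6 of Definition~\ref{def:quotientTIOTS} sends $(q^T,q^S)\xlongrightarrow{d}u$ the moment $q^S$ can no longer delay, whereas in $\sem{T\quotient S}$ the pair location has invariant $\mathbf{T}$, so the same delay lands in $((l^T,l^S),v+d)$ with $v+d\not\models\mathit{Inv}(l^S)$ and only reaches $l_u$ on the \emph{next} discrete step via rule~4 of Definition~\ref{def:quotientTIOA}. These reachable, consistent states cannot be matched by an identity-on-pairs relation; without the $A\times\{u\}$ component the bisimulation breaks at the first such delay.

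Your treatment of $i_{\mathit{new}}$ also contains an error that would derail the $\mathrm{cons}$-correspondence. The guards of rules~6 and~7 of Definition~\ref{def:quotientTIOA} are complementary, so in a state where the rule-6 edge is enabled (i.e.\ $v\models\neg\mathit{Inv}(l^T)\wedge\mathit{Inv}(l^S)$) the rule-7 self-loop is \emph{not} enabled and cannot ``suffice for input-enabledness.'' Moreover, since $i_{\mathit{new}}$ is an input and its target $(l_e,v[x_{\mathit{new}}\mapsto 0])$ is inconsistent, the controllable-predecessor operator propagates inconsistency backwards through that edge ($\mathrm{ipred}(X)\subseteq\pi(X)$, because the output player cannot block an input), so the source state \emph{does} become inconsistent and is removed by $(\cdot)^{\Delta}$ — contrary to your claim. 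This is not incidental: it is the entire purpose of rule~6. On the TIOTS side these same states are immediate error states (neither rule~4 nor rule~6 of Definition~\ref{def:quotientTIOTS} yields any delay, not even $d=0$, so independent progress fails), and the paper's Lemma~\ref{lemma:quotientTSandAerror} records the resulting asymmetry — $\mathrm{imerr}^{\sem{T}\quotient\sem{S}}\subseteq\mathrm{incons}^{\sem{T\quotient S}^{\rho}}$ rather than an equality of error sets, with only the $\mathrm{cons}$ sets coinciding in the end. If, as you assert, the TIOA side retained these states after pruning, the two pruned systems would not be bisimilar, so this step of your plan must be reversed before the rest of the argument can go through.
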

\noindent The proof of Theorem~\ref{thrm:quotientTSandA} can be found in Appendix~\ref{app:proofs-quotient}.

\section{Concluding Remarks}
\label{sec:conclusion}

We have proposed a \new{complete and fully proven} game-based specification theory for timed systems, in which we distinguish between a component and the environment in which it is used. Our contribution is a game-based approach to support both refinement, consistency checking, logical and structural composition, and quotient. 

\new{In the future one could extend our model with (discrete) variables to ease the modeling of cyber-physical systems. This was already suggested by Berendsen and Vaandrager in~\cite{berendsen_compositional_2008}, but only for structural composition.}
One could also investigate whether our approach can be used to perform scheduling of timed systems (see~\cite{alfaro_accelerated_2007,henzinger_embedded_2006,deng_scheduling_1997} for examples). For example, the quotient operation could perhaps be used to synthesize a scheduler for such problem.

\new{In this paper, we ignored the notion of time divergence as defined in~\cite{alfaro_element_2003}. In Section~\ref{sec:specandref} we observed that to verify whether an implementation has time divergence, we need to analyze it in the context of an environment to form a closed-system, as an environment could both ensure or prevent the diverging of time. It would be interesting to investigate whether one could investigate time divergence in a compositional manner.}

\begin{figure}
	\centering
	\includegraphics[width=\textwidth]{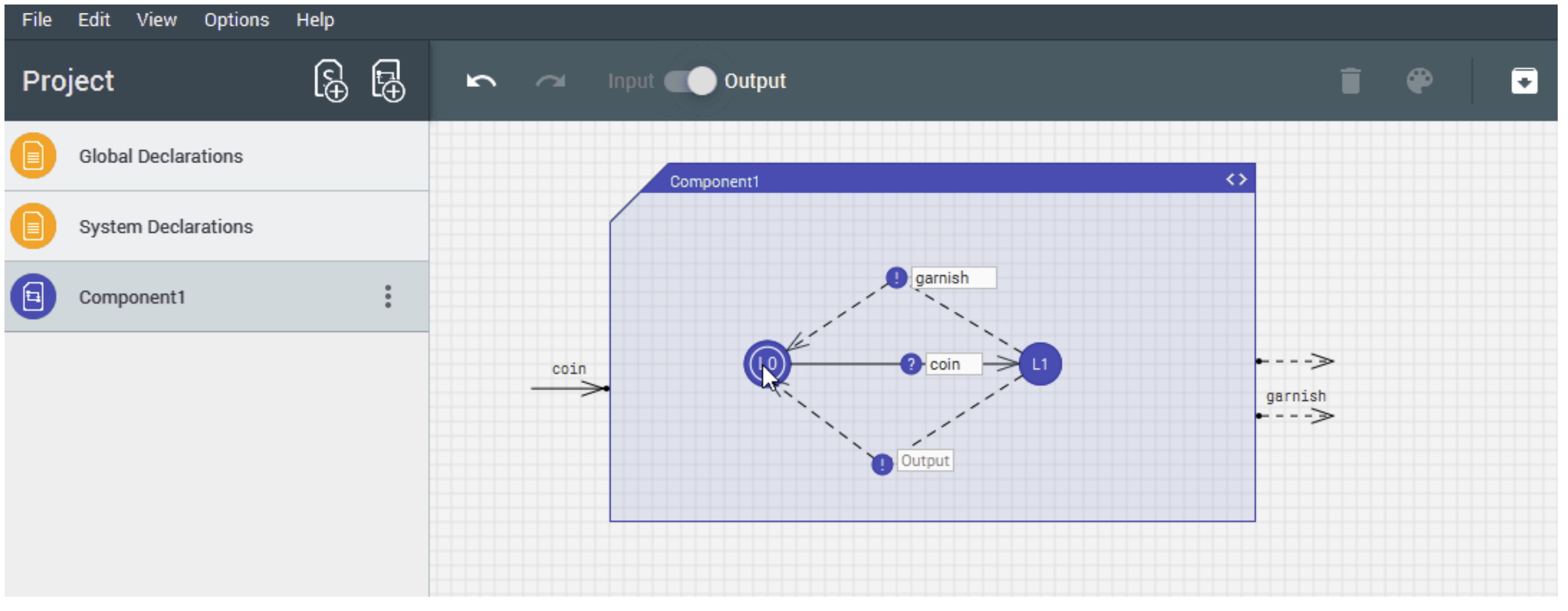}
	\caption{Screenshot of the GUI of Ecdar 2.4}
	\label{fig:EcdarGUI}
\end{figure}

\new{Finally our methodology is being implemented in the open-source tool ECDAR\footnote{\url{http://ecdar.net}}}. Conjunction, composition, and quotienting are simple product constructions allowing for consistency checking to be solved using the zone-based algorithms for synthesizing winning strategies in timed games~\cite{maler_synthesis_1995,cassez_efficident_2005}. Refinement between specifications can be checked using a variant of the pre-existing efficient game-based algorithm~\cite{Bulychev_efficient_2009}. \new{A previous version of the tool was closed-source, contained a few issues and did not implement some of the features, like quotient. The version currently in development contains all features, is thoroughly tested and will support Boolean variables. Besides the implementation of the algorithms, we are also working an on graphical user interface (GUI) to model systems. Figure~\ref{fig:EcdarGUI} shows a screenshot of the current state of the GUI.}


\backmatter

%
%
%

\bmhead{Acknowledgments}

This study was funded by the European Research Council (ERC) Advance grant LASSO, the Villum Investigator grant S4OS, and the Digital Research Center Denmark (DIREC) Bridge grant `Verifiable and Safe AI for Autonomous Systems'.

\section*{Declarations}

The authors have no competing interests to declare that are relevant to the content of this article.

%
%
%

%
%
%
%

\begin{appendices}

\section{Proofs}\label{app:proofs}

This section contains all the proofs not included in the main text of the paper. We will repeat the theorem for clarity before actually providing the proof.

\subsection{Omitted proofs of Section~\ref{sec:specandref}}\label{app:proofs-spec}
\newtheorem*{T1}{Theorem~\ref{thm:local-consistency}}
\begin{T1}
	Every locally consistent specification is consistent in the sense of Definition~\ref{def:consistency}.  
\end{T1}
\begin{proof}
	Let us begin with defining an auxiliary function $\delta$ which chooses a delay for every state $s$ in a \new{locally consistent} specification $S$: 
	\begin{equation*}
		\delta\new{(s)} = 
		\begin{cases}
			d & \text{\new{the infimum $d$} such that } s\xlongrightarrow{d}{}^{\!\!S} s' \text{ and } \exists
			o!: s'\xlongrightarrow{o!}{}^{\!\!S}\\
			+\infty & \text{\new{otherwise}}
		\end{cases}
	\end{equation*}
	Note that \new{since $s$ allows independent progress, it always hold that $s \xlongrightarrow{\delta(s)}{}^{\!\!S}$.} $\delta$ is time additive in the following sense: if $s\xlongrightarrow{d}{}^{\!\!S} s'$ and $d\leq \delta(s)$ then $\delta(s') + d = \delta(s)$, \new{which is} due to time additivity of $\rightarrow^S$, and local consistency of $S$.
	
	We want to show for an arbitrary locally consistent specifications $S$ that it has an implementation. This can be shown by synthesizing an implementation $P = (\new{Q^S, s_0, \mathit{Act}^S}, \rightarrow^P)$, where $\rightarrow^P$ is the largest transition relation generated by the following rules:
	\begin{align*}
		\new{s}\xlongrightarrow{i?}{}^{\!\!P} \new{s'} & \mbox{ if } s\xlongrightarrow{i?}{}^{\!\!S} s' \wedge i?\in\mathit{Act}_i^S \\
		\new{s}\xlongrightarrow{o!}{}^{\!\!P} \new{s'} & \mbox{ if } s\xlongrightarrow{o!}{}^{\!\!S} s' \wedge o!\in\mathit{Act}_o^S \wedge \delta(s) = 0 \\
		\new{s}\xlongrightarrow{d}{}^{\!\!P} \new{s'} & \mbox{ if } s\xlongrightarrow{d}{}^{\!\!S} s' \wedge d\in\mathbb{R}_{\geq 0} \wedge d \leq \delta(s)
	\end{align*}
	
	Since $P$ only takes a subset of transitions of $S$, the determinism of $S$ implies determinism of $P$. The transition relation of $P$ is time-additive due to time additivity of $\rightarrow^S$ and of $\delta$. It is also time-reflexive due to the last rule ($0 \leq \delta(s)$ for every state $s$ and $\rightarrow^S$ was time reflexive). So $P$ is a TIOTS.
	
	The new transition relation is also input enabled as it inherits of input transitions from $S$, which was input enabled. So $P$ is a specification. The second rule guarantees that outputs are urgent (\new{by construction} $P$ only outputs when no further delays are possible). Moreover $P$ observes independent progress. Consider a state \new{$s$ in $P$}. Then if $\delta(s) = +\infty$ clearly $s$ can delay indefinitely. If $\delta(s)$ is finite, then by definition of $\delta$ and of $P$, the state $s$ can delay and hence produce an output. Thus $P$ is an implementation in the sense of Definition~\ref{def:implementation}. 
	
	Now an unsurprising coinductive argument shows that the following relation $R\subseteq Q^S\times Q^S$ witnesses $P\sat S$:
	\begin{equation*}
		R = \left\{ (s,s) \mid s\in Q^S \right\}.
	\end{equation*}
\end{proof}

\newtheorem*{T2}{Theorem~\ref{thm:implementations-minimum}}
\begin{T2}
	Any locally consistent specification $S$ refining an implementation $P$ is an implementation as per Definition~\ref{def:implementation}.
\end{T2}
\begin{proof}
	Observe first that $S$ is already locally consistent, so all its states warrant independent progress. We only need to argue that it satisfies output urgency. Without loss of generality, assume that $S$ only contains states which are reachable by (sequences of) discrete or timed transitions.  
	
	If $S$ only contains reachable states, every state of $S$ has to be related to some state of $P$ in a relation $R$ witnessing $S \leq P$ (output and delay transitions need to be matched in the refinement; input transitions also need to be matched as $P$ is input enabled and $S$ is deterministic). This can be argued for using a standard, though slightly lengthy argument, by formalizing reachable states as a fixpoint of a monotonic operator.
	
	Now that we know that every state of $S$ is related to some state of $P$ consider an arbitrary $s\in Q^S$ and let $p\in Q^P$ be such that $(s,p)\in R$. Then if $s\xlongrightarrow{o!}{}^{\!\!S} s'$ for some state $s'\in Q^S$ and an output $o!\in\mathit{Act}_o^S$, it must be that also $p\xlongrightarrow{o!}{}^{\!\!P} p'$ for some state $p'\in Q^P$ (and $(s',p')\in R$). 
	But since $P$ is an implementation, its outputs must be urgent, so $p\arrownot\xlongrightarrow{d}{}^{\!\!P}$ for all $d>0$, and consequently $s\arrownot\xlongrightarrow{d}{}^{\!\!S}$ for all $d>0$. We have shown that all states of $S$ have urgent outputs (if any) and thus $S$ is an implementation.
\end{proof}

\subsection{Omitted proofs of Section~\ref{sec:consandconj}}\label{app:proofs-conj}
\newtheorem*{T3}{Theorem~\ref{thm:consistency}}
\begin{T3}
	A specification \(S = (Q,s_0,\mathit{Act},\rightarrow)\) is consistent iff $s_0\in\mathrm{cons}$.
\end{T3}
\begin{proof}
	First, assume that $s_0\in\mathrm{cons}^S$. Show that $S$ is consistent in the sense of Definition~\ref{def:consistency}. In a similar fashion to the proof of Theorem~\ref{thm:local-consistency} we first postulate existence of a function $\delta$, which chooses a delay and an output for every consistent state $s$:
	\begin{equation*}
		\delta\new{(s)} = 
		\begin{cases}
			d & \text{if } \exists s', s'' \in \mathrm{cons}^S: \text{ \new{the infimum $d$} such that } s\xlongrightarrow{d}{}^{\!\!S} s' \\
			& \text{and } \exists
			o!: s'\xlongrightarrow{o!}{}^{\!\!S} s''\\
			+\infty & \text{\new{otherwise}}
		\end{cases}
	\end{equation*}
	
	\new{Note that $\delta$ is time additive in the following sense: if $s\xlongrightarrow{d}{}^{\!\!S} s'$ and $d\leq \delta(s)$ then $\delta(s') + d = \delta(s)$, which is due to} time additivity of $\rightarrow^S$ and the fact that $\mathrm{cons}^S$ is a fixpoint of $\Theta^S$.
	
	We show this by constructing an implementation $P=(\new{Q^S,s_0,\mathit{Act}^S},\rightarrow^P)$ \new{where} the transition relation is the largest relation generated by the following rules:
	\begin{enumerate}
		\item $s\xlongrightarrow{o!}{}^{\!\!P} s'$ iff $s\xlongrightarrow{o!}{}^{\!\!S} s'$ and $s'\in\mathrm{cons}^S$ and $\delta_s = 0$,
		\item $s\xlongrightarrow{i?}{}^{\!\!P} s'$ iff $s\xlongrightarrow{i?}{}^{\!\!S} s'$,
		\item $s\xlongrightarrow{d}{}^{\!\!P} s'$ iff $s\xlongrightarrow{d}{}^{\!\!S} s'$ and $d \leq \delta_s$.
	\end{enumerate}
	
	Observe that the construction of $P$ is essentially identical to the one in the proof of Theorem~\ref{thm:local-consistency} above. It can be argued in almost the same way as in the above proof, that $P$ satisfies the axioms of TIO\new{TS}s and is an implementation. Here one has to use the definition of $\Theta^S$ in order to see that the side condition in the first rule, that is $s'\in\mathrm{cons}^S$, does not introduce a violation of independent progress.
	
	It remains to argue that $P\sat S$.  This is done by arguing that the following relation $R$
	\begin{equation*} 
		R = \left\{ (p,s) \in Q^S \times Q^S \mid p = s \right\} 
	\end{equation*}
	witnesses the refinement of $S$ by $P$.
	
	Consider now the other direction. Assume that $S$ is consistent and show that $s_0\in\mathrm{cons}^S$.  In the following we write that a state $s$ is consistent meaning that a specification would be consistent if $s$ was the initial state. Let $X = \{ s\in Q^S \mid s \text{ is consistent} \}$. It suffices to show that $X$ is a post-fixed point of $\Theta^S$, thus $X \subseteq \Theta^S(X)$ (then $s_0 \in X=\mathrm{cons}^S$). 
	
	Since $s$ is consistent, let us consider an implementation $P$ and a state $p$ such that $p\sat s$. We will show that $s\in \Theta^S(X)$. Consider an arbitrary $d\geq 0$ and the first disjunct in the definition of $\Theta^S$. If $p\xlongrightarrow{d}{}^{\!\!P} p^d$ then also $s\xlongrightarrow{d}{}^{\!\!S} s^d$ and $p^d\sat s^d$, so \(s^d\in X\). Consider an arbitrary input $i?$ such that $s^d\xlongrightarrow{i?}{}^{\!\!S} s'$. Then also $p^d\xlongrightarrow{i?}{}^{\!\!P} p'$ and $p'\sat s'$ (by satisfaction). But then $s'\in X$. So by the first disjunct of definition of $\Theta^S$ we have that $s\in \Theta^S(X)$.
	
	If $p\arrownot\xlongrightarrow{d}{}^{\!\!P}$ for our fixed value of $d$, then by independent progress of $p$ there exists a $d_\mathrm{max} < d$ such that $p\xlongrightarrow{d_\mathrm{max}}{}^{\!\!P} p'$ for some $p'$ and $p'\xlongrightarrow{o!}{}^{\!\!P} p''$ for some $p''$ and some output $o!$. By $p\sat s$ there also exist $s'$ and $s''$ such that $s\xlongrightarrow{d_\mathrm{max}}{}^{\!\!S} s'$ and $s'\xlongrightarrow{o!}{}^{\!\!S} s''$. Moreover $p''\sat s''$, so $s''\in X$, which by the second disjunct in the definition of $\Theta^S$ implies that $s\in\Theta^S(X)$.
	
	So we conclude that $X$ is a fixpoint of $\Theta^S$. Since $s_0$ is consistent by assumption, then $s_0\in X \subseteq \mathrm{cons}^S$.
\end{proof}

\newtheorem*{T4}{Theorem~\ref{thm:prune}}
\begin{T4}
	For a consistent specification $S$, $S^{\Delta}$ is locally consistent and $\mod{S} = \mod{S^{\Delta}}$.
\end{T4}
\begin{proof}
	\new{We first proof that $S^{\Delta}$ is locally consistent. From Definitions~\ref{def:local-consistency} and~\ref{def:implementation} of local consistency and implementation, respectively, it follows that we have to show that $\forall q\in Q^{S^{\Delta}}$: either $\forall d \in \mathbb{R}_{\geq 0} : q\xlongrightarrow{d}{}^{\!\!P}$ or $\exists d\in\mathbb{R}_{\geq 0}, \exists o!\in\mathit{Act}_o$ s.t. $q\xlongrightarrow{d}{}^{\!\!P}q'$ and $q'\xlongrightarrow{o!}{}^{\!\!P}$. From Definition~\ref{def:adversarialpruning} of adversarial pruning it follows that $Q^{S^{\Delta}} = \mathrm{cons}$.}
	
	\new{Consider a state $q\in \mathrm{cons}$. From the definition of $\Theta$, it follows that $q \in \overline{\mathrm{err}(\overline{\mathrm{cons}})}$ and $ q \in \{q_1\in Q\mid \forall d \geq 0: [\forall q_2\in Q: q_1\xlongrightarrow{d} q_2 \Rightarrow q_2\in \mathrm{cons} \wedge \forall i?\in \mathit{Act}_i: \exists q_3\in \mathrm{cons}: q_2\xlongrightarrow{i?} q_3]\ \vee [\exists d'\leq d \wedge \exists q_2,q_3\in \mathrm{cons} \wedge \exists o!\in\mathit{Act}_o: q_1\xlongrightarrow{d'} q_2 \wedge q_2\xlongrightarrow{o!} q_3 \wedge \forall i?\in\mathit{Act}_i: \exists q_4\in \mathrm{cons}: q_2\xlongrightarrow{i?} q_4]\}$. In case that the condition $[\exists d'\leq d \wedge \exists q_2,q_3\in \mathrm{cons} \wedge \exists o!\in\mathit{Act}_o: q_1\xlongrightarrow{d'} q_2 \wedge q_2\xlongrightarrow{o!} q_3 \wedge \forall i?\in\mathit{Act}_i: \exists q_4\in \mathrm{cons}: q_2\xlongrightarrow{i?} q_4]$ holds for some $d$, then it follows immediately that $q$ allows independent progress. In the other case, i.e., there does not exists a $d$ such that $[\exists d'\leq d \wedge \exists q_2,q_3\in \mathrm{cons} \wedge \exists o!\in\mathit{Act}_o: q_1\xlongrightarrow{d'} q_2 \wedge q_2\xlongrightarrow{o!} q_3 \wedge \forall i?\in\mathit{Act}_i: \exists q_4\in \mathrm{cons}: q_2\xlongrightarrow{i?} q_4]$ holds, it follows from the fact that $q \in \overline{\mathrm{err}(\overline{\mathrm{cons}})}$ and Definition~\ref{def:error} that $\forall d \in \mathbb{R}_{\geq 0} : q\xlongrightarrow{d}{}^{\!\! P}$, thus allowing independent progress.}
	
	\new{We now show that $\mod{S} = \mod{S^{\Delta}}$. From Definition~\ref{def:satisfaction} it follows that $\mod{S} = \mod{S^{\Delta}}$ iff for all implementations $P$ it holds that $P\leq S \Leftrightarrow P\leq S^{\Delta}$.}
	
	\new{($P\leq S \Rightarrow P\leq S^{\Delta}$)
		Consider an implementation $P$ such that $P\leq S$. This implies from Definition~\ref{def:refinement} of refinement that there exists a relation $R \subseteq Q^P \times Q^S$ witnessing the refinement. We will arguing that for any pair $(p,s) \in R$ it holds that $s\in\mathrm{cons}$.}
	
	\new{For this, consider the controllable predecessor operator $\pi$ and $\pi(\mathrm{imerr})$ to understand what it exactly calculates with respect to the definition of a consistent specification. A state $q\in\pi(\mathrm{imerr})$ is either directly an error state or it can first delay followed by an input action to reach an error state \emph{without} encountering an output action preventing it reaching an error state. With other words, no implementation can prevent state $q$ from reaching an error state.}
	
	\new{Now, denote $\pi^n(\mathrm{err})$ the $n$-th iteration of the fixed-point calculation, i.e., $\pi^1(\mathrm{imerr}) = \pi(\mathrm{imerr})$, $\pi^2(\mathrm{imerr}) = \pi(\pi(\mathrm{imerr}))$, etc. Following the above reasoning about the effect of $\pi$ on the reachability of error states, we can formulate the following fixed-point invariant: for each $n$ and $q\in\pi^n(\mathrm{err})$, there does not exists an implementation preventing $q$ from reaching an error state. Once the fixed-point $\mathrm{incons} = \pi(\mathrm{incons}) = \pi^N(\mathrm{imerr})$ for some $N$ is reached, we know for all $q\in \overline{\mathrm{incons}}$ that it cannot reach the fixed-point $\mathrm{incons}$ because either $\mathrm{incons}$ is just simply unreachable by any means or an implementation can prevent it from reaching it.}
	
	\new{Consider a pair $(p,s) \in R$ where $s \in \mathrm{incons}$. This means that specification $S$ cannot be prevented from reaching an error state $s'$. If we follow this path, we end up with pair $(p', s') \in R$. Now, $s'$ is an error state, which either cannot progress time indefinitely and do an output. But since $p'$ is a state from an implementation $P$, it has the independent progress property. Therefore, once the specification wants to do an output or (indefinite) delay, the second or third property from Definition~\ref{def:refinement} is violated. Therefore, we can conclude that for pair $(p,s) \in R$, $s \notin \mathrm{incons}$, i.e., $s\in\mathrm{cons}$. As the argument above does not rely on a specific state $s$ in $S$, it holds for all states $s \in Q^S$.}
	
	\new{Now, we effectively have that $R \subseteq Q^P \times \mathrm{cons}$, thus it follows from Definition~\ref{def:adversarialpruning} of adversarial pruning that $R$ is also a relation witnessing the refinement $P \leq S^{\Delta}$. As we considered an arbitrarily implementation $P$ refining $S$, it holds for all implementations $P$ refining $S$. Therefore, we conclude that $P\leq S \Leftarrow P\leq S^{\Delta}$.}
	
	\new{($P\leq S \Leftarrow P\leq S^{\Delta}$)
		This case follows directly from the construction of $S^{\Delta}$ and the fact that $\mathrm{cons}\subseteq Q^S$, i.e., for all implementations $P$ that refine $S^{\Delta}$, the binary relation $R\subseteq Q^P \times \mathrm{cons}$ also witnesses the refinement of $P$ and $S$.}
\end{proof}

\newtheorem*{T5}{Lemma~\ref{lem:common-implementation}}
\begin{T5}
	For two specifications $S$, $T$, and their states $s$ and $t$, respectively, if there exists an implementation $P$ and its state $p$ such that simultaneously $p\sat s$ and $p\sat t$ then $(s,t)\in\mathrm{cons}^{S\wedge T}$.  
\end{T5}
\begin{proof}
	This is shown by arguing that the following set $X$ of states of $S\wedge T$ is a postfixed point of $\Theta$ (then $(s,t)\in X \subseteq \Theta(X) \subseteq \mathrm{cons}^{S\wedge T}$):
	\begin{equation*}
		X = \{ (s,t) \mid \exists P: \exists p\in Q^P : p\sat s \land p\sat t\}.
	\end{equation*}
	
	This is done by checking that $X\subseteq\Theta(X)$. Take $(s,t)\in X$, show that $(s,t)\in\Theta(X)$. So consider an arbitrary $d_0 \geq 0$. We know that there exists state $p$ such that $p\sat s$ and $p\sat t$. Since $p$ is a state of an implementation it guarantees independent progress, so there exists a delay $d^p$ such that $p\xlongrightarrow{d^p}{}^{\!\!P} p'$ for some state $p'$. Now the proof is split in two cases, proceeding by coinduction.
	
	\begin{itemize}
		\item $d^p \leq d_0$ is used to show that $(s,t)\in \Theta(X)$ using a standard argument with the second disjunct in definition of $\Theta$ (namely that $p$ can delay and output leading to a refinement of successors of $s$ and $t$, which again will be in $X$).
		
		\item $d^p > d_0$ is used to show that $(s,t)\in \Theta(X)$ using the same kind of argument with the first disjunct in the definition of $\Theta$ (namely that then $p$ can delay $d_0$ time and by refinement for any input transition it can advanced to a state refining successors of $s$ and $t$, which are in $X$).
	\end{itemize}
\end{proof}

\newtheorem*{T6}{Lemma~\ref{lemma:conjunctionTSandAsamestateset}}
\begin{T6}
	\new{Given two TIOAs $A^i = (\mathit{Loc}^i, l_0^i, \mathit{Act}^i, \mathit{Clk}^i, E^i, \mathit{Inv}^i), i=1,2$ where $\mathit{Act}_i^1 \cap \mathit{Act}_o^2 =\emptyset \wedge \mathit{Act}_o^1 \cap \mathit{Act}_i^2 =\emptyset$. Then $Q^{\sem{A^1 \wedge A^2}} = Q^{\sem{A^1} \wedge \sem{A^2}}$ and $q_0^{\sem{A^1 \wedge A^2}} = q_0^{\sem{A^1} \wedge \sem{A^2}}$.}
\end{T6}
\begin{proof}
	\new{For brevity, we write $X = \sem{A^1 \wedge A^2}$, $Y = \sem{A^1} \wedge \sem{A^2}$, and $\mathit{Clk} = \mathit{Clk}^1 \uplus\mathit{Clk}^2$ in the rest of this proof. Following Definition~\ref{def:semanticTIOA} of semantic of a TIOA, Definition~\ref{def:adversarialpruning} of adversarial pruning, Definition~\ref{def:conjunctionTIOTS} of the conjunction for TIOTS, and Definition~\ref{def:conjunctionTIOA} of the conjunction for TIOA, the set of states of $X$ is $Q^X = (\mathit{Loc}^1\times \mathit{Loc}^2) \times [\mathit{Clk}\mapsto \mathbb{R}_{\geq 0}] = \mathit{Loc}^1\times \mathit{Loc}^2 \times [\mathit{Clk}\mapsto \mathbb{R}_{\geq 0}]$ and the set of states of $Y$ is $Q^Y = (\mathit{Loc}^1\times[\mathit{Clk}^1\mapsto \mathbb{R}_{\geq 0}]) \times (\mathit{Loc}^2\times[\mathit{Clk}^2\mapsto \mathbb{R}_{\geq 0}])= \mathit{Loc}^1\times \mathit{Loc}^2\times [\mathit{Clk}\mapsto \mathbb{R}_{\geq 0}]$. Therefore, $Q^X = Q^Y$. Furthermore, it now also follows immediately from the same definitions that $q_0^X = q_0^Y$, as none of these definitions alter the initial location of a TIOA or initial state of a TIOTS.}
\end{proof}

\newtheorem*{T7}{Lemma~\ref{lemma:conjunctionTSandAdelay}}
\begin{T7}
	\new{Given two TIOAs $A^i = (\mathit{Loc}^i, l_0^i, \mathit{Act}^i, \mathit{Clk}^i, E^i, \mathit{Inv}^i), i=1,2$ where $\mathit{Act}_i^1 \cap \mathit{Act}_o^2 =\emptyset \wedge \mathit{Act}_o^1 \cap \mathit{Act}_i^2 =\emptyset$. Denote $X = \sem{A^1 \wedge A^2}$ and $Y = \sem{A^1} \wedge \sem{A^2}$, and let $d\in\mathbb{R}_{\geq 0}$ and $q_1,q_2\in Q^X\cap Q^Y$. Then $q_1\xlongrightarrow{d}{}^{\!\! X} q_2$ if and only if $q_1\xlongrightarrow{d}{}^{\!\! Y} q_2$.}
\end{T7}
\begin{proof}
	\new{First, from Lemma~\ref{lemma:conjunctionTSandAsamestateset} it follows that $Q^X = Q^Y$. Consider a delay $d\in\mathbb{R}_{\geq 0}$. For brevity, in the rest of this proof we write  $\mathit{Clk} = \mathit{Clk}^1 \uplus\mathit{Clk}^2$, and $u^1$ and $u^2$ to indicate the part of a valuation $u$ of only the clocks of $A^1$ and $A^2$, respectively. 
	}
	
	\new{($\Rightarrow$) 
		Assume that $\exists q_1,q_2\in Q^X$ such that $q_1\xlongrightarrow{d}{}^{\!\! X} q_2$. From Definition~\ref{def:semanticTIOA} of the semantic of a TIOA it follows that $q_1 = (l,v)$, $q_2 = (l, v + d)$, $l \in \mathit{Loc}^{A^1 \wedge A^2}$, $v\in [\mathit{Clk}\mapsto \mathbb{R}_{\geq 0}]$, $v + d\models \mathit{Inv}^{A^1 \wedge A^2}(l)$, and $\forall d'\in\mathbb{R}_{\geq 0}, d' < d: v + d'\models \mathit{Inv}^{A^1\wedge A^2}(l)$. From Definition~\ref{def:conjunctionTIOA} of the conjunction for TIOA it follows that $l = (l^1, l^2)$, $l^1 \in \mathit{Loc}^1$, $l^2\in\mathit{Loc}^2$, and $\mathit{Inv}^{A^1\wedge A^2}(l) = \mathit{Inv}^1(l^1) \wedge \mathit{Inv}^2(l^2)$. Therefore, $v + d\models \mathit{Inv}^1(l^1) \wedge \mathit{Inv}^2(l^2)$, and thus $v + d\models\mathit{Inv}^1(l^1)$ and $v + d\models\mathit{Inv}^2(l^2)$. Similarly, $v + d'\models \mathit{Inv}^1(l^1) \wedge \mathit{Inv}^2(l^2)$, and thus $v + d'\models\mathit{Inv}^1(l^1)$ and $v + d'\models\mathit{Inv}^2(l^2)$. Because $\mathit{Clk}^1\cap\mathit{Clk}^2=\emptyset$, it follows that $v^1 + d\models\mathit{Inv}^1(l^1)$, $v^2 + d\models\mathit{Inv}^2(l^2)$, $v^1 + d'\models\mathit{Inv}^1(l^1)$, and $v^2 + d'\models\mathit{Inv}^2(l^2)$. Now, from Definition~\ref{def:semanticTIOA} of the semantic of a TIOA, it follows that $(l^1, v^1)\xlongrightarrow{d}{}^{\!\! \sem{A^1}} (l^1, v^1 + d)$ and $(l^2, v^2)\xlongrightarrow{d}{}^{\!\! \sem{A^2}} (l^2, v^2 + d)$. Finally, from Definition~\ref{def:conjunctionTIOTS} of the conjunction for TIOTS, if follows that $(l^1, v^1, l^2, v^2)\xlongrightarrow{d}{}^{\!\! Y} (l^1, v^1 + d, l^2, v^2 + d)$. Again by using that $\mathit{Clk}^1\cap\mathit{Clk}^2=\emptyset$, we can rewrite the states: $(l^1, v^1, l^2, v^2) = (l^1, l^2, v) = q_1$ and $(l^1, v^1 + d, l^2, v^2 + d) = (l^1, l^2, v + d) = q_2$. Thus $q_1 \xlongrightarrow{d}{}^{\!\! Y} q_2$.
	}
	
	\new{($\Leftarrow$)
		Assume that $\exists q_1,q_2\in Q^Y$ such that $q_1\xlongrightarrow{d}{}^{\!\! Y} q_2$. From Definition~\ref{def:conjunctionTIOTS} of the conjunction for TIOTS it follows that $q_1 = (q_1^1, q_1^2)$, $q_2 = (q_2^1, q_2^2)$, $q_1^1,q_2^1 \in Q^{\sem{A^1}}$, $q_1^2,q_2^2 \in Q^{\sem{A^2}}$, $q_1^1\xlongrightarrow{d}{}^{\!\! \sem{A^1}} q_2^1$, and $q_1^2\xlongrightarrow{d}{}^{\!\! \sem{A^2}} q_2^2$. From Definition~\ref{def:semanticTIOA} of the semantic of a TIOA it follows that for $i = 1,2$: $q_1^i = (l^i, v^i)$, $q_2^i = (l^i, v^i + d)$, $l^i \in \mathit{Loc}^i$, $v^i \in [\mathit{Clk}^i \mapsto \mathbb{R}_{\geq 0}]$, $v^i + d\models \mathit{Inv}^i(l^i)$, and $\forall d'\in\mathbb{R}_{\geq 0}, d' < d: v + d'\models \mathit{Inv}^i(l^i)$. Because $\mathit{Clk}^1\cap\mathit{Clk}^2 = \emptyset$, it follows that for $i=1,2$: $v + d\models \mathit{Inv}^i(l^i)$ and $v + d'\models \mathit{Inv}^i(l^i)$. Now, from Definition~\ref{def:conjunctionTIOA} it follows that $\mathit{Inv}^{A^1 \wedge A^2}(l^1,l^2) = \mathit{Inv}^1(l^1) \wedge \mathit{Inv}^2(l^2)$. Thus we know that $v + d\models \mathit{Inv}^{A^1 \wedge A^2}((l^1, l^2))$ and $v + d'\models \mathit{Inv}^{A^1 \wedge A^2}((l^1, l^2))$. Therefore, using Definition~\ref{def:semanticTIOA} of the semantic of a TIOA, it follows that $(l^1, l^2, v)\xlongrightarrow{d}{}^{\!\! X} (l^1, l^2, v + d)$. Again by using that $\mathit{Clk}^1\cap\mathit{Clk}^2=\emptyset$, we can rewrite the states: $(l^1, l^2, v) = (l^1, v^1, l^2, v^2) = q_1$ and $(l^1, l^2, v + d) = (l^1, v^1 + d, l^2, v^2 + d) = q_2$. Thus $q_1 \xlongrightarrow{d}{}^{\!\! X} q_2$.
	}
	
	\new{As the analysis above holds for any chosen $d \in \mathbb{R}_{\geq 0}$, it holds for all $d$. This concludes the proof.}
\end{proof}

\newtheorem*{T8}{Lemma~\ref{lemma:conjunctionTSandAsharedaction}}
\begin{T8}
	\new{Given two TIOAs $A^i = (\mathit{Loc}^i, l_0^i, \mathit{Act}^i, \mathit{Clk}^i, E^i, \mathit{Inv}^i), i=1,2$ where $\mathit{Act}_i^1 \cap \mathit{Act}_o^2 =\emptyset \wedge \mathit{Act}_o^1 \cap \mathit{Act}_i^2 =\emptyset$. Denote $X = \sem{A^1 \wedge A^2}$ and $Y = \sem{A^1} \wedge \sem{A^2}$, and let $a\in\mathit{Act}^1\cap\mathit{Act}^2$ and $q_1,q_2\in Q^X\cap Q^Y$. Then $q_1\xlongrightarrow{a}{}^{\!\! X} q_2$ if and only if $q_1\xlongrightarrow{a}{}^{\!\! Y} q_2$.}
\end{T8}
\begin{proof}
	\new{First, from Lemma~\ref{lemma:conjunctionTSandAsamestateset} it follows that $Q^X = Q^Y$. For brevity, in the rest of this proof we write  $\mathit{Clk} = \mathit{Clk}^1 \uplus\mathit{Clk}^2$, and $v^1$ and $v^2$ to indicate the part of a valuation $v$ of only the clocks of $A^1$ and $A^2$, respectively. 
	}
	
	\new{($\Rightarrow$) 
		Assume a transition $q_1^X\xrightarrow{a} q_2^X$ in $X$. Following Definition~\ref{def:semanticTIOA} of the semantic, it follows that there exists an edge $(l_1,a,\varphi, c, l_2)\in E^{A^1\wedge A^2}$ with $q_1^X = (l_1,v_1)$, $q_2^X = (l_2,v_2)$, $l_1,l_2 \in \mathit{Loc}^{A^1\wedge A^2}$, $v_1, v_2 \in [\mathit{Clk} \mapsto \mathbb{R}_{\geq 0}]$, $v_1\models \varphi$, $v_2 = v_1[r\mapsto 0]_{r\in c}$, and $v_2\models \mathit{Inv}(l_2)$. 
	}
	
	\new{From Definition~\ref{def:conjunctionTIOA} of the conjunction for TIOA it follows that $(l_1^1, a,\varphi^1, c^1, l_2^1)$ is an edge in $A^1$ and $(l_1^2, a, \varphi^2, c^2, l_2^2)$ in $A^2$, $l_1 = (l_1^1,l_1^2)$, $l_2=(l_2^1,l_2^2)$, $\varphi = \varphi^1 \wedge \varphi^2$, $c  =c^1\cup c^2$. Since $v_1\models \varphi$, it holds that $v_1\models \varphi^1$ and $v_1\models \varphi^2$. Because $\mathit{Clk}^1\cap\mathit{Clk}^2 = \emptyset$, it holds that $v_1^1\models \varphi^1$ and $v_1^2\models\varphi^2$. Also, since $v_2 = v_1[r\mapsto 0]_{r\in c}$, it holds that $v_2^1 = v_1^1[r\mapsto 0]_{r\in c^1}$ and $v_2^2 = v_1^2[r\mapsto 0]_{r\in c^2}$. Finally, because $\mathit{Inv}^{A^1\wedge A^2}(l_2) = \mathit{Inv}^1(l_2^1) \wedge \mathit{Inv}^2(l_2^2)$ (see Definition~\ref{def:conjunctionTIOA}) and $v_2\models \mathit{Inv}^{A^1\wedge A^2}(l_2)$, it follows that $v_2\models\mathit{Inv}^1(l_2^1)$ and $v_2\models\mathit{Inv}^2(l_2^2)$. Since $\mathit{Clk}^1\cap\mathit{Clk}^2 = \emptyset$, it follows that $v_2^1\models\mathit{Inv}^1(l_2^1)$ and $v_2^2\models\mathit{Inv}^2(l_2^2)$.
	}
	
	\new{Combining all the information about $A^1$, we have that $(l_1^1, a,\varphi^1, c^1, l_2^1)$ is an edge in $A^1$, $v_1^1\models \varphi^1$, $v_2^1 = v_1^1[r\mapsto 0]_{r\in c^1}$, and $v_2^1\models\mathit{Inv}^1(l_2^1)$. Therefore, from Definition~\ref{def:semanticTIOA} it follows that $(l_1^1,v_1^1) \xrightarrow{a} (l_2^1,v_2^1)$ is a transition in $\sem{A^1}$. Combining all the information about $A^2$, we have that $(l_1^2, a,\varphi^2, c^2, l_2^2)$ is an edge in $A^2$, $v_1^2\models \varphi^2$, $v_2^2 = v_1^2[r\mapsto 0]_{r\in c^2}$, and $v_2^2\models\mathit{Inv}^2(l_2^2)$. Therefore, from Definition~\ref{def:semanticTIOA} it follows that $(l_1^2,v_1^2) \xrightarrow{a} (l_2^2,v_2^2)$ is a transition in $\sem{A^2}$.
	}
	
	\new{Now, from Definition~\ref{def:conjunctionTIOTS} of the conjunction for TIOTS it follows that $((l_1^1,v_1^1) , (l_1^2,v_1^2)) \xrightarrow{a} ((l_2^1,v_2^1), \allowbreak (l_2^2, v_2^2))$ is a transition in $\sem{A^1} \wedge \sem{A^2}$. Because $\mathit{Clk}^1\cap\mathit{Clk}^2 = \emptyset$, we can rearrange the states into $((l_1^1,v_1^1) , (l_1^2,v_1^2)) = ((l_1^1,l_1^2),v_1) = q_1^X$ and $((l_2^1,v_2^1) , (l_2^2,v_2^2)) = ((l_2^1,l_2^2),v_2) = q_2^X$. Thus, $q_1^X \xrightarrow{a} q_2^X$ is a transition in $\sem{A^1} \wedge \sem{A^2} = Y$.
	}
	
	\new{($\Leftarrow$)
		Assume a transition $q_1^Y\xrightarrow{a} q_2^Y$ in $Y$. From Definition~\ref{def:conjunctionTIOTS} of the conjunction for TIOTS it follows that $q_1^{\sem{A^1}}\xrightarrow{a} q_2^{\sem{A^1}}$ is a transition in $\sem{A^1}$ and $q_1^{\sem{A^2}}\xrightarrow{a} q_2^{\sem{A^2}}$ in $\sem{A^2}$, $q_1^Y = (q_1^{\sem{A^1}},q_1^{\sem{A^2}})$, and $q_2^Y=(q_2^{\sem{A^1}},q_2^{\sem{A^2}})$. From Definition~\ref{def:semanticTIOA} of semantic it follows that there exists an edge $(l_1^1,a,\varphi^1, c^1, l_2^1)\in E^1$ with $q_1^{\sem{A^1}} = (l_1^1,v_1^1)$, $q_2^{\sem{A^1}} = (l_2^1,v_2^1)$, $l_1^1,l_2^1 \in \mathit{Loc}^1$, $v_1^1, v_2^1 \in [\mathit{Clk}^1 \mapsto \mathbb{R}_{\geq 0}]$, $v_1^1\models \varphi^1$, $v_2^1 = v_1^1[r\mapsto 0]_{r\in c^1}$, and $v_2^1\models \mathit{Inv}^1(l_2^1)$. Similarly, it follows from the same definition that there exists an edge $(l_1^2,a,\varphi^2, c^2, l_2^2)\in E^2$ with $q_1^{\sem{A^2}} = (l_1^2,v_1^2)$, $q_2^{\sem{A^2}} = (l_2^2,v_2^2)$, $l_1^2,l_2^2 \in \mathit{Loc}^2$, $v_1^2, v_2^2 \in [\mathit{Clk}^2 \mapsto \mathbb{R}_{\geq 0}]$, $v_1^2\models \varphi^2$, $v_2^2 = v_1^2[r\mapsto 0]_{r\in c^2}$, and $v_2^2\models \mathit{Inv}^2(l_2^2)$.
	}
	
	\new{Now, from Definition~\ref{def:conjunctionTIOA} of the conjunction for TIOA, it follows that there exists an edge $((l_1^1, l_1^2),a,\varphi^1 \wedge \varphi^2, c^1 \cup c^2, (l_2^1,l_2^2))$ in $A^1 \wedge A^2$. Let $v_i, i=1,2$ be the valuations that combines the one from $A^1$ with the one from $A^2$, i.e. $\forall r \in \mathit{Clk}^1: v_i(r) = v_i^1(r)$ and $\forall r \in \mathit{Clk}^2: v_i(r) = v_i^2(r)$. Because $\mathit{Clk}^1\cap\mathit{Clk}^2 = \emptyset$, it holds that $v_1\models \varphi^1$ and $v_1\models \varphi^2$, thus $v_1\models \varphi^1\wedge\varphi^2$; $v_2 = v_1[r\mapsto 0]_{r\in c^1\cup c^2}$; and $v_2\models\mathit{Inv}^1(l_2^1)$ and $v_2\models\mathit{Inv}^2(l_2^2)$, thus $v_2\models\mathit{Inv}^1(l_2^1)\wedge\mathit{Inv}^2(l_2^2)$. 
	}
	
	\new{From Definition~\ref{def:semanticTIOA} it now follows that $((l_1^1,l_1^2), v_1) \xrightarrow{a} ((l_2^1,l_2^2), v_2)$ is a transition in $\sem{A^1 \wedge A^2}$. Because $\mathit{Clk}^1\cap\mathit{Clk}^2 = \emptyset$, we can rearrange the states into $((l_1^1,l_1^2),v_1) = ((l_1^1,v_1^1) , (l_1^2,v_1^2)) = q_1^Y$ and $((l_2^1,l_2^2),v_2) = ((l_2^1,v_2^1) , (l_2^2,v_2^2)) = q_2^Y$. Thus, $q_1^Y \xrightarrow{a} q_2^Y$ is a transition in $\sem{A^1 \wedge A^2} = Y$.}
\end{proof}

\newtheorem*{T9}{Lemma~\ref{lemma:conjunctionTSandAnonsharedaction}}
\begin{T9}
	\new{Given two TIOAs $A^i = (\mathit{Loc}^i, l_0^i, \mathit{Act}^i, \mathit{Clk}^i, E^i, \mathit{Inv}^i), i=1,2$ where $\mathit{Act}_i^1 \cap \mathit{Act}_o^2 =\emptyset \wedge \mathit{Act}_o^1 \cap \mathit{Act}_i^2 =\emptyset$. Denote $X = \sem{A^1 \wedge A^2}$ and $Y = \sem{A^1} \wedge \sem{A^2}$, and let $a\in\mathit{Act}^1\setminus\mathit{Act}^2$ and $q_1,q_2\in Q^X\cap Q^Y$, where $q_2=(l_2^1,l_2^2,v_2)$. If $v_2\models\mathit{Inv}^2(l_2)$, then $q_1\xlongrightarrow{a}{}^{\!\! X} q_2$ if and only if $q_1\xlongrightarrow{a}{}^{\!\! Y} q_2$.}
\end{T9}
\begin{proof}
	\new{First, from Lemma~\ref{lemma:conjunctionTSandAsamestateset} it follows that $Q^X = Q^Y$. For brevity, in the rest of this proof we write  $\mathit{Clk} = \mathit{Clk}^1 \uplus\mathit{Clk}^2$, and $v^1$ and $v^2$ to indicate the part of a valuation $v$ of only the clocks of $A^1$ and $A^2$, respectively. 
	}
	
	\new{($\Rightarrow$) 
		Assume a transition $q_1^X\xrightarrow{a} q_2^X$ in $X$. Following Definition~\ref{def:semanticTIOA} of the semantic, it follows that there exists an edge $(l_1,a,\varphi, c, l_2)\in E^{A^1\wedge A^2}$ with $q_1^X = (l_1,v_1)$, $q_2^X = (l_2,v_2)$, $l_1,l_2 \in \mathit{Loc}^{A^1\wedge A^2}$, $v_1, v_2 \in [\mathit{Clk} \mapsto \mathbb{R}_{\geq 0}]$, $v_1\models \varphi$, $v_2 = v_1[r\mapsto 0]_{r\in c}$, and $v_2\models \mathit{Inv}(l_2)$. 
	}
	
	\new{From Definition~\ref{def:conjunctionTIOA} of the conjunction for TIOA it follows that $(l_1^1, a,\varphi^1, c^1, l_2^1)$ is an edge in $A^1$, $l_1 = (l_1^1,l_1^2)$, $l_2=(l_2^1,l_2^2)$, $l_1^2 = l_2^2 = l^2$, $\varphi = \varphi^1$, $c = c^1$. Since $v_1\models \varphi$ and $\mathit{Clk}^1\cap\mathit{Clk}^2 = \emptyset$, it holds that $v_1^1\models \varphi^1$. Also, since $v_2 = v_1[r\mapsto 0]_{r\in c}$ and $c = c^1$, it holds that $v_2^1 = v_1^1[r\mapsto 0]_{r\in c^1}$ and $v_2^2 = v_1^2$. Finally, because $\mathit{Inv}^{A^1\wedge A^2}(l_2) = \mathit{Inv}^1(l_2^1) \wedge \mathit{Inv}^2(l^2)$ (see Definition~\ref{def:conjunctionTIOA}) and $v_2\models \mathit{Inv}^{A^1\wedge A^2}(l_2)$, it follows that $v_2\models\mathit{Inv}^1(l_2^1)$ and $v_2\models\mathit{Inv}^2(l^2)$\footnote{So the if condition in the lemma is always satisfied once we know that $q_1\xlongrightarrow{a}{}^{\!\! X} q_2$ is a transition in $X$. We formalize this in Corollary~\ref{cor:conjunctionTSandAnonsharedaction}.}. Since $\mathit{Clk}^1\cap\mathit{Clk}^2 = \emptyset$, it follows that $v_2^1\models\mathit{Inv}^1(l_2^1)$ and $v_2^2\models\mathit{Inv}^2(l^2)$.
	}
	
	\new{Combining all the information about $A^1$, we have that $(l_1^1, a,\varphi^1, c^1, l_2^1)$ is an edge in $A^1$, $v_1^1\models \varphi^1$, $v_2^1 = v_1^1[r\mapsto 0]_{r\in c^1}$, and $v_2^1\models\mathit{Inv}^1(l_2^1)$. Therefore, from Definition~\ref{def:semanticTIOA} it follows that $(l_1^1,v_1^1) \xrightarrow{a} (l_2^1,v_2^1)$ is a transition in $\sem{A^1}$. Combining all the information about $A^2$, we have that $v_1^2 = v_2^2$ and $v_2^2\models\mathit{Inv}^2(l^2)$.
	}
	
	\new{Now, from Definition~\ref{def:conjunctionTIOTS} of the conjunction for TIOTS it follows that $((l_1^1,v_1^1) , (l^2,v_1^2)) \xrightarrow{a} ((l_2^1,v_2^1), \allowbreak (l^2, v_1^2))$ is a transition in $\sem{A^1} \wedge \sem{A^2}$. Because $\mathit{Clk}^1\cap\mathit{Clk}^2 = \emptyset$, we can rearrange the states into $((l_1^1,v_1^1) , (l^2,v_1^2)) = ((l_1^1,l^2),v_1) = q_1^X$ and $((l_2^1,v_2^1) , (l^2,v_2^2)) = ((l_2^1,l^2),v_2) = q_2^X$. Thus, $q_1^X \xrightarrow{a} q_2^X$ is a transition in $\sem{A^1} \wedge \sem{A^2} = Y$. 
	}
	
	\new{($\Leftarrow$)
		Assume a transition $q_1^Y\xrightarrow{a} q_2^Y$ in $Y$. From Definition~\ref{def:conjunctionTIOTS} of the conjunction for TIOTS it follows that $q_1^{\sem{A^1}}\xrightarrow{a} q_2^{\sem{A^1}}$ is a transition in $\sem{A^1}$, $q^{\sem{A^2}}\in Q^{\sem{A^2}}$, $q_1^Y = (q_1^{\sem{A^1}},q^{\sem{A^2}})$, and $q_2^Y=(q_2^{\sem{A^1}},q^{\sem{A^2}})$. From Definition~\ref{def:semanticTIOA} of semantic it follows that there exists an edge $(l_1^1,a,\varphi^1, c^1, l_2^1)\in E^1$ with $q_1^{\sem{A^1}} = (l_1^1,v_1^1)$, $q_2^{\sem{A^1}} = (l_2^1,v_2^1)$, $l_1^1,l_2^1 \in \mathit{Loc}^1$, $v_1^1, v_2^1 \in [\mathit{Clk}^1 \mapsto \mathbb{R}_{\geq 0}]$, $v_1^1\models \varphi^1$, $v_2^1 = v_1^1[r\mapsto 0]_{r\in c^1}$, and $v_2^1\models \mathit{Inv}^1(l_2^1)$. Similarly, it follows from the same definition that $q^{\sem{A^2}} = (l^2,v^2)$, $l^2\in \mathit{Loc}^2$, and $v^2 \in [\mathit{Clk}^2 \mapsto \mathbb{R}_{\geq 0}]$.
	}
	
	\new{Now, from Definition~\ref{def:conjunctionTIOA} of the conjunction for TIOA, it follows that there exists an edge $((l_1^1, l^2),a,\varphi^1, c^1, (l_2^1,l^2))$ in $A^1 \wedge A^2$. Let $v_i, i=1,2$, be a valuation that combines the one from $A^1$ with the one from $A^2$, i.e. $\forall r \in \mathit{Clk}^1: v_i(r) = v_i^1(r)$ and $\forall r \in \mathit{Clk}^2: v_i(r) = v_i^2(r)$. Because $\mathit{Clk}^1\cap\mathit{Clk}^2 = \emptyset$, it holds that $v_1\models \varphi^1$; $v_2 = v_1[r\mapsto 0]_{r\in c^1}$ with $v_1^2 = v_2^2$; and $v_2\models\mathit{Inv}^1(l_2^1)$. As the antecedent states that $v_2\models\mathit{Inv}^2(l^2)$, it follows that $v^2\models\mathit{Inv}(l_2^1)\wedge\mathit{Inv}(l^2)$.
	}
	
	\new{From Definition~\ref{def:semanticTIOA} it now follows that $((l_1^1,l^2), v_1) \xrightarrow{a} ((l_2^1,l^2), v_2)$ is a transition in $\sem{A^1 \wedge A^2}$. Because $\mathit{Clk}^1\cap\mathit{Clk}^2 = \emptyset$, we can rearrange the states into $((l_1^1,l^2),v_1) = ((l_1^1,v_1^1) , (l^2,v_1^2)) = q_1^Y$ and $((l_2^1,l^2),v_2) = ((l_2^1,v_2^1) , (l^2,v_2^2)) = q_2^Y$. Thus, $q_1^Y \xrightarrow{a} q_2^Y$ is a transition in $\sem{A^1 \wedge A^2} = Y$.}
\end{proof}

\newtheorem*{T10}{Corollary~\ref{cor:conjunctionTSandAnonsharedaction}}
\begin{T10}
	\new{Given two TIOAs $A^i = (\mathit{Loc}^i, l_0^i, \mathit{Act}^i, \mathit{Clk}^i, E^i, \mathit{Inv}^i), i=1,2$ where $\mathit{Act}_i^1 \cap \mathit{Act}_o^2 =\emptyset \wedge \mathit{Act}_o^1 \cap \mathit{Act}_i^2 =\emptyset$. Denote $X = \sem{A^1 \wedge A^2}$ and $Y = \sem{A^1} \wedge \sem{A^2}$, and let $a\in\mathit{Act}^1\setminus\mathit{Act}^2$ and $q_1,q_2\in Q^X\cap Q^Y$. If $q_1\xlongrightarrow{a}{}^{\!\! X} q_2$, then $q_1\xlongrightarrow{a}{}^{\!\! Y} q_2$.}
\end{T10}
\begin{proof}
	\new{First, from Lemma~\ref{lemma:conjunctionTSandAsamestateset} it follows that $Q^X = Q^Y$. For brevity, in the rest of this proof we write  $\mathit{Clk} = \mathit{Clk}^1 \uplus\mathit{Clk}^2$, and $v^1$ and $v^2$ to indicate the part of a valuation $v$ of only the clocks of $A^1$ and $A^2$, respectively. 
	}
	
	\new{Following Definition~\ref{def:semanticTIOA} of the semantic, it follows that there exists an edge $(l_1,a,\varphi, c, l_2)\in E^{A^1\wedge A^2}$ with $q_1^X = (l_1,v_1)$, $q_2^X = (l_2,v_2)$, $l_1,l_2 \in \mathit{Loc}^{A^1\wedge A^2}$, $v_1, v_2 \in [\mathit{Clk} \mapsto \mathbb{R}_{\geq 0}]$, $v_1\models \varphi$, $v_2 = v_1[r\mapsto 0]_{r\in c}$, and $v_2\models \mathit{Inv}(l_2)$. From Definition~\ref{def:conjunctionTIOA} of the conjunction for TIOA it follows that $l_1 = (l_1^1,l_1^2)$, $l_2=(l_2^1,l_2^2)$, $l_1^2 = l_2^2 = l^2$, and $\mathit{Inv}^{A^1\wedge A^2}(l_2) = \mathit{Inv}^1(l_2^1) \wedge \mathit{Inv}^2(l^2)$. Since $v_2\models \mathit{Inv}^{A^1\wedge A^2}(l_2)$, it follows that $v_2\models\mathit{Inv}^1(l_2^1)$ and $v_2\models\mathit{Inv}^2(l^2)$.
	}
	
	\new{It now follows directly from Lemma~\ref{lemma:conjunctionTSandAnonsharedaction} that $q_1\xlongrightarrow{a}{}^{\!\! Y} q_2$.}
\end{proof}

\newtheorem*{T11}{Lemma~\ref{lemma:conjunctionTSandAsameerror}}
\begin{T11}
	\new{Given two TIOAs $A^i = (\mathit{Loc}^i, l_0^i, \mathit{Act}^i, \mathit{Clk}^i, E^i, \mathit{Inv}^i), i=1,2$ where $\mathit{Act}_i^1 \cap \mathit{Act}_o^2 =\emptyset \wedge \mathit{Act}_o^1 \cap \mathit{Act}_i^2 =\emptyset$. Let $Q \subseteq \mathit{Loc}^1 \times \mathit{Loc}^2 \times [(\mathit{Clk}^1 \cup \mathit{Clk}^2)\mapsto \mathbb{R}_{\geq 0}]$. Then $\mathrm{err}^{\sem{A^1 \wedge A^2}}(Q) = \mathrm{err}^{\sem{A^1} \wedge \sem{A^2}}(Q)$.}
\end{T11}
\begin{proof}
	\new{It follows from Lemma~\ref{lemma:conjunctionTSandAsamestateset} that $\sem{A^1 \wedge A^2}$ and $\sem{A^1} \wedge \sem{A^2}$ have the same state set. We will show that $\mathrm{err}^{\sem{A^1 \wedge A^2}}(Q) \subseteq \mathrm{err}^{\sem{A^1} \wedge \sem{A^2}}(Q)$ and $\mathrm{err}^{\sem{A^1} \wedge \sem{A^2}}(Q) \subseteq \mathrm{err}^{\sem{A^1 \wedge A^2}}(Q)$. For brevity, we write $X = \sem{A^1 \wedge A^2}$, $Y = \sem{A^1} \wedge \sem{A^2}$, and $\mathit{Clk} = \mathit{Clk}^1 \uplus\mathit{Clk}^2$ in the rest of this proof. Also, we will use $v^1$ and $v^2$ to indicate the part of a valuation $v$ of only the clocks of $A^1$ and $A^2$, respectively. 
	}
	
	\new{($\mathrm{err}^X(Q) \subseteq\mathrm{err}^Y(Q)$) 
		Consider a state $q^X\in\mathrm{err}^X(Q)$. From Definition~\ref{def:error} of error states we know that $\exists d \in\mathbb{R}_{\geq 0}$ s.t. $q^X\arrownot\xlongrightarrow{d}{}^{\!\! X}$ and $\forall d'\in\mathbb{R}_{\geq 0}\forall o!\in\mathit{Act}_o\forall q_2\in Q^X: q^X\xlongrightarrow{d'} q_2 \Rightarrow (q_2\arrownot\xlongrightarrow{o!}{}^{\!\! X} \vee \forall q_3\in Q^X: q_2\xlongrightarrow{o!}{}^{\!\! X} q_3 \Rightarrow q_3\in Q)$. From Definition~\ref{def:semanticTIOA} of the semantic of a TIOA it follows that $q^X = (l_1,v)$ for some $l_1 \in \mathit{Loc}^{A^1\wedge A^2}$ and $v \in [\mathit{Clk}\mapsto \mathbb{R}_{\geq 0}]$, $v + d \not\models \mathit{Inv}^{A^1\wedge A^2}(l_1)$, and $v+d'\models\mathit{Inv}^{A^1\wedge A^2}(l_1) \implies [\nexists (l_1,o!,\varphi,c,l_3)\in E^{A^1\wedge A^2} \vee \forall (l_1,o!,\varphi,c,l_3)\in E^{A^1\wedge A^2}: v+d'\not\models\varphi \vee v+d'[r\mapsto 0]_{r\in c} \not\models \mathit{Inv}^{A^1\wedge A^2}(l_3) \vee (l_3, v + d'[r\mapsto 0]_{r\in c}) \in Q]$.
	}
	
	\new{From Lemma~\ref{lemma:conjunctionTSandAdelay} it follows immediately that $q^X\arrownot\xlongrightarrow{d}{}^{\!\! X}$ implies that $q^X\arrownot\xlongrightarrow{d}{}^{\!\! Y}$. So the first condition in the definition of error states holds for $q^X$ in $Y$.
	}
	
	\new{Now, pick any $d'$, $q_2$, and $o!$ such that $v+d'\models\mathit{Inv}^{A^1\wedge A^2}(l_1) \implies [\nexists (l_1,o!,\varphi,c,l_3)\in E^{A^1\wedge A^2} \vee \forall (l_1,o!,\varphi,c,l_3)\in E^{A^1\wedge A^2}: v+d'\not\models\varphi \vee v+d'[r\mapsto 0]_{r\in c} \not\models \mathit{Inv}^{A^1\wedge A^2}(l_3) \vee (l_3, v + d'[r\mapsto 0]_{r\in c}) \in Q]$. The implication holds if $v+d'\not\models\mathit{Inv}^{A^1\wedge A^2}(l_1)$ or $v+d'\models\mathit{Inv}^{A^1\wedge A^2}(l_1) \wedge [\nexists (l_1,o!,\varphi,c,l_3)\in E^{A^1\wedge A^2} \vee \forall (l_1,o!,\varphi,c,l_3)\in E^{A^1\wedge A^2}: v+d'\not\models\varphi \vee v+d'[r\mapsto 0]_{r\in c} \not\models \mathit{Inv}^{A^1\wedge A^2}(l_3) \vee (l_3, v + d'[r\mapsto 0]_{r\in c}) \in Q]$. The first case follows directly from Lemma~\ref{lemma:conjunctionTSandAdelay} that shows that $q^X\arrownot\xlongrightarrow{d'}{}^{\!\! Y}$, which ensures that the second condition in the definition of error states holds for $q^X$ in $Y$. For the second case we again use Lemma~\ref{lemma:conjunctionTSandAdelay}, thus $q^X\xlongrightarrow{d'}{}^{\!\! Y} q_2$, where $q_2 = (l_1, v + d')$. Now consider the two cases in the right-hand side of the implication.
	}
	
	\begin{itemize}
		\item \new{$\nexists (l_1,o!,\varphi,c,l_3)\in E^{A^1\wedge A^2}$. We have to consider the three cases from Definition~\ref{def:conjunctionTIOA} of the conjunction for TIOA.}
		\begin{itemize}
			\item \new{$o!\in\mathit{Act}^1\cap\mathit{Act}^2$. In this case, we know that $\nexists (l_1^1,o!,\varphi^1,c^1,l_3^1)\in E^1$ or $\nexists (l_1^2,o!,\varphi^2,c^2,l_3^2)\in E^2$ (or both). Therefore, it follows from Definition~\ref{def:semanticTIOA} of the semantic of a TIOA that $(l_1^1,v^1 + d')\arrownot\xlongrightarrow{o!}{}^{\!\! \sem{A^1}}$ or $(l_1^2,v^2 + d')\arrownot\xlongrightarrow{o!}{}^{\!\! \sem{A^2}}$ (or both). Now, from Definition~\ref{def:conjunctionTIOTS} of the conjunction for TIOTS it follows that $((l_1^1,v^1 + d'),(l_1^2, v^2 + d'))\arrownot\xlongrightarrow{o!}{}^{\!\! Y}$\footnote{Alternatively, we could use Lemma~\ref{lemma:conjunctionTSandAsharedaction} to come to the same conclusion. This also holds for the other two cases, where we have to use Corollary~\ref{cor:conjunctionTSandAnonsharedaction} instead.}.
			}
			
			\item \new{$o!\in\mathit{Act}^1\setminus\mathit{Act}^2$. In this case, we know that $\nexists (l_1^1,o!,\varphi^1,c^1,l_3^1)\in E^1$. Therefore, it follows from Definition~\ref{def:semanticTIOA} of the semantic of a TIOA that $(l_1^1,v^1 + d')\arrownot\xlongrightarrow{o!}{}^{\!\! \sem{A^1}}$. Now, from Definition~\ref{def:conjunctionTIOTS} of the conjunction for TIOTS it follows that $((l_1^1,v^1 + d'),(l_1^2, v^2 + d'))\arrownot\xlongrightarrow{o!}{}^{\!\! Y}$.
			}
			
			\item \new{$o!\in\mathit{Act}^2\setminus\mathit{Act}^1$. In this case, we know that $\nexists (l_1^2,o!,\varphi^2,c^2,l_3^2)\in E^2$. Therefore, it follows from Definition~\ref{def:semanticTIOA} of the semantic of a TIOA that $(l_1^2,v^2 + d')\arrownot\xlongrightarrow{o!}{}^{\!\! \sem{A^2}}$. Now, from Definition~\ref{def:conjunctionTIOTS} of the conjunction for TIOTS it follows that $((l_1^1,v^1 + d'),(l_1^2, v^2 + d'))\arrownot\xlongrightarrow{o!}{}^{\!\! Y}$.}
		\end{itemize}
		\new{So, in all three cases we can show that $((l_1^1,v^1 + d'),(l_1^2, v^2 + d'))\arrownot\xlongrightarrow{o!}{}^{\!\! Y}$. And note that $((l_1^1,v^1 + d'),(l_1^2, v^2 + d')) = q_2$.}
		
		\item \new{$\forall (l_1,o!,\varphi,c,l_3)\in E^{A^1\wedge A^2}: v+d'\not\models\varphi \vee v+d'[r\mapsto 0]_{r\in c} \not\models \mathit{Inv}^{A^1\wedge A^2}(l_3) \vee (l_3, v + d'[r\mapsto 0]_{r\in c}) \in Q$. For each edge $(l_1,o!,\varphi,c,l_3)\in E^{A^1\wedge A^2}$, we have to consider the three cases from Definition~\ref{def:conjunctionTIOA} of the conjunction for TIOA.
		}
		\begin{itemize}
			\item \new{$o!\in\mathit{Act}^1\cap\mathit{Act}^2$. In this case, we know that $(l_1^1,o!,\varphi^1,c^1,l_3^1) \in E^1$, $(l_1^2,o!,\varphi^2,c^2,l_3^2) \in E^2$, $\varphi = \varphi^1 \wedge \varphi^2$, and $c = c^1 \cup c^2$. Now consider the three cases that should hold for each edge $(l_1,o!,\varphi,c,l_3)\in E^{A^1\wedge A^2}$.
			}
			\begin{itemize}
				\item \new{$v+d'\not\models\varphi$. In this case, we know that $v+d'\not\models\varphi$ implies that $v+d'\not\models\varphi^1$ or $v+d'\not\models\varphi^2$ (or both). Because $\mathit{Clk}^1\cap\mathit{Clk}^2 = \emptyset$, it holds that $v^1+d'\not\models\varphi^1$ or $v^2+d'\not\models\varphi^2$ (or both). Therefore, it follows from Definition~\ref{def:semanticTIOA} of the semantic of a TIOA that $(l_1^1,v^1 + d')\arrownot\xlongrightarrow{o!}{}^{\!\! \sem{A^1}}$ or $(l_1^2,v^2 + d')\arrownot\xlongrightarrow{o!}{}^{\!\! \sem{A^2}}$ (or both). Now, from Definition~\ref{def:conjunctionTIOTS} of the conjunction for TIOTS it follows that $((l_1^1,v^1 + d'),(l_1^2, v^2 + d'))\arrownot\xlongrightarrow{o!}{}^{\!\! Y}$.
				}
				
				\item \new{$v+d'[r\mapsto 0]_{r\in c} \not\models \mathit{Inv}^{A^1\wedge A^2}(l_3)$. In this case, we know that $v+d'[r\mapsto 0]_{r\in c} \not\models \mathit{Inv}^{A^1\wedge A^2}(l_3)$ implies that $v+d'[r\mapsto 0]_{r\in c} \not\models \mathit{Inv}^1(l_3^1)$ or $v+d'[r\mapsto 0]_{r\in c} \not\models \mathit{Inv}^2(l_3^2)$ (or both). Because $\mathit{Clk}^1\cap\mathit{Clk}^2 = \emptyset$, it holds that $v^1+d'[r\mapsto 0]_{r\in c^1} \not\models \mathit{Inv}^1(l_3^1)$ or $v^2+d'[r\mapsto 0]_{r\in c^2} \not\models \mathit{Inv}^2(l_3^2)$ (or both). Therefore, it follows from Definition~\ref{def:semanticTIOA} of the semantic of a TIOA that $(l_1^1,v^1 + d')\arrownot\xlongrightarrow{o!}{}^{\!\! \sem{A^1}}$ or $(l_1^2,v^2 + d')\arrownot\xlongrightarrow{o!}{}^{\!\! \sem{A^2}}$ (or both). Now, from Definition~\ref{def:conjunctionTIOTS} of the conjunction for TIOTS it follows that $((l_1^1,v^1 + d'),(l_1^2, v^2 + d'))\arrownot\xlongrightarrow{o!}{}^{\!\! Y}$.}
				
				\item \new{$(l_3, v + d'[r\mapsto 0]_{r\in c})\in Q$. In this case, assume that $v+d'\models\varphi$ and $v+d[r\mapsto 0]_{r\in c} \models \mathit{Inv}^{A^1\wedge A^2}(l_3)$ (otherwise, one of the above cases can be used instead). Because $\mathit{Clk}^1\cap\mathit{Clk}^2 = \emptyset$, it follows that $v^1+d'\models\varphi^1$, $v^2+d'\models\varphi^2$,  $v^1+d'[r\mapsto 0]_{r\in c^1} \models \mathit{Inv}^1(l_3^1)$, and $v^2+d'[r\mapsto 0]_{r\in c^2} \models \mathit{Inv}^2(l_3^2)$. Therefore, it follows from Definition~\ref{def:semanticTIOA} of the semantic of a TIOA that $(l_1^1,v^1 + d')\xlongrightarrow{o!}{}^{\!\! \sem{A^1}} (l_3^1, v^1 + d'[r\mapsto 0]_{r\in c^1})$ and $(l_1^2,v^2 + d')\xlongrightarrow{o!}{}^{\!\! \sem{A^2}} (l_3^2, v^2 + d'[r\mapsto 0]_{r\in c^2})$. Now, from Definition~\ref{def:conjunctionTIOTS} of the conjunction for TIOTS it follows that $((l_1^1,v^1 + d'), (l_1^2,v^2 + d'))\xlongrightarrow{o!}{}^{\!\! Y} ((l_3^1, v^1 + d'[r\mapsto 0]_{r\in c^1}), (l_3^2, v^2 + d'[r\mapsto 0]_{r\in c^2}))$. And note that $((l_3^1, v^1 + d'[r\mapsto 0]_{r\in c^1}), (l_3^2, v^2 + d'[r\mapsto 0]_{r\in c^2})) = (l_3^1, l_3^2, v + d'[r\mapsto 0]_{r\in c}) = (l_3, v + d'[r\mapsto 0]_{r\in c})$.}
			\end{itemize}
			\new{So, in the first two cases we have shown that $((l_1^1,v^1 + d'),(l_1^2, v^2 + d'))\arrownot\xlongrightarrow{o!}{}^{\!\! Y}$\footnote{Alternatively, we could use Lemma~\ref{lemma:conjunctionTSandAsharedaction} to come to the same conclusion. This also holds for the other two cases, where we have to use Corollary~\ref{cor:conjunctionTSandAnonsharedaction} instead.} and in the third case that $((l_1^1,v^1 + d'), (l_1^2,v^2 + d'))\xlongrightarrow{o!}{}^{\!\! Y} (l_3, v + d'[r\mapsto 0]_{r\in c})$.
			}
			
			\item \new{$o!\in\mathit{Act}^1\setminus\mathit{Act}^2$. In this case, we know that $(l_1^1,o!,\varphi^1,c^1,l_3^1) \in E^1$, $\varphi = \varphi^1$, and $c = c^1$. Now consider the three cases that should hold for each edge $(l_1,o!,\varphi,c,l_3)\in E^{A^1\wedge A^2}$.
			}
			\begin{itemize}
				\item \new{$v+d'\not\models\varphi$. In this case, we know that $v+d'\not\models\varphi$ implies that $v+d'\not\models\varphi^1$. Because $\mathit{Clk}^1\cap\mathit{Clk}^2 = \emptyset$, it holds that $v^1+d'\not\models\varphi^1$. Therefore, it follows from Definition~\ref{def:semanticTIOA} of the semantic of a TIOA that $(l_1^1,v^1 + d')\arrownot\xlongrightarrow{o!}{}^{\!\! \sem{A^1}}$. Now, from Definition~\ref{def:conjunctionTIOTS} of the conjunction for TIOTS it follows that $((l_1^1,v^1 + d'),(l_1^2, v^2 + d'))\arrownot\xlongrightarrow{o!}{}^{\!\! Y}$.
				}
				
				\item \new{$v+d[r\mapsto 0]_{r\in c} \not\models \mathit{Inv}^{A^1\wedge A^2}(l_3)$. In this case, we know that $v+d[r\mapsto 0]_{r\in c} \not\models \mathit{Inv}^{A^1\wedge A^2}(l_3)$ implies that $v+d[r\mapsto 0]_{r\in c} \not\models \mathit{Inv}^1(l_3^1)$. Because $\mathit{Clk}^1\cap\mathit{Clk}^2 = \emptyset$, it holds that $v^1+d[r\mapsto 0]_{r\in c^1} \not\models \mathit{Inv}^1(l_3^1)$. Therefore, it follows from Definition~\ref{def:semanticTIOA} of the semantic of a TIOA that $(l_1^1,v^1 + d')\arrownot\xlongrightarrow{o!}{}^{\!\! \sem{A^1}}$. Now, from Definition~\ref{def:conjunctionTIOTS} of the conjunction for TIOTS it follows that $((l_1^1,v^1 + d'),(l_1^2, v^2 + d'))\arrownot\xlongrightarrow{o!}{}^{\!\! Y}$.}
				
				\item \new{$(l_3, v + d'[r\mapsto 0]_{r\in c})\in Q$. In this case, assume that $v+d'\models\varphi$ and $v+d[r\mapsto 0]_{r\in c} \models \mathit{Inv}^{A^1\wedge A^2}(l_3)$ (otherwise, one of the above cases can be used instead). Because $\mathit{Clk}^1\cap\mathit{Clk}^2 = \emptyset$, it follows that $v^1+d'\models\varphi^1$ and $v^1+d'[r\mapsto 0]_{r\in c^1} \models \mathit{Inv}^1(l_3^1)$. Therefore, it follows from Definition~\ref{def:semanticTIOA} of the semantic of a TIOA that $(l_1^1,v^1 + d')\xlongrightarrow{o!}{}^{\!\! \sem{A^1}} (l_3^1, v^1 + d'[r\mapsto 0]_{r\in c^1})$. Now, from Definition~\ref{def:conjunctionTIOTS} of the conjunction for TIOTS it follows that $((l_1^1,v^1 + d'), (l_1^2,v^2 + d'))\xlongrightarrow{o!}{}^{\!\! Y} ((l_3^1, v^1 + d'[r\mapsto 0]_{r\in c^1}), (l_1^2, v^2 + d'))$. And note that $((l_3^1, v^1 + d'[r\mapsto 0]_{r\in c^1}), (l_1^2, v^2 + d')) = (l_3^1, l_1^2, v + d'[r\mapsto 0]_{r\in c}) = (l_3, v + d'[r\mapsto 0]_{r\in c})$. Now notice that $(l_3, v + d'[r\mapsto 0]_{r\in c}) \in Q$.}
			\end{itemize}
			\new{So, in the first two cases we have shown that $((l_1^1,v^1 + d'),(l_1^2, v^2 + d'))\arrownot\xlongrightarrow{o!}{}^{\!\! Y}$ and in the third case that $((l_1^1,v^1 + d'), (l_1^2,v^2 + d'))\xlongrightarrow{o!}{}^{\!\! Y} (l_3, v + d'[r\mapsto 0]_{r\in c})$.}
			
			\item \new{$o!\in\mathit{Act}^2\setminus\mathit{Act}^1$. In this case, we know that $(l_1^2,o!,\varphi^2,c^2,l_3^2) \in E^2$, $\varphi = \varphi^2$, and $c = c^2$. Now consider the three cases that should hold for each edge $(l_1,o!,\varphi,c,l_3)\in E^{A^1\wedge A^2}$.
			}
			\begin{itemize}
				\item \new{$v+d'\not\models\varphi$. In this case, we know that $v+d\not\models\varphi$ implies that $v+d'\not\models\varphi^2$. Because $\mathit{Clk}^1\cap\mathit{Clk}^2 = \emptyset$, it holds that $v^2+d'\not\models\varphi^2$. Therefore, it follows from Definition~\ref{def:semanticTIOA} of the semantic of a TIOA that $(l_1^2,v^2 + d')\arrownot\xlongrightarrow{o!}{}^{\!\! \sem{A^2}}$. Now, from Definition~\ref{def:conjunctionTIOTS} of the conjunction for TIOTS it follows that $((l_1^1,v^1 + d'),(l_1^2, v^2 + d'))\arrownot\xlongrightarrow{o!}{}^{\!\! Y}$.
				}
				\item \new{$v+d[r\mapsto 0]_{r\in c} \not\models \mathit{Inv}^{A^1\wedge A^2}(l_3)$. In this case, we know that $v+d[r\mapsto 0]_{r\in c} \not\models \mathit{Inv}^{A^1\wedge A^2}(l_3)$ implies that $v+d[r\mapsto 0]_{r\in c} \not\models \mathit{Inv}^2(l_3^2)$. Because $\mathit{Clk}^1\cap\mathit{Clk}^2 = \emptyset$, it holds that $v^2+d[r\mapsto 0]_{r\in c^2} \not\models \mathit{Inv}^2(l_3^2)$. Therefore, it follows from Definition~\ref{def:semanticTIOA} of the semantic of a TIOA that $(l_1^2,v^2 + d')\arrownot\xlongrightarrow{o!}{}^{\!\! \sem{A^2}}$. Now, from Definition~\ref{def:conjunctionTIOTS} of the conjunction for TIOTS it follows that $((l_1^1,v^1 + d'),(l_1^2, v^2 + d'))\arrownot\xlongrightarrow{o!}{}^{\!\! Y}$.}
				
				\item \new{$(l_3, v + d'[r\mapsto 0]_{r\in c})\in Q$. In this case, assume that $v+d'\models\varphi$ and $v+d[r\mapsto 0]_{r\in c} \models \mathit{Inv}^{A^1\wedge A^2}(l_3)$ (otherwise, one of the above cases can be used instead). Because $\mathit{Clk}^1\cap\mathit{Clk}^2 = \emptyset$, it follows that $v^2+d'\models\varphi^2$ and $v^2+d'[r\mapsto 0]_{r\in c^2} \models \mathit{Inv}^2(l_3^2)$. Therefore, it follows from Definition~\ref{def:semanticTIOA} of the semantic of a TIOA that $(l_1^2,v^2 + d')\xlongrightarrow{o!}{}^{\!\! \sem{A^2}} (l_3^2, v^2 + d'[r\mapsto 0]_{r\in c^2})$. Now, from Definition~\ref{def:conjunctionTIOTS} of the conjunction for TIOTS it follows that $((l_1^1,v^1 + d'), (l_1^2,v^2 + d'))\xlongrightarrow{o!}{}^{\!\! Y} ((l_1^1, v^1 + d'[r\mapsto 0]_{r\in c^1}), (l_3^2, v^2 + d'[r\mapsto 0]_{r\in c^2}))$. And note that $((l_1^1, v^1 + d'), (l_3^2, v^2 + d'[r\mapsto 0]_{r\in c^2})) = (l_1^1, l_3^2, v + d'[r\mapsto 0]_{r\in c}) = (l_3, v + d'[r\mapsto 0]_{r\in c})$.}
			\end{itemize}
			\new{So, in the first two cases we have shown that $((l_1^1,v^1 + d'),(l_1^2, v^2 + d'))\arrownot\xlongrightarrow{o!}{}^{\!\! Y}$ and in the third case that $((l_1^1,v^1 + d'), (l_1^2,v^2 + d'))\xlongrightarrow{o!}{}^{\!\! Y} (l_3, v + d'[r\mapsto 0]_{r\in c})$.}
		\end{itemize}
		\new{So, in all three cases we have shown that $((l_1^1,v^1 + d'),(l_1^2, v^2 + d'))\arrownot\xlongrightarrow{o!}{}^{\!\! Y}$ or $((l_1^1,v^1 + d'), (l_1^2,v^2 + d'))\xlongrightarrow{o!}{}^{\!\! Y} (l_3, v + d'[r\mapsto 0]_{r\in c})$. And note that $((l_1^1,v^1 + d'),(l_1^2, v^2 + d')) = q_2$ and $(l_3, v + d'[r\mapsto 0]_{r\in c}) = q_3$.}
	\end{itemize}
	\new{So we have shown that $((l_1^1,v^1 + d'),(l_1^2, v^2 + d'))\arrownot\xlongrightarrow{o!}{}^{\!\! Y}$ or $((l_1^1,v^1 + d'), (l_1^2,v^2 + d'))\xlongrightarrow{o!}{}^{\!\! Y} (l_3, v + d'[r\mapsto 0]_{r\in c})$ with $(l_3, v + d'[r\mapsto 0]_{r\in c}) \in Q$. We can rewrite this into $q^X \xlongrightarrow{d'}{}^{\!\! Y} q_2 \arrownot\xlongrightarrow{o!}{}^{\!\! Y}$ or $q^X \xlongrightarrow{d'}{}^{\!\! Y} q_2 \xlongrightarrow{o!}{}^{\!\! Y} q_3$. Since we have chosen $d'$, $q_2$, $q_3$, and $o!$ arbitrarily, the conclusion holds for all $d'$, $q_2$, $q_3$, and $o!$. Therefore, the second condition in the definition of error states hold for $q^X$.
	}
	
	\new{Now, since both conditions in the definition of the error states hold for $q^X$, we know that $q^X \in \mathrm{err}^Y(Q)$. Since we have chosen $q^X$ arbitrarily from $\mathrm{err}^X(Q)$, it holds for all $q^X \in \mathrm{err}^X(Q)$. Therefore, it holds that $\mathrm{err}^X(Q) \subseteq \mathrm{err}^Y(Q)$.
	}
	
	\new{($\mathrm{err}^Y \subseteq\mathrm{err}^X$)
		Consider a state $q^Y\in\mathrm{err}^Y$. From Definition~\ref{def:error} of error states we know that $\exists d \in\mathbb{R}_{\geq 0}$ s.t. $q^Y\arrownot\xlongrightarrow{d'}{}^{\!\! Y}$ and $\forall d'\in\mathbb{R}_{\geq 0}\forall o!\in\mathit{Act}_o\forall q_2\in Q^Y: q^Y\xlongrightarrow{d} q_2 \Rightarrow (q_2\arrownot\xlongrightarrow{o!}{}^{\!\! Y} \vee \forall q_3\in Q^Y: q_2\xlongrightarrow{o!}{}^{\!\! Y} q_3 \Rightarrow q_3\in Q)$. From Definition~\ref{def:conjunctionTIOTS} of the conjunction for TIOTS it follows that $q^Y=(q^{\sem{A^1}}, q^{\sem{A^2}})$ and $q_2=(q_2^{\sem{A^1}}, q_2^{\sem{A^2}})$. 
	}
	
	\new{First, consider the first condition in the definition of error states.	From Lemma~\ref{lemma:conjunctionTSandAdelay} it follows immediately that $q^Y\arrownot\xlongrightarrow{d}{}^{\!\! Y}$ implies that $q^Y\arrownot\xlongrightarrow{d}{}^{\!\! X}$. So the first condition in the definition of error states holds for $q^Y$ in $X$.
	}
	
	\new{Now, consider the second condition in the definition of error states. Pick any $d'$, $q_2$, and $o!$ such that $q^Y\xlongrightarrow{d'} q_2 \Rightarrow (q_2\arrownot\xlongrightarrow{o!}{}^{\!\! Y} \vee \forall q_3\in Q^Y: q_2\xlongrightarrow{o!}{}^{\!\! Y} q_3 \Rightarrow q_3\in Q)$. The implication holds if $q^Y\arrownot\xlongrightarrow{d'}{}^{\!\! Y}$ or $q^Y\xlongrightarrow{d'} q_2 \wedge (q_2\arrownot\xlongrightarrow{o!}{}^{\!\! Y} \vee \forall q_3\in Q^Y: q_2\xlongrightarrow{o!}{}^{\!\! Y} q_3 \Rightarrow q_3\in Q)$. The first case follows directly from Lemma~\ref{lemma:conjunctionTSandAdelay} that shows that
		$q^Y\arrownot\xlongrightarrow{d}{}^{\!\! Y}$ implies that $q^Y\arrownot\xlongrightarrow{d}{}^{\!\! X}$, which ensures that the second condition in the definition of error states holds for $q^Y$ in $X$. For the second case we again use Lemma~\ref{lemma:conjunctionTSandAdelay}, thus $q^Y\xlongrightarrow{d'}{}^{\!\! X} q_2$, where $q^Y = (l_1^1,l_1^2, v)$ and $q_2 = (l_1^1,l_1^2, v + d)$. 
	}
	
	\new{It remains to be shown that $q_2\arrownot\xlongrightarrow{o!}{}^{\!\! Y} \vee \forall q_3\in Q^Y: q_2\xlongrightarrow{o!}{}^{\!\! Y} q_3 \Rightarrow q_3\in Q$ in $Y$ implies that $q_2\arrownot\xlongrightarrow{o!}{}^{\!\! X} \vee \forall q_3\in Q^X: q_2\xlongrightarrow{o!}{}^{\!\! X} q_3 \Rightarrow q_3\in Q$ in $X$. We have to consider the three cases from Definition~\ref{def:conjunctionTIOTS} of the conjunction for TIOTS.
	}
	\begin{itemize}
		\item \new{$o!\in\mathit{Act}^1\cap\mathit{Act}^2$. It follows directly from Lemma~\ref{lemma:conjunctionTSandAsharedaction} that $q_2\arrownot\xlongrightarrow{o!}{}^{\!\! X} \vee \forall q_3\in Q^X: q_2\xlongrightarrow{o!}{}^{\!\! X} q_3 \Rightarrow q_3\in Q$.
		}
		
		\item \new{$o!\in\mathit{Act}^1\setminus\mathit{Act}^2$. Using Definition~\ref{def:semanticTIOA} of the semantic of a TIOA, we now know that $\nexists (l_1^1, o!, \varphi^1, c^1, l_3^1)\in E^1$ or $\forall (l_1^1, o!, \varphi^1, c^1, l_3^1)\in E^1 : v^1 + d' \not\models\varphi^1 \vee v^1 + d'[r\mapsto 0]_{r\in c^1}\not\models \mathit{Inv}^1(l_3^1) \vee (l_3^1, l_1^2, v + d'[r\mapsto 0]_{r\in c^1}) \in Q$. 
		}
		
		\new{In case that $\nexists (l_1^1, o!, \varphi^1, c^1, l_3^1)\in E^1$, it follows directly from Definition~\ref{def:conjunctionTIOA} of the conjunction for TIOA that $\nexists ((l_1^1, l_1^2), o!, \varphi^1, c^1, (l_3^1,l_3^2))\in E^{A^1 \wedge A^2}$. Then, with Definition~\ref{def:semanticTIOA} of the semantic of a TIOA, it follows that $(l_1^1,l_1^2, v + d') \arrownot\xlongrightarrow{o!}{}^{\!\! X}$.
		}
		
		\new{In case that $\forall (l_1^1, o!, \varphi^1, c^1, l_3^1)\in E^1 : v^1 + d' \not\models\varphi^1 \vee v^1 + d'[r\mapsto 0]_{r\in c^1}\not\models \mathit{Inv}^1(l_3^1)$, it follows from Definition~\ref{def:conjunctionTIOA} that for each edge $(l_1^1, o!, \varphi^1, c^1, l_3^1)\in E^1$, $\exists ((l_1^1,l_1^2),o!,\varphi^1, c^1, (l_3^1, l_3^2)) \in E^{A^1\wedge A^2}$. Because $\mathit{Clk}^1\cap\mathit{Clk}^2 = \emptyset$, it holds that $v + d' \not\models\varphi^1 \vee v + d'[r\mapsto 0]_{r\in c}\not\models \mathit{Inv}^1(l_3^1)$. Therefore, it also holds that $v + d' \not\models\varphi^1 \vee v + d'[r\mapsto 0]_{r\in c}\not\models \mathit{Inv}^1(l_3^1) \wedge\mathit{Inv}^2(l_1^2)$. Note that from Definition~\ref{def:conjunctionTIOA} we know that $\mathit{Inv}^{A^1\wedge A^2}((l_3^1,l_1^2)) = \mathit{Inv}^1(l_3^1) \wedge \mathit{Inv}^2(l_1^2)$. As we have shown that $v + d' \not\models\varphi^1 \vee v + d'[r\mapsto 0]_{r\in c}\not\models \mathit{Inv}^1(l_3^1) \wedge\mathit{Inv}^2(l_1^2)$ for all edges labeled with $o!$ from $(l_1^1, l_1^2)$, it follows from Definition~\ref{def:semanticTIOA} of the semantic of a TIOA that $(l_1^1,l_1^2, v + d') \arrownot\xlongrightarrow{o!}{}^{\!\! X}$.
		}
		
		\new{In case that $\forall (l_1^1, o!, \varphi^1, c^1, l_3^1)\in E^1 : (l_3^1, l_1^2, v + d'[r\mapsto 0]_{r\in c^1}) \in Q$, it follows from Definition~\ref{def:conjunctionTIOA} that for each edge $(l_1^1, o!, \varphi^1, c^1, l_3^1)\in E^1$, $\exists ((l_1^1,l_1^2),o!,\varphi^1, c^1, (l_3^1, l_3^2)) \in E^{A^1\wedge A^2}$. Because $\mathit{Clk}^1\cap\mathit{Clk}^2 = \emptyset$, it holds that $v + d' \models\varphi^1 \wedge v + d'[r\mapsto 0]_{r\in c}\models \mathit{Inv}^1(l_3^1)$ (in case one of them does not hold, we can use the argument above). Therefore, it also holds that $v + d' \models\varphi^1 \wedge v + d'[r\mapsto 0]_{r\in c}\models \mathit{Inv}^1(l_3^1) \wedge\mathit{Inv}^2(l_1^2)$. Note that from Definition~\ref{def:conjunctionTIOA} we know that $\mathit{Inv}^{A^1\wedge A^2}((l_3^1,l_1^2)) = \mathit{Inv}^1(l_3^1) \wedge \mathit{Inv}^2(l_1^2)$. As we have shown that $v + d' \models\varphi^1 \wedge v + d'[r\mapsto 0]_{r\in c}\models \mathit{Inv}^1(l_3^1) \wedge\mathit{Inv}^2(l_1^2)$ for all edges labeled with $o!$ from $(l_1^1, l_1^2)$, it follows from Definition~\ref{def:semanticTIOA} of the semantic of a TIOA that $(l_1^1,l_1^2, v + d') \xlongrightarrow{o!}{}^{\!\! X} (l_3^1,l_1^2, v + d'[r\mapsto 0]_{r\in c})$. Now notice that $(l_3^1,l_1^2, v + d'[r\mapsto 0]_{r\in c}) \in Q$.}
		
		\item \new{$o!\in\mathit{Act}^2\setminus\mathit{Act}^1$. Using Definition~\ref{def:semanticTIOA} of the semantic of a TIOA, we now know that $\nexists (l_1^2, o!, \varphi^2, c^2, l_3^2)\in E^2$ or $\forall (l_1^2, o!, \varphi^2, c^2, l_3^2)\in E^2 : v^2 + d' \not\models\varphi^2 \vee v^2 + d'[r\mapsto 0]_{r\in c^2}\not\models \mathit{Inv}^2(l_3^2)  \vee (l_1^1, l_3^2, v + d'[r\mapsto 0]_{r\in c^1}) \in Q$. 
		}
		
		\new{In case that $\nexists (l_1^2, o!, \varphi^2, c^2, l_3^2)\in E^2$, it follows directly from Definition~\ref{def:conjunctionTIOA} of the conjunction for TIOA that $\nexists ((l_1^1, l_1^2), o!, \varphi^2, c^2, (l_3^1,l_3^2))\in E^{A^1 \wedge A^2}$. Then, with Definition~\ref{def:semanticTIOA} of the semantic of a TIOA, it follows that $(l_1^1,l_1^2, v + d') \arrownot\xlongrightarrow{o!}{}^{\!\! X}$.
		}
		
		\new{In case that $\forall (l_1^2, o!, \varphi^2, c^2, l_3^2)\in E^2 : v^2 + d' \not\models\varphi^2 \vee v^2 + d'[r\mapsto 0]_{r\in c^2}\not\models \mathit{Inv}^2(l_3^2)$, it follows from Definition~\ref{def:conjunctionTIOA} that for each edge $(l_1^2, o!, \varphi^2, c^2, l_3^2)\in E^2$, $\exists ((l_1^1,l_1^2),o!,\varphi^2, c^2, (l_3^1, l_3^2)) \in E^{A^1\wedge A^2}$. Because $\mathit{Clk}^1\cap\mathit{Clk}^2 = \emptyset$, it holds that $v + d' \not\models\varphi^2 \vee v + d'[r\mapsto 0]_{r\in c}\not\models \mathit{Inv}^2(l_3^2)$. Therefore, it also holds that $v + d' \not\models\varphi^2 \vee v + d'[r\mapsto 0]_{r\in c}\not\models \mathit{Inv}^2(l_3^2) \wedge\mathit{Inv}^2(l_3^2)$. Note that from Definition~\ref{def:conjunctionTIOA} we know that $\mathit{Inv}^{A^1\wedge A^2}((l_3^1,l_1^2)) = \mathit{Inv}^1(l_3^1) \wedge \mathit{Inv}^2(l_1^2)$. As we have shown that $v + d' \not\models\varphi^2 \vee v + d'[r\mapsto 0]_{r\in c}\not\models \mathit{Inv}^1(l_1^1) \wedge\mathit{Inv}^2(l_3^2)$ for all edges labeled with $o!$ from $(l_1^1, l_1^2)$, it follows from Definition~\ref{def:semanticTIOA} of the semantic of a TIOA that $(l_1^1,l_1^2, v + d') \arrownot\xlongrightarrow{o!}{}^{\!\! X}$.}
		
		\new{In case that $\forall (l_1^2, o!, \varphi^2, c^2, l_3^2)\in E^2 : (l_1^1, l_3^2, v + d'[r\mapsto 0]_{r\in c^2}) \in Q$, it follows from Definition~\ref{def:conjunctionTIOA} that for each edge $(l_1^2, o!, \varphi^2, c^2, l_3^2)\in E^2$, $\exists ((l_1^1,l_1^2),o!,\varphi^2, c^2, (l_3^1, l_2^2)) \in E^{A^1\wedge A^2}$. Because $\mathit{Clk}^1\cap\mathit{Clk}^2 = \emptyset$, it holds that $v + d' \models\varphi^2 \wedge v + d'[r\mapsto 0]_{r\in c}\models \mathit{Inv}^2(l_3^2)$ (in case one of them does not hold, we can use the argument above). Therefore, it also holds that $v + d' \models\varphi^2 \wedge v + d'[r\mapsto 0]_{r\in c}\models \mathit{Inv}^1(l_1^1) \wedge\mathit{Inv}^2(l_3^2)$. Note that from Definition~\ref{def:conjunctionTIOA} we know that $\mathit{Inv}^{A^1\wedge A^2}((l_1^1,l_3^2)) = \mathit{Inv}^1(l_1^1) \wedge \mathit{Inv}^2(l_3^2)$. As we have shown that $v + d' \models\varphi^2 \wedge v + d'[r\mapsto 0]_{r\in c}\models \mathit{Inv}^1(l_1^1) \wedge\mathit{Inv}^2(l_3^2)$ for all edges labeled with $o!$ from $(l_1^1, l_1^2)$, it follows from Definition~\ref{def:semanticTIOA} of the semantic of a TIOA that $(l_1^1,l_1^2, v + d') \xlongrightarrow{o!}{}^{\!\! X} (l_1^1,l_3^2, v + d'[r\mapsto 0]_{r\in c})$. Now notice that $(l_1^1,l_3^2, v + d'[r\mapsto 0]_{r\in c}) \in Q$.}
	\end{itemize}
	\new{So, in all three cases, we have shown that $(l_1^1, l_1^2, v + d')\arrownot\xlongrightarrow{o!}{}^{\!\! X}$ or $((l_1^1,v^1 + d'), (l_1^2,v^2 + d'))\xlongrightarrow{o!}{}^{\!\! X} (l_3, v + d'[r\mapsto 0]_{r\in c})$ with $(l_3, v + d'[r\mapsto 0]_{r\in c}) \in Q$.  We can rewrite this into $q^Y \xlongrightarrow{d'}{}^{\!\! X} q_2 \arrownot\xlongrightarrow{o!}{}^{\!\! X}$ or $q^X \xlongrightarrow{d'}{}^{\!\! Y} q_2 \xlongrightarrow{o!}{}^{\!\! Y} q_3$. Since we have chosen $d'$, $q_2$, $q_3$, and $o!$ arbitrarily, the conclusion holds for all $d'$, $q_2$, $q_3$, and $o!$. Therefore, the second condition in the definition of error states hold for $q^Y$.
	}
	
	\new{Now, since both conditions in the definition of the error states hold for $q^Y$, we know that $q^Y \in \mathrm{err}^X$. Since we have chosen $q^Y$ arbitrarily, it holds for all $q^Y \in \mathrm{err}^Y$. Therefore, it holds that $\mathrm{err}^Y \subseteq \mathrm{err}^X$.}
\end{proof}

\newtheorem*{T12}{Lemma~\ref{lemma:conjunctionTSandAsamecons}}
\begin{T12}
	\new{Given two TIOAs $A^i = (\mathit{Loc}^i, l_0^i, \mathit{Act}^i, \mathit{Clk}^i, E^i, \mathit{Inv}^i), i=1,2$ where $\mathit{Act}_i^1 \cap \mathit{Act}_o^2 =\emptyset \wedge \mathit{Act}_o^1 \cap \mathit{Act}_i^2 =\emptyset$. Then $\mathrm{cons}^{\sem{A^1 \wedge A^2}} = \mathrm{cons}^{\sem{A^1} \wedge \sem{A^2}}$.}
\end{T12}
\begin{proof}
	\new{We will proof this by using the $\Theta$ operator. It follows from Lemma~\ref{lemma:conjunctionTSandAsamestateset} that $\sem{A^1 \wedge A^2}$ and $\sem{A^1} \wedge \sem{A^2}$ have the same state set. Also, observe that the semantic of a TIOA, conjunction, and adversarial pruning do not alter the action set. Therefore, it follows that $\sem{A^1 \wedge A^2}$ and $\sem{A^1} \wedge \sem{A^2}$ have the same action set and partitioning into input and output actions. We will show for any postfixed point $P$ of $\Theta$ that $\Theta^{\sem{A^1 \wedge A^2}}(P) \subseteq \Theta^{\sem{A^1} \wedge \sem{A^2}}(P)$ and $\Theta^{\sem{A^1} \wedge \sem{A^2}}(P) \subseteq \Theta^{\sem{A^1 \wedge A^2}}(P)$. For brevity, we write $X = \sem{A^1 \wedge A^2}$, $Y = \sem{A^1} \wedge \sem{A^2}$, and $\mathit{Clk} = \mathit{Clk}^1 \uplus\mathit{Clk}^2$ in the rest of this proof. Also, we will use $v^1$ and $v^2$ to indicate the part of a valuation $v$ of only the clocks of $A^1$ and $A^2$, respectively. 
	}
	
	\new{($\Theta^X(P) \subseteq \Theta^Y(P)$)
		Consider a state $q^X \in P$. Because $P$ is a postfixed point of $\Theta^X$, it follows that $q^X\in\Theta^X(P)$. From the definition of $\Theta$, it follows that $q^X \in \overline{\mathrm{err}^X(\overline{P})}$ and $ q^X \in \{q_1\in Q^X\mid \forall d \geq 0: [\forall q_2\in Q^X: q_1\xlongrightarrow{d}{}^{\!\! X} q_2 \Rightarrow q_2\in P \wedge \forall i?\in \mathit{Act}_i^X: \exists q_3\in P: q_2\xlongrightarrow{i?}{}^{\!\! X} q_3]\ \vee [\exists d'\leq d \wedge \exists q_2,q_3\in P \wedge \exists o!\in\mathit{Act}_o^X: q_1\xlongrightarrow{d'}{}^{\!\! X} q_2 \wedge q_2\xlongrightarrow{o!}{}^{\!\! X} q_3 \wedge \forall i?\in\mathit{Act}_i^X: \exists q_4\in P: q_2\xlongrightarrow{i?}{}^{\!\! X} q_4]\}$. From Lemma~\ref{lemma:conjunctionTSandAsameerror} it follows directly that $q^X \in \overline{\mathrm{err}^Y(\overline{P})}$. Now we only focus on the second part of the definition of $\Theta$.
	}
	
	\new{Consider a $d\in\mathbb{R}_{\geq 0}$. Then the left-hand side or the right-hand side of the disjunction is true (or both). }
	\begin{itemize}
		\item \new{Assume the left-hand side is true, i.e., $\forall q_2\in Q^X: q^X\xlongrightarrow{d}{}^{\!\! X} q_2 \Rightarrow q_2\in P \wedge \forall i?\in \mathit{Act}_i^X: \exists q_3\in P: q_2\xlongrightarrow{i?}{}^{\!\! X} q_3$. Pick a $q_2 \in Q^X$. The implication is true when $q^X\arrownot\xlongrightarrow{d}{}^{\!\! X} q_2$ or $q^X\xlongrightarrow{d}{}^{\!\! X} q_2 \wedge q_2\in P \wedge \forall i?\in \mathit{Act}_i^X: \exists q_3\in P: q_2\xlongrightarrow{i?}{}^{\!\! X} q_3$. 
		}
		
		\begin{itemize}
			\item \new{Consider the first case. From Lemma~\ref{lemma:conjunctionTSandAdelay} it follows that $q^X \arrownot\xlongrightarrow{d}{}^{\!\! Y}$. Note that $q^X = (l_1^1, v_1^1, l_1^2, v_1^2)$. Thus the implication also holds for $q_2$ in $Y$.
			}
			
			\item \new{Consider the second case. From Lemma~\ref{lemma:conjunctionTSandAdelay}, we have that $q^X\xlongrightarrow{d}{}^{\!\! X} q_2$ implies that $q^X\xlongrightarrow{d}{}^{\!\! Y} q_2$, and from Definition~\ref{def:semanticTIOA} of the semantic of a TIOA it follows that $v_1 + d\models \mathit{Inv}^{A^1\wedge A^2}(l_1)$ for $q^X = (l_1, v_1)$, $q_2 = (l_1, v_1 + d)$, $l_1\in \mathit{Loc}^{A^1\wedge A^2}$, and $v_1\in[\mathit{Clk}\mapsto \mathbb{R}_{\geq 0}]$. Now, pick $i?\in\mathit{Act}_i^X$ and $q_3\in Q^X$ such that $q_2 \xlongrightarrow{i?}{}^{\!\! X} q_3$ and $q_3\in P$. From Definition~\ref{def:semanticTIOA} of the semantic of a TIOA it follows that $(l_1, i?, \varphi, c, l_3) \in E^{A^1\wedge A^2}$, $q_3 = (l_3, v_3)$, $v_1 + d \models \varphi$, $v_3 = v_1+d[r\mapsto 0]_{r\in c}$, and $v_3\models \mathit{Inv}^{A^1\wedge A^2}(l_3)$. From Definition~\ref{def:conjunctionTIOA} of the conjunction of TIOA it follows that $l_1 = (l_1^1, l_1^2)$, $l_3=(l_3^1, l_3^2)$, $\mathit{Inv}^{A^1\wedge A^2}(l_1) = \mathit{Inv}^1(l_1^1)\wedge \mathit{Inv}^2(l_1^2)$, and $\mathit{Inv}^{A^1\wedge A^2}(l_3) = \mathit{Inv}^1(l_3^1)\wedge \mathit{Inv}^2(l_3^2)$. We have to consider the three cases of Definition~\ref{def:conjunctionTIOA} in relation to $i?$.
			}
			\begin{itemize}
				\item \new{$i? \in \mathit{Act}_i^1 \cap \mathit{Act}_i^2$. It follows directly from Lemma~\ref{lemma:conjunctionTSandAsharedaction} that $q_2 \xrightarrow{i?} q_3$ is a transition in $Y$.}
				
				\item \new{$i? \in \mathit{Act}_i^1 \setminus \mathit{Act}_i^2$. It follows directly from Corollary~\ref{cor:conjunctionTSandAnonsharedaction} that $q_2 \xrightarrow{i?} q_3$ is a transition in $Y$.}
				
				\item \new{$i? \in \mathit{Act}_i^2 \setminus \mathit{Act}_i^1$. It follows directly from Corollary~\ref{cor:conjunctionTSandAnonsharedaction} (where we switched $A^1$ and $A^2$) that  $q_2 \xrightarrow{i?} q_3$ is a transition in $Y$.}
			\end{itemize}
			
			\new{So, in all three cases we have that $q_2 \xrightarrow{i?} q_3$ is a transition in $Y$. As the analysis above is independent of the particular $i?$, $q_2 \xrightarrow{i?} q_3$ is a transition in $Y$ for all $i?$. Because both $q_2, q_3 \in P$ and $q^X\xlongrightarrow{d}{}^{\!\! Y} q_2$, we have that the implication also holds for $q_2 \in Y$.}
		\end{itemize}
		\new{So, in both cases we have that for $q^X\xlongrightarrow{d}{}^{\!\! Y} q_2 \Rightarrow q_2\in P \wedge \forall i?\in \mathit{Act}_i^Y: \exists q_3\in P: q_2\xlongrightarrow{i?}{}^{\!\! Y} q_3$. As $q_2$ is chosen arbitrarily, it holds for all $q_2\in Q^X = Q^Y$. Therefore, the left-hand side is true.	
		}
		
		\item \new{Assume the right-hand side is true, i.e., $\exists d'\leq d \wedge \exists q_2,q_3\in P \wedge \exists o!\in\mathit{Act}_o^X: q^X\xlongrightarrow{d'}{}^{\!\! X} q_2 \wedge q_2\xlongrightarrow{o!}{}^{\!\! X} q_3 \wedge \forall i?\in\mathit{Act}_i^X: \exists q_4\in P: q_2\xlongrightarrow{i?}{}^{\!\! X} q_4$. 
		}
		
		\new{First, following Definition~\ref{def:semanticTIOA} of the semantic of a TIOA, we have that $q^X = (l_1,v_1)$, $q_2 = (l_1, v_1 + d')$, $q_3 = (l_3, v_3)$, $q_4 = (l_4,v_4)$, $l_1, l_3, l_4\in \mathit{Loc}^{A^1 \wedge A^2}$, $v_1, v_3, v_4\in [\mathit{Clk}\mapsto \mathbb{R}_{\geq 0}]$, $v_1 + d'\models \mathit{Inv}^{A^1\wedge A^2}(l_1)$, $\exists (l_1,o!,\varphi, c, l_3)\in E^{A^1\wedge A^2}$, $v_1 + d'\models\varphi$, $v_3 = v_1 + d'[r\mapsto 0]_{r\in c}$, and $v_3\models \mathit{Inv}^{A^1\wedge A^2}(l_3)$. First, focus on the delay transition. From Lemma~\ref{lemma:conjunctionTSandAdelay} it follows that $q^X \xlongrightarrow{d'}{}^{\!\! Y} q_2$ in $Y$, with $q^X = (l_1^1, v_1^1, l_1^2, v_1^2) = (l_1^1, l_1^2, v_1)$ and $q_2 = (l_1^1, v_1^1 + d', l_1^2, v_1^2 + d') = (l_1^1, l_1^2, v_1 + d')$.
		}
		
		\new{Now consider the output transition labeled with $o!$. We have to consider the three cases from Definition~\ref{def:conjunctionTIOA}.}
		\begin{itemize}
			\item \new{$o! \in \mathit{Act}_o^1\cap\mathit{Act}_o^2$. It follows directly from Lemma~\ref{lemma:conjunctionTSandAsharedaction} that $q_2 \xrightarrow{o!} q_3$ is a transition in $Y$.}
			
			\item \new{$o! \in \mathit{Act}_o^1 \setminus \mathit{Act}_o^2$. It follows directly from Corollary~\ref{cor:conjunctionTSandAnonsharedaction} that $q_2 \xrightarrow{o!} q_3$ is a transition in $Y$.}
			
			\item \new{$o! \in \mathit{Act}_o^2 \setminus \mathit{Act}_o^1$. It follows directly from Corollary~\ref{cor:conjunctionTSandAnonsharedaction} (where we switched $A^1$ and $A^2$) that $q_2 \xrightarrow{o!} q_3$ is a transition in $Y$.}
		\end{itemize}
		\new{Thus, in all three cases we have that $q_2 \xrightarrow{o!} q_3$ is a transition in $Y$. Therefore, we can conclude that ${q^X\xlongrightarrow{d'}{}^{\!\! Y} q_2} \wedge q_2\xlongrightarrow{o!}{}^{\!\! Y} q_3$ with $q_2, q_3 \in P$.}
		
		\new{Finally, consider the input transitions labeled with $i?$. Using the same argument as before, we can show that $q_2\xrightarrow{i?} q_4$ in $X$ is also a transition in $Y$, and $q_4\in P$. Therefore, we can conclude that ${q^X\xlongrightarrow{d'}{}^{\!\! Y} q_2} \wedge q_2\xlongrightarrow{o!}{}^{\!\! Y} q_3 \wedge \forall i?\in\mathit{Act}_i^Y: \exists q_4\in P: q_2\xlongrightarrow{i?}{}^{\!\! Y} q_4$ with $q_2, q_3, q_4 \in P$. Thus, the right-hand side is true.}
	\end{itemize}
	\new{Thus, we have shown that when the left-hand side is true for $q^X$ in $X$, it is also true for $q^X$ in $Y$; and that when the right-hand side is true for $q^X$ in $X$, it is also true for $q^X$ in $Y$. Thus, $q^X\in\Theta^Y(P)$. Since $q^X\in P$ was chosen arbitrarily, it holds for all states in $P$. Once we choose $P$ to be the fixed-point of $\Theta^X$, we have that $\Theta^X(P)\subseteq\Theta^Y(P)$.
	}
	
	\new{($\Theta^Y(P) \subseteq \Theta^X(P)$)
		Consider a state $q^Y \in P$. Because $P$ is a postfixed point of $\Theta^Y$, it follows that $p\in\Theta^X(Y)$. From the definition of $\Theta$, it follows that $q^Y \in \overline{\mathrm{err}^Y(\overline{P})}$ and $ q^Y \in \{q\in Q^Y\mid \forall d \geq 0: [\forall q_2\in Q^Y: q\xlongrightarrow{d}{}^{\!\! Y} q_2 \Rightarrow q_2\in P \wedge \forall i?\in \mathit{Act}_i^Y: \exists q_3\in P: q_2\xlongrightarrow{i?}{}^{\!\! Y} q_3]\ \vee [\exists d'\leq d \wedge \exists q_2,q_3\in P \wedge \exists o!\in\mathit{Act}_o^Y: q\xlongrightarrow{d'}{}^{\!\! Y} q_2 \wedge q_2\xlongrightarrow{o!}{}^{\!\! Y} q_3 \wedge \forall i?\in\mathit{Act}_i: \exists q_4\in P: q_2\xlongrightarrow{i?}{}^{\!\! Y} q_4]\}$. From Lemma~\ref{lemma:conjunctionTSandAsameerror} it follows directly that $q^X \in \overline{\mathrm{err}^X(\overline{P})}$. Now we only focus on the second part of the definition of $\Theta$.
	}
	
	\new{Consider a $d\in\mathbb{R}_{\geq 0}$. Then the left-hand side or the right-hand side of the disjunction is true (or both). }
	\begin{itemize}
		\item \new{Assume the left-hand side is true, i.e., $\forall q_2\in Q^Y: q^Y\xlongrightarrow{d}{}^{\!\! Y} q_2 \Rightarrow q_2\in P \wedge \forall i?\in \mathit{Act}_i^Y: \exists q_3\in P: q_2\xlongrightarrow{i?}{}^{\!\! Y} q_3$. Pick a $q_2 \in Q^Y$. The implication is true when $q^Y\arrownot\xlongrightarrow{d}{}^{\!\! Y} q_2$ or $q^Y\xlongrightarrow{d}{}^{\!\! Y} q_2 \wedge q_2\in P \wedge \forall i?\in \mathit{Act}_i^Y: \exists q_3\in P: q_2\xlongrightarrow{i?}{}^{\!\! Y} q_3$. 
		}
		
		\begin{itemize}
			\item \new{Consider the first case. From Lemma~\ref{lemma:conjunctionTSandAdelay} it follows that $q^Y \arrownot\xlongrightarrow{d}{}^{\!\! X}$. Note that $q^Y = (l^1, v^1, l^2, v^2)$. Thus the implication also holds for $q_2$ in $X$.
			}
			
			\item \new{Consider the second case. From Lemma~\ref{lemma:conjunctionTSandAdelay} we have that $q^Y \xlongrightarrow{d}{}^{\!\! Y} q_2$ implies that $q^Y \xlongrightarrow{d}{}^{\!\! X} q_2$, and from Definition~\ref{def:conjunctionTIOTS} of the conjunction for TIOTS that $q^Y = (q_1^1,q_1^2)$ and $q_2 = (q_2^1, q_2^2)$. Also, using Definition~\ref{def:semanticTIOA} of the semantic of a TIOA it follows for $i=1,2$ that $q_1^i = (l_1^i, v_1^i)$, $q_2^i=(l_1^i, v_1^i + d)$, $l_1^i \in \mathit{Loc}^i$, and $v_1^i \in [\mathit{Clk}^i \mapsto \mathbb{R}_{\geq 0}]$.
				Now, pick an $i?\in\mathit{Act}_i^Y$ with its corresponding $q_3$ according to the implication. We have to consider the three cases from Definition~\ref{def:conjunctionTIOTS}.
			}
			\begin{itemize}
				\item \new{$i?\in\mathit{Act}_i^1 \cap\mathit{Act}_i^2$. It follows directly from Lemma~\ref{lemma:conjunctionTSandAsharedaction} that $q_2 \xlongrightarrow{i?}{}^{\!\! X} q_3$.}
				
				\item \new{$i?\in\mathit{Act}_i^1\setminus\mathit{Act}_i^2$. From the fact that $q^Y \xlongrightarrow{d}{}^{\!\! X} q_2$\footnote{This fact is key for finalizing the proof of Theorem~\ref{thrm:conjunctionTSandA}: without adversarial pruning in that theorem, you cannot assume this, and you get stuck in proving that $v_3 \models \mathit{Inv}^2(l_1^2)$ and thus $v_3\models\mathit{Inv}^{A^1\wedge A^2}((l_3^1,l_1^2))$, i.e., you cannot prove that.}, it follows from Definitions~\ref{def:semanticTIOA} and~\ref{def:conjunctionTIOTS} that $v_1^2 + d\models\mathit{Inv}^2(l_1^2)$ (see also proof of Lemma~\ref{lemma:conjunctionTSandAdelay}). Observe that $v_1^2 + d[r\mapsto 0]_{r\in c^1} = v_1^2 + d$, so $v_3 \models \mathit{Inv}^2(l_1^2)$. Now it follows directly from Lemma~\ref{lemma:conjunctionTSandAnonsharedaction} that $q_2 \xlongrightarrow{i?}{}^{\!\! X} q_3$.
				}
				
				\item \new{$i?\in\mathit{Act}_i^2\setminus\mathit{Act}_i^1$.  From the fact that $q^Y \xlongrightarrow{d}{}^{\!\! X} q_2$, it follows from Definitions~\ref{def:semanticTIOA} and~\ref{def:conjunctionTIOTS} that $v_1^1 + d\models\mathit{Inv}^1(l_1^1)$ (see also proof of Lemma~\ref{lemma:conjunctionTSandAdelay}). Observe that $v_1^1 + d[r\mapsto 0]_{r\in c^2} = v_1^1 + d$, so $v_3 \models \mathit{Inv}^1(l_1^1)$. Now it follows directly from Lemma~\ref{lemma:conjunctionTSandAnonsharedaction} (where we switched $A^2$ and $A^2$) that $q_2 \xlongrightarrow{i?}{}^{\!\! X} q_3$.}
			\end{itemize}
			\new{Thus, in all three cases we can show that $q_2\xlongrightarrow{i?}{}^{\!\! Y} q_3$ implies $q_2 \xlongrightarrow{i?}{}^{\!\! X} q_3$. Since we have chosen an arbitrarily $i?\in\mathit{Act}_i^Y$, it holds for all $i?\in \mathit{Act}_i^Y$. Thus the implication also holds for $q_2$ in $X$.}
		\end{itemize}
		\new{Thus, in both cases the implication holds. Therefore, we can conclude that $q^Y\xlongrightarrow{d}{}^{\!\! X} q_2 \Rightarrow q_2\in P \wedge \forall i?\in \mathit{Act}_i^X: \exists q_3\in P: q_2\xlongrightarrow{i?}{}^{\!\! X} q_3$. As $q_2$ is chosen arbitrarily, it holds for all $q_2\in Q^X = Q^Y$. Therefore, the left-hand side is true.
		}
		
		\item \new{Assume the right-hand side is true, i.e., $\exists d'\leq d \wedge \exists q_2,q_3\in P \wedge \exists o!\in\mathit{Act}_o^Y: q\xlongrightarrow{d'}{}^{\!\! Y} q_2 \wedge q_2\xlongrightarrow{o!}{}^{\!\! Y} q_3 \wedge \forall i?\in\mathit{Act}_i: \exists q_4\in P: q_2\xlongrightarrow{i?}{}^{\!\! Y} q_4$. First, focus on the delay. From Lemma~\ref{lemma:conjunctionTSandAdelay} it follows that $q\xlongrightarrow{d'}{}^{\!\! Y} q_2$ implies $q\xlongrightarrow{d'}{}^{\!\! X} q_2$, and from Definition~\ref{def:conjunctionTIOTS} of the conjunction for TIOTS that $q^Y = (q_1^1,q_1^2)$ and $q_2 = (q_2^1, q_2^2)$. Also, using Definition~\ref{def:semanticTIOA} of the semantic of a TIOA it follows for $i=1,2$ that $q_1^i = (l_1^i, v_1^i)$, $q_2^i=(l_1^i, v_1^i + d')$, $l_1^i \in \mathit{Loc}^i$, and $v_1^i \in [\mathit{Clk}^i \mapsto \mathbb{R}_{\geq 0}]$. Now, consider the output transition labeled with $o!$. We have to consider the three cases from Definition~\ref{def:conjunctionTIOTS} of the conjunction for TIOTS.
		}
		\begin{itemize}
			\item \new{$o!\in\mathit{Act}_o^1\cap\mathit{Act}_o^2$. It follows directly from Lemma~\ref{lemma:conjunctionTSandAsharedaction} that $q_2 \xlongrightarrow{o!}{}^{\!\! X} q_3$.
			}
			
			\item \new{$o!\in\mathit{Act}_o^1\subset\mathit{Act}_o^2$. From the fact that $q^Y \xlongrightarrow{d'}{}^{\!\! X} q_2$, it follows from Definitions~\ref{def:semanticTIOA} and~\ref{def:conjunctionTIOTS} that $v_1^2 + d'\models\mathit{Inv}^2(l_1^2)$ (see also proof of Lemma~\ref{lemma:conjunctionTSandAdelay}). Observe that $v_1^2 + d'[r\mapsto 0]_{r\in c^1} = v_1^2 + d'$, so $v_3 \models \mathit{Inv}^2(l_1^2)$. Now it follows directly from Lemma~\ref{lemma:conjunctionTSandAnonsharedaction} that $q_2 \xlongrightarrow{o!}{}^{\!\! X} q_3$.
			}
			
			\item \new{$o!\in\mathit{Act}_o^2\subset\mathit{Act}_o^1$. From the fact that $q^Y \xlongrightarrow{d'}{}^{\!\! X} q_2$, it follows from Definitions~\ref{def:semanticTIOA} and~\ref{def:conjunctionTIOTS} that $v_1^1 + d'\models\mathit{Inv}^1(l_1^1)$ (see also proof of Lemma~\ref{lemma:conjunctionTSandAdelay}). Observe that $v_1^1 + d'[r\mapsto 0]_{r\in c^2} = v_1^1 + d'$, so $v_3 \models \mathit{Inv}^1(l_1^1)$. Now it follows directly from Lemma~\ref{lemma:conjunctionTSandAnonsharedaction} (where we switched $A^2$ and $A^2$) that $q_2 \xlongrightarrow{o!}{}^{\!\! X} q_3$.}
		\end{itemize} 
		\new{Thus, in all three cases we have that $q_2 \xrightarrow{o!}{}^{\!\! X} q_3$ is a transition in $X$. Therefore, we can conclude that ${q^Y\xlongrightarrow{d'}{}^{\!\! X} q_2} \wedge q_2\xlongrightarrow{o!}{}^{\!\! X} q_3$ with $q_2, q_3 \in P$. Thus, the right-hand side is true.}
		
		\new{Finally, consider the input transitions labeled with $i?$. Using the same argument as before, we can show that $q_2\xrightarrow{i?} q_4$ in $Y$ is also a transition in $X$, and $q_4\in P$. Therefore, we can conclude that ${q^Y\xlongrightarrow{d'}{}^{\!\! X} q_2} \wedge q_2\xlongrightarrow{o!}{}^{\!\! X} q_3 \wedge \forall i?\in\mathit{Act}_i^X: \exists q_4\in P: q_2\xlongrightarrow{i?}{}^{\!\! X} q_4$ with $q_2, q_3, q_4 \in P$. Thus, the right-hand side is true.}
	\end{itemize}
	\new{Thus, we have shown that when the left-hand side is true for $q^Y$ in $Y$, it is also true for $q^Y$ in $X$; and that when the right-hand side is true for $q^Y$ in $Y$, it is also true for $q^Y$ in $X$. Thus, $q^Y\in\Theta^X(P)$. Since $q^Y\in P$ was chosen arbitrarily, it holds for all states in $P$. Once we choose $P$ to be the fixed-point of $\Theta^Y$, we have that $\Theta^Y(P)\subseteq\Theta^X(P)$.}
\end{proof}

\subsection{Omitted proofs of Section~\ref{sec:compandcomp}}\label{app:proofs-comp}
\newtheorem*{T13}{Lemma~\ref{lemma:parallelcompositionTSconsistent}}
\begin{T13}
	\new{Given two locally consistent specifications $S^i = (Q^i,q_0^i,\mathit{Act}^i,\rightarrow^i), i=1,2$ where $\mathit{Act}_o^1 \cap \mathit{Act}_o^2 =\emptyset$. Then $S^1 \parallel S^2$ is locally consistent.}
\end{T13}
\begin{proof}
	\new{Since, $S^1$ and $S^2$ are locally consistent, the only reason why $S^1 \parallel S^2$ could be inconsistent is when a new error state is created by the parallel composition. We show by contradiction that this is not possible.}
	
	\new{Assume that state $q_1\in S^1 \parallel S^2$ is an error state. From Definition~\ref{def:error} of the error state it follows that $\exists d_1\in\mathbb{R}_{\geq 0}: q_1\arrownot\xlongrightarrow{d_1} \wedge \forall d_2\in\mathbb{R}_{\geq 0}\forall o!\in\mathit{Act}_o\forall q_2\in Q: q_1\xlongrightarrow{d_2} q_2 \Rightarrow q_2\arrownot\xlongrightarrow{o!}$. From Definition~\ref{def:parallelcompositionTIOTS} of the parallel composition for TIOTS it follows that (1) $q_1 = (q_1^1, q_1^2)$ with $q_1^1\in Q^1$ and $q_1^2\in Q^2$, and that either $q^1\arrownot\xlongrightarrow{d_1}{}^{\!\! 1}$ or $q^2\arrownot\xlongrightarrow{d_2}{}^{\!\! 2}$ (or both); (2) that $q_2 = (q_2^1, q_2^2)$ with $q_2^1\in Q^1$ and $q_2^2\in Q^2$, and that $q_1^1\xlongrightarrow{d_2}{}^{\!\! 1} q_2^1$ and $q_1^2\xlongrightarrow{d_2}{}^{\!\! 2} q_2^2$; and (3) that $o!\in\mathit{Act}_o^1$ and possibly $o?\in\mathit{Act}_i^2$, or $o!\in\mathit{Act}_o^2$ and possibly $o?\in\mathit{Act}_i^1$. In the next step we assume that $o!\in\mathit{Act}_o^1$ and possibly $o?\in\mathit{Act}_i^2$, as the other case is symmetrical. Consider two cases and Definition~\ref{def:parallelcompositionTIOTS}:
		\begin{itemize}
			\item $o?\in\mathit{Act}_i^2$. As $S^2$ is a specification, it is input-enabled. Therefore, $q_1\xlongrightarrow{d_2} q_2 \Rightarrow q_2\arrownot\xlongrightarrow{o!}$ implies that $q_1^1\xlongrightarrow{d_2} q_2^1 \Rightarrow q_2^1\arrownot\xlongrightarrow{o!}$.
			\item $o?\notin\mathit{Act}_i^2$. This directly results in that $q_1\xlongrightarrow{d_2} q_2 \Rightarrow q_2\arrownot\xlongrightarrow{o!}$ implies that $q_1^1\xlongrightarrow{d_2} q_2^1 \Rightarrow q_2^1\arrownot\xlongrightarrow{o!}$.
		\end{itemize}
		Applying the above reasoning for all output actions and knowing that $\mathit{Act}_o = \mathit{Act}_o^1\cup\mathit{Act}_o^2$ from Definition~\ref{def:parallelcompositionTIOTS}, it follows that $\forall o!\in\mathit{Act}_o^1: q_1^1\xlongrightarrow{d_2}{}^{\!\! 1} q_2^1 \implies q_2^1\arrownot\xlongrightarrow{o!}$ and $\forall o!\in\mathit{Act}_o^2: q_1^2\xlongrightarrow{d_2}{}^{\!\! 2} q_2^2 \implies q_2^2\arrownot\xlongrightarrow{o!}$. As this is independent of the actual value of $d_2$, it holds for all $d_2$.}
	
	\new{Finally, since either $q^1\arrownot\xlongrightarrow{d_1}{}^{\!\! 1}$ or $q^2\arrownot\xlongrightarrow{d_2}{}^{\!\! 2}$ (or both), it follows that either $\exists d_1\in\mathbb{R}_{\geq 0}: q_1^1\arrownot\xlongrightarrow{d_1}{}^{\!\! 1} \wedge \forall d_2\in\mathbb{R}_{\geq 0}\forall o!\in\mathit{Act}_o^1\forall q_2^1\in Q^1: q_1^1\xlongrightarrow{d_2}{}^{\!\! 1} q_2^1 \Rightarrow q_2^1\arrownot\xlongrightarrow{o!}{}^{\!\! 1}$ or $\exists d_1\in\mathbb{R}_{\geq 0}: q_1^2\arrownot\xlongrightarrow{d_1}{}^{\!\! 2} \wedge \forall d_2\in\mathbb{R}_{\geq 0}\forall o!\in\mathit{Act}_o^2\forall q_2^2\in Q^1: q_1^2\xlongrightarrow{d_2}{}^{\!\! 2} q_2^2 \Rightarrow q_2^2\arrownot\xlongrightarrow{o!}{}^{\!\! 2}$ (or both). Therefore, either $q_1^1$ or $q_1^2$ (or both) is an error state, which contradicts with the antecedent stating that $S^1$ and $S^2$ are consistent.}
\end{proof}

\newtheorem*{T14}{Lemma~\ref{thm:precongruence}}
\begin{T14}
	Refinement is a pre-congruence with respect to parallel composition: for any specifications $S^1$, $S^2$, and $T$ such that $S^1 \leq S^2$ and $S^1$ is composable with $T$, we have that $S^2$ is composable with $T$ and $S^1\parallel T \leq S^2\parallel T$. 
\end{T14}
\begin{proof}
	\new{
		$S^1 \leq S^2$ implies that $\mathit{Act}_o^{S^2} \subseteq \mathit{Act}_o^{S^2}$ (see Definition~\ref{def:refinement}), and $S^1$ is composable with $T$ implies that $\mathit{Act}_o^{S^1} \cap \mathit{Act}_o^{T} = \emptyset$. Combining this results immediately in that $\mathit{Act}_o^{S^2} \cap \mathit{Act}_o^{T} = \emptyset$, thus $S^2$ is composable with $T$. Furthermore, since $S^1 \leq S^2$, there exists a relation $R\in Q^1\times Q^2$ with the properties given in Definition~\ref{def:refinement} of the refinement. Construct relation $R' = \{ ((q^1, q^T),(q^2,q^T)) \in Q^{S^1 \parallel T}\times Q^{S^2 \parallel T} \mid (q^1,q^2) \in R\}$. We show that $R'$ witnesses $S^1\parallel T \leq S^2\parallel T$. Consider the five cases of refinement for a state pair $((q_1^1, q_1^T),(q_1^2,q_1^T)) \in R'$.
	}
	
	\begin{enumerate}
		
		\item \new{$(q_1^2,q_1^T)\xlongrightarrow{i?}{}^{\!\! S^2 \parallel T} (q_2^2,q_2^T)$ for some $(q_2^2,q_2^T)\in Q^{S^2 \parallel T}$ and $i?\in\mathit{Act}_i^{S^2 \parallel T} \cap \mathit{Act}_i^{S^1 \parallel T}$. Consider the five feasible combinations for input action $i?$ using Definition~\ref{def:parallelcompositionTIOTS} such that $i?\in\mathit{Act}_i^{S^2 \parallel T} \cap \mathit{Act}_i^{S^1 \parallel T}$.
		}
		\begin{itemize}
			\item \new{$i?\in\mathit{Act}_i^{S^1}$, $i?\in\mathit{Act}_i^{S^2}$, and $i?\in\mathit{Act}_i^{T}$. In this case, it follows from Definition~\ref{def:parallelcompositionTIOTS} that $q_1^2\xlongrightarrow{i?}{}^{\!\! S^2} q_2^2$ and $q_1^T\xlongrightarrow{i?}{}^{\!\! T} q_2^T$. Now, using $R$ and Definition~\ref{def:refinement}, it follows that $q_1^1\xlongrightarrow{i?}{}^{\!\! S^1} q_2^1$ and $(q_2^1,q_2^2) \in R$. Thus, following Definition~\ref{def:parallelcompositionTIOTS} again, we have that $(q_1^1,q_1^T)\xlongrightarrow{i?}{}^{\!\! S^1 \parallel T} (q_2^1,q_2^T)$. From the construction of $R'$ we confirm that $((q_2^1, q_2^T),(q_2^2,q_2^T)) \in R'$.
			}
			
			\item \new{$i?\in\mathit{Act}_i^{S^1}$, $i?\in\mathit{Act}_i^{S^2}$, and $i?\notin\mathit{Act}^{T}$. In this case, it follows from Definition~\ref{def:parallelcompositionTIOTS} that $q_1^2\xlongrightarrow{i?}{}^{\!\! S^2} q_2^2$ and $q_1^T = q_2^T$. Now, using $R$ and Definition~\ref{def:refinement}, it follows that $q_1^1\xlongrightarrow{i?}{}^{\!\! S^1} q_2^1$ and $(q_2^1,q_2^2) \in R$. Thus, following Definition~\ref{def:parallelcompositionTIOTS} again, we have that $(q_1^1,q_1^T)\xlongrightarrow{i?}{}^{\!\! S^1 \parallel T} (q_2^1,q_2^T)$ with $q_1^T = q_2^T$. From the construction of $R'$ we confirm that $((q_2^1, q_2^T),(q_2^2,q_2^T)) \in R'$. 
			}
			
			\item \new{$i?\in\mathit{Act}_i^{S^1}$, $i\notin\mathit{Act}^{S^2}$, and $i?\in\mathit{Act}_i^{T}$. This case is infeasible, as Definition~\ref{def:refinement} of refinement requires that $\mathit{Act}_i^{S^1} \subseteq \mathit{Act}_i^{S^2}$.
			}
			
			\item \new{$i?\notin\mathit{Act}^{S^1}$, $i?\in\mathit{Act}_i^{S^2}$, and $i?\in\mathit{Act}_i^{T}$. In this case, it follows from Definition~\ref{def:parallelcompositionTIOTS} that $q_1^2\xlongrightarrow{i?}{}^{\!\! S^2} q_2^2$ and $q_1^T\xlongrightarrow{i?}{}^{\!\! T} q_2^T$. Now, using $R$ and Definition~\ref{def:refinement}, it follows that $(q_2^1,q_2^2) \in R$ and $q_1^1 = q_2^1$. Thus, following Definition~\ref{def:parallelcompositionTIOTS} again, we have that $(q_1^1,q_1^T)\xlongrightarrow{i?}{}^{\!\! S^1 \parallel T} (q_2^1,q_2^T)$ and $q_1^1 = q_2^1$. From the construction of $R'$ we confirm that $((q_2^1, q_2^T),(q_2^2,q_2^T)) \in R'$.
			}
			
			\item \new{$i?\notin\mathit{Act}^{S^1}$, $i?\notin\mathit{Act}^{S^2}$, and $i?\in\mathit{Act}_i^{T}$. In this case, it follows from Definition~\ref{def:parallelcompositionTIOTS} that $q_1^T\xlongrightarrow{i?}{}^{\!\! T} q_2^T$ and $q_1^2 = q_2^2$. Following Definition~\ref{def:parallelcompositionTIOTS} again, we have that $(q_1^1,q_1^T)\xlongrightarrow{i?}{}^{\!\! S^1 \parallel T} (q_2^1,q_2^T)$ and $q_1^1 = q_2^1$. From the construction of $R'$ we confirm that $((q_2^1, q_2^T),(q_2^2,q_2^T)) \in R'$.}
		\end{itemize}
		\new{So, in all feasible cases we can show that $(q_1^1,q_1^T)\xlongrightarrow{i?}{}^{\!\! S^1 \parallel T} (q_2^1,q_2^T)$ and $((q_2^1, q_2^T),(q_2^2,q_2^T)) \in R'$.
		}
		
		\item \new{$(q_1^2,q_1^T)\xlongrightarrow{i?}{}^{\!\! S^2 \parallel T} (q_2^2,q_2^T)$ for some $(q_2^2,q_2^T)\in Q^{S^2 \parallel T}$ and $i?\in\mathit{Act}_i^{S^2 \parallel T} \setminus \mathit{Act}_i^{S^1 \parallel T}$. In this case it follows from Definition~\ref{def:refinement} and~\ref{def:parallelcompositionTIOTS} that $i?\in\mathit{Act}_i^{S^2}$, $i?\notin\mathit{Act}_i^{S^1}$, and $i?\notin\mathit{Act}_i^{T}$. Therefore, from the same definitions, we have that $q_1^2\xlongrightarrow{i?}{}^{\!\! S^2} q_2^2$ and $q_1^T = q_2^T$. Now, using $R$ and Definition~\ref{def:refinement}, it follows that $(q_2^1,q_2^2) \in R$ and $q_1^1 = q_2^1$. From the construction of $R'$ we confirm that $((q_2^1, q_2^T),(q_2^2,q_2^T)) \in R'$.
		}
		
		\item \new{$(q_1^1, q_1^T)\xlongrightarrow{o!}{}^{\!\! S^1 \parallel T} (q_2^1, q_2^T)$ for some $(q_2^1, q_2^T)\in Q^{S^1 \parallel T}$ and $o!\in\mathit{Act}_o^{S^1 \parallel T} \cap \mathit{Act}_o^{S^2 \parallel T}$. Consider the eight feasible combinations for output action $o!$ using Definition~\ref{def:parallelcompositionTIOTS} such that $o!\in\mathit{Act}_o^{S^2 \parallel T} \cap \mathit{Act}_o^{S^1 \parallel T}$, already taking into account that if $o\in\mathit{Act}^{S^1}$ and $o\in\mathit{Act}^{S^2}$ then $o!\in\mathit{Act}_o^{S^1}$ and $o!\in\mathit{Act}_o^{S^2}$ or $o?\in\mathit{Act}_i^{S^1}$ and $o?\in\mathit{Act}_i^{S^2}$ (see Definition~\ref{def:refinement}).}
		\begin{itemize}
			\item \new{$o!\in\mathit{Act}_o^{S^1}$, $o!\in\mathit{Act}_o^{S^2}$, and $o\in\mathit{Act}^{T}$\footnote{With this notation, we indicate that it does not matter whether $o!\in\mathit{Act}_o^T$ or $o?\in\mathit{Act}_i^T$.}. In this case, it follows from Definition~\ref{def:parallelcompositionTIOTS} that $q_1^1\xlongrightarrow{o!}{}^{\!\! S^1} q_2^1$ and $q_1^T\xlongrightarrow{o}{}^{\!\! T} q_2^T$. Now, using $R$ and Definition~\ref{def:refinement}, it follows that $q_1^2\xlongrightarrow{o!}{}^{\!\! S^2} q_2^2$ and $(q_2^1,q_2^2) \in R$. Thus, following Definition~\ref{def:parallelcompositionTIOTS} again, we have that $(q_1^2,q_1^T)\xlongrightarrow{o!}{}^{\!\! S^2 \parallel T} (q_2^2,q_2^T)$. From the construction of $R'$ we confirm that $((q_2^1, q_2^T),(q_2^2,q_2^T)) \in R'$.
			}
			
			\item \new{$o?\in\mathit{Act}_i^{S^1}$, $o?\in\mathit{Act}_i^{S^2}$, and $o!\in\mathit{Act}_o^{T}$. In this case, it follows from Definition~\ref{def:parallelcompositionTIOTS} that $q_1^1\xlongrightarrow{o?}{}^{\!\! S^1} q_2^1$ and $q_1^T\xlongrightarrow{o!}{}^{\!\! T} q_2^T$. As $S^2$ is input-enabled, it follows that $q_1^2\xlongrightarrow{o?}{}^{\!\! S^2} q_2^2$ for some $q_2^2 \in Q^2$. Now, using $R$ and Definition~\ref{def:refinement}, it follows that $(q_2^1,q_2^2) \in R$. Thus, following Definition~\ref{def:parallelcompositionTIOTS} again, we have that $(q_1^2,q_1^T)\xlongrightarrow{o!}{}^{\!\! S^2 \parallel T} (q_2^2,q_2^T)$. From the construction of $R'$ we confirm that $((q_2^1, q_2^T),(q_2^2,q_2^T)) \in R'$.
			}
			
			\item \new{$o!\in\mathit{Act}_o^{S^1}$, $o!\in\mathit{Act}_o^{S^2}$, and $o!\notin\mathit{Act}^{T}$. In this case, it follows from Definition~\ref{def:parallelcompositionTIOTS} that $q_1^1\xlongrightarrow{o!}{}^{\!\! S^1} q_2^1$ and $q_1^T = q_2^T$. Now, using $R$ and Definition~\ref{def:refinement}, it follows that $q_1^2\xlongrightarrow{o!}{}^{\!\! S^2} q_2^2$ and $(q_2^1,q_2^2) \in R$. Thus, following Definition~\ref{def:parallelcompositionTIOTS} again, we have that $(q_1^2,q_1^T)\xlongrightarrow{o!}{}^{\!\! S^2 \parallel T} (q_2^2,q_2^T)$ with $q_1^T = q_2^T$. From the construction of $R'$ we confirm that $((q_2^1, q_2^T),(q_2^2,q_2^T)) \in R'$. 
			}
			
			\item \new{$o!\in\mathit{Act}_o^{S^1}$, $o!\notin\mathit{Act}^{S^2}$, and $o!\in\mathit{Act}_o^{T}$. In this case, it follows from Definition~\ref{def:parallelcompositionTIOTS} that $q_1^1\xlongrightarrow{o!}{}^{\!\! S^1} q_2^1$ and $q_1^T\xlongrightarrow{o!}{}^{\!\! T} q_2^T$. Now, using $R$ and Definition~\ref{def:refinement}, it follows that $(q_2^1,q_2^2) \in R$ and $q_1^2 = q_2^2$. Thus, following Definition~\ref{def:parallelcompositionTIOTS} again, we have that $(q_1^2,q_1^T)\xlongrightarrow{o!}{}^{\!\! S^2 \parallel T} (q_2^2,q_2^T)$ and $q_1^2 = q_2^2$. From the construction of $R'$ we confirm that $((q_2^1, q_2^T),(q_2^2,q_2^T)) \in R'$.
			}
			
			\item \new{$o?\in\mathit{Act}_i^{S^1}$, $o!\notin\mathit{Act}^{S^2}$, and $o!\in\mathit{Act}_o^{T}$. This case is infeasible, as Definition~\ref{def:refinement} of refinement requires that $\mathit{Act}_i^{S^1} \subseteq \mathit{Act}_i^{S^2}$.
			}
			
			\item \new{$o!\notin\mathit{Act}^{S^1}$, $o!\in\mathit{Act}_o^{S^2}$, and $o!\in\mathit{Act}_o^{T}$. This case is infeasible, as Definition~\ref{def:refinement} of refinement requires that $\mathit{Act}_o^{S^2} \subseteq \mathit{Act}_o^{S^1}$.
			}
			
			\item \new{$o!\notin\mathit{Act}^{S^1}$, $o?\in\mathit{Act}_i^{S^2}$, and $o!\in\mathit{Act}_o^{T}$. In this case, it follows from Definition~\ref{def:parallelcompositionTIOTS} that $q_1^T\xlongrightarrow{o!}{}^{\!\! T} q_2^T$ and $q_1^1 = q_2^1$. As $S^2$ is input-enabled, it follows that $q_1^2\xlongrightarrow{o?}{}^{\!\! S^2} q_2^2$ for some $q_2^2 \in Q^2$. Now, using $R$ and Definition~\ref{def:refinement}, it follows that $(q_2^1,q_2^2) \in R$. Thus, following Definition~\ref{def:parallelcompositionTIOTS} again, we have that $(q_1^2,q_1^T)\xlongrightarrow{o!}{}^{\!\! S^2 \parallel T} (q_2^2,q_2^T)$. From the construction of $R'$ we confirm that $((q_2^1, q_2^T),(q_2^2,q_2^T)) \in R'$.
			}
			
			\item \new{$o!\notin\mathit{Act}^{S^1}$, $o!\notin\mathit{Act}^{S^2}$, and $o!\in\mathit{Act}_i^{T}$. In this case, it follows from Definition~\ref{def:parallelcompositionTIOTS} that $q_1^T\xlongrightarrow{o!}{}^{\!\! T} q_2^T$ and $q_1^1 = q_2^1$. Following Definition~\ref{def:parallelcompositionTIOTS} again, we have that $(q_1^2,q_1^T)\xlongrightarrow{o!}{}^{\!\! S^2 \parallel T} (q_2^2,q_2^T)$ and $q_1^2 = q_2^2$. From the construction of $R'$ we confirm that $((q_2^1, q_2^T),(q_2^2,q_2^T)) \in R'$.}
		\end{itemize}
		\new{So, in all feasible cases we can show that $(q_1^2,q_1^T)\xlongrightarrow{o!}{}^{\!\! S^2 \parallel T} (q_2^2,q_2^T)$ and $((q_2^1,q_2^T), (q_2^2,q_2^T)) \in R'$.}
		
		\item \new{$(q_1^1, q_1^T)\xlongrightarrow{o!}{}^{\!\! S^1 \parallel T} (q_2^1, q_2^T)$ for some $(q_2^1, q_2^T)\in Q^{S^1 \parallel T}$ and $o!\in\mathit{Act}_o^{S^1 \parallel T} \setminus \mathit{Act}_o^{S^2 \parallel T}$. In this case it follows from Definitions~\ref{def:refinement} and~\ref{def:parallelcompositionTIOTS} that $o!\in\mathit{Act}_o^{S^1}$, $o\notin\mathit{Act}^{S^2}$, and $o\notin\mathit{Act}^{T}$. Therefore, from the same definitions, we have that $q_1^1\xlongrightarrow{o!}{}^{\!\! S^1} q_2^1$ and $q_1^T = q_2^T$. Now, using $R$ and Definition~\ref{def:refinement}, it follows that $(q_2^1,q_2^2) \in R$ and $q_1^2 = q_2^2$. From the construction of $R'$ we confirm that $((q_2^1,q_2^T), (q_2^2,q_2^T)) \in R'$.}
		
		\item \new{$(q_1^1, q_1^T)\xlongrightarrow{d}{}^{\!\! S^1 \parallel T} (q_2^1, q_2^T)$ for some $(q_2^1, q_2^T)\in Q^{S^1 \parallel T}$ and $d\in \mathbb{R}_{\geq 0}$. In this case, it follows from Definition~\ref{def:parallelcompositionTIOTS} that $q_1^2\xlongrightarrow{d}{}^{\!\! S^1} q_2^2$ and $q_1^T\xlongrightarrow{d}{}^{\!\! T} q_2^T$. Now, using $R$ and Definition~\ref{def:refinement}, it follows that $q_1^2\xlongrightarrow{d}{}^{\!\! S^2} q_2^2$ and $(q_2^1,q_2^2) \in R$. Thus, following Definition~\ref{def:parallelcompositionTIOTS} again, we have that $(q_1^2,q_1^T)\xlongrightarrow{d}{}^{\!\! S^2 \parallel T} (q_2^2,q_2^T)$. From the construction of $R'$ we confirm that $((q_2^1,q_2^T), (q_2^2,q_2^T)) \in R'$.}
	\end{enumerate}
\end{proof}

\newtheorem*{T15}{Lemma~\ref{lemma:cooperativepruning}}
\begin{T15}
	\new{Given a specification $S = (Q^S,s_0,\mathit{Act}^S,\rightarrow^S)$ and its cooperatively pruned subspecification $S^{\forall}$. It holds that $S = S^{\forall}$.}
\end{T15}
\begin{proof}
	\new{The idea of pruning is to remove error states and related transitions from a specification that violate the independent progress property, as all states of any implementation of that specification need to have independent progress, see Definition~\ref{def:implementation}. So, for a state $q_{\mathrm{imerr}} \in \mathrm{imerr}^S$ of $S$ (see Definition~\ref{def:immediateerror}), it holds that $(\exists d\in\mathbb{R}_{\geq 0}: q_{\mathrm{imerr}}\arrownot\xlongrightarrow{d}) \wedge \forall d\in\mathbb{R}_{\geq 0}\forall o!\in\mathit{Act}_o^S\forall q'\in Q^S: q_{\mathrm{imerr}}\xlongrightarrow{d} q' \Rightarrow q'\arrownot\xlongrightarrow{o!}$.}
	
	\new{Now, consider a specification $T = (t, t, \mathit{Act}^T, \rightarrow^T)$ with a single state $t$, $\mathit{Act}^T = \mathit{Act}_o^T = \mathit{Act}_i^S \cup \{\tau\}$ with $\tau\notin\mathit{Act}^S$, and $\rightarrow^T = \{(t,a,t) \mid a \in \mathit{Act}^T\} \cup \{(t, d, t) \mid d \in \mathbb{R}_{\geq 0}\}$. The unique event $\tau$ is present to ensure that the following argument holds in case $S$ does not have any input actions. In the composition $S\parallel T$, it still holds that $(q_{\mathrm{imerr}}, t) \xlongrightarrow{d} (q', t)$ (see Definition~\ref{def:parallelcompositionTIOTS}). Since a specification is input enabled, Definition~\ref{def:specification}, we know that in the composition $S\parallel T$ there exist an output action $o!\in\mathit{Act}^T$ such that $(q',t) \xlongrightarrow{o!}$. Thus, in the composition $S\parallel T$, the state $(q_{\mathrm{imerr}}, t)$ is no longer an immediate error state. As this holds for all $q_{\mathrm{imerr}} \in \mathrm{imerr}^S$, we have that $\mathrm{imerr}^{S\parallel T} = \emptyset$. And once $\mathrm{imerr}^{S\parallel T} = \emptyset$, we have that $\mathrm{err}^{S\parallel T}(\emptyset) = \emptyset$ and therefore $\mathrm{incons}^{S\parallel T} = \emptyset$ (using the fixed-point operator $\pi$). Thus for this $T$ we need to keep all states of $S$ in $S^{\forall}$ to ensure that $\mod{S\parallel T} = \mod{S^{\forall} \parallel T}$. }
\end{proof}

\subsection{Omitted proofs of Section~\ref{sec:quotient}}\label{app:proofs-quotient}

\begin{lemma}\label{lemma:quotientparallelcompositionalphabet}
	For any two specifications $S$ and $T$ such that the quotient $T\quotient S$ is defined, and for any implementation $X$ over the same alphabet as $T\quotient S$, we have that $S \parallel X$ is defined, $\mathit{Act}_i^{S \parallel X} = \mathit{Act}_i^T$ and $\mathit{Act}_0^{S \parallel X} = \mathit{Act}_o^S \cup \mathit{Act}_o^T \cup \mathit{Act}_i^S\setminus\mathit{Act}_i^T$.
\end{lemma}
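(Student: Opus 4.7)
The statement is essentially a bookkeeping exercise about alphabets, so the plan is to unfold the two relevant definitions and compute. The key inputs are: the side condition for the quotient to be defined, namely $\mathit{Act}_o^S \cap \mathit{Act}_i^T = \emptyset$; the disjointness of inputs and outputs in any single specification ($\mathit{Act}_i^S \cap \mathit{Act}_o^S = \emptyset$ and $\mathit{Act}_i^T \cap \mathit{Act}_o^T = \emptyset$); and the description of the alphabet of $T \quotient S$ from Definition~\ref{def:quotientTIOTS}, which gives $\mathit{Act}_i^{T\quotient S} = \mathit{Act}_i^T \cup \mathit{Act}_o^S$ and $\mathit{Act}_o^{T\quotient S} = (\mathit{Act}_o^T \setminus \mathit{Act}_o^S) \cup (\mathit{Act}_i^S \setminus \mathit{Act}_i^T)$. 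Since $X$ has the same alphabet as $T\quotient S$, these formulas give $\mathit{Act}_i^X$ and $\mathit{Act}_o^X$ directly.

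First I would check that $S\parallel X$ is defined, which by Definition~\ref{def:parallelcompositionTIOTS} requires $\mathit{Act}_o^S \cap \mathit{Act}_o^X = \emptyset$. This splits into $\mathit{Act}_o^S \cap (\mathit{Act}_o^T \setminus \mathit{Act}_o^S)$, which is empty by construction, and $\mathit{Act}_o^S \cap (\mathit{Act}_i^S \setminus \mathit{Act}_i^T)$, which is empty because $\mathit{Act}_o^S$ and $\mathit{Act}_i^S$ are disjoint. So composability holds.

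Next I would compute $\mathit{Act}_i^{S\parallel X} = (\mathit{Act}_i^S \setminus \mathit{Act}_o^X) \cup (\mathit{Act}_i^X \setminus \mathit{Act}_o^S)$. For the first term, subtracting $(\mathit{Act}_i^S \setminus \mathit{Act}_i^T)$ from $\mathit{Act}_i^S$ leaves $\mathit{Act}_i^S \cap \mathit{Act}_i^T$, and subtracting $(\mathit{Act}_o^T \setminus \mathit{Act}_o^S)$ afterwards removes nothing because $\mathit{Act}_i^T \cap \mathit{Act}_o^T = \emptyset$; so this term equals $\mathit{Act}_i^S \cap \mathit{Act}_i^T$. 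For the second term, $(\mathit{Act}_i^T \cup \mathit{Act}_o^S)\setminus \mathit{Act}_o^S = \mathit{Act}_i^T \setminus \mathit{Act}_o^S$, and by the quotient's side condition this equals $\mathit{Act}_i^T$. The union of the two terms is $\mathit{Act}_i^T$, as required.

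Finally I would compute $\mathit{Act}_o^{S\parallel X} = \mathit{Act}_o^S \cup \mathit{Act}_o^X = \mathit{Act}_o^S \cup (\mathit{Act}_o^T \setminus \mathit{Act}_o^S) \cup (\mathit{Act}_i^S \setminus \mathit{Act}_i^T) = \mathit{Act}_o^S \cup \mathit{Act}_o^T \cup (\mathit{Act}_i^S \setminus \mathit{Act}_i^T)$, matching the lemma. There is no real obstacle: the only subtlety is remembering to invoke $\mathit{Act}_o^S \cap \mathit{Act}_i^T = \emptyset$ at the right place, and to use the input/output disjointness inside $S$.
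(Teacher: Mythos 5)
Your proposal is correct and follows essentially the same route as the paper's proof: unfold Definition~\ref{def:parallelcompositionTIOTS} and Definition~\ref{def:quotientTIOTS} and simplify with set algebra, using $\mathit{Act}_o^S\cap\mathit{Act}_i^T=\emptyset$ for the input computation and the internal input/output disjointness of $S$ and $T$ where needed. The paper carries out the input-alphabet computation as one long chain of rewrites whereas you split the two terms and simplify each separately, but the underlying argument is identical.
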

\begin{proof}
	\new{We will first show that $S \parallel X$ is defined. This boils down to show that $S$ and $X$ are composable., i.e., $\mathit{Act}_o^S \cap \mathit{Act}_o^X = \emptyset$. From Definition~\ref{def:quotientTIOTS} and the assumption that $X$ has the same alphabet as $T\quotient S$, it follows that $\mathit{Act}_o^X = \mathit{Act}_o^T\setminus\mathit{Act}_o^S \cup \mathit{Act}_i^S\setminus\mathit{Act}_i^T$. Thus it holds that $\mathit{Act}_o^S \cap \mathit{Act}_o^X = \emptyset$.
	}
	
	\new{To show that $\mathit{Act}_i^{S \parallel X} = \mathit{Act}_i^T$, we follow Definition~\ref{def:parallelcompositionTIOTS} of the parallel composition and Definition~\ref{def:quotientTIOTS} of the quotient and use careful rewriting to get to this conclusion.
		\begin{align*}
			\mathit{Act}_i^{S \parallel X} &= \mathit{Act}_i^S\setminus \mathit{Act}_o^X \cup \mathit{Act}_i^X \setminus\mathit{Act}_o^S \\
			&= \mathit{Act}_i^S\setminus (\mathit{Act}_o^T\setminus\mathit{Act}_o^S \cup \mathit{Act}_i^S\setminus\mathit{Act}_i^T) \cup (\mathit{Act}_i^T\cup\mathit{Act}_o^S)\setminus\mathit{Act}_o^S \\
			&= \mathit{Act}_i^S\setminus (\mathit{Act}_o^T\setminus\mathit{Act}_o^S \cup \mathit{Act}_i^S\setminus\mathit{Act}_i^T) \cup \mathit{Act}_i^T\setminus\mathit{Act}_o^S \cup \mathit{Act}_o^S\setminus\mathit{Act}_o^S \\
			&= \mathit{Act}_i^S\setminus (\mathit{Act}_o^T\setminus\mathit{Act}_o^S \cup \mathit{Act}_i^S\setminus\mathit{Act}_i^T) \cup \mathit{Act}_i^T \\
			&= \left(\mathit{Act}_i^S\setminus (\mathit{Act}_o^T\setminus\mathit{Act}_o^S) \cap \mathit{Act}_i^S\setminus (\mathit{Act}_i^S\setminus\mathit{Act}_i^T)\right) \cup \mathit{Act}_i^T \\
			&= \left(\mathit{Act}_i^S\setminus (\mathit{Act}_o^T\setminus\mathit{Act}_o^S) \cap \mathit{Act}_i^S\cap\mathit{Act}_i^T\right) \cup \mathit{Act}_i^T \\
			&= \left(\left((\mathit{Act}_i^S \cap \mathit{Act}_o^S) \cup (\mathit{Act}_i^S \setminus \mathit{Act}_o^T)\right) \cap \mathit{Act}_i^S\cap\mathit{Act}_i^T\right) \cup \mathit{Act}_i^T \\
			&= \left( \mathit{Act}_i^S \setminus \mathit{Act}_o^T\cap \mathit{Act}_i^S\cap\mathit{Act}_i^T\right) \cup \mathit{Act}_i^T \\
			&= \left( \mathit{Act}_i^S \cap \mathit{Act}_i^S\cap\mathit{Act}_i^T \right) \setminus \mathit{Act}_o^T \cup \mathit{Act}_i^T  \\
			&= \left( \mathit{Act}_i^S\cap\mathit{Act}_i^T \right) \setminus \mathit{Act}_o^T \cup \mathit{Act}_i^T \\
			&= \left( \mathit{Act}_i^S\cap (\mathit{Act}_i^T \setminus \mathit{Act}_o^T) \right) \cup \mathit{Act}_i^T \\
			&= \left( \mathit{Act}_i^S\cap \mathit{Act}_i^T \right) \cup \mathit{Act}_i^T \\
			&= \mathit{Act}_i^T 
		\end{align*}
	}
	
	\new{To show that $\mathit{Act}_0^{S \parallel X} = \mathit{Act}_o^S \cup \mathit{Act}_o^T \cup \mathit{Act}_i^S\setminus\mathit{Act}_i^T$, we follow again Definition~\ref{def:parallelcompositionTIOTS} of the parallel composition and Definition~\ref{def:quotientTIOTS} of the quotient and use careful rewriting to get to this conclusion.
		\begin{align*}
			\mathit{Act}_0^{S \parallel X} &= \mathit{Act}_o^S \cup \mathit{Act}_o^X \\
			&= \mathit{Act}_o^S \cup (\mathit{Act}_o^T\setminus\mathit{Act}_o^S \cup \mathit{Act}_i^S\setminus\mathit{Act}^T) \\
			&= \mathit{Act}_o^S \cup \mathit{Act}_o^T \cup \mathit{Act}_i^S\setminus\mathit{Act}_i^T
		\end{align*}	
	}
\end{proof}

\newtheorem*{T16}{Lemma~\ref{thm:quotient_correct}}
\begin{T16}
	For any two specifications $S$ and $T$ such that the quotient $T\quotient S$ is defined, and for any implementation $X$ over the same alphabet as $T\quotient S$, we have that $S \parallel X$ is defined and $S \parallel X \leq T$ iff $X \leq T \quotient S$.     
\end{T16}
\begin{proof}
	\new{It is shown in Lemma~\ref{lemma:quotientparallelcompositionalphabet} that $S \parallel X$ is defined. The alphabet pre-condition of Definition~\ref{def:refinement} is satisfied for $X \leq T \quotient S$ by definition of $X$; using Lemma~\ref{lemma:quotientparallelcompositionalphabet} we can see that this is also the case for $S \parallel X \leq T$. So we only have to show that $S \parallel X \leq T$ iff $X \leq T \quotient S$.
	}
	
	\new{($S \parallel X \leq T \Rightarrow X \leq T \quotient S$) Since $S\parallel X \leq T$, it follows from Definition~\ref{def:refinement} of refinement that there exists a relation $R \in Q^{S\parallel X} \times Q^T$ that witness the refinement. Note that $Q^{S \parallel X} = Q^S \times Q^X$ according to Definition~\ref{def:parallelcompositionTIOTS}. Construct relation $R' = \{(q_1^X,(q_1^T,q_1^S))\in Q^X \times Q^{T \quotient S} \mid ((q_1^S,q_1^X), q_1^T) \in R\} \cup \{ (q_1^X,u)\in Q^X \times Q^{T \quotient S} \mid q_1^X \in Q^X \}$. We will show that $R'$ witnesses $X\leq T\quotient S$. First consider the five cases of Definition~\ref{def:refinement} for a state pair $(q_1^X, (q_1^T,q_1^S)) \in R'$.
	}
	\begin{enumerate}
		\item \new{$(q_1^T,q_1^S)\xlongrightarrow{i?}{}^{\!\! T\quotient S} (q_2^T,q_2^S)$ for some $ (q_2^T,q_2^S)\in Q^{T\quotient S}$ and $i?\in\mathit{Act}_i^{T\quotient S} \cap \mathit{Act}_i^X$. By definition of $X$ it follows that $\mathit{Act}_i^{T\quotient S} = \mathit{Act}_i^X$. Consider the following five possible cases from Definition~\ref{def:quotientTIOTS} of the quotient that might result in $i?\in\mathit{Act}_i^{T\quotient S} (= \mathit{Act}_i^T \cup \mathit{Act}_o^S)$.}
		\begin{itemize}
			\item \new{$i?\in\mathit{Act}_i^T$ and $i!\in\mathit{Act}_o^S$. This case is actually not feasible, since Definition~\ref{def:quotientTIOTS} also requires that $\mathit{Act}_o^S\cap\mathit{Act}_i^T=\emptyset$.
			}
			\item \new{$i?\in\mathit{Act}_i^T$ and $i?\in\mathit{Act}_i^S$. In this case, it follows from Definition~\ref{def:quotientTIOTS} that $q_1^T\xlongrightarrow{i?}{}^{\!\! T} q_2^T$ and $q_1^S\ {\xlongrightarrow{i?}{}^{\!\! S}} q_2^S$. Now, using $R$, the first case of Definition~\ref{def:refinement} of refinement, and the fact that $\mathit{Act}_i^{S\parallel X} = \mathit{Act}_i^T$ (Lemma~\ref{lemma:quotientparallelcompositionalphabet}) it follows that $(q_1^S,q_1^X)\xlongrightarrow{i?}{}^{\!\! S\parallel X} (q_2^S,q_2^X)$ and $((q_2^S,q_2^X), q_2^T) \in R$. From Definition~\ref{def:parallelcompositionTIOTS} of parallel composition it follows that $q_1^X\xlongrightarrow{i?}{}^{\!\! X} q_2^X$. From the construction of $R'$ we confirm that $(q_2^X,(q_2^T,q_2^S)) \in R'$.
			}
			\item \new{$i?\in\mathit{Act}_i^T$ and $i?\notin\mathit{Act}^S$. In this case, it follows from Definition~\ref{def:quotientTIOTS} that $q_1^T\xlongrightarrow{i?}{}^{\!\! T} q_2^T$ and $q_1^S = q_2^S$. Now, using $R$, the first case of Definition~\ref{def:refinement} of refinement, and the fact that $\mathit{Act}_i^{S\parallel X} = \mathit{Act}_i^T$ (Lemma~\ref{lemma:quotientparallelcompositionalphabet}) it follows that $(q_1^S,q_1^X)\xlongrightarrow{i?}{}^{\!\! S\parallel X} (q_2^S,q_2^X)$ and $((q_2^S,q_2^X), q_2^T) \in R$. From Definition~\ref{def:parallelcompositionTIOTS} of parallel composition it follows that $q_1^X\xlongrightarrow{i?}{}^{\!\! X} q_2^X$. From the construction of $R'$ we confirm that $(q_2^X,(q_2^T,q_2^S)) \in R'$.
			}
			\item \new{$i!\in\mathit{Act}_o^T$ and $i!\in\mathit{Act}_o^S$. In this case, there are three possible options from Definition~\ref{def:quotientTIOTS}.} 
				
				\begin{itemize} 
					\item \new{$q_1^T\xlongrightarrow{i!}{}^{\!\! T} q_2^T$ and $q_1^S{\xlongrightarrow{i!}{}^{\!\! S}} q_2^S$. 
						Since $X$ is an implementation and  $i? \in \mathit{Act}_i^X$, it follows that $q_1^X\xlongrightarrow{i?}{}^{\!\! X} q_2^X$ for some $q_2^X \in Q^X$ (any implementation is a specification, see Definition~\ref{def:implementation}, which is input-enabled, see Definition~\ref{def:specification}). Now, using Definition~\ref{def:parallelcompositionTIOTS} of parallel composition it follows that $(q_1^S,q_1^X)\ {\xlongrightarrow{i!}{}^{\!\! S\parallel X}} (q_2^S,q_2^X)$. Using $R$ and the third case of Definition~\ref{def:refinement} of refinement, it follows that $((q_2^S,q_2^X), q_2^T) \in R$. Thus from the construction of $R'$ we confirm that $(q_2^X,(q_2^T,q_2^S)) \in R'$.}
					\item \new{$q_1^S{\arrownot\xlongrightarrow{i!}{}^{\!\! S}}$.
						In this case, $(q_2^T,q_2^S) = u$. Again, since $X$ is an implementation and $i? \in \mathit{Act}_i^X$, it follows that $q_1^X\xlongrightarrow{i?}{}^{\!\! X} q_2^X$ for some $q_2^X \in Q^X$. By construction of $R'$ it follows that $(q_2^X,(q_2^T,q_2^S)) =  (q_2^X,u)\in R'$.}
					\item \new{$q_1^T\arrownot\xlongrightarrow{i!}{}^{\!\! T}$ and $q_1^S{\xlongrightarrow{i!}{}^{\!\! S}} q_2^S$. 
						Since $S\parallel X \leq T$ holds and $(q_1^X, (q_1^T,q_1^S)) \in R'$ implies, by construction $R'$, that $((q_1^S,q_1^X), q_1^T) \in R$, we can conclude that $q_1^T\arrownot\xlongrightarrow{i!}{}^{\!\! T}$ implies $(q_1^S,q_1^X)\arrownot\xlongrightarrow{i!}{}^{\!\! S\parallel X}$ from Definition~\ref{def:refinement} of refinement. Since $X$ is an implementation and $i? \in \mathit{Act}_i^X$, it follows that $q_1^X\xlongrightarrow{i?}{}^{\!\! X}$. Therefore, from Definition~\ref{def:parallelcompositionTIOTS} of parallel composition it follows that $q_1^S\arrownot\xlongrightarrow{i!}{}^{\!\! S}$. This contradicts with $q_1^S{\xlongrightarrow{i!}{}^{\!\! S}} q_2^S$, so this option is infeasible.}
		 	\end{itemize}
			
			\item \new{$i!\notin\mathit{Act}^T$ and $i!\in\mathit{Act}_o^S$. In this case, there are two possible options from Definition~\ref{def:quotientTIOTS}. } 
				\begin{itemize}
					\item \new{$q_1^S\xlongrightarrow{i!}{}^{\!\! S} q_2^S$ and $q_1^T = q_2^T$. 
						Since $X$ is an implementation and  $i? \in \mathit{Act}_i^X$, it follows that $q_1^X\xlongrightarrow{i?}{}^{\!\! X} q_2^X$ for some $q_2^X \in Q^X$ (any implementation is a specification, see Definition~\ref{def:implementation}, which is input-enabled, see Definition~\ref{def:specification}). Now, using Definition~\ref{def:parallelcompositionTIOTS} of parallel composition it follows that $(q_1^S,q_1^X)\ {\xlongrightarrow{i!}{}^{\!\! S\parallel X}} (q_2^S,q_2^X)$. Using $R$ and the forth case of Definition~\ref{def:refinement} of refinement, it follows that $((q_2^S,q_2^X), q_2^T) \in R$. Thus from the construction of $R'$ we confirm that $(q_2^X,(q_2^T,q_2^S)) \in R'$.}
					\item \new{$q_1^S{\arrownot\xlongrightarrow{i!}{}^{\!\! S}}$.
						In this case, $(q_2^T,q_2^S) = u$. Again, since $X$ is an implementation and $i? \in \mathit{Act}_i^X$, it follows that $q_1^X\xlongrightarrow{i?}{}^{\!\! X} q_2^X$ for some $q_2^X \in Q^X$. By construction of $R'$ it follows that $(q_2^X,(q_2^T,q_2^S)) =  (q_2^X,u)\in R'$.}
				\end{itemize}
		\end{itemize}
		\new{So, in all feasible cases we can show that $q_1^X\xlongrightarrow{i?}{}^{\!\! X} q_2^X$  and $(q_2^X,(q_2^T,q_2^S)) \in R'$.}
		
		\item \new{$ (q_1^T,q_1^S)\xlongrightarrow{i?}{}^{\!\! T\quotient S}  (q_2^T,q_2^S)$ for some $ (q_2^T,q_2^S) \in Q^{T\quotient S}$ and $i?\in\mathit{Act}_i^{T\quotient S} \setminus \mathit{Act}_i^X$. By definition of $X$ it follows that $\mathit{Act}_i^{T\quotient S} \setminus \mathit{Act}_i^X = \emptyset$, so this case can be ignored.
		}
		
		\item \new{$q_1^X\xlongrightarrow{o!}{}^{\!\! X} q_2^X$ for some $q_2^X\in Q^X$ and $o!\in\mathit{Act}_o^X \cap \mathit{Act}_o^{T\quotient S}$. By definition of $X$ it follows that $\mathit{Act}_o^X = \mathit{Act}_o^{T\quotient S}$. Consider the following five possible cases from Definition~\ref{def:quotientTIOTS} of the quotient that might result in $o!\in\mathit{Act}_o^{T\quotient S} (= \mathit{Act}_o^T\setminus\mathit{Act}_o^S \cup \mathit{Act}_i^S\setminus\mathit{Act}_i^T)$.
		}
		\begin{itemize}
			\item \new{$o!\in\mathit{Act}_o^T\setminus\mathit{Act}_o^S$ and $o?\in\mathit{Act}_i^S\setminus\mathit{Act}_i^T$. It follows from Definition~\ref{def:specification} of a specification that $S$ is input-enabled. Therefore, there is a transition $q_1^S\xlongrightarrow{o?}{}^{\!\! S} q_2^S$ for some $q_2^S\in Q^S$. Now, from Definition~\ref{def:parallelcompositionTIOTS} of parallel composition it follows that there is a transition $(q_1^S,q_1^X)\xlongrightarrow{o!}{}^{\!\! S\parallel X}  (q_2^S, q_2^X)$. Using $R$ and the third case of Definition~\ref{def:refinement} of refinement, it follows that $q_1^T\xlongrightarrow{o!}{}^{\!\! T} q_2^T$ and $((q_2^S,q_2^X), q_2^T) \in R$. Now, using Definition~\ref{def:quotientTIOTS} of the quotient, it follows that $(q_1^T,q_1^S)\xlongrightarrow{o!}{}^{\!\! T\quotient S}  (q_2^T,q_2^S)$. And from the construction of $R'$ we confirm that $(q_2^X,(q_2^T,q_2^S)) \in R'$.
			}
			\item \new{$o!\in\mathit{Act}_o^T\setminus\mathit{Act}_o^S$ and $o?\in\mathit{Act}_i^S\cap\mathit{Act}_i^T$. This case is not feasible, as an action cannot be both an output and input in $T$.
			}
			\item \new{$o!\in\mathit{Act}_o^T\setminus\mathit{Act}_o^S$ and $o?\notin\mathit{Act}_i^S$. In this case, it follows that $o \notin\mathit{Act}^S$ at all. Then from Definition~\ref{def:parallelcompositionTIOTS} it follows that there is a transition $(q_1^S,q_1^X)\xlongrightarrow{i?}{}^{\!\! S\parallel X}  (q_2^S, q_2^X)$ and $q_1^S = q_2^S$. Using $R$ and the third case of Definition~\ref{def:refinement} of refinement, it follows that $q_1^T\xlongrightarrow{o!}{}^{\!\! T} q_2^T$ and $((q_2^S,q_2^X), q_2^T) \in R$. Now, using Definition~\ref{def:quotientTIOTS} of the quotient, it follows that $(q_1^T,q_1^S)\xlongrightarrow{o!}{}^{\!\! T\quotient S}  (q_2^T,q_2^S)$. And from the construction of $R'$ we confirm that $(q_2^X,(q_2^T,q_2^S)) \in R'$.
			}
			\item \new{$o!\in\mathit{Act}_o^T\cap\mathit{Act}_o^S$ and $o?\in\mathit{Act}_i^S\setminus\mathit{Act}_i^T$. This case is not feasible, as an action cannot be both an output and input in $S$.
			}
			\item \new{$o!\notin\mathit{Act}_o^T$ and $o?\in\mathit{Act}_i^S\setminus\mathit{Act}_i^T$.  It follows from Definition~\ref{def:specification} of a specification that $S$ is input-enabled. Therefore, there is a transition $q_1^S\xlongrightarrow{o?}{}^{\!\! S} q_2^S$ for some $q_2^S\in Q^S$. Now, from Definition~\ref{def:parallelcompositionTIOTS} of parallel composition it follows that there is a transition $(q_1^S,q_1^X)\xlongrightarrow{i?}{}^{\!\! S\parallel X}  (q_2^S, q_2^X)$. Using $R$ and the forth case of Definition~\ref{def:refinement} of refinement, it follows that $q_1^T = q_2^T$ and $((q_2^S,q_2^X), q_2^T) \in R$. Now, using Definition~\ref{def:quotientTIOTS} of the quotient, it follows that $(q_1^T,q_1^S)\xlongrightarrow{o!}{}^{\!\! T\quotient S}  (q_2^T,q_2^S)$. And from the construction of $R'$ we confirm that $(q_2^X,(q_2^T,q_2^S)) \in R'$.}
		\end{itemize}
		\new{So, in all feasible cases we can show that $(q_1^T,q_1^S)\xlongrightarrow{o!}{}^{\!\! T\quotient S}  (q_2^T,q_2^S)$ and $(q_2^X,(q_2^T,q_2^S)) \in R'$.}
		
		\item \new{$q_1^X\xlongrightarrow{o!}{}^{\!\! X} q_2^X$ for some $q_2^X\in Q^X$ and $o!\in\mathit{Act}_o^X \setminus \mathit{Act}_o^{T\quotient S}$. By definition of $X$ it follows that $\mathit{Act}_o^X \setminus \mathit{Act}_o^{T\quotient S} = \emptyset$, so this case can be ignored. 
		}
		
		\item \new{$q_1^X\xlongrightarrow{d}{}^{\!\! X} q_2^X$ for some $q_2^X\in Q^X$ and $d\in \mathbb{R}_{\geq 0}$. Consider two cases in $S$.
		}
		\begin{itemize}
			\item \new{$q_1^S\xlongrightarrow{d}{}^{\!\! S}$. In this case, there exists some $q_2^S\in Q^S$ such that $q_1^S\xlongrightarrow{d}{}^{\!\! S} q_2^S$. Now, from Definition~\ref{def:parallelcompositionTIOTS} of parallel composition it follows that there is a transition $(q_1^S,q_1^X)\xlongrightarrow{d}{}^{\!\! S\parallel X}  (q_2^S, q_2^X)$. Using $R$ and the fifth case of Definition~\ref{def:refinement} of refinement, it follows that $q_1^T\xlongrightarrow{d}{}^{\!\! T} q_2^T$ and $((q_2^S,q_2^X), q_2^T) \in R$. Now, using Definition~\ref{def:quotientTIOTS} of the quotient, it follows that $(q_1^T,q_1^S)\xlongrightarrow{d}{}^{\!\! T\quotient S}  (q_2^T,q_2^S)$. And from the construction of $R'$ we confirm that $(q_2^X,(q_2^T,q_2^S)) \in R'$.} 
			\item \new{$q_1^S\arrownot\xlongrightarrow{d}{}^{\!\! S}$. In this case, it follows from Definition~\ref{def:parallelcompositionTIOTS} of parallel composition that there is no transition in $S\parallel X$, i.e., $(q_1^S,q_1^X)\arrownot\xlongrightarrow{d}{}^{\!\! S\parallel X} $. Furthermore, from Definition~\ref{def:quotientTIOTS} it follows that $(q_1^T,q_1^S)\xlongrightarrow{d}{}^{\!\! T\quotient S} u$. And from the construction of $R'$ we confirm that $(q_2^X, u) \in R'$.}
		\end{itemize}
		\new{So, in both cases we can show that $(q_1^T,q_1^S)\xlongrightarrow{d}{}^{\!\! T\quotient S} (q_2^X,q^{T\quotient S})$ and $(q_2^X,q^{T\quotient S}) \in R'$ with $q^{T\quotient S} = (q_2^T,q_2^S)$ or $q^{T\quotient S} = u$.}
	\end{enumerate}
	\new{So for all state pairs $(q_1^X, (q_1^T,q_1^S)) \in R'$ we have shown that $R'$ witnesses the refinement $X \leq T\quotient S$. Now consider the five cases of Definition~\ref{def:refinement} for a state pair $(q_1^X, u) \in R'$.
	}
	
	\begin{enumerate}
		\item \new{$u\xlongrightarrow{i?}{}^{\!\! T\quotient S} u$ for some $i?\in\mathit{Act}_i^{T\quotient S} \cap \mathit{Act}_i^X$. By definition of $X$ it follows that $\mathit{Act}_i^{T\quotient S} = \mathit{Act}_i^X$. Since $X$ is an implementation and  $i? \in \mathit{Act}_i^X$, it follows that $q_1^X\xlongrightarrow{i?}{}^{\!\! X} q_2^X$ for some $q_2^X \in Q^X$ (any implementation is a specification, see Definition~\ref{def:implementation}, which is input-enabled, see Definition~\ref{def:specification}). By construction of $R'$ it follows that $(q_2^X,u)\in R'$.
		}
		
		\item \new{$ u\xlongrightarrow{i?}{}^{\!\! T\quotient S}  u$ for some $i?\in\mathit{Act}_i^{T\quotient S} \setminus \mathit{Act}_i^X$. By definition of $X$ it follows that $\mathit{Act}_i^{T\quotient S} \setminus \mathit{Act}_i^X = \emptyset$, so this case can be ignored.
		}
		
		\item \new{$q_1^X\xlongrightarrow{o!}{}^{\!\! X} q_2^X$ for some $q_2^X\in Q^X$ and $o!\in\mathit{Act}_o^X \cap \mathit{Act}_o^{T\quotient S}$. By definition of $X$ it follows that $\mathit{Act}_o^X = \mathit{Act}_o^{T\quotient S}$. From Definition~\ref{def:quotientTIOTS} of the quotient it follows that $u \xlongrightarrow{o!}{}^{\!\! T\quotient S}  u$. By construction of $R'$ it also follows that $(q_2^X,u)\in R'$.
		}
		
		\item \new{$q_1^X\xlongrightarrow{o!}{}^{\!\! X} q_2^X$ for some $q_2^X\in Q^X$ and $o!\in\mathit{Act}_o^X \setminus \mathit{Act}_o^{T\quotient S}$. By definition of $X$ it follows that $\mathit{Act}_o^X \setminus \mathit{Act}_o^{T\quotient S} = \emptyset$, so this case can be ignored. 
		}
		
		\item \new{$q_1^X\xlongrightarrow{d}{}^{\!\! X} q_2^X$ for some $q_2^X\in Q^X$ and $d\in \mathbb{R}_{\geq 0}$. From Definition~\ref{def:quotientTIOTS} of the quotient it follows that $u \xlongrightarrow{d}{}^{\!\! T\quotient S}  u$. By construction of $R'$ it also follows that $(q_2^X,u)\in R'$. }
	\end{enumerate}
	\new{So for all state pairs $(q_1^X,u) \in R'$ we have shown that $R'$ witnesses the refinement $X \leq T\quotient S$. Finally, since $R$ witnesses $S\parallel X \leq T$ it holds that $((q_0^S, q_o^X), q_0^T) \in R$ (see Definition~\ref{def:refinement}). Thus by construction of $R'$ it holds that $(q_0^X, (q_0^T,q_0^S)) \in R'$. Therefore, we can now conclude that $R'$ witnesses $X \leq T\quotient X$.
	}
	
	\new{($S \parallel X \leq T \Leftarrow X \leq T \quotient S$) Since $X \leq T \quotient S$, it follows from Definition~\ref{def:refinement} of refinement that there exists a relation $R \in Q^X \times Q^{T \quotient S}$ that witness the refinement. Note that $Q^{S \parallel X} = Q^S \times Q^X$ according to Definition~\ref{def:parallelcompositionTIOTS}. Construct relation $R' = \{((q_1^S,q_1^X), q_1^T)\in Q^X \times Q^{T \quotient S} \mid (q_1^X,(q_1^T,q_1^S)) \in R\}$. We will show that $R'$ witnesses $S\parallel X\leq T$. First consider the five cases of Definition~\ref{def:refinement} for a state pair $((q_1^S,q_1^X), q_1^T) \in R'$.
	}
	\begin{enumerate}
		\item \new{$q_1^T\xlongrightarrow{i?}{}^{\!\! T} q_2^T$ for some $q_2^T\in Q^T$ and $i?\in\mathit{Act}_i^T \cap \mathit{Act}_i^{S \parallel X}$. From Lemma~\ref{lemma:quotientparallelcompositionalphabet} it follows that $\mathit{Act}_i^T = \mathit{Act}_i^{S \parallel X}$. Consider the following five possible cases from Definition~\ref{def:parallelcompositionTIOTS} of the parallel composition that might result in $i?\in\mathit{Act}_i^{S \parallel X} (= \mathit{Act}_i^S\setminus \mathit{Act}_o^X \cup \mathit{Act}_i^X\setminus\mathit{Act}_o^S)$.
		}
		
		\begin{itemize}
			\item \new{$i?\in \mathit{Act}_i^S\setminus \mathit{Act}_o^X$ and $i?\in \mathit{Act}_i^X\setminus \mathit{Act}_o^S$. Since $S$ and $X$ are specifications and $i? \in \mathit{Act}_i^S \cap\mathit{Act}_i^X$, it follows that $q_1^S\xlongrightarrow{i?}{}^{\!\! S} q_2^S$ for some $q_2^S \in Q^S$  and $q_1^X\xlongrightarrow{i?}{}^{\!\! X} q_2^X$ for some $q_2^X \in Q^X$ (any specification is input-enabled, see Definition~\ref{def:specification}). Therefore, using Definition~\ref{def:parallelcompositionTIOTS} of parallel composition, it follows that $(q_1^S,q_1^X)\xlongrightarrow{i?}{}^{\!\! S \parallel X} (q_2^S,q_2^X)$. Also, using Definition~\ref{def:quotientTIOTS} of the quotient it follows that $(q_1^T,q_1^S)\xlongrightarrow{i?}{}^{\!\! T\quotient S} (q_2^T,q_2^S)$. Now, using $R$, the first case of Definition~\ref{def:refinement} of refinement, and $\mathit{Act}^X = \mathit{Act}^{T\quotient S}$ by construction, it follows that $(q_2^X, (q_2^T, q_2^S)) \in R$. And from the construction of $R'$ we confirm that $((q_2^S,q_2^X),q_2^T) \in R'$.
			}
			
			\item \new{$i?\in \mathit{Act}_i^S\setminus \mathit{Act}_o^X$ and $i?\in \mathit{Act}_i^X\cap \mathit{Act}_o^S$. This case is infeasible, as an action cannot be both an output and input in $S$.
			}
			
			\item \new{$i?\in \mathit{Act}_i^S\setminus \mathit{Act}_o^X$ and $i?\notin \mathit{Act}_i^X$. This case is infeasible, as $i?\in \mathit{Act}_i^S\setminus \mathit{Act}_o^X$ and $i?\notin \mathit{Act}_i^X$ implies that $i\notin\mathit{Act}^X$, but from Definition~\ref{def:quotientTIOTS} of the quotient it follows that $i?\in\mathit{Act}_i^S$ implies that $i\in\mathit{Act}^{T\quotient S} (= \mathit{Act}^X)$.
			}
			
			\item \new{$i?\in \mathit{Act}_i^S\cap \mathit{Act}_o^X$ and $i?\in \mathit{Act}_i^X\setminus \mathit{Act}_o^S$. This case is infeasible, as an action cannot be both an output and input in $X$.
			}
			\item \new{$i?\notin \mathit{Act}_i^S$ and $i?\in \mathit{Act}_i^X\setminus \mathit{Act}_o^S$. Since $i?\in \mathit{Act}_i^X\setminus \mathit{Act}_o^S$ implies that $i!\notin \mathit{Act}_o^S$, it follows that $i\notin\mathit{Act}^S$. From Definition~\ref{def:quotientTIOTS} of quotient if follows that $(q_1^T,q_1^S)\xlongrightarrow{i?}{}^{\!\! T\quotient S} (q_2^T,q_2^S)$ and $q_1^S = q_2^S$. Now, using $R$, the first case of Definition~\ref{def:refinement} of refinement, and $\mathit{Act}^X = \mathit{Act}^{T\quotient S}$ by construction, it follows that $q_1^X\xlongrightarrow{i?}{}^{\!\! X} q_2^X$ and $(q_2^X, (q_2^T, q_2^S)) \in R$. Using Definition~\ref{def:parallelcompositionTIOTS} of the parallel composition, it follows that $(q_1^S,q_1^X)\xlongrightarrow{i?}{}^{\!\! S\parallel X} (q_2^S,q_2^X)$. And from the construction of $R'$ we confirm that $((q_2^S,q_2^X),q_2^T) \in R'$.}
		\end{itemize}
		\new{So, in all feasible cases we can show that $(q_1^S,q_1^X)\xlongrightarrow{i?}{}^{\!\! S\parallel X} (q_2^S,q_2^X)$  and $((q_2^S,q_2^X),q_2^T) \in R'$.}
		
		\item \new{$q_1^T\xlongrightarrow{i?}{}^{\!\! T} q_2^T$ for some $q_2^T\in Q^T$ and $i?\in\mathit{Act}_i^T \setminus \mathit{Act}_i^{S \parallel X}$. From Lemma~\ref{lemma:quotientparallelcompositionalphabet} it follows that $\mathit{Act}_i^T \setminus \mathit{Act}_i^{S \parallel X} = \emptyset$, so this case can be ignored.
		}
		
		\item \new{$(q_1^S, q_1^X)\xlongrightarrow{o!}{}^{\!\! S \parallel X} (q_2^S, q_2^X)$ for some $(q_2^S, q_2^X)\in Q^{S \parallel X}$ and $o!\in\mathit{Act}_o^{S \parallel X} \cap \mathit{Act}_o^T$. From Lemma~\ref{lemma:quotientparallelcompositionalphabet} we have that $\mathit{Act}_o^{S\parallel X} = \mathit{Act}_o^S \cup \mathit{Act}_o^T \cup \mathit{Act}_i^S\setminus \mathit{Act}_i^T$. Consider the following three cases that might result in $o!\in\mathit{Act}_o^{S \parallel X}$ and $o!\in \mathit{Act}_o^T$.
		}
		\begin{itemize}
			\item \new{$o!\in\mathit{Act}_o^S$ and $o!\in\mathit{Act}_o^T$. In this case we have that $o?\in\mathit{Act}_i^{T\quotient S}$ by Definition~\ref{def:quotientTIOTS}, and thus by construction of $X$ that $o?\in\mathit{Act}_i^X$. Now, using Definition~\ref{def:parallelcompositionTIOTS} of the parallel composition, it follows that $q_1^S\xlongrightarrow{o!}{}^{\!\! S} q_2^S$ and $q_1^X\xlongrightarrow{o?}{}^{\!\! X} q_2^X$. Consider the following two cases for $T$.
			}
			\begin{itemize}
				\item \new{$q_1^T\xlongrightarrow{o!}{}^{\!\! T} q_2^T$. In this case it follows that from Definition~\ref{def:quotientTIOTS} of the quotient that $(q_1^T,q_1^S)\ {\xlongrightarrow{o?}{}^{\!\! T\quotient S}}\allowbreak (q_2^T,q_2^S)$. Using $R$, the first case of Definition~\ref{def:refinement} of refinement, and $\mathit{Act}^X = \mathit{Act}^{T\quotient S}$ by construction, it follows that $(q_2^X, (q_2^T, q_2^S)) \in R$. And from the construction of $R'$ we confirm that $((q_2^S,q_2^X),q_2^T) \in R'$.}
				\item \new{$q_1^T\arrownot\xlongrightarrow{o!}{}^{\!\! T}$. In this case it follows from Definition~\ref{def:quotientTIOTS} of the quotient that $(q_1^T,q_1^S)\xlongrightarrow{o?}{}^{\!\! T\quotient S} e$. By construction of $e$, it does not allow independent progress. But, since $X$ is an implementation, all states in $X$ allow independent progress, see Definition~\ref{def:implementation}\footnote{\new{This is the reason why $X$ is assumed to be an implementation and not just a specification.}}. Therefore, either $X$ can delay indefinitely from state $q_2^X$ or there exists a delay after which $X$ can perform an output action. Neither of these options can be simulated by $T\quotient S$ when in state $e$. Thus $(q_2^X, e) \notin R$, i.e., $X \nleq T\quotient S$. This contradicts with the assumption, thus this is not a feasible case.}
			\end{itemize}
			
			\item \new{$o?\in\mathit{Act}_i^S$ and $o!\in\mathit{Act}_o^T$. In this case we have that $o!\in\mathit{Act}_o^{T\quotient S}$ by Definition~\ref{def:quotientTIOTS}, and thus by construction of $X$ that $o!\in\mathit{Act}_o^X$. Now, using Definition~\ref{def:parallelcompositionTIOTS} of the parallel composition, it follows that $q_1^S\xlongrightarrow{o?}{}^{\!\! S} q_2^S$ and $q_1^X\xlongrightarrow{o!}{}^{\!\! X} q_2^X$. Using $R$, the third case of Definition~\ref{def:refinement} of refinement, and $\mathit{Act}^X = \mathit{Act}^{T\quotient S}$ by construction, it follows that $(q_1^T,q_1^S)\xlongrightarrow{o!}{}^{\!\! T\quotient S} (q_2^T,q_2^S)$ and $(q_2^X, (q_2^T, q_2^S)) \in R$. Now, using Definition~\ref{def:quotientTIOTS} of quotient again, it follows that $q_1^T\xlongrightarrow{o!}{}^{\!\! T} q_2^T$. And from the construction of $R'$ we confirm that $((q_2^S,q_2^X),q_2^T) \in R'$.
			}
			
			\item \new{$o\notin\mathit{Act}^S$ and $o!\in\mathit{Act}_o^T$. In this case we have that $o!\in\mathit{Act}_o^{T\quotient S}$ by Definition~\ref{def:quotientTIOTS}, and thus by construction of $X$ that $o!\in\mathit{Act}_o^X$. Now, using Definition~\ref{def:parallelcompositionTIOTS} of the parallel composition, it follows that $q_1^X\xlongrightarrow{o!}{}^{\!\! X} q_2^X$ and $q_1^S = q_2^S$. Using $R$, the third case of Definition~\ref{def:refinement} of refinement, and $\mathit{Act}^X = \mathit{Act}^{T\quotient S}$ by construction, it follows that $(q_1^T,q_1^S)\xlongrightarrow{o!}{}^{\!\! T\quotient S} (q_2^T,q_2^S)$ and $(q_2^X, (q_2^T, q_2^S)) \in R$. Now, using Definition~\ref{def:quotientTIOTS} of quotient again, it follows that $q_1^T\xlongrightarrow{o!}{}^{\!\! T} q_2^T$. And from the construction of $R'$ we confirm that $((q_2^S,q_2^X),q_2^T) \in R'$.}
		\end{itemize}
		\new{So, in all feasible cases we can show that $q_1^T\xlongrightarrow{o!}{}^{\!\! T} q_2^T$  and $((q_2^S,q_2^X),q_2^T) \in R'$.}
		
		\item \new{$(q_1^S, q_1^X)\xlongrightarrow{o!}{}^{\!\! S \parallel X} (q_2^S, q_2^X)$ for some $(q_2^S, q_2^X)\in Q^{S \parallel X}$ and $o!\in\mathit{Act}_o^{S \parallel X} \setminus \mathit{Act}_o^T$. From Lemma~\ref{lemma:quotientparallelcompositionalphabet} we have that $\mathit{Act}_o^{S\parallel X} = \mathit{Act}_o^S \cup \mathit{Act}_o^T \cup \mathit{Act}_i^S\setminus \mathit{Act}_i^T$. So $\mathit{Act}_o^{S \parallel X} \setminus \mathit{Act}_o^T = (\mathit{Act}_o^S \cup \mathit{Act}_i^S\setminus \mathit{Act}_i^T) \setminus \mathit{Act}_o^T = \mathit{Act}_o^S\setminus\mathit{Act}_o^T \cup (\mathit{Act}_i^S\setminus \mathit{Act}_i^T)\setminus\mathit{Act}_o^T = \mathit{Act}_o^S\setminus\mathit{Act}_o^T \cup \mathit{Act}_i^S \setminus\mathit{Act}^T$. Consider the following five cases that might result in $o!\in \mathit{Act}_o^{S \parallel X} \setminus \mathit{Act}_o^T$. 
		}
		\begin{itemize}
			\item \new{$o!\in\mathit{Act}_o^S\setminus\mathit{Act}_o^T$ and $o?\in\mathit{Act}_i^S\setminus\mathit{Act}^T$. This case is infeasible, as an action cannot be both an output and input in $S$.
			}
			
			\item \new{$o!\in\mathit{Act}_o^S\setminus\mathit{Act}_o^T$ and $o?\in\mathit{Act}_i^S\cap\mathit{Act}^T$. This case is infeasible, as an action cannot be both an output and input in $S$.
			}
			
			\item \new{$o!\in\mathit{Act}_o^S\setminus\mathit{Act}_o^T$ and $o?\notin\mathit{Act}_i^S$. In this case, we have that $o?\in \mathit{Act}_i^{T\quotient S}$ from Definition~\ref{def:quotientTIOTS} of the quotient. Therefore,  $o?\in \mathit{Act}_i^X$ by construction of $X$. Now, using Definition~\ref{def:parallelcompositionTIOTS} of the parallel composition, it follows that $q_1^S\xlongrightarrow{o!}{}^{\!\! S} q_2^S$ and $q_1^X\xlongrightarrow{o?}{}^{\!\! X} q_2^X$. Since Definition~\ref{def:quotientTIOTS} also requires that $\mathit{Act}_o^S\cap\mathit{Act}_i^T=\emptyset$, it follows that in this case $o\notin\mathit{Act}^T$. Thus, from Definition~\ref{def:quotientTIOTS} it follows that $(q_1^T,q_1^S)\xlongrightarrow{o?}{}^{\!\! T\quotient S} (q_2^T,q_2^S)$ and $q_1^T = q_2^T$. Using $R$, the first case of Definition~\ref{def:refinement} of refinement, and $\mathit{Act}^X = \mathit{Act}^{T\quotient S}$ by construction, it follows that $(q_2^X, (q_2^T, q_2^S)) \in R$. And from the construction of $R'$ we confirm that $((q_2^S,q_2^X),q_2^T) \in R'$.
			}
			
			\item \new{$o!\in\mathit{Act}_o^S\cap\mathit{Act}_o^T$ and $o?\in\mathit{Act}_i^S\setminus\mathit{Act}^T$. This case is infeasible, as an action cannot be both an output and input in $S$.
			}
			
			\item \new{$o!\notin\mathit{Act}_o^S$ and $o?\in\mathit{Act}_i^S\setminus\mathit{Act}^T$. In this case, we have that $o!\in \mathit{Act}_o^{T\quotient S}$ from Definition~\ref{def:quotientTIOTS} of the quotient. Therefore, $o!\in \mathit{Act}_o^X$ by construction of $X$. Now, using Definition~\ref{def:parallelcompositionTIOTS} of the parallel composition, it follows that $q_1^S\xlongrightarrow{o!}{}^{\!\! S} q_2^S$ and $q_1^X\xlongrightarrow{o?}{}^{\!\! X} q_2^X$. Using $R$, the fourth case of Definition~\ref{def:refinement} of refinement, $\mathit{Act}^X = \mathit{Act}^{T\quotient S}$ by construction, and $o\notin\mathit{Act}^T$, it follows that $(q_2^X, (q_2^T, q_2^S)) \in R$ and $q_1^T = q_2^T$. And from the construction of $R'$ we confirm that $((q_2^S,q_2^X),q_2^T) \in R'$.}
		\end{itemize}
		\new{So, in all feasible cases we can show that $o\notin\mathit{Act}^T$, $q_1^T = q_2^T$, and $((q_2^S,q_2^X),q_2^T) \in R'$.}
		
		\item \new{$(q_1^S, q_1^X)\xlongrightarrow{d}{}^{\!\! S \parallel X} (q_2^S, q_2^X)$ for some $(q_2^S, q_2^X)\in Q^{S \parallel X}$ and $d\in \mathbb{R}_{\geq 0}$. It follows from Definition~\ref{def:parallelcompositionTIOTS} of the parallel composition that $q_1^S\xlongrightarrow{d}{}^{\!\! S} q_2^S$ and $q_1^X\xlongrightarrow{d}{}^{\!\! X} q_2^X$. Using $R$ and the fifth case of Definition~\ref{def:refinement} of refinement it follows that $(q_1^T,q_1^S)\xlongrightarrow{d}{}^{\!\! T\quotient S} q_2$ for some $q_2^{T\quotient S}\in Q^{T \quotient S}$ and $(q_2^X, q_2^{T\quotient S})) \in R$. Now, by Definition~\ref{def:quotientTIOTS} of the quotient it follows that $q_1^T\xlongrightarrow{d}{}^{\!\! T} q_2^T$ (and $q_1^S\xlongrightarrow{d}{}^{\!\! S} q_2^S$). And from the construction of $R'$ we confirm that $((q_2^S,q_2^X),q_2^T) \in R'$.}
	\end{enumerate}
	\new{So for all state pairs $((q_1^S,q_1^X),q_1^T) \in R'$ we have shown that $R'$ witnesses the refinement $S\parallel X \leq T$. Finally, since $R$ witnesses $X \leq T\quotient S$ it holds that $(q_0^X, (q_0^T,q_0^S)) \in R$ (see Definition~\ref{def:refinement}). Thus by construction of $R'$ it holds that $((q_0^S, q_o^X), q_0^T) \in R'$. Therefore, we can now conclude that $R'$ witnesses $S\parallel X \leq T$.}
\end{proof}

\begin{definition}\label{def:equivalencequotientTIOTS}
	\new{Given a TIOTS $S = (Q,q_0,\mathit{Act},\rightarrow)$ and equivalence relation $\sim$ on the set of states $Q$. The \emph{$\sim$-quotient $S$}, denoted by $S/{\sim}$, is a specification $([Q]_{\sim}, [q_0]_{\sim}, \mathit{Act}, {\rightarrow}/{\sim})$ where $[Q]_{\sim}$ is the set of all equivalence classes of $Q$\footnote{\new{Recall that an equivalent class is defined as $[q]_{\sim} = \{r \in Q \mid q \sim r\}$.}} and ${\rightarrow}/{\sim}$ being defined as $([q_1], a, [q_2]) \in {\rightarrow}/{\sim}$ if $(q_1, a, q_2) \in\rightarrow$ for some $q_1\in[q_1]$ and $q_2\in[q_2]$.}
\end{definition}

\begin{lemma}\label{lemma:quotientequivalencequotient}
	\new{Given specification automata $S = (\mathit{Loc}^S,l_0^S,\mathit{Act}^S,\mathit{Clk}^S, E^S,\mathit{Inv}^S)$ and $T = (\mathit{Loc}^T,l_0^T,\allowbreak\mathit{Act}^T,\mathit{Clk}^T, E^T,\allowbreak\mathit{Inv}^T)$ where $\mathit{Act}_o^S\cap\mathit{Act}_i^T=\emptyset$. Let $V_0 = \{v \in [\mathit{Clk}^{T\quotient S} \mapsto \mathbb{R}_{\geq 0}] \mid v(x_{\mathit{new}}) = 0 \}$, $V_{>0} = [\mathit{Clk}^{T\quotient S} \mapsto \mathbb{R}_{\geq 0}] \setminus V_0$, and ${\sim} = \{ (q_1, q_2) \mid q_1, q_2 \in \{l_e\} \times V_0\} \cup \{ (q, q) \mid q \in \{l_e\} \times V_{>0}\} \cup \{ (q_1, q_2) \mid q_1, q_2 \in \{l_u\} \times [\mathit{Clk}^{T \quotient S} \mapsto \mathbb{R}_{\geq 0}] \} \cup \{ ((l, v_1), (l, v_2)) \mid l \in \mathit{Loc}^{T\quotient S}\setminus \{l_e, l_u\}, v_1, v_2 \in [\mathit{Clk}^{T\quotient S}\mapsto \mathbb{R}_{\geq 0}], \forall c\in\mathit{Clk}^{T\quotient S}\setminus \{x_{\mathit{new}}\}, v_1(c) = v_2(c) \}$. Then $\sem{T\quotient S} \simeq \sem{T\quotient S}/{\sim}$.}
\end{lemma}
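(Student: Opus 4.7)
The plan is to exhibit an explicit bisimulation $R = \{(q,[q]_{\sim}) \mid q\in Q^{\sem{T\quotient S}}\}$ between $\sem{T\quotient S}$ and $\sem{T\quotient S}/{\sim}$. Since both TIOTSs share the same alphabet and the initial state condition $(q_0^{\sem{T\quotient S}},[q_0^{\sem{T\quotient S}}]_{\sim})\in R$ is immediate from the definition of the equivalence quotient, the work reduces to verifying the six conditions of Definition~\ref{def:bisimulation} for each pair in $R$. The forward direction (transitions of $\sem{T\quotient S}$ simulated in $\sem{T\quotient S}/{\sim}$) is free: by Definition~\ref{def:equivalencequotientTIOTS}, every transition $q_1\xlongrightarrow{a} q_2$ in $\sem{T\quotient S}$ induces $[q_1]_{\sim}\xlongrightarrow{a} [q_2]_{\sim}$ in the equivalence quotient, and $(q_2,[q_2]_{\sim})\in R$ by construction.

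The backward direction is the main obstacle. If $[q_1]_{\sim}\xlongrightarrow{a} [q_2]_{\sim}$ in the equivalence quotient, then by Definition~\ref{def:equivalencequotientTIOTS} there exist witnesses $r_1\in[q_1]_{\sim}$ and $r_2\in[q_2]_{\sim}$ with $r_1\xlongrightarrow{a} r_2$, but we need an $a$-successor of our chosen representative $q_1$. I would therefore prove as a central auxiliary lemma that $\sim$ itself is a bisimulation on $\sem{T\quotient S}$: for every pair $(q,r)\in{\sim}$ and every transition $q\xlongrightarrow{a} q'$, there exists $r'$ with $r\xlongrightarrow{a} r'$ and $q'\sim r'$. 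Granting this, picking $r_1\in[q_1]_{\sim}$ matching the witness and closing $\sim$ under transitions yields the required successor of $q_1$, and $(q_2',[q_2]_{\sim})\in R$ follows from $q_2'\sim r_2$.

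The proof that $\sim$ is a bisimulation proceeds by case analysis on the four kinds of equivalence classes defined in the statement. For the universal class $\{l_u\}\times[\mathit{Clk}^{T\quotient S}\mapsto\mathbb{R}_{\geq 0}]$ the claim is trivial because rule~9 of Definition~\ref{def:quotientTIOA} together with $\mathit{Inv}(l_u)=\mathbf{T}$ makes every action and every delay a self-loop at $l_u$, so all universal states expose exactly the same transitions, all returning to $l_u$. For the class $\{l_e\}\times V_0$, rule~10 provides a self-loop on each input with guard $x_{\mathit{new}}=0$ and reset $\emptyset$, while $\mathit{Inv}(l_e)=x_{\mathit{new}}\leq 0$ forbids any positive delay and allows only the zero delay by time reflexivity; hence all such states behave identically and all successors stay in $V_0$. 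The singleton classes $\{(l_e,v)\}$ with $v(x_{\mathit{new}})>0$ are trivially bisimilar to themselves because no transition is enabled from them (guard $x_{\mathit{new}}=0$ fails, and $\mathit{Inv}(l_e)$ blocks delays and entrances), so both sides of the bisimulation conditions are vacuously satisfied.

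The remaining case, covering locations $l\in\mathit{Loc}^{T\quotient S}\setminus\{l_e,l_u\}$, is where the detailed inspection of Definition~\ref{def:quotientTIOA} pays off. I would verify by going through rules~1--8 that the clock $x_{\mathit{new}}$ occurs in no guard, in no invariant, and in no reset except when the target location is $l_e$ (where it is always reset to $0$). Consequently, if $v_1$ and $v_2$ agree on $\mathit{Clk}^{T\quotient S}\setminus\{x_{\mathit{new}}\}$, then for every outgoing edge $(l,a,\varphi,c,l')$, $v_1\models\varphi\iff v_2\models\varphi$, the resets produce valuations $v_1',v_2'$ that again agree outside $x_{\mathit{new}}$ when $l'\neq l_e$ (placing the successors in the same non-special class), both give $v_1'(x_{\mathit{new}})=v_2'(x_{\mathit{new}})=0$ when $l'=l_e$ (placing them both in $\{l_e\}\times V_0$), and $l'=l_u$ trivially lands both in the universal class. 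Delay transitions are handled identically since $\mathit{Inv}(l)=\mathbf{T}$ and time shifts preserve agreement on all clocks. Combining these cases shows that $\sim$ is a bisimulation on $\sem{T\quotient S}$, which together with the forward direction completes the proof that $R$ witnesses $\sem{T\quotient S}\simeq\sem{T\quotient S}/{\sim}$.
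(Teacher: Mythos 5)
Your proposal is correct and follows essentially the same route as the paper's proof: the same relation $R = \{(q,[q]_{\sim})\}$, the trivial forward direction via Definition~\ref{def:equivalencequotientTIOTS}, and the same four-way case analysis on equivalence classes for the backward direction, resting on the same key observation that $x_{\mathit{new}}$ appears in no guard of rules~1--8 and is reset exactly on edges into $l_e$. Your packaging of the backward direction as an explicit auxiliary lemma ("$\sim$ is a bisimulation on $\sem{T\quotient S}$") is a slightly cleaner organization of what the paper proves inline when it quantifies over all representatives $q_1\in[r_1]_{\sim}$, but the mathematical content is identical.
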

\begin{proof}
	\new{It follows directly from the definition of $\sim$ that it is reflexive, symmetric, and transitive, thus it is an equivalence relation. Now, observe from Definition~\ref{def:equivalencequotientTIOTS} that an equivalence quotient of a TIOTS does not alter the action set, i.e., $\mathit{Act}^{\sem{T\quotient S}} = \mathit{Act}^{\sem{T\quotient S}/{\sim}}$. Let $R = \{(q, [q]_{\sim}) \mid q \in Q^{\sem{T\quotient S}} \}$. We will show that $R$ is a bisimulation relation. First, observe that $(q_0, [q_0]_{\sim}) \in R$. Consider a state pair $(q_1, [r_1]_{\sim}) \in R$. We have to check whether the six cases from Definition~\ref{def:bisimulation} of bisimulation hold.
	}
	
	\begin{enumerate}
		\item \new{$q_1 \xlongrightarrow{a}{}^{\!\! \sem{T\quotient S}} q_2$, $q_2\in Q^{\sem{T\quotient S}}$, and $a \in \mathit{Act}^{\sem{T\quotient S}}\cap\mathit{Act}^{\sem{T\quotient S}/{\sim}}$. By the definition of an equivalence class and Definition~\ref{def:equivalencequotientTIOTS} it follows immediately that $[q_1]_{\sim} \xlongrightarrow{a}{}^{\!\! \sem{T\quotient S}/{\sim}} [q_2]_{\sim}$. By construction of $R$ it follows that $(q_2, [q_2]_{\sim}) \in R$.
		}
		
		\item \new{$q_1 \xlongrightarrow{a}{}^{\!\! \sem{T\quotient S}} q_2$, $q_2\in Q^{\sem{T\quotient S}}$, and $a \in \mathit{Act}^{\sem{T\quotient S}}\setminus\mathit{Act}^{\sem{T\quotient S}/{\sim}}$. This case is infeasible, since $\mathit{Act}^{\sem{T\quotient S}} = \mathit{Act}^{\sem{T\quotient S}/{\sim}}$. 
		}
		
		\item \new{$[r_1]_{\sim} \xlongrightarrow{a}{}^{\!\! \sem{T\quotient S}/{\sim}} [r_2]_{\sim}$, $[r_2]_{\sim}\in Q^{\sem{T\quotient S}/{\sim}}$, and $a \in \mathit{Act}^{\sem{T\quotient S}/{\sim}}\cap \mathit{Act}^{\sem{T\quotient S}}$. By construction of $R$, we have to show that $\forall q_1\in[r_1]_{\sim} \exists q_2\in Q^{\sem{T\quotient S}}: q_1 \xlongrightarrow{a}{}^{\!\! \sem{T\quotient S}} q_2$, $q_2\in [r_2]_{\sim}$, and $(q_2, [r_2]_{\sim})\in R$. Consider the following four cases based on the construction of $\sim$:
		}
		\begin{itemize}
			\item \new{$[r_1]_{\sim} = \{ q \mid q \in \{l_e\} \times V_0\}$. In this case, let $q_1 = (l_e, v_1) \in [r_1]_{\sim}$ for some $v_1 \in  V_0$. From Definition~\ref{def:semanticTIOA} of the semantic of a TIOA it follows that $\sem{T\quotient S}$ is in location $l_e$. From Definition~\ref{def:quotientTIOA} of the quotient it follows that the only possible transition in $T\quotient S$ is $(l_e, a, x_{\mathit{new}} = 0, \emptyset, l_e)$. Furthermore, since $[r_1]_{\sim} \xlongrightarrow{a}{}^{\!\! \sem{T\quotient S}/{\sim}} [r_2]_{\sim}$, it holds that $\exists r_1, r_2 \in Q^{\sem{T\quotient S}} : r_1 \xlongrightarrow{a}{}^{\!\! \sem{T\quotient S}} r_2$. Following Definition~\ref{def:semanticTIOA} and the above observation, it holds that $r_1 = (l_e, v_1')$ and $r_2 = (l_e, v_2')$ for some $v_1', v_2'\in [\mathit{Clk}^{T\quotient S}\mapsto \mathbb{R}_{\geq 0}]$, $v_1'\models x_{\mathit{new}} = 0$, and $v_1' = v_2'$. From  $v_1'\models x_{\mathit{new}} = 0$ it follows that $v_1'(x_{\mathit{new}}) = 0$ and $v_1', v_2'\in V_0$, and from $v_1' = v_2'$ that $[r_2]_{\sim} = [r_1]_{\sim}$. Thus we can conclude that $q_1\xlongrightarrow{a}{}^{\!\! \sem{T\quotient S}} q_2$ with $q_2 \in [r_2]_{\sim}$. By construction of $R$ it follows that $(q_2, [r_2]_{\sim}) \in R$.
			}
			
			\item \new{$[r_1]_{\sim} = \{ q \mid q \in \{l_e\} \times V_{>0}\}$. This case is trivial, since $[r_1]_{\sim} = \{r_1\} = \{q_1\}$. Therefore, if $[r_1]_{\sim} \allowbreak {\xlongrightarrow{a}{}^{\!\! \sem{T\quotient S}/{\sim}}} [r_2]_{\sim}$, $\exists q_2 \in [r_2]_{\sim}$ such that $q_1 \xlongrightarrow{a}{}^{\!\! \sem{T\quotient S}} q_2$.
			}
			
			\item \new{$[r_1]_{\sim} = \{ q \mid q \in \{l_u\} \times [\mathit{Clk}^{T\quotient S} \mapsto \mathbb{R}_{\geq 0}]\}$. In this case, let $q_1 = (l_u, v_1) \in [r_1]_{\sim}$ for some $v_1 \in  [\mathit{Clk}^{T\quotient S} \mapsto \mathbb{R}_{\geq 0}]$. From Definition~\ref{def:semanticTIOA} of the semantic of a TIOA it follows that $\sem{T\quotient S}$ is in location $l_u$. From Definition~\ref{def:quotientTIOA} of the quotient it follows that the only possible transition in $T\quotient S$ is $(l_u, a, \mathbf{T}, \emptyset, l_u)$. Furthermore, since $[r_1]_{\sim} \xlongrightarrow{a}{}^{\!\! \sem{T\quotient S}/{\sim}} [r_2]_{\sim}$, it holds that $\exists r_1, r_2 \in Q^{\sem{T\quotient S}} : r_1 \xlongrightarrow{a}{}^{\!\! \sem{T\quotient S}} r_2$. Following Definition~\ref{def:semanticTIOA} and the above observation, it holds that $r_1 = (l_u, v_1')$ and $r_2 = (l_u, v_2')$ for some $v_1', v_2'\in [\mathit{Clk}^{T\quotient S}\mapsto \mathbb{R}_{\geq 0}]$, $v_1'\models \mathbf{T}$, and $v_1' = v_2'$. From $v_1' = v_2'$ it follows that $[r_2]_{\sim} = [r_1]_{\sim}$. Thus we can conclude that $q_1\xlongrightarrow{a}{}^{\!\! \sem{T\quotient S}} q_2$ with $q_2 \in [r_2]_{\sim}$. By construction of $R$ it follows that $(q_2, [r_2]_{\sim}) \in R$.
			}
			
			\item \new{In this case, since $[r_1]_{\sim} \xlongrightarrow{a}{}^{\!\! \sem{T\quotient S}/{\sim}} [r_2]_{\sim}$, it holds that $\exists r_1, r_2 \in Q^{\sem{T\quotient S}} : r_1 \xlongrightarrow{a}{}^{\!\! \sem{T\quotient S}} r_2$. Following Definition~\ref{def:semanticTIOA} of the semantic of a TIOA, it holds that $(l_1, a, \varphi, c, l_2) \in E^{T\quotient S}$, $r_1 = (l_1, v_1)$, $r_2 = (l_2, v_2)$, $l_1, l_2 \in \mathit{Loc}^{T\quotient S}$, $v_1, v_2\in [\mathit{Clk}^{T\quotient S}\mapsto \mathbb{R}_{\geq 0}]$, $v_1\models \mathbf{\varphi}$, $v_2 = v_1[r\mapsto 0]_{r\in c}$, and $v_2\models \mathit{Inv}(l_2)$. From the construction of $\sim$, it follows that for any  state $(l_1', v_1') \in [r_1]_{\sim}$ it holds that $l_1' = l_1$, $l_1 \neq l_e$, and $\forall c \in \mathit{Clk}^{T \quotient S} \setminus \{x_{\mathit{new}}\} : v_1'(c) = v_1(c)$. Since $x_{\mathit{new}} \notin \mathit{Clk}^T\cup \mathit{Clk}^S$ and none of the possible rules for this location from Definition~\ref{def:quotientTIOA} of the quotient for TIOA use $x_{\mathit{new}}$ in its guard, it follows that $v_1'\models \varphi$. Furthermore, no matter whether $x_{\mathit{new}}\in c$ or not, we have for $v_2' = v_1'[r\mapsto 0]_{r\in c}$ that $\forall c \in \mathit{Clk}^{T \quotient S} \setminus \{x_{\mathit{new}}\} : v_2'(c) = v_2(c)$. Now consider the following three options for the target location $l_2$.
			}
			\begin{itemize}
				\item \new{If $l_2 = (l^T, l^S)$ with $l^T\in\mathit{Loc}^T$ and $l^S\in\mathit{Loc}^S$, then $\mathit{Inv}(l_2) = \mathbf{T}$. Thus $v_2'\models \mathit{Inv}(l_2)$.}
				\item \new{If $l_2 = l_u$, then $\mathit{Inv}(l_2) = \mathbf{T}$. Thus $v_2'\models \mathit{Inv}(l_2)$.}
				\item \new{If $l_2 = l_e$, then $\mathit{Inv}(l_2) = x_{\mathit{new}} = 0$. Also, $c = \{x_{\mathit{new}}\}$, thus $v_2(x_{\mathit{new}}) = v_2'(x_{\mathit{new}}) = 0$. Thus $v_2'\models \mathit{Inv}(l_2)$.}
			\end{itemize}
			\new{Therefore, we can conclude that $(l_1', v_1') \xlongrightarrow{a}{}^{\!\! \sem{T\quotient S}} (l_2, v_2')$, $(l_2, v_2')\in [r_2]_{\sim}$, and by construction of $R$  that $((l_2, v_2'), [r_2]_{\sim}) \in R$. Since we picked any state $(l_1', v_1') \in [r_1]_{\sim}$, the conclusion holds for all states $q_1 \in [r_1]_{\sim}$.}
		\end{itemize}
		
		\item \new{$[r_1]_{\sim} \xlongrightarrow{a}{}^{\!\! \sem{T\quotient S}/{\sim}} [r_2]_{\sim}$, $[r_2]_{\sim}\in Q^{\sem{T\quotient S}/{\sim}}$, and $a \in \mathit{Act}^{\sem{T\quotient S}/{\sim}}\setminus \mathit{Act}^{\sem{T\quotient S}}$. This case is infeasible, since $\mathit{Act}^{\sem{T\quotient S}} = \mathit{Act}^{\sem{T\quotient S}/{\sim}}$. 
		}
		
		\item \new{$q_1 \xlongrightarrow{d}{}^{\!\! \sem{T\quotient S}} q_2$, $q_2\in Q^{\sem{T\quotient S}}$, and $d \in \mathbb{R}_{\geq 0}$.  By the definition of an equivalence class and Definition~\ref{def:equivalencequotientTIOTS} it follows immediately that $[q_1]_{\sim} \xlongrightarrow{a}{}^{\!\! \sem{T\quotient S}/{\sim}} [q_2]_{\sim}$. By construction of $R$ it follows that $(q_2, [q_2]_{\sim}) \in R$.
		}
		
		\item \new{$[r_1]_{\sim} \xlongrightarrow{d}{}^{\!\! \sem{T\quotient S}/{\sim}} [r_2]_{\sim}$, $[r_2]_{\sim}\in Q^{\sem{T\quotient S}/{\sim}}$, and $d \in \mathbb{R}_{\geq 0}$. By construction of $R$, we have to show that $\forall q_1\in[r_1]_{\sim} \exists q_2\in Q^{\sem{T\quotient S}}: q_1 \xlongrightarrow{a}{}^{\!\! \sem{T\quotient S}} q_2$, $q_2\in [r_2]_{\sim}$, and $(q_2, [r_2]_{\sim})\in R$. Consider the following three cases based on the construction of $\sim$:
		}
		
		\begin{itemize}
			\item \new{$[r_1]_{\sim} = \{ q \mid q \in \{l_e\} \times V_0\}$. In this case, let $q_1 = (l_e, v_1) \in [r_1]_{\sim}$ for some $v_1 \in  V_0$. From Definition~\ref{def:semanticTIOA} of the semantic of a TIOA it follows that $\sem{T\quotient S}$ is in location $l_e$. From Definition~\ref{def:quotientTIOA} of the quotient it follows that $\mathit{Inv}(l_e) = x_{\mathit{new}} = 0$. Furthermore, since $[r_1]_{\sim} \xlongrightarrow{d}{}^{\!\! \sem{T\quotient S}/{\sim}} [r_2]_{\sim}$, it holds that $\exists r_1, r_2 \in Q^{\sem{T\quotient S}} : r_1 \xlongrightarrow{d}{}^{\!\! \sem{T\quotient S}} r_2$. Following Definition~\ref{def:semanticTIOA} and the above observation, it holds that $r_1 = (l_e, v_1')$ and $r_2 = (l_e, v_2')$ for some $v_1', v_2'\in [\mathit{Clk}^{T\quotient S}\mapsto \mathbb{R}_{\geq 0}]$, $v_2' = v_1' + d$ and $v_2'\models\mathit{Inv}(l_e)$. From  $v_2'\models\mathit{Inv}(l_e)$ it follows that $v_2'(x_{\mathit{new}}) = 0$, thus $d = 0$, $v_1' = v_2'$, $v_1', v_2'\in V_0$, and $[r_2]_{\sim} = [r_1]_{\sim}$. Thus we can conclude that $q_1\xlongrightarrow{d}{}^{\!\! \sem{T\quotient S}} q_2$ with $q_2 \in [r_2]_{\sim}$. By construction of $R$ it follows that $(q_2, [r_2]_{\sim}) \in R$.
			}
			
			\item \new{$[r_1]_{\sim} = \{ q \mid q \in \{l_e\} \times V_{>0}\}$. This case is trivial, since $[r_1]_{\sim} = \{r_1\} = \{q_1\}$. Therefore, if $[r_1]_{\sim} \xlongrightarrow{d}{}^{\!\! \sem{T\quotient S}/{\sim}} [r_2]_{\sim}$, $\exists q_2 \in [r_2]_{\sim}$ such that $q_1 \xlongrightarrow{d}{}^{\!\! \sem{T\quotient S}} q_2$.
			}
			
			\item \new{$[r_1]_{\sim} = \{ q \mid q \in \{l_u\} \times [\mathit{Clk}^{T\quotient S} \mapsto \mathbb{R}_{\geq 0}]\}$. In this case, let $q_1 = (l_u, v_1) \in [r_1]_{\sim}$ for some $v_1 \in  V_0$. From Definition~\ref{def:semanticTIOA} of the semantic of a TIOA it follows that $\sem{T\quotient S}$ is in location $l_u$. From Definition~\ref{def:quotientTIOA} of the quotient it follows that $\mathit{Inv}(l_u) = \mathbf{T}$. Furthermore, since $[r_1]_{\sim} \xlongrightarrow{d}{}^{\!\! \sem{T\quotient S}/{\sim}} [r_2]_{\sim}$, it holds that $\exists r_1, r_2 \in Q^{\sem{T\quotient S}} : r_1 \xlongrightarrow{d}{}^{\!\! \sem{T\quotient S}} r_2$. Following Definition~\ref{def:semanticTIOA} and the above observation, it holds that $r_1 = (l_u, v_1')$ and $r_2 = (l_u, v_2')$ for some $v_1', v_2'\in [\mathit{Clk}^{T\quotient S}\mapsto \mathbb{R}_{\geq 0}]$, $v_2' = v_1' + d$ and $v_2'\models\mathit{Inv}(l_u)$. Now it follows that $(l_u, v_2')\in [r_1]_{\sim}$, thus $[r_2]_{\sim} = [r_1]_{\sim}$. Therefore, we can conclude that $q_1\xlongrightarrow{d}{}^{\!\! \sem{T\quotient S}} q_2$ with $q_2 \in [r_2]_{\sim}$ and by construction of $R$ it follows that $(q_2, [r_2]_{\sim}) \in R$.
			}
			
			\item \new{In this case, since $[r_1]_{\sim} \xlongrightarrow{d}{}^{\!\! \sem{T\quotient S}/{\sim}} [r_2]_{\sim}$, it holds that $\exists r_1, r_2 \in Q^{\sem{T\quotient S}} : r_1 \xlongrightarrow{d}{}^{\!\! \sem{T\quotient S}} r_2$. Following Definition~\ref{def:semanticTIOA} of the semantic of a TIOA, it holds that $r_1 = (l, v_1)$, $r_2 = (l, v_2)$, $l \in \mathit{Loc}^{T\quotient S}$, $v_1, v_2\in [\mathit{Clk}^{T\quotient S}\mapsto \mathbb{R}_{\geq 0}]$, $v_2 = v_1 + d$, $v_2\models \mathit{Inv}(l)$, and $\forall d'\in\mathbb{R}_{\geq 0}, d' < d: v_1 + d'\models \mathit{Inv}(l)$. 
				From the construction of $\sim$, it follows that for any  state $(l_1', v_1') \in [r_1]_{\sim}$ it holds that $l_1' = l_1$, $l_1 \neq l_e$, and $\forall c \in \mathit{Clk}^{T \quotient S} \setminus \{x_{\mathit{new}}\} : v_1'(c) = v_1(c)$. Therefore, we have for $v_2' = v_1' + d$ that $\forall c \in \mathit{Clk}^{T \quotient S} \setminus \{x_{\mathit{new}}\} : v_2'(c) = v_2(c)$; similarly, for $v_1' + d'$ we have that $\forall c \in \mathit{Clk}^{T \quotient S} \setminus \{x_{\mathit{new}}\} : v_1'+d'(c) = v_1 + d'(c)$. From Definition~\ref{def:quotientTIOA} of the quotient for TIOA it follows that $\mathit{Inv}(l) = \mathit{Inv}(l') = \mathbf{T}$. Thus $v_2'\models \mathit{Inv}(l')$ and $v_1' + d'\models \mathit{Inv}(l')$. Therefore, from Definition~\ref{def:semanticTIOA} again we have that $(l_1', v_1') \xlongrightarrow{d}{}^{\!\! \sem{T\quotient S}} (l_1', v_2')$, $(l_2, v_2')\in [r_2]_{\sim}$, and by construction of $R$  that $((l_2, v_2'), [r_2]_{\sim}) \in R$. Since we picked any state $(l_1', v_1') \in [r_1]_{\sim}$, the conclusion holds for all states $q_1 \in [r_1]_{\sim}$.}
		\end{itemize}
	\end{enumerate}
\end{proof}

\new{The following definition defines the TIOTS of the $\sim$-quotient of $\sem{T\quotient S}$ where all states consisting of the error location and a valuation where $u(x_{\mathit{new}}) > 0$ are removed, as these states are never reachable.}
\begin{definition}\label{def:quotientreducedquotient}
	\new{Given specification automata $S = (\mathit{Loc}^S,l_0^S,\mathit{Act}^S,\mathit{Clk}^S, E^S,\mathit{Inv}^S)$ and $T = (\mathit{Loc}^T,l_0^T,\allowbreak\mathit{Act}^T,\mathit{Clk}^T, E^T,\allowbreak\mathit{Inv}^T)$ where $\mathit{Act}_o^S\cap\mathit{Act}_i^T=\emptyset$. Let $V_0 = \{u \in [\mathit{Clk}^{T\quotient S} \mapsto \mathbb{R}_{\geq 0}] \mid u(x_{\mathit{new}}) = 0 \}$, $V_{>0} = [\mathit{Clk}^{T\quotient S} \mapsto \mathbb{R}_{\geq 0}] \setminus V_0$, and ${\sim} = \{ (q_1, q_2) \mid q_1, q_2 \in \{l_e\} \times V_0\} \cup \{ (q, q) \mid q \in \{l_e\} \times V_{>0}\} \cup \{ (q_1, q_2) \mid q_1, q_2 \in \{l_u\} \times [\mathit{Clk}^{T \quotient S} \mapsto \mathbb{R}_{\geq 0}] \} \cup \{ ((l, v_1), (l, v_2)) \mid l \in \mathit{Loc}^{T\quotient S}\setminus \{l_e, l_u\}, v_1, v_2 \in [\mathit{Clk}^{T\quotient S}\mapsto \mathbb{R}_{\geq 0}], \forall c\in\mathit{Clk}^{T\quotient S}\setminus \{x_{\mathit{new}}\}, v_1(c) = v_2(c) \}$. The \emph{reduced $\sim$-quotient of $\sem{T\quotient S}$}, denoted by $\sem{T\quotient S}^{\rho}$, is defined as TIOTS $(Q^{\rho}, q_0^{\rho}, \mathit{Act}^{T\quotient S}, \rightarrow^{\rho})$ where $Q^{\rho} = Q^{\sem{T\quotient S}/{\sim}} \setminus \{[q] \mid q \in \{l_e\} \times V_{>0}\}$, $q_0^{\rho} = q_0^{\sem{T\quotient S}/{\sim}}$, and $\rightarrow^{\rho} = \rightarrow^{\sem{T\quotient S}/{\sim}} \cap \{(q_1, a, q_2) \mid q_1, q_2 \in Q, a \in\mathit{Act}^{T\quotient S}\}$.}
\end{definition}

\begin{lemma}\label{lemma:quotientreducedquotient}
	\new{Given specification automata $S = (\mathit{Loc}^S,l_0^S,\mathit{Act}^S,\mathit{Clk}^S, E^S,\mathit{Inv}^S)$ and $T = (\mathit{Loc}^T,l_0^T,\allowbreak\mathit{Act}^T,\mathit{Clk}^T, E^T,\allowbreak\mathit{Inv}^T)$ where $\mathit{Act}_o^S\cap\mathit{Act}_i^T=\emptyset$. Then $ \sem{T\quotient S} \simeq \sem{T\quotient S}^{\rho}$.}
\end{lemma}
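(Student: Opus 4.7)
The plan is to combine Lemma~\ref{lemma:quotientequivalencequotient} with a direct bisimulation argument relating $\sem{T\quotient S}/{\sim}$ and $\sem{T\quotient S}^{\rho}$. First, I would establish (as a standard auxiliary fact, possibly folded into a short remark) that bisimilarity is transitive: given witnessing relations $R_1$ for $X \simeq Y$ and $R_2$ for $Y \simeq Z$, the relational composition $R_1 \circ R_2 = \{(x,z) \mid \exists y: (x,y)\in R_1 \wedge (y,z)\in R_2\}$ witnesses $X \simeq Z$, which is verified by inspection of the six clauses of Definition~\ref{def:bisimulation}. With transitivity in hand and Lemma~\ref{lemma:quotientequivalencequotient} giving $\sem{T\quotient S} \simeq \sem{T\quotient S}/{\sim}$, it suffices to prove $\sem{T\quotient S}/{\sim} \simeq \sem{T\quotient S}^{\rho}$.

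The key technical step is to show that the deleted equivalence classes $\{[q]_{\sim} \mid q \in \{l_e\} \times V_{>0}\}$ are unreachable from the initial state in $\sem{T\quotient S}/{\sim}$. I would argue this by inspecting Definition~\ref{def:quotientTIOA}: the only edges in $E^{T\quotient S}$ that target location $l_e$ are those produced by rules~5 and~6, both of which carry the reset set $\{x_{\mathit{new}}\}$. By Definition~\ref{def:semanticTIOA}, any discrete transition into $(l_e, v')$ therefore forces $v'(x_{\mathit{new}}) = 0$, hence $v' \in V_0$. Moreover, because $\mathit{Inv}(l_e) = x_{\mathit{new}} \leq 0$, any delay transition $(l_e, v)\xlongrightarrow{d}(l_e, v+d)$ requires $(v+d)(x_{\mathit{new}}) = 0$, so such a state already has $v(x_{\mathit{new}}) = 0$; and the sole edges out of $l_e$ (rule~10) have guard $x_{\mathit{new}} = 0$ and empty reset, so they keep $x_{\mathit{new}}$ at $0$. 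Finally, the initial location is $(l_0^T, l_0^S) \neq l_e$ (assuming, without loss of generality, that $l_e$ is chosen fresh). Pushing this observation through the quotient by $\sim$, no element of $\{[q]_{\sim} \mid q \in \{l_e\} \times V_{>0}\}$ is reachable in $\sem{T\quotient S}/{\sim}$ from its initial state.

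With unreachability established, I would take the identity-style relation
\[
R = \{([q]_{\sim},\, [q]_{\sim}) \mid [q]_{\sim} \in Q^{\rho}\}
\]
and verify it witnesses $\sem{T\quotient S}/{\sim} \simeq \sem{T\quotient S}^{\rho}$. The pair of initial states lies in $R$ since $q_0^{\rho} = q_0^{\sem{T\quotient S}/{\sim}}$ by Definition~\ref{def:quotientreducedquotient}, and $[(l_0^{T\quotient S}, \bm{0})]_{\sim} \in Q^{\rho}$ because $l_0^{T\quotient S} \neq l_e$. The right-to-left simulation clauses are immediate because $\rightarrow^{\rho} \subseteq \rightarrow^{\sem{T\quotient S}/{\sim}}$ by construction. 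For the left-to-right clauses, given a transition $[q_1]_{\sim} \xlongrightarrow{a} [q_2]_{\sim}$ in $\sem{T\quotient S}/{\sim}$ with $[q_1]_{\sim} \in Q^{\rho}$, I need to show $[q_2]_{\sim} \in Q^{\rho}$ so that the transition survives in $\sem{T\quotient S}^{\rho}$. This is where the unreachability observation is used: since $[q_1]_{\sim}$ is reachable (by an induction on the length of any witnessing run from $q_0^{\sem{T\quotient S}/{\sim}}$, noting $R$ is preserved along runs), so is $[q_2]_{\sim}$, and therefore $[q_2]_{\sim} \notin \{[q]_{\sim} \mid q \in \{l_e\} \times V_{>0}\}$; hence $[q_2]_{\sim} \in Q^{\rho}$ and $([q_2]_{\sim}, [q_2]_{\sim}) \in R$.

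The main obstacle I anticipate is a clean, completely formal treatment of the unreachability claim — in particular, making precise that the ``initial-segment'' reasoning about runs in $\sem{T\quotient S}$ carries over faithfully to $\sem{T\quotient S}/{\sim}$ under the quotient by $\sim$. A tidy way to discharge this is to prove the slightly stronger invariant that every state $(l_e, v)$ reachable from $q_0^{\sem{T\quotient S}}$ satisfies $v(x_{\mathit{new}}) = 0$, by induction on the number of transitions, using the three observations above about rules~5,~6, and~10 together with the invariant $\mathit{Inv}(l_e) = x_{\mathit{new}} \leq 0$; the quotient by $\sim$ then transports this invariant to reachable equivalence classes. Once this invariant is in place, the bisimulation construction goes through mechanically, and transitivity delivers the theorem.
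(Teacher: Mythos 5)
Your proposal is correct and follows essentially the same route as the paper: both reduce the claim, via transitivity of $\simeq$ and Lemma~\ref{lemma:quotientequivalencequotient}, to comparing $\sem{T\quotient S}/{\sim}$ with $\sem{T\quotient S}^{\rho}$ using the identity relation on $Q^{\rho}$, and both discharge the only nontrivial clause by observing that every edge of $T\quotient S$ targeting $l_e$ resets $x_{\mathit{new}}$ (and that the invariant $x_{\mathit{new}}\leq 0$ blocks delays out of $V_0$), so no transition ever enters a deleted class $\{l_e\}\times V_{>0}$. One caveat: your $R$ is the identity on \emph{all} of $Q^{\rho}$, including states that are not reachable from the initial state, yet your justification that $[q_2]_{\sim}\in Q^{\rho}$ appeals to reachability of $[q_1]_{\sim}$; Definition~\ref{def:bisimulation} requires the clauses for every pair in $R$, so that appeal does not cover unreachable pairs. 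The reachability detour is unnecessary anyway --- the local closure fact you already prove in your second paragraph (no discrete or delay transition from any state of $Q^{\rho}$ lands in $\{l_e\}\times V_{>0}$) applies uniformly to all states and is exactly what the paper's proof uses; drop the induction on runs and invoke that fact directly.
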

\begin{proof}
	\new{Since bisimulation relation is an equivalence relation, it follows from Lemma~\ref{lemma:quotientequivalencequotient} that it suffice to show that $\sem{T\quotient S}/{\sim} \simeq \sem{T\quotient S}^{\rho}$. Let $R = \{(q, q) \mid q \in Q^{\sem{T\quotient S}^{\rho}} \}$. We will show that $R$ is a bisimulation relation. First, observe that $(q_0, q_0) \in R$ by definition of $\sem{T\quotient S}^{\rho}$. Instead of checking all six cases of bisimulation (Definition~\ref{def:bisimulation}), we will show that $q_1 \arrownot\xlongrightarrow{a}{}^{\!\! \sem{T\quotient S}/{\sim}} q_2$ for any $a\in \mathit{Act}^{T\quotient S} \cup \mathbb{R}_{\geq 0}$ where $q_1 \in Q^{\sem{T\quotient S}^{\rho}}$ and $q_2 \in \{l_e\} \times V_{>0}$ (i.e., $q_2 \notin Q^{\sem{T\quotient S}^{\rho}}$). Only rules 5, 7, and 11 of Definition~\ref{def:quotientTIOA} of the quotient for TIOA have target location $l_e$, and thus could become $q_2$ in the semantic of it. But notice that all three cases have clock reset $c = \{x_{\mathit{new}}\}$. Therefore, any state $(l_e, u)$ reached after taking a transition matching one of these three rules has a valuation $u(x_{\mathit{new}}) = 0$. Thus $(l_e, u) \notin \{l_e\} \times V_{>0}$ and $q_1 \arrownot\xlongrightarrow{a}{}^{\!\! \sem{T\quotient S}/{\sim}} q_2$. Therefore, all reachable state pairs by bisimulation remains within $R$.}
\end{proof}

\begin{lemma}~\label{lemma:quotientTSandAsamestateset} 
	\new{Given specification automata $S = (\mathit{Loc}^S,l_0^S,\mathit{Act}^S,\mathit{Clk}^S, E^S,\mathit{Inv}^S)$ and $T = (\mathit{Loc}^T,l_0^T,\mathit{Act}^T,\mathit{Clk}^T, E^T,\allowbreak\mathit{Inv}^T)$ where $\mathit{Act}_o^S\cap\mathit{Act}_i^T=\emptyset$.  Let $f: Q^{\sem{T\quotient S}^{\rho}} \rightarrow Q^{\sem{T} \quotient \sem{S}}$ be defined as }
	\begin{itemize}
		\item \new{$f([((l^T,l^S),v)]_{\sim}) = ((l^T,v^T), (l^S,v^S))$ for any $v\in (\mathit{Clk}^{\sem{T\quotient S}}\times \mathbb{R}_{\geq 0})$, $l^T\in \mathit{Loc}^T$, $v^T\in (\mathit{Clk}^T\times \mathbb{R}_{\geq 0})$, $l^S\in \mathit{Loc}^S$, and $v^S\in (\mathit{Clk}^S\times \mathbb{R}_{\geq 0})$ such that $\forall x \in\mathit{Clk}^T: v(x) = v^T(x)$ and $\forall x \in\mathit{Clk}^S: v(x) = v^S(x)$.
		}
		\item \new{$f([(l_u,v)]_{\sim}) = u$ for any $v\in (\mathit{Clk}^{\sem{T\quotient S}}\times \mathbb{R}_{\geq 0})$.}
		\item \new{$f([(l_e,v)]_{\sim}) = e$ for any $v\in V_0$.}
	\end{itemize}
	\new{Then $f$ is a bijective function.}
\end{lemma}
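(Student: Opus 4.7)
The plan is to verify in turn that $f$ is well-defined, surjective, and injective, with the case analysis following the partition of $Q^{\sem{T\quotient S}^{\rho}}$ induced by the three kinds of locations in $\mathit{Loc}^{T\quotient S} = (\mathit{Loc}^T \times \mathit{Loc}^S) \cup \{l_u, l_e\}$.

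For well-definedness, I would observe that by the construction of ${\sim}$ in Lemma~\ref{lemma:quotientequivalencequotient}, two representatives $(l,v_1)$ and $(l,v_2)$ of the same equivalence class with $l \in \mathit{Loc}^T \times \mathit{Loc}^S$ agree on every clock in $\mathit{Clk}^{T\quotient S} \setminus \{x_{\mathit{new}}\}$. Since $x_{\mathit{new}}$ is a fresh clock (Definition~\ref{def:quotientTIOA}), $\mathit{Clk}^T \uplus \mathit{Clk}^S \subseteq \mathit{Clk}^{T\quotient S}\setminus\{x_{\mathit{new}}\}$, so the restrictions $v^T$ and $v^S$ do not depend on the chosen representative. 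The other two clauses of $f$ yield the constant values $u$ and $e$ respectively, so well-definedness there is immediate. Moreover, the domain $Q^{\sem{T\quotient S}^\rho}$ has exactly these three disjoint families of classes (since $\{l_e\}\times V_{>0}$ was removed in Definition~\ref{def:quotientreducedquotient}), so $f$ is total.

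For surjectivity, I would invoke Definition~\ref{def:quotientTIOTS} to note that $Q^{\sem{T}\quotient\sem{S}} = (Q^{\sem{T}}\times Q^{\sem{S}}) \cup \{u, e\}$. The elements $u$ and $e$ are hit by $[(l_u,\bm{0})]_{\sim}$ and $[(l_e,\bm{0})]_{\sim}$ respectively (noting $\bm{0}\in V_0$). For any $((l^T,v^T),(l^S,v^S))$, I would construct a valuation $v \in [\mathit{Clk}^{T\quotient S}\mapsto \mathbb{R}_{\geq 0}]$ by setting $v(x) = v^T(x)$ for $x\in \mathit{Clk}^T$, $v(x) = v^S(x)$ for $x\in \mathit{Clk}^S$, and $v(x_{\mathit{new}}) = 0$; then $[((l^T,l^S),v)]_{\sim}$ lies in $Q^{\sem{T\quotient S}^\rho}$ and maps to the desired element.

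For injectivity, I would do a case split on the target values. Two classes mapped to $u$ must each have a representative with location $l_u$, hence lie in the single class $\{l_u\}\times[\mathit{Clk}^{T\quotient S}\mapsto\mathbb{R}_{\geq 0}]$; similarly for $e$, both classes must sit in $\{l_e\}\times V_0$, which is also a single class. Two classes both mapped to some $((l^T,v^T),(l^S,v^S))$ share the same location $(l^T,l^S)$, and their representative valuations both restrict to $v^T$ on $\mathit{Clk}^T$ and to $v^S$ on $\mathit{Clk}^S$; hence they agree on $\mathit{Clk}^{T\quotient S}\setminus\{x_{\mathit{new}}\}$ and are therefore $\sim$-equivalent. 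The targets $u$, $e$, and elements of $Q^{\sem T}\times Q^{\sem S}$ are pairwise disjoint in the codomain, ruling out cross-family collisions. The only delicate point is carefully ensuring that the freshness of $x_{\mathit{new}}$ precludes interference between the clock sets — everything else is routine bookkeeping, so I do not anticipate a genuine obstacle.
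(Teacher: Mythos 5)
Your proof is correct and rests on the same key facts as the paper's: the structure of $\sim$ (classes over $\mathit{Loc}^T\times\mathit{Loc}^S$ are determined by the location and the valuation restricted to $\mathit{Clk}^{T\quotient S}\setminus\{x_{\mathit{new}}\}$, while $l_u$ and the retained $l_e$-states each form a single class) together with $\mathit{Clk}^{T\quotient S}=\mathit{Clk}^T\uplus\mathit{Clk}^S\uplus\{x_{\mathit{new}}\}$. You are somewhat more thorough than the paper, which declares injectivity immediate and argues only that each $((l^T,v^T),(l^S,v^S))$ has a unique preimage class; your explicit well-definedness check (that $f$ is independent of the chosen representative) and explicit surjectivity witness fill in steps the paper leaves implicit, but the route is the same.
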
 
\begin{proof}
	\new{It follows directly from the definition that $f$ is injective. We only have to show that $f$ is surjective, where the last two cases are again trivial by definition of $f$. Thus we only have to show that any state $((l^T,v^T), (l^S,v^S))$ maps to only a single state $[((l^T,l^S),v)]_{\sim}$ in $\sem{T\quotient S}^{\rho}$. For this, note that $\sim$ in Definition~\ref{def:quotientreducedquotient} contains $ \{ ((l, v_1), (l, v_2)) \mid l \in \mathit{Loc}^{T\quotient S}\setminus \{l_e, l_u\}, v_1, v_2 \in [\mathit{Clk}^{T\quotient S}\mapsto \mathbb{R}_{\geq 0}], \forall c\in\mathit{Clk}^{T\quotient S}\setminus \{x_{\mathit{new}}\}, v_1(c) = v_2(c) \}$. Now we will show that state $((l^T,v^T), (l^S,v^S))$ maps to only a single state $[((l^T,l^S),v)]_{\sim}$ using contradiction. Assume that state $((l^T,v^T), (l^S,v^S))$ maps to two (or more) states $[((l_1^T,l_1^S),v_1)]_{\sim}$ and $[((l_2^T,l_2^S),v_1)]_{\sim}$. From $\sim$ it follows that either $l_1^T \neq l_2^T$,  $l_1^S \neq l_2^S$, or $\exists c \in\mathit{Clk}^{T\quotient S}\setminus \{x_{\mathit{new}}\}: v_1(c) \neq v_2(c)$. But since we only consider a single state $((l^T,v^T), (l^S,v^S))$, none of these options can hold. Thus our assumption does not hold, which concludes the proof.}
\end{proof}

\new{Since we now have a bijective function $f$ relating states in $\sem{T\quotient S}^{\rho}$ and $\sem{T} \quotient \sem{S}$ together, we can effectively relabel the states in $\sem{T\quotient S}^{\rho}$ from $[((l^T, l^S), v)]_{\sim}$ to $((l^T, l^S), v^{T,S})$ in all proofs below, where $v^{T,S}\in [\mathit{Clk}^T\cup\mathit{Clk}^S \mapsto \mathbb{R}_{\geq 0}]$ with $\forall c\in \mathit{Clk}^T \cup \mathit{Clk}^S: v^{T,S}(c) = v(c)$. Notice that we remove the clock $x_{\mathit{new}}$ from the state label, as this clock is not present in the state labels in $\sem{T} \quotient \sem{S}$. Thus $Q^{\sem{T\quotient S}^{\rho}} = \{ ((l^T,l^S),v) \mid l^T\in\mathit{Loc}^T, l^S\in \mathit{Loc}^S, v\in [\mathit{Clk}^T \cup\mathit{Clk}^S \mapsto \mathbb{R}_{\geq 0}]\} \cup \{u, e\}$.}

\begin{lemma}\label{lemma:quotientreducedsamecons}
	\new{Given specification automata $S = (\mathit{Loc}^S,l_0^S,\mathit{Act}^S,\mathit{Clk}^S, E^S,\mathit{Inv}^S)$ and $T = (\mathit{Loc}^T,l_0^T,\allowbreak\mathit{Act}^T,\mathit{Clk}^T, E^T,\allowbreak\mathit{Inv}^T)$ where $\mathit{Act}_o^S\cap\mathit{Act}_i^T=\emptyset$. Then $\forall [q]\in Q^{\sem{T\quotient S}^{\rho}}, \forall q \in [q]_{\sim}$: $q\in \mathrm{cons}^{\sem{T\quotient S}}$ iff $[q]\in\mathrm{cons}^{\sem{T\quotient S}^{\rho}}$.}
\end{lemma}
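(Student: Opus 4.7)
The plan is to exploit the bisimulation established in Lemma~\ref{lemma:quotientreducedquotient}, namely $\sem{T\quotient S} \simeq \sem{T\quotient S}^{\rho}$. Since $\mathrm{cons}$ is the greatest fixed point of $\Theta$, which is defined purely in terms of the transition structure (delays, outputs, inputs, and the error operator), one expects that bisimilar states are simultaneously in $\mathrm{cons}$ or simultaneously outside it. The proof therefore reduces to a general preservation-under-bisimulation argument instantiated to the bisimulation of Lemma~\ref{lemma:quotientreducedquotient}.

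First I would fix the witness bisimulation $R\subseteq Q^{\sem{T\quotient S}}\times Q^{\sem{T\quotient S}^{\rho}}$ supplied by Lemma~\ref{lemma:quotientreducedquotient}, and show a matching lemma for the error operator: for every $X\subseteq Q^{\sem{T\quotient S}}$ and its image $[X]_{\sim}\subseteq Q^{\sem{T\quotient S}^{\rho}}$, for all $(q,[q]_{\sim})\in R$ it holds that $q\in\mathrm{err}^{\sem{T\quotient S}}(X)$ iff $[q]_{\sim}\in\mathrm{err}^{\sem{T\quotient S}^{\rho}}([X]_{\sim})$. This is done by unfolding Definition~\ref{def:error} and using the six mimicking clauses of the bisimulation to transport every delay, every output transition, and every successor membership between the two systems. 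Note that $\mathit{Act}^{\sem{T\quotient S}}=\mathit{Act}^{\sem{T\quotient S}^{\rho}}$, so no action-difference clause of Definition~\ref{def:bisimulation} is actually triggered.

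Next, for the $(\Rightarrow)$ direction, I would define
\begin{equation*}
    X = \left\{\, [q]_{\sim} \in Q^{\sem{T\quotient S}^{\rho}} \;\middle|\; \exists q'\in [q]_{\sim}:\; q' \in \mathrm{cons}^{\sem{T\quotient S}} \,\right\}
\end{equation*}
and show $X\subseteq \Theta^{\sem{T\quotient S}^{\rho}}(X)$ by a standard co-inductive argument, picking a representative $q\in [q]_{\sim}\cap\mathrm{cons}^{\sem{T\quotient S}}$ and then using the bisimulation to transfer the required delay/output/input witnesses from $q$ to $[q]_{\sim}$ and the error preservation lemma above. By maximality of $\mathrm{cons}^{\sem{T\quotient S}^{\rho}}$, $X\subseteq\mathrm{cons}^{\sem{T\quotient S}^{\rho}}$, which gives the implication. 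For $(\Leftarrow)$ I would symmetrically define
\begin{equation*}
    Y = \left\{\, q\in Q^{\sem{T\quotient S}} \;\middle|\; [q]_{\sim} \in \mathrm{cons}^{\sem{T\quotient S}^{\rho}}\,\right\}
\end{equation*}
and show $Y\subseteq \Theta^{\sem{T\quotient S}}(Y)$, again transferring witnesses via $R$ and invoking the error preservation lemma.

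The main obstacle is the error preservation step: handling the universal quantifier over delays together with the disjunction in Definition~\ref{def:error} (the clause $q'\arrownot\xlongrightarrow{o!} \vee \forall q'': q'\xlongrightarrow{o!}q''\Rightarrow q''\in X$) requires care, because moving a witness from $\sem{T\quotient S}$ to $\sem{T\quotient S}^{\rho}$ may collapse several $q''$ into a single equivalence class, and conversely an equivalence class in $\sem{T\quotient S}^{\rho}$ may split into many concrete successors in $\sem{T\quotient S}$. The saving grace is that the bisimulation of Lemma~\ref{lemma:quotientreducedquotient} is strong in both directions, so one can convert universal/existential successor quantifications through $R$ by the same back-and-forth reasoning already used in the proof of Lemma~\ref{lemma:quotientequivalencequotient}; the argument then proceeds uniformly for the three kinds of equivalence classes ($(l^T,l^S)$-classes, the $l_u$-class, and the $l_e$-class with $x_{\mathit{new}}=0$).
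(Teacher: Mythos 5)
Your proposal is correct, and it rests on the same key input as the paper's proof, namely the bisimulation $\sem{T\quotient S}\simeq\sem{T\quotient S}^{\rho}$ obtained by composing Lemmas~\ref{lemma:quotientequivalencequotient} and~\ref{lemma:quotientreducedquotient}, together with the observation that bisimilar states agree on membership in $\mathrm{err}$. Where you diverge is in how the fixed point is transferred. The paper argues by contradiction at the level of the game intuition: if some representative $q'\in[q]_{\sim}$ were inconsistent, the environment could force a path from $q'$ to an error state, the bisimulation lets $\sem{T\quotient S}^{\rho}$ mimic that path from $[q]_{\sim}$ into $[q'']_{\sim}$, and error-state preservation yields the contradiction (and symmetrically). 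You instead work directly with the greatest-fixed-point characterization of $\mathrm{cons}$ via $\Theta$: you exhibit the set of classes with a consistent representative (respectively the full preimage of $\mathrm{cons}^{\sem{T\quotient S}^{\rho}}$) as a postfixed point of $\Theta$ and invoke maximality. Your route is somewhat more laborious but also more faithful to the formal definition of $\mathrm{cons}$ -- the paper's appeal to ``a path to an error state'' elides the strategy/forcing aspect of inconsistency, whereas your coinductive argument handles the universal and existential successor quantifiers in $\Theta$ explicitly, which is exactly where the care is needed. One caution: your auxiliary error-preservation lemma, as stated for an arbitrary $X$ and ``its image $[X]_{\sim}$'', only holds as an equivalence when $X$ is $\sim$-saturated (otherwise a successor outside $X$ may land in a class that nevertheless belongs to $[X]_{\sim}$); the sets you actually feed into it in the two coinductive directions are fine, but you should state the lemma with that restriction or instantiate it only for the saturated sets you use.
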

\begin{proof}
	\new{From Lemmas~\ref{lemma:quotientequivalencequotient} and~\ref{lemma:quotientreducedquotient} it follows that $\sem{T\quotient S} \simeq \sem{T\quotient S}^{\rho}$. With $R_1 = \{(q, [q]_{\sim}) \mid q \in Q^{\sem{T\quotient S}} \}$ being the bisimulation relation for $\sem{T\quotient S} \simeq \sem{T\quotient S}/{\sim}$ and $R_2 = \{(q, q) \mid q \in Q^{\sem{T\quotient S}^{\rho}} \}$ the bisimulation relation for $\sem{T\quotient S}/{\sim} \simeq \sem{T\quotient S}^{\rho}$, we have that $R = \{(q, [q]_{\sim}) \mid [q]_{\sim} \in Q^{\sem{T\quotient S}^{\rho}} \}$ is a bisimulation relation for $\sem{T\quotient S} \simeq \sem{T\quotient S}^{\rho}$. Using this bisimulation relation, we can easily see that $q$ is an error state iff $[q]_{\sim}$ is an error state.
	}
	
	\new{We will now proof $q\in \mathrm{cons}^{\sem{T\quotient S}}$ iff $[q]\in\mathrm{cons}^{\sem{T\quotient S}^{\rho}}$ by contradiction. First, assume that $[q]\in\mathrm{cons}^{\sem{T\quotient S}^{\rho}}$, but $\exists q' \in [q]_{\sim}$ such that $q'\notin \mathrm{cons}^{\sem{T\quotient S}}$. That means that there exists a path from $q'$ to an error state $q''$. But since $\sem{T\quotient S} \simeq \sem{T\quotient S}^{\rho}$, it follows that $\sem{T\quotient S}^{\rho}$ can simulate the same path from $[q]_{\sim}$, and using $R$ we have that $\sem{T\quotient S}^{\rho}$ reaches state $[q'']_{\sim}$. But since we assume that $[q]\in\mathrm{cons}^{\sem{T\quotient S}^{\rho}}$, it must hold that $[q'']_{\sim}$ is not an error state. But this contradicts with the previous observation on error states. Showing the contradiction the other way around follows the same argument. Therefore, we can conclude that $q\in \mathrm{cons}^{\sem{T\quotient S}}$ iff $[q]\in\mathrm{cons}^{\sem{T\quotient S}^{\rho}}$.}
\end{proof}

\begin{lemma}\label{lemma:quotientadversarialreduced}
	\new{Given specification automata $S = (\mathit{Loc}^S,l_0^S,\mathit{Act}^S,\mathit{Clk}^S, E^S,\mathit{Inv}^S)$ and $T = (\mathit{Loc}^T,l_0^T,\allowbreak\mathit{Act}^T,\mathit{Clk}^T, E^T,\allowbreak\mathit{Inv}^T)$ where $\mathit{Act}_o^S\cap\mathit{Act}_i^T=\emptyset$. Then $(\sem{T\quotient S})^{\Delta} \simeq (\sem{T\quotient S}^{\rho})^{\Delta}$.}
\end{lemma}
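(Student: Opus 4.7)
The plan is to lift the bisimulation $R = \{(q, [q]_{\sim}) \mid [q]_{\sim}\in Q^{\sem{T\quotient S}^{\rho}}\}$ witnessing $\sem{T\quotient S} \simeq \sem{T\quotient S}^{\rho}$ (established in Lemmas~\ref{lemma:quotientequivalencequotient} and~\ref{lemma:quotientreducedquotient}) through adversarial pruning by restricting it to consistent states. Concretely, I would define
\begin{equation*}
	R^{\Delta} = R \cap \bigl(\mathrm{cons}^{\sem{T\quotient S}} \times \mathrm{cons}^{\sem{T\quotient S}^{\rho}}\bigr) \cup \{(q_0^{\sem{T\quotient S}}, q_0^{\sem{T\quotient S}^{\rho}})\}
\end{equation*}
and argue that $R^{\Delta}$ is a bisimulation relation between $(\sem{T\quotient S})^{\Delta}$ and $(\sem{T\quotient S}^{\rho})^{\Delta}$. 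Note that by Definition~\ref{def:adversarialpruning} both pruned systems retain the same action set as their unpruned versions, so the alphabet precondition of Definition~\ref{def:bisimulation} is immediate.

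Next, I would verify that the initial states are related: by construction $(q_0^{\sem{T\quotient S}}, q_0^{\sem{T\quotient S}^{\rho}}) \in R$, and by Lemma~\ref{lemma:quotientreducedsamecons} they are both consistent or both inconsistent; either way they remain as (related) initial states after adversarial pruning because Definition~\ref{def:adversarialpruning} always keeps $q_0$. For a pair $(q, [q]_{\sim}) \in R^{\Delta}$ with both states consistent, the six bisimulation clauses follow by first using the original bisimulation $R$ to obtain a matching transition in the unpruned system, and then applying Lemma~\ref{lemma:quotientreducedsamecons} to the target state to see that it is consistent on one side iff it is consistent on the other. Hence the transition survives pruning on one side iff it survives on the other, so the matched target pair lies in $R^{\Delta}$.

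The delay clauses require slightly more care because Definition~\ref{def:adversarialpruning} restricts $\rightarrow^{\Delta}$ only to transitions with both endpoints consistent, so a delay $d$ from $q$ to $q'$ in $(\sem{T\quotient S})^{\Delta}$ only exists if $q'\in\mathrm{cons}^{\sem{T\quotient S}}$. By time determinism (Definition~\ref{def:tiots}) and the matching delay from $[q]_{\sim}$ in the unpruned bisimulation, the unique $d$-successor $[q']_{\sim}$ in $\sem{T\quotient S}^{\rho}$ is consistent by Lemma~\ref{lemma:quotientreducedsamecons}, so the delay survives in $(\sem{T\quotient S}^{\rho})^{\Delta}$ as well. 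The symmetric direction is analogous.

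The main obstacle will be the corner case where $q_0^{\sem{T\quotient S}}$ (and thus $q_0^{\sem{T\quotient S}^{\rho}}$) is itself inconsistent: then Definition~\ref{def:adversarialpruning} still keeps it as the initial state but removes all outgoing transitions, so I need the bisimulation to hold vacuously on that pair by checking that no transitions emanate from either side. This is why the extra singleton $\{(q_0^{\sem{T\quotient S}}, q_0^{\sem{T\quotient S}^{\rho}})\}$ is included in $R^{\Delta}$: it covers this edge case without requiring the initial states to be in $\mathrm{cons}$. Everything else is a routine book-keeping argument combining the already established bisimulation of Lemma~\ref{lemma:quotientreducedquotient} with the consistency preservation of Lemma~\ref{lemma:quotientreducedsamecons}.
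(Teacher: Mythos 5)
Your proposal is correct and follows essentially the same route as the paper's proof: both restrict the bisimulation $R$ from Lemmas~\ref{lemma:quotientequivalencequotient} and~\ref{lemma:quotientreducedquotient} to the consistent states and invoke Lemma~\ref{lemma:quotientreducedsamecons} to conclude that a transition survives adversarial pruning on one side iff it survives on the other. Your explicit treatment of the inconsistent-initial-state corner case is a detail the paper's terser argument leaves implicit, but it does not change the approach.
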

\begin{proof}
	\new{First, observe from Definition~\ref{def:adversarialpruning} that adversarial pruning does not alter the action set. Therefore, together with Definition~\ref{def:quotientreducedquotient} of the reduced quotient it follows that $(\sem{T\quotient S})^{\Delta}$ and $(\sem{T\quotient S}^{\rho})^{\Delta}$ have the same action set. From the proof of Lemma~\ref{lemma:quotientreducedsamecons} it follows that $R = \{(q, [q]_{\sim}) \mid q \in Q^{(\sem{T\quotient S})^{\Delta}}\}$ is a bisimulation relation showing $\sem{T\quotient S} \simeq \sem{T\quotient S}^{\rho}$. Finally, using the result of Lemma~\ref{lemma:quotientreducedsamecons} that $\forall [q]\in Q^{\sem{T\quotient S}^{\rho}}, \forall q \in [q]_{\sim}$: $q\in \mathrm{cons}^{\sem{T\quotient S}}$ iff $[q]\in\mathrm{cons}^{\sem{T\quotient S}^{\rho}}$ together with Definition~\ref{def:adversarialpruning}, we can immediately conclude that $R = \{(q, [q]_{\sim}) \mid q \in Q^{(\sem{T\quotient S})^{\Delta}}\}$ is also a bisimulation relation showing $(\sem{T\quotient S})^{\Delta} \simeq (\sem{T\quotient S}^{\rho})^{\Delta}$.}
\end{proof}

\begin{lemma}\label{lemma:quotientTSandAerror}
	\new{Given specification automata $S = (\mathit{Loc}^S,l_0^S,\mathit{Act}^S,\mathit{Clk}^S, E^S,\mathit{Inv}^S)$ and $T = (\mathit{Loc}^T,l_0^T,\mathit{Act}^T,\mathit{Clk}^T, E^T,\allowbreak\mathit{Inv}^T)$ where $\mathit{Act}_o^S\cap\mathit{Act}_i^T=\emptyset$. Then $\mathrm{imerr}^{\sem{T\quotient S}^{\rho}} \subseteq \mathrm{imerr}^{\sem{T} \quotient \sem{S}}$ and $\mathrm{imerr}^{\sem{T} \quotient \sem{S}} \subseteq \mathrm{incons}^{\sem{T\quotient S}^{\rho}}$.}
\end{lemma}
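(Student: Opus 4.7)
The plan is to proceed by case analysis on the form of states, using the bijection $f$ from Lemma~\ref{lemma:quotientTSandAsamestateset} to identify states on the two sides, and exploiting the rule-by-rule correspondence between Definitions~\ref{def:quotientTIOA} and~\ref{def:quotientTIOTS}. Throughout I would consider three shapes of states: the universal-location states $(l_u, v)$, the error-location states $(l_e, v)$ (with $v\in V_0$, since $\sem{T\quotient S}^{\rho}$ removes those with $v(x_{\mathit{new}})>0$), and the product states $((l^T,l^S),v)$.

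For the first inclusion $\mathrm{imerr}^{\sem{T\quotient S}^{\rho}} \subseteq \mathrm{imerr}^{\sem{T}\quotient\sem{S}}$ I would fix $q\in\mathrm{imerr}^{\sem{T\quotient S}^{\rho}}$ and dispatch the cases. The universal case is ruled out: by rule~9 of Definition~\ref{def:quotientTIOA}, $l_u$ self-loops on every action with invariant $\mathbf{T}$, so arbitrary delays and outputs are enabled, contradicting immediate-errorness. For $q=(l_e,v)$, $f(q)=e$, and by rule~9 of Definition~\ref{def:quotientTIOTS} $e$ admits only input self-loops and no positive delay (time reflexivity only yields $d=0$), hence $e\in\mathrm{imerr}^{\sem{T}\quotient\sem{S}}$. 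For the product case I would translate the two conjuncts of the immediate-error condition through $f$: a blocked delay in $\sem{T\quotient S}^{\rho}$ at $((l^T,l^S),v)$ corresponds (via rule~8 of Definition~\ref{def:quotientTIOA}, which sends into $l_u$ when $\mathit{Inv}(l^S)$ fails, matched by rule~6 of Definition~\ref{def:quotientTIOTS}) to a blocked delay at $f(q)$; likewise, the absence of output-enabled delay-derivatives is transferred clause by clause by checking rules 1--3 and 5 of Definition~\ref{def:quotientTIOTS} against the matching edges in Definition~\ref{def:quotientTIOA}.

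For the second inclusion $\mathrm{imerr}^{\sem{T}\quotient\sem{S}} \subseteq \mathrm{incons}^{\sem{T\quotient S}^{\rho}}$ I would again case-split. The universal state $u$ cannot be an immediate error since rule~8 of Definition~\ref{def:quotientTIOTS} keeps all actions and delays enabled. For $q=e$, $f^{-1}(e)$ is a state $(l_e,v)$ with $v\in V_0$, which by the previous paragraph is itself an immediate error in $\sem{T\quotient S}^{\rho}$; since $\mathrm{imerr} = \mathrm{err}(\emptyset) \subseteq \pi(\emptyset) \subseteq \mathrm{incons}$, we conclude $f^{-1}(q)\in\mathrm{incons}^{\sem{T\quotient S}^{\rho}}$. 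For the product case $q=(q^T,q^S)$, the immediate-error hypothesis forces specific blockages on the TIOTS side (some delay blocked at $q^T$ or $q^S$, and all rules 1--3, 5, 7 of Definition~\ref{def:quotientTIOTS} producing output successors must themselves be blocked); I would then exhibit an environment strategy in $\sem{T\quotient S}^{\rho}$ that, starting from $f^{-1}(q)$, either exploits rule~6 of Definition~\ref{def:quotientTIOA} to drive the state into $l_e$ via the fresh input $i_{\mathit{new}}$ once the target-invariant violation becomes visible, or keeps the system delay-blocked with no output escape. Formalising this as a winning strategy for the input player and invoking the least-fixpoint characterisation of $\mathrm{incons}$ via $\pi$ then yields membership in $\mathrm{incons}^{\sem{T\quotient S}^{\rho}}$.

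The main obstacle is the product case of the second inclusion. Unlike the first direction, where each TIOTS-level blocker has a direct TIOA-level counterpart so that immediate-errorness transfers, here a TIOTS-level immediate error need not correspond to a TIOA-level immediate error: on the TIOA side all product-location invariants are $\mathbf{T}$, so the state is not directly delay-blocked, and the inconsistency only surfaces through $i_{\mathit{new}}$-transitions into $l_e$ produced by rule~6 of Definition~\ref{def:quotientTIOA}, together with the $x_{\mathit{new}}\le 0$ invariant trapping execution at $l_e$. Carefully matching each TIOTS-level blocker (rules 5--7 of Definition~\ref{def:quotientTIOTS}) with an explicit TIOA-level edge into $l_e$ or $l_u$ and showing that $\pi$ indeed captures the resulting unavoidability is where the technical burden lies.
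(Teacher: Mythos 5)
Your overall strategy --- identify states via the bijection $f$ from Lemma~\ref{lemma:quotientTSandAsamestateset}, match the $l_e$-states with $e$, and, for the second inclusion, reach $\mathrm{incons}$ through the $i_{\mathit{new}}$-edge into $l_e$ generated by rule~6 of Definition~\ref{def:quotientTIOA} followed by one application of the controllable-predecessor operator --- is exactly the paper's. Where you diverge is the first inclusion: you plan a clause-by-clause transfer of the immediate-error condition for product states $((l^T,l^S),v)$, but that case is vacuous, and the paper exploits this. Since Definition~\ref{def:quotientTIOA} sets $\mathit{Inv}((l^T,l^S))=\mathit{Inv}(l_u)=\mathbf{T}$, every non-$l_e$ state of $\sem{T\quotient S}^{\rho}$ admits every delay, so the first conjunct of Definition~\ref{def:immediateerror} fails there and $\mathrm{imerr}^{\sem{T\quotient S}^{\rho}}$ is exactly the set of states $(l_e,v)$ with $v(x_{\mathit{new}})=0$; these all map under $f$ to $e$, which is an immediate error of $\sem{T}\quotient\sem{S}$, and the inclusion is done in two lines. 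The same observation shows that the second branch of your ``either/or'' in the product case of the second inclusion (``keeps the system delay-blocked with no output escape'') cannot occur on the TIOA side --- only the $i_{\mathit{new}}$ route into $l_e$ is available, which you correctly single out as the crux, so nothing is lost. Two smaller imprecisions to fix when writing this up: the edge into $l_u$ guarded by $\neg\mathit{Inv}(l^S)$ is rule~4, not rule~8, of Definition~\ref{def:quotientTIOA}; and a product state $(q^T,q^S)$ of $\sem{T}\quotient\sem{S}$ is delay-blocked precisely when $q^T$ cannot delay while $q^S$ can (if $q^S$ cannot delay, rule~6 of Definition~\ref{def:quotientTIOTS} supplies a delay transition to $u$), not merely when ``some delay is blocked at $q^T$ or $q^S$.''
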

\begin{proof}
	\new{First, observe that the semantic of a TIOA and the reduced quotient do not alter the action set. Therefore, it follows directly that $\sem{T\quotient S}^{\rho}$ and $\sem{T} \quotient \sem{S}$ have the same action set and partitioning into input and output actions, except that $\sem{T\quotient S}^{\rho}$ has an additional input event $i_{\mathit{new}}$, i.e., $\mathit{Act}^{\sem{T\quotient S}^{\rho}} \cup \{i_{\mathit{new}}\} = \mathit{Act}^{\sem{T} \quotient \sem{S}}$. 
	}
	
	\new{It follows from Lemma~\ref{lemma:quotientTSandAsamestateset} that there is a bijective function $f$ relating states from $\sem{T\quotient S}^{\rho}$ and $\sem{T} \quotient \sem{S}$ together. Therefore, we can effectively say that they have the same state set (up to relabeling), i.e., $Q^{\sem{T\quotient S}^{\rho}} = Q^{\sem{T\quotient S}}$. For brevity, in the rest of this proof we write we write $X=\sem{T\quotient S}^{\rho}$, $Y = \sem{T} \quotient \sem{S}$, $\mathit{Clk} = \mathit{Clk}^T \uplus\mathit{Clk}^S$, and $v^S$ and $v^T$ to indicate the part of a valuation $v$ of only the clocks of $S$ and $T$, respectively. Note that $x_{\mathit{new}}\notin\mathit{Clk}$, but $x_{\mathit{new}} \in \mathit{Clk}^X$.
	}
	
	\new{$\mathrm{imerr}^{\sem{T\quotient S}^{\rho}} \subseteq \mathrm{imerr}^{\sem{T} \quotient \sem{S}}$. From Definition~\ref{def:quotientTIOA} of the quotient for TIOA and Definition~\ref{def:quotientreducedquotient} of the reduced $\sim$-quotient of $\sem{T\quotient S}$, it follows that states in $\{(l_e, v)\in Q^{\sem{T\quotient S}^{\rho}} \mid v(x_{\mathit{new}}) = 0 \} = \mathrm{imerr}^{\sem{T\quotient S}^{\rho}} $ are immediate error states, as only states with location $l_e$ have an invariant other than $\mathbf{T}$. From Lemma~\ref{lemma:quotientTSandAsamestateset}, we have that $\forall q \in f(q) = e$ with $e \in Q^{\sem{T} \quotient \sem{S}}$. From Definition~\ref{def:quotientTIOTS} of the quotient for TIOTS, it follows immediately that $e$ is an error state, since only $d = 0$ time delay is possible without any transition labeled with output actions. Thus $e \in \mathrm{imerr}^{\sem{T} \quotient \sem{S}}$. This shows that $\mathrm{imerr}^{\sem{T\quotient S}^{\rho}} \subseteq \mathrm{imerr}^{\sem{T} \quotient \sem{S}}$.
	}
	
	\new{$\mathrm{imerr}^{\sem{T} \quotient \sem{S}} \subseteq \mathrm{incons}^{\sem{T\quotient S}^{\rho}}$. From Definition~\ref{def:quotientTIOTS} of the quotient for TIOTS, it follows that state $e$ is an immediate error state and that states in $\{(q^T, q^S) \in Q^{\sem{T}\quotient \sem{S}} \mid q^T\arrownot\xlongrightarrow{d}{}^{\!\! \sem{T}} \wedge q^S\xlongrightarrow{d}{}^{\!\! \sem{S}}\}$ are potentially error states, as these states have no outgoing delay transition, i.e., $(q^T,q^S)\arrownot\xlongrightarrow{d}{}^{\!\! \sem{T} \quotient \sem{S}}$. Some states of this set are actual immediate error states if $\nexists o!\in\mathit{Act}_o^{\sem{T}\quotient\sem{S}}$ s.t. $(q^T,q^S)\arrownot\xlongrightarrow{o!}{}^{\!\! \sem{T} \quotient \sem{S}}$. By Definition~\ref{def:quotientTIOTS} we have that $\mathit{Act}_o^{\sem{T}\quotient\sem{S}} = \mathit{Act}_o^T\setminus\mathit{Act}_o^S \cup \mathit{Act}_i^S\setminus\mathit{Act}_i^T$. Consider the following two cases.
	}
	
	\begin{itemize}
		\item \new{$o! \in \mathit{Act}_o^T\setminus\mathit{Act}_o^S$. Assume that $(q^T,q^S)\arrownot\xlongrightarrow{o!}{}^{\!\! \sem{T} \quotient \sem{S}}$, such that $(q^T,q^S)$ is actually an error state. It follows from Definition~\ref{def:semanticTIOA} of the semantic that $q^{\sem{T}} = (l^T, v^T)$ and $v^T + d \not\models \mathit{Inv}(l^T)$; similarly we have that $q^{\sem{S}} = (l^S, v^S)$ and $v^S + d \models \mathit{Inv}(l^S)$. Since TIOTSs are time additive, see Definition~\ref{def:tiots}, we can assume that for $\forall d' < d: v^T + d' \not\models \mathit{Inv}(l^T)$\footnote{\new{In case there would be a $d' < d$ such that $v^T + d' \models \mathit{Inv}(l^T)$, we can first delay $d'$ in $\sem{T}\quotient \sem{S}$ such that the reached state can no longer delay.}}. Thus $v^T + 0 \not\models \mathit{Inv}(l^T)$, which simplifies to $v^T \not\models \mathit{Inv}(l^T)$. Again, using time additivity of TIOTS and $v^S + d \models \mathit{Inv}(l^S)$, we have that $v^S + 0 \models \mathit{Inv}(l^S)$. Combining this information, we have that $v\models \neg\mathit{Inv}(l^T) \wedge \mathit{Inv}(l^S)$, where we used the fact that $\mathit{Clk}^T\cap\mathit{Clk}^S=\emptyset$. Now, using Definition~\ref{def:quotientTIOA} of the quotient for TIOA and Definition~\ref{def:semanticTIOA} of the semantics, we have that $(l^T, l^S, v)\xlongrightarrow{i_{\mathit{new}}}{}^{\!\! \sem{T\quotient S}} (l_e, v)$. Since the target state $(l_e, v)$ is an immediate error state and $i_{\mathit{new}}$ is an input action, it follows the controllable predecessor operator that $(l^T, l^S, v) \in \mathrm{incons}^{\sem{T\quotient S}^{\rho}}$.
		}
		
		\item \new{$o?\in \mathit{Act}_i^S\setminus\mathit{Act}_i^T$. Since $S$ is a specification, it is input-enabled, see Definition~\ref{def:specification}. Therefore, $q^S\xlongrightarrow{o?}{}^{\!\!  \sem{S}}$. From the second rule of Definition~\ref{def:quotientTIOTS} of the quotient for TIOTS, it follows that $(q^T,q^S)\xlongrightarrow{o!}{}^{\!\! \sem{T} \quotient \sem{S}}$. Therefore, in this case state $(q^T,q^S)$ is not an error state in $\sem{T}\quotient\sem{S}$.}
	\end{itemize}
\end{proof}

\begin{lemma}\label{lemma:quotientTSandAsamedelay}
	\new{Given specification automata $S = (\mathit{Loc}^S,l_0^S,\mathit{Act}^S,\mathit{Clk}^S, E^S,\mathit{Inv}^S)$ and $T = (\mathit{Loc}^T,l_0^T,\mathit{Act}^T,\mathit{Clk}^T, E^T,\allowbreak\mathit{Inv}^T)$ where $\mathit{Act}_o^S\cap\mathit{Act}_i^T=\emptyset$. Denote $X = \sem{T\quotient S}^{\rho}$ and $Y = \sem{T} \quotient \sem{S}$, and let $d\in\mathbb{R}_{\geq 0}$ and $q_1,q_2\in Q^X\cap Q^Y$ with $q_1 = (l^T,l^S,v)$ for some $v\in(\mathit{Clk}^T\uplus\mathit{Clk}^S \rightarrow \mathbb{R}_{\geq 0})$. If $v\not\models \neg\mathit{Inv}(l^T) \wedge \mathit{Inv}(l^S)$, then $q_1\xlongrightarrow{d}{}^{\!\! X} q_2$ if and only if $q_1\xlongrightarrow{d}{}^{\!\! Y} q_2$.}
\end{lemma}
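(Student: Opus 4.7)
The plan is to prove both directions of the iff by appealing to Definition~\ref{def:semanticTIOA} of the TIOA semantics together with Definition~\ref{def:quotientTIOA} of the quotient for TIOAs on the $X$ side, and Definition~\ref{def:quotientTIOTS} of the quotient for TIOTSs on the $Y$ side, along with the bijective relabeling of states from Lemma~\ref{lemma:quotientTSandAsamestateset}. Throughout, I would split the valuation $v$ into its restrictions $v^T$ and $v^S$ on $\mathit{Clk}^T$ and $\mathit{Clk}^S$ (which is well defined since $\mathit{Clk}^T \cap \mathit{Clk}^S = \emptyset$) so that statements about $\sem{T}$ and $\sem{S}$ can be read off directly.

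For the forward direction, I would start by observing that since the location invariant $\mathit{Inv}((l^T, l^S)) = \mathbf{T}$ in $T \quotient S$, Definition~\ref{def:semanticTIOA} forces $q_2 = (l^T, l^S, v+d)$: delays in $X$ cannot change location from a non-universal, non-error location, and the invariant is trivially satisfied both at $v+d$ and at every intermediate instant. To obtain the matching transition in $Y$, rule~4 of Definition~\ref{def:quotientTIOTS} is the only candidate, since rule~6 targets $u \ne q_2$. Rule~4 requires both $\sem{T}$ and $\sem{S}$ to permit the delay of duration $d$. Here the hypothesis $v \not\models \neg\mathit{Inv}(l^T) \wedge \mathit{Inv}(l^S)$, equivalently $v \models \mathit{Inv}(l^T) \vee \neg\mathit{Inv}(l^S)$, is used to rule out the degenerate configuration in which $\sem{T}$ is blocked by its invariant at time zero while $\sem{S}$ is still progressing; a case split on which disjunct holds, combined with time additivity from Definition~\ref{def:tiots}, yields the two component delays.

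For the reverse direction I would note that, since $q_2 = (l^T, l^S, v+d)$ is of product form and distinct from $u$, only rule~4 of Definition~\ref{def:quotientTIOTS} can generate the transition $q_1 \xrightarrow{d} q_2$ in $Y$. Rule~4 immediately gives the component delays in $\sem{T}$ and $\sem{S}$, namely $(l^T,v^T) \xrightarrow{d} (l^T, v^T+d)$ and $(l^S,v^S) \xrightarrow{d} (l^S, v^S+d)$, and the corresponding invariant satisfactions carry through to the combined valuation. Plugging these into Definition~\ref{def:semanticTIOA} for $\sem{T\quotient S}$ produces the delay in $X$, since the invariant $\mathbf{T}$ at $(l^T, l^S)$ is trivially met throughout $[0,d]$.

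The hard part will be the forward direction, specifically bridging the mismatch between the uniformly permissive delay behavior of $X$ (governed by the $\mathbf{T}$ invariant inherited from Definition~\ref{def:quotientTIOA}) and the conjunctive delay behavior of $Y$ (both components must progress). Although the hypothesis excludes one pathological configuration at $v$, one still has to carry this exclusion uniformly over the whole delay interval. I expect this to be handled by showing that the problematic configuration cannot arise at any intermediate time without already being present at $v$, exploiting time additivity and the fact that $q_2 \in Q^X \cap Q^Y$ pins down the target valuation, so that any obstruction to delay in $\sem{T}$ or $\sem{S}$ would contradict either the hypothesis or membership of $q_2$ in the intersection.
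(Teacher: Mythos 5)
Your proposal is correct and follows essentially the same route as the paper's proof: the direction from $Y$ to $X$ is immediate because $\mathit{Inv}((l^T,l^S))=\mathbf{T}$ in $T\quotient S$, and the direction from $X$ to $Y$ reduces to excluding the single obstructing configuration in which $\sem{T}$ is blocked while $\sem{S}$ can still delay, which the hypothesis forbids. The "hard part" you flag is handled in the paper exactly as you anticipate, by a without-loss-of-generality argument (using time additivity) that pushes any invariant violation of $\mathit{Inv}(l^T)$ back to time zero so that it contradicts $v\not\models\neg\mathit{Inv}(l^T)\wedge\mathit{Inv}(l^S)$.
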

\begin{proof}
	\new{It follows from Lemma~\ref{lemma:quotientTSandAsamestateset} that there is a bijective function $f$ relating states from $\sem{T\quotient S}^{\rho}$ and $\sem{T} \quotient \sem{S}$ together. Therefore, we can effectively say that they have the same state set (up to relabeling), i.e., $Q^{\sem{T\quotient S}^{\rho}} = Q^{\sem{T\quotient S}}$. For brevity, in the rest of this proof we write we write $\mathit{Clk} = \mathit{Clk}^T \uplus\mathit{Clk}^S$, and $v^S$ and $v^T$ to indicate the part of a valuation $v$ of only the clocks of $S$ and $T$, respectively. Note that $x_{\mathit{new}}\notin\mathit{Clk}$, but $x_{\mathit{new}} \in \mathit{Clk}^X$.
	}
	
	\new{From Definition~\ref{def:quotientTIOA} of the quotient for TIOA it follows that $\mathit{Inv}((l^T, l^S)) = \mathbf{T}$. Therefore, with Definition~\ref{def:semanticTIOA} of the semantic and Definition~\ref{def:quotientreducedquotient} of the $\sim$-reduced quotient of $\sem{T\quotient S}$ it follows that $q_1\xlongrightarrow{d}{}^{\!\! X} q_2$ is possible for any $d\in\mathbb{R}_{\geq 0}$ and any valuation $v$. Thus $q_1\xlongrightarrow{d}{}^{\!\! Y} q_2$ implies $q_1\xlongrightarrow{d}{}^{\!\! X} q_2$.
	}
	
	\new{It remains to show the other way around. Observe from Definition~\ref{def:quotientTIOTS} of the quotient for TIOTS that there are two cases involving a delay (actually three, but we do not consider the universal location in this lemma). So a delay is only possible from $q_1$ if either $q_1^{\sem{T}}\xlongrightarrow{d}{}^{\!\! \sem{T}} q_2^{\sem{T}} \wedge q_1^{\sem{S}}\xlongrightarrow{d}{}^{\!\! \sem{S}} q_2^{\sem{S}}$ or $q_1^{\sem{S}}\arrownot\xlongrightarrow{d}{}^{\!\! \sem{S}}$. So a delay is \emph{not} possible if $q_1^{\sem{T}}\arrownot\xlongrightarrow{d}{}^{\!\! \sem{T}} \wedge q_1^{\sem{S}}\xlongrightarrow{d}{}^{\!\! \sem{S}} q_2^{\sem{S}}$. It follows from Definition~\ref{def:semanticTIOA} of the semantic that $q_1^{\sem{T}} = (l^T, v^T)$ and $v^T + d \not\models \mathit{Inv}(l^T)$ or $\exists d'\in\mathbb{R}_{\geq 0}, d' < d: v^T + d' \not\models \mathit{Inv}(l^T)$; similarly we have that $q_1^{\sem{S}} = (l^S, v^S)$, $v^S + d \models \mathit{Inv}(l^S)$, and $\forall d'\in\mathbb{R}_{\geq 0}, d' < d: v^S + d' \models \mathit{Inv}(l^S)$. Without loss of generality, we can state that $d' = 0$\footnote{\new{In case there would be a $d' > 0$ such that $v^T + d' \models \mathit{Inv}(l^T)$, we can first delay $d'$ in $\sem{T}\quotient \sem{S}$ such that the reached state can no longer delay.}}, so $v^T + 0 \not\models \mathit{Inv}(l^T)$, which simplifies to $v^T \not\models \mathit{Inv}(l^T)$. We have also that $v^S + 0 \models \mathit{Inv}(l^S)$. Combining this information, we have that $v\models \neg\mathit{Inv}(l^T) \wedge \mathit{Inv}(l^S)$, where we used the fact that $\mathit{Clk}^T\cap\mathit{Clk}^S=\emptyset$. But this contradicts with the assumption in the lemma. Thus we can conclude that if $v\not\models \neg\mathit{Inv}(l^T) \wedge \mathit{Inv}(l^S)$, then $q_1\xlongrightarrow{d}{}^{\!\! X} q_2$ implies $q_1\xlongrightarrow{d}{}^{\!\! Y} q_2$.}
\end{proof}

\begin{lemma}\label{lemma:quotientTSandAsamecons}
	\new{Given specification automata $S = (\mathit{Loc}^S,l_0^S,\mathit{Act}^S,\mathit{Clk}^S, E^S,\mathit{Inv}^S)$ and $T = (\mathit{Loc}^T,l_0^T,\mathit{Act}^T,\mathit{Clk}^T, E^T,\allowbreak\mathit{Inv}^T)$ where $\mathit{Act}_o^S\cap\mathit{Act}_i^T=\emptyset$. Then $\mathrm{cons}^{\sem{T\quotient S}^{\rho}} = \mathrm{cons}^{\sem{T} \quotient \sem{S}}$.}
\end{lemma}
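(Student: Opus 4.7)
The plan is to prove mutual inclusion of the two consistent-state sets by the standard coinductive technique used in Lemmas~\ref{lemma:conjunctionTSandAsamecons} and~\ref{lemma:parallelcompositionTSandAsamecons}: I show that $\mathrm{cons}^{\sem{T\quotient S}^{\rho}}$ is a postfixed point of $\Theta^{\sem{T}\quotient \sem{S}}$ and vice versa, using Lemma~\ref{lemma:quotientTSandAsamestateset} to silently identify states via the bijection $f$. The final inclusions then follow from $\mathrm{cons}$ being the greatest fixed point of $\Theta$.

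For brevity I will write $X = \sem{T\quotient S}^{\rho}$ and $Y = \sem{T}\quotient \sem{S}$, and denote by $I = \{(l^T,l^S,v) \mid v\models \neg\mathit{Inv}(l^T)\wedge\mathit{Inv}(l^S)\}$ the ``delay-mismatch'' states that play a key role. First I would establish the easy preliminary that $\mathrm{cons}^X$ contains no state of $I$: by rule~6 of Definition~\ref{def:quotientTIOA} every such state has an $i_{\mathit{new}}$-transition in $X$ to $l_e$, which is itself inconsistent, and since $i_{\mathit{new}}$ is an input this transition is uncontrollable and forces membership in $\mathrm{incons}^X$. Dually, Lemma~\ref{lemma:quotientTSandAerror} tells us exactly that the same set $I$ is contained in $\mathrm{imerr}^Y \subseteq \mathrm{incons}^Y$. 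So the ``problematic'' states are inconsistent on both sides, just via different mechanisms (one hop through $i_{\mathit{new}}$ in $X$, immediate in $Y$).

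For $\mathrm{cons}^X \subseteq \mathrm{cons}^Y$, I take $q \in \mathrm{cons}^X$ (so $q \notin I$, $q\neq l_e$, $q\neq u$ unless trivial) and verify the two clauses in the definition of $\Theta^Y$. Since $q\notin I$, Lemma~\ref{lemma:quotientTSandAerror} gives $q\notin\mathrm{imerr}^Y$, handling the $\overline{\mathrm{err}^Y(\overline{\mathrm{cons}^X})}$ part after checking that successors of $\mathrm{cons}^X$ remain in $\mathrm{cons}^X$. For the disjunction, Lemma~\ref{lemma:quotientTSandAsamedelay} applies because $v\not\models \neg\mathit{Inv}(l^T)\wedge\mathit{Inv}(l^S)$, giving a direct correspondence of delay transitions; the output transitions in $X$ labeled by $\mathit{Act}_o^{T\quotient S}$ come from rules~3,~5 and~8 of Definition~\ref{def:quotientTIOA} and translate one-to-one to rules~5,~7 and~3 of Definition~\ref{def:quotientTIOTS}, while the non-$i_{\mathit{new}}$ input transitions match similarly. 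The reverse inclusion $\mathrm{cons}^Y \subseteq \mathrm{cons}^X$ uses the same translations in the opposite direction; the only new obligation is that for $q\in\mathrm{cons}^Y$ all $i_{\mathit{new}}$-successors lie in $\mathrm{cons}^Y$. Since rule~6 of Definition~\ref{def:quotientTIOA} is the only $i_{\mathit{new}}$-transition leaving the location $(l^T,l^S)$ and it is guarded by $\neg\mathit{Inv}(l^T)\wedge\mathit{Inv}(l^S)$, it is disabled exactly when $q\notin I$; and $q\in I$ would contradict $q\in\mathrm{cons}^Y$ by Lemma~\ref{lemma:quotientTSandAerror}. The self-loop introduced by rule~7 goes back to the same state, which is already assumed consistent.

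The main obstacle is the asymmetry caused by $i_{\mathit{new}}$: $X$ has strictly more input actions than $Y$, so a naive bisimulation argument (as used in Lemma~\ref{lemma:quotientadversarialreduced}) does not apply at the level of $\Theta$. The trick, and the reason Lemma~\ref{lemma:quotientTSandAerror} was formulated exactly as it is, is that the $i_{\mathit{new}}$-mechanism in $X$ is calibrated so that its one-step inconsistency propagation on $I$ matches the zero-step immediate-error status of $I$ in $Y$. Everything else is a straightforward translation of transitions between the two definitions, and I would bundle those translations into a short auxiliary observation before invoking the fixed-point argument.
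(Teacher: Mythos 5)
Your proposal follows essentially the same route as the paper's proof: a mutual-inclusion argument on postfixed points of $\Theta$, identifying states via the bijection of Lemma~\ref{lemma:quotientTSandAsamestateset}, invoking Lemma~\ref{lemma:quotientTSandAsamedelay} for delays, and resolving the $i_{\mathit{new}}$ asymmetry by observing that the delay-mismatch states $I$ are inconsistent in both systems (via the uncontrollable $i_{\mathit{new}}$-edge to $l_e$ on one side, as immediate errors on the other) --- the paper merely handles this calibration inline within the case analysis rather than as a named preliminary. One small correction: the transitions produced by rules~3 and~5 of Definition~\ref{def:quotientTIOA} are labeled by actions in $\mathit{Act}_o^S$, which are \emph{inputs} of the quotient (since $\mathit{Act}_i^{T\quotient S}=\mathit{Act}_i^T\cup\mathit{Act}_o^S$), so the output-labeled transitions needed for the second disjunct of $\Theta$ actually arise from rules~1,~2 and~8 and their TIOTS counterparts~1,~2 and~3; this mislabeling does not affect the structure of your argument.
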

\begin{proof}
	\new{We will proof this by using the $\Theta$ operator. First, observe that the semantic of a TIOA and the reduced quotient do not alter the action set. Therefore, it follows directly that $\sem{T\quotient S}^{\rho}$ and $\sem{T} \quotient \sem{S}$ have the same action set and partitioning into input and output actions, except that $\sem{T\quotient S}^{\rho}$ has an additional input event $i_{\mathit{new}}$, i.e., $\mathit{Act}^{\sem{T\quotient S}^{\rho}} \cup \{i_{\mathit{new}}\} = \mathit{Act}^{\sem{T} \quotient \sem{S}}$. 
	}
	
	\new{It follows from Lemma~\ref{lemma:quotientTSandAsamestateset} that there is a bijective function $f$ relating states from $\sem{T\quotient S}^{\rho}$ and $\sem{T} \quotient \sem{S}$ together. Therefore, we can effectively say that they have the same state set (up to relabeling), i.e., $Q^{\sem{T\quotient S}^{\rho}} = Q^{\sem{T\quotient S}}$. For brevity, in the rest of this proof we write we write $X=\sem{T\quotient S}^{\rho}$, $Y = \sem{T} \quotient \sem{S}$, $\mathit{Clk} = \mathit{Clk}^T \uplus\mathit{Clk}^S$, and $v^S$ and $v^T$ to indicate the part of a valuation $v$ of only the clocks of $S$ and $T$, respectively. Note that $x_{\mathit{new}}\notin\mathit{Clk}$, but $x_{\mathit{new}} \in \mathit{Clk}^X$.
	}
	
	\new{We will show for any postfixed point $P$ of $\Theta$ that $\Theta^{\sem{T\quotient S}^{\rho}}(P) \subseteq \Theta^{\sem{T} \quotient \sem{S}}(P)$ and $\Theta^{\sem{T} \quotient \sem{S}}(P) \subseteq \Theta^{\sem{T \quotient S}^{\rho}}(P)$.
	}
	
	\new{($\Theta^X(P) \subseteq \Theta^Y(P)$) Consider a state $q^X\in P$. Because $P$ is a postfixed point of $\Theta^X$, it follows that $q^X\in\Theta^X(P)$. From the definition of $\Theta$, it follows that $q^X \in \overline{\mathrm{err}^X(\overline{P})}$ and $ q^X \in \{q_1\in Q^X \mid \forall d \geq 0: [\forall q_2\in Q^X: q_1\xlongrightarrow{d}{}^{\!\! X} q_2 \Rightarrow q_2\in P \wedge \forall i?\in \mathit{Act}_i^X: \exists q_3\in P: q_2\xlongrightarrow{i?}{}^{\!\! X} q_3] \vee [\exists d'\leq d \wedge \exists q_2,q_3\in P \wedge \exists o!\in\mathit{Act}_o^X: q_1\xlongrightarrow{d'}{}^{\!\! X} q_2 \wedge q_2\xlongrightarrow{o!}{}^{\!\! X} q_3 \wedge \forall i?\in\mathit{Act}_i^X: \exists q_4\in P: q_2\xlongrightarrow{i?}{}^{\!\! X} q_4]\}$. We will focus on the second part of the definition of $\Theta$.
	}
	
	\new{Consider a $d\in\mathbb{R}_{\geq 0}$. Then the left-hand side or the right-hand side of the disjunction is true (or both). }
	\begin{itemize}
		\item \new{Assume the left-hand side is true, i.e., $\forall q_2\in Q^X: q^X\xlongrightarrow{d}{}^{\!\! X} q_2 \Rightarrow q_2\in P \wedge \forall i?\in \mathit{Act}_i^X: \exists q_3\in P: q_2\xlongrightarrow{i?}{}^{\!\! X} q_3$. Pick a $q_2 \in Q^X$. The implication is true when $q^X\arrownot\xlongrightarrow{d}{}^{\!\! X} q_2$ or $q^X\xlongrightarrow{d}{}^{\!\! X} q_2 \wedge q_2\in P \wedge \forall i?\in \mathit{Act}_i^X: \exists q_3\in P: q_2\xlongrightarrow{i?}{}^{\!\! X} q_3$. 
		}
		
		\begin{itemize}
			\item \new{Consider the first case. This case is only applicable if $q^X = (l_e, v)$, since in Definition~\ref{def:quotientTIOA} of the quotient for TIOA only location $l_e$ has an invariant other than $\textbf{T}$. But then $q^X \in \mathrm{imerr}^X$. This contradicts with the fact that $q^X\in\Theta(P)$ implies that $q^X\in\overline{\mathrm{err}^X(\overline{P})}$. Thus this case is infeasible.
			}
			
			\item \new{Consider the second case. From Definition~\ref{def:semanticTIOA} of the semantic of a TIOA and Definition~\ref{def:quotientreducedquotient} of the $\sim$-reduced quotient of $\sem{T\quotient S}$ it follows that $v_1 + d\models \mathit{Inv}^{T\quotient S}(l_1)$ for $q^X = (l_1, v_1)$, $q_2 = (l_1, v_1 + d)$, $l_1\in \mathit{Loc}^{T\quotient S}$, and $v_1\in[\mathit{Clk}\mapsto \mathbb{R}_{\geq 0}]$. Since $q^X\in\overline{\mathrm{err}^X(\overline{P})}$, we have that $l_1 \neq l_e$, thus $\mathit{Inv}^{T\quotient S}(l_1)=\mathbf{T}$. Now, pick $i?\in\mathit{Act}_i^X$ and $q_3\in Q^X$ such that $q_2 \xlongrightarrow{i?}{}^{\!\! X} q_3$ and $q_3\in P$. From Definition~\ref{def:semanticTIOA} of the semantic of a TIOA it follows that $(l_1, i?, \varphi, c, l_3) \in E^{T\quotient S}$, $q_3 = (l_3, v_3)$, $v_1 + d \models \varphi$, $v_3 = v_1+d[r\mapsto 0]_{r\in c}$, and $v_3\models \mathit{Inv}^{T\quotient S}(l_3)$. 
			}
			
			\new{From Lemma~\ref{lemma:quotientTSandAsamedelay} it follows that $q^X\xlongrightarrow{d}{}^{\!\! Y} q_2$ if $v\not\models \neg\mathit{Inv}(l^T) \wedge \mathit{Inv}(l^S)$. In case that $v\models \neg\mathit{Inv}(l^T) \wedge \mathit{Inv}(l^S)$, we have from Definitions~\ref{def:quotientTIOA},~\ref{def:semanticTIOA}, and~\ref{def:quotientreducedquotient} that $q^X\xlongrightarrow{i_{\mathit{new}}}{}^{\!\! X} e$. But since $e \in \mathrm{err}^X(\overline{P})$, it follows that $e \notin P$. Therefore, this case is infeasible. Thus we have that $q^X\xlongrightarrow{d}{}^{\!\! Y} q_2$ in $Y$.
			}
			
			\new{Now, consider the ten cases from Definition~\ref{def:quotientTIOA} of quotient of TIOAs. Remember that $\mathit{Act}_i^X = \mathit{Act}_i^{T\quotient S} = \mathit{Act}_i^T \cup \mathit{Act}_o^S\cup \{i_{\mathit{new}}\}$.}
			\begin{enumerate}
				\item \new{$i?\in\mathit{Act}^S\cap\mathit{Act}^T$, $l_1 = (l_1^T,l_1^S)$, $l_3 = (l_3^T,l_3^S)$, $\varphi = \varphi^T\wedge \mathit{Inv}(l_3^T)[r\mapsto 0]_{r\in c^T}\wedge\varphi^S \wedge \mathit{Inv}(l_1^S) \wedge \mathit{Inv}(l_3^S)[r\mapsto 0]_{r\in c^S}$, $c = c^T\cup c^S$, $(l_1^T, i, \varphi^T,c^T, l_3^T)\in E^T$, and $(l_1^S, i, \varphi^S, c^S, l_3^S)\in E^S$. Since $v_1 + d\models \varphi$, it holds that $v_1 + d\models \varphi^T$, $v_1 + d\models \mathit{Inv}(l_3^T)[r\mapsto 0]_{r\in c^T}$,  $v_1 + d\models \varphi^S$, $v_1 + d\models \mathit{Inv}(l_1^S)$, and $v_1 + d \models \mathit{Inv}(l_3^S)[r\mapsto 0]_{r\in c^S}$. Because $\mathit{Clk}^S\cap\mathit{Clk}^T=\emptyset$, it holds that $v_1^T + d\models \varphi^T$, $v_1^T + d\models \mathit{Inv}(l_3^T)[r\mapsto 0]_{r\in c^T}$, $v_1^S + d\models \varphi^S$, $v_1^S + d \models \mathit{Inv}(l_1^S)$, and $v_1^S + d \models \mathit{Inv}(l_3^S)[r\mapsto 0]_{r\in c^S}$. Since $v_3 = v_1 + d[r\mapsto 0]_{r\in c}$, it holds that $v_3^T = v_1^T + d[r\mapsto 0]_{r\in c^T}$ and $v_3^S = v_1^S + d[r\mapsto 0]_{r\in c^S}$. Therefore, $v_3^T + d\models \mathit{Inv}(l_3^T)$ and $v_3^S + d\models\mathit{Inv}(l_3^S)$. 
				}
				
				\new{Combining all information about $T$, we have that $(l_1^T, i, \varphi^T,c^T, l_2^T)\in E^T$, $v_1^T + d\models \varphi^T$, $v_3^T = v_1^T + d[r\mapsto 0]_{r\in c^T}$, and $v_3^T\models \mathit{Inv}(l_3^T)$. Therefore, from Definition~\ref{def:semanticTIOA} it follows that $(l_1^T,v_1^T + d) \xlongrightarrow{i} (l_3^T,v_ 3^T)$ in $\sem{T}$. Combining all information about $S$, we have that $(l_1^S, i, \varphi^S,c^S, l_2^S)\in E^S$, $v_1^S + d\models \varphi^S$, $v_3^S = v_1^S + d[r\mapsto 0]_{r\in c^S}$, and $v_3^S\models \mathit{Inv}(l_3^S)$. Therefore, from Definition~\ref{def:semanticTIOA} it follows that $(l_1^S,v_1^S + d) \xlongrightarrow{i} (l_3^S,v_3^S)$ in $\sem{S}$. 
				}
				
				\new{Now, from Definition~\ref{def:quotientTIOTS} it follows that $((l_1^T,v_1^T + d),(l_1^S,v_1^S + d)) = (l_1^T, l_1^S, v_1 + d) = q_2^Y \xlongrightarrow{i?}{}^{\!\! Y} ((l_3^T,v_3^T),\allowbreak(l_3^S,v_3^S)) = (l_3^T, l_3^S, v_3) = q_3^Y$ in $Y$. Thus, we can simulate a transition in $Y$. Also, observe now that $q_2 = q_2^Y$ and $q_3=q_3^Y$.
				}
				
				\item \new{$i?\in\mathit{Act}^S\setminus\mathit{Act}^T$, $l_1 = (l^T,l_1^S)$, $l_3 = (l^T,l_3^S)$, $\varphi = \varphi^S \wedge \mathit{Inv}(l_1^S) \wedge \mathit{Inv}(l_3^S)[r\mapsto 0]_{r\in c^S}$, $c = c^S$, $l^T\in \mathit{Loc}^T$, and $(l_1^S, i!, \varphi^S, c^S, l_3^S)\in E^S$. Since $v_1 + d\models \varphi$ and $\mathit{Clk}^S\cap\mathit{Clk}^T=\emptyset$, it holds that $v_1^S + d\models \varphi^S$, $v_1^S + d\models \mathit{Inv}(l_1^S)$, and $v_1^S + d \models \mathit{Inv}(l_3^S)[r\mapsto 0]_{r\in c^S}$. Since $v_3 = v_1 + d[r\mapsto 0]_{r\in c}$ and $c = c^S$, it holds that $v_3^S = v_1^S + d[r\mapsto 0]_{r\in c^S}$, $v_3^T = v_1^T + d$, and $v_3^S\models\mathit{Inv}(l_3^S)$. Combining all information above about $S$, it follows from Definition~\ref{def:semanticTIOA} that $(l_1^S,v_1^S + d) \xlongrightarrow{i!} (l_3^S,v_3^S)$ in $\sem{S}$. From Definition~\ref{def:semanticTIOA} it also follows that $(l^T, v_1^T + d) \in Q^{\sem{T}}$. Therefore, following Definition~\ref{def:quotientTIOTS} it follows that $((l^T,v_1^T + d),(l_1^S,v_1^S + d)) = (l^T, l_1^S, v_1 + d) = q_2^Y \xlongrightarrow{i!}{}^{\!\! Y} ((l^T,v_1^T + d),(l_3^S,v_3^S)) = (l^T, l_3^S, v_3) = q_3^Y$ in $Y$. Thus, we can simulate a transition in $Y$. Also, observe now that $q_2 = q_2^Y$ and $q_3=q_3^Y$.
				}
				
				\item \new{$i!\in\mathit{Act}_o^S$, $l_1 = (l^T,l_1^S)$, $l_3 = l_u$, $\varphi = \neg G_S$, $c = \emptyset$, $l^T\in \mathit{Loc}^T$ and $G_S = \bigvee \{\varphi^S \wedge \mathit{Inv}(l_3^S)[r\mapsto 0]_{r\in c^S} \mid (l_1^S,a,\varphi^S,c^S,l_3^S) \in E^S\}$. Since $v_1 + d\models \varphi$ and $\mathit{Clk}^S\cap\mathit{Clk}^T=\emptyset$, it holds that $v_1^S + d\models \neg G_S$. Therefore, $v_1^S + d\not\models G_S$, which indicates that $\forall (l_1^S,a,\varphi^S,c^S,l_3^S) \in E^S$: $v_1^S + d\not\models \varphi^S \wedge \mathit{Inv}(l_3^S)[r\mapsto 0]_{r\in c^S}$. This means that $v_1^S + d\not\models \varphi^S$ or $v_1^S + d\not\models \mathit{Inv}(l_3^S)[r\mapsto 0]_{r\in c^S}$ or both, where the second option is equivalent to $v_1^S + d[r\mapsto 0]_{r\in c^S} \not\models \mathit{Inv}(l_3^S)$. Following Definition~\ref{def:semanticTIOA}, we can conclude that $(l_1^S,v_1^S + d) \arrownot\xlongrightarrow{\ a\ }$ in $\sem{S}$. From Definition~\ref{def:semanticTIOA} it also follows that $(l^T, v_1^T + d) \in Q^{\sem{T}}$. Now, following Definition~\ref{def:quotientTIOTS}, we have transition $((l^T,v_1^T + d),(l_1^S,v_1^S + d)) = (l^T, l_1^S, v_1 + d) = q_2^Y \xlongrightarrow{a}{}^{\!\! Y} u = q_3^Y$ in $Y$. Thus we can simulate a transition in $Y$. Also, observe now that $q_2 = q_2^Y$ and $q_3=q_3^Y$ (where $(l_u, v_3)$ is mapped into $u$ by $f$ from Lemma~\ref{lemma:quotientTSandAsamestateset}). 
				}
				
				\item \new{$i? \in \mathit{Act}^S\cup\mathit{Act}^T$, $l_1 = (l^T,l^S)$, $l_3 = l_u$, $\varphi = \neg \mathit{Inv}(l^S)$, $c = \emptyset$, $l^T\in \mathit{Loc}^T$, and $l^S\in\mathit{Loc}^S$.(If $i? = i_{\mathit{new}}$, this case is trivial, see item 8 and 10 below.) Since $v_1 + d\models \varphi$ and $\mathit{Clk}^S\cap\mathit{Clk}^T=\emptyset$, it holds that $v_1^S + d\models \neg \mathit{Inv}(l^S)$. Therefore, $v_1^S + d\not\models \mathit{Inv}(l^S)$. Since we delayed into state $q_2^Y$, it must hold that the delay was according to rule 6 of Definition~\ref{def:quotientTIOTS} of the quotient for TIOTS. Therefore, $q_2^Y = u \in P$. From Definition~\ref{def:quotientTIOTS} it also follows that $u = q_2^Y \xlongrightarrow{i?}{}^{\!\! Y} u = q_3^Y$ in $Y$. Thus we can simulate a transition in $Y$. Also, observe now that $q_3=q_3^Y$ (where $(l_u, v_3)$ is mapped into $u$ by $f$ from Lemma~\ref{lemma:quotientTSandAsamestateset}).
				}
				
				\item \new{$i!\in\mathit{Act}_o^S\cap\mathit{Act}_o^T$, $l_1 = (l_1^T,l_1^S)$, $l_3 = l_e$, $\varphi = \varphi^S \wedge \mathit{Inv}{l_1^S} \wedge \mathit{Inv}(l_3^S)[r\mapsto 0]_{r\in c^S} \wedge \neg G_T$, $c = \{x_{\mathit{new}}\}$, $(l_1^S, a, \varphi^S, c^S, l_3^S)\in E^S$, and $G_T = \bigvee \{\varphi^T \wedge \mathit{Inv}(l_3^T)[r\mapsto 0]_{r\in c^T} \mid (l_1^T,a,\varphi^T,c^T, l_3^T) \in E^T\}$. Since the target location is the error location, it holds that $q_3\notin P$. Thus this case is not feasible.
				}
				
				\item \new{$a = i_{\mathit{new}}$, $l_1 = (l^T,l^S)$, $l_3 = l_e$, $\varphi = \neg \mathit{Inv}(l^T) \wedge \mathit{Inv}(l^S)$, $c = \{x_{\mathit{new}}\}$, $l^T\in \mathit{Loc}^T$, and $q^S\in\mathit{Loc}^S$. Since the target location is the error location, it holds that $q_3\notin P$. Thus this case is not feasible.
				}
				
				\item \new{$a = i_{\mathit{new}}$, $l_1 = l_3 = (l_1^T,l_1^S)$, $\varphi = \mathit{Inv}(l^T) \vee \neg\mathit{Inv}(l^S)$ and $c = \emptyset$. First note that $i_{\mathit{new}} \notin\mathit{Act}^Y$. Now, since $c = \emptyset$, it follows that $v_3 = v_1 + d$. Therefore, $q_2 = q_3$. Since $q_3\in P$, it follows $q_2\in P$. Since $q_2 = q_2^Y$, it follows that $q_2^Y \in P$.
				}
				
				\item \new{$i?\in\mathit{Act}^T\setminus\mathit{Act}^S$, $l_1 = (l_1^T,l^S)$, $l_3 = (l_3^T,l^S)$, $\varphi = \varphi^T \wedge \mathit{Inv}(l_3^T)[r\mapsto 0]_{r\in c^T} \wedge \mathit{Inv}(l^S)$, $c = c^T$, $l^S\in \mathit{Loc}^S$, and $(l_1^T, i?, \varphi^T,c^T, l_3^T)\in E^T$. Since $v_1 + d\models \varphi$ and $\mathit{Clk}^S\cap\mathit{Clk}^T=\emptyset$, it holds that $v_1^T + d\models \varphi^T$ and $v_1^T + d \models \mathit{Inv}(l_3^T)[r\mapsto 0]_{r\in c^T}$. Since $v_3 = v_1 + d[r\mapsto 0]_{r\in c}$ and $c = c^T$, it holds that $v_3^T = v_1^T + d[r\mapsto 0]_{r\in c^T}$, $v_3^S = v_1^S + d$, and $v_3^T\models\mathit{Inv}(l_3^T)$. Combining all information above about $T$, it follows from Definition~\ref{def:semanticTIOA} that $(l_1^T,v_1^T + d) \xlongrightarrow{i?} (l_3^T,v_3^T)$ in $\sem{T}$. From Definition~\ref{def:semanticTIOA} it also follows that $(l^S, v_1^S + d) \in Q^{\sem{S}}$. Therefore, following Definition~\ref{def:quotientTIOTS} it follows that $((l_1^T,v_1^T + d),(l^S,v_1^S + d)) = (l_1^T, l^S, v_1 + d) = q_2^Y \xlongrightarrow{i?}{}^{\!\! Y} ((l_3^T,v_3^T),(l^S,v_1^S + d)) = (l_3^T, l^S, v_3) = q_3^Y$ in $Y$. Thus, we can simulate a transition in $Y$. Also, observe now that $q_2 = q_2^Y$ and $q_3=q_3^Y$.
				}
				
				\item \new{$i?\in\mathit{Act}^S\cup\mathit{Act}^T$, $l_1 = l_u$, $l_3 = l_u$, $\varphi = \mathbf{T}$, $c = \emptyset$. Since $q^X = q^Y$, it follows from Definition~\ref{def:quotientTIOTS} of the quotient for TIOTS that $Y$ delayed within state $u$ as well, i.e., $q_2^X = q_2^Y$. Therefore, using Definition~\ref{def:quotientTIOTS} again, we have that there exists a transition $q_2^Y = u \xlongrightarrow{i?}{}^{\!\! Y} u = q_3^Y$ in $Y$. Thus, we can simulate a transition in $Y$. Also, observe now that $q_2 = q_2^Y$ and $q_3=q_3^Y$.
				}
				
				\item \new{$a\in\mathit{Act}_i^S\cup\mathit{Act}_i^T$, $l_1 = l_e$, $l_3 = l_e$, $\varphi = x_{\mathit{new}}=0$, $c = \emptyset$. Since the target location is the error location, it holds that $q_3^X\notin P$. Thus this case is not feasible.}
			\end{enumerate}
			
			\new{So, in all feasible cases we have that $q_2^Y \xrightarrow{i?} q_3^Y$ is a transition in $Y$ if $i?\neq i_{\mathit{new}}$. When $i? = i_{\mathit{new}}$, we have shown explicitly that $q_2^Y \in P$. As the analysis above is independent of the particular $i?$, $q_2^Y \xrightarrow{i?} q_3^Y$ is a transition in $Y$ for all $i?\in \mathit{Act}_i^Y$.  Furthermore, all feasible cases show that $q_2^Y, q_3^Y\in P$ directly, or because $q_2^Y = q_2$ or $q_3^Y = q_3$.}
		\end{itemize}
		
		\new{So, in both cases we have that for $q^X\xlongrightarrow{d}{}^{\!\! Y} q_2^Y \Rightarrow q_2^Y\in P \wedge \forall i?\in \mathit{Act}_i^Y: \exists q_3^Y\in P: q_2^Y\xlongrightarrow{i?}{}^{\!\! Y} q_3^Y$. As $q_2$ is chosen arbitrarily, it holds for all $q_2\in Q^X = Q^Y$. Therefore, the left-hand side is true.}
		
		\item \new{Assume the right-hand side is true, i.e., $\exists d'\leq d \wedge \exists q_2,q_3\in P \wedge \exists o!\in\mathit{Act}_o^X: q^X\xlongrightarrow{d'}{}^{\!\! X} q_2 \wedge q_2\xlongrightarrow{o!}{}^{\!\! X} q_3 \wedge \forall i?\in\mathit{Act}_i^X: \exists q_4\in P: q_2\xlongrightarrow{i?}{}^{\!\! X} q_4$. }
		
		\new{Following Definition~\ref{def:semanticTIOA} of the semantic of a TIOA and Definition~\ref{def:quotientreducedquotient} of the $\sim$-reduced quotient of $\sem{T\quotient S}$, we have that $q^X = (l_1,v_1)$, $q_2 = (l_1, v_1 + d')$, $q_3 = (l_3, v_3)$, $q_4 = (l_4, v_4)$, $l_1, l_3, l_4\in \mathit{Loc}^{T\quotient S}$, $v_1, v_3, v_4\in [\mathit{Clk}\mapsto \mathbb{R}_{\geq 0}]$, $v_1 + d'\models \mathit{Inv}^{T\quotient S}(l_1)$, $\exists (l_1,o!,\varphi, c, l_3)\in E^{T\quotient S}$, $v_1 + d'\models\varphi$, $v_3 = v_1 + d'[r\mapsto 0]_{r\in c}$, and $v_3\models \mathit{Inv}^{T\quotient S}(l_3)$. First, focus on the delay transition. 
		}
		
		\new{From Lemma~\ref{lemma:quotientTSandAsamedelay} it follows that $q^X\xlongrightarrow{d}{}^{\!\! Y} q_2$ if $v\not\models \neg\mathit{Inv}(l^T) \wedge \mathit{Inv}(l^S)$. In case that $v\models \neg\mathit{Inv}(l^T) \wedge \mathit{Inv}(l^S)$, we have from Definitions~\ref{def:quotientTIOA},~\ref{def:semanticTIOA}, and~\ref{def:quotientreducedquotient} that $q^X\xlongrightarrow{i_{\mathit{new}}}{}^{\!\! X} e$. But since $e \in \mathrm{err}^X(\overline{P})$, it follows that $e \notin P$. Since $i_{\mathit{new}}$ is an input action, it must hold that $q_2 \notin P$ (see analysis above in the proof). Therefore, this case is infeasible. Thus we have that $q^X\xlongrightarrow{d}{}^{\!\! Y} q_2$ in $Y$.
		}
		
		\new{Now consider the output transition labeled with $o!$. Remember that $\mathit{Act}_o^{T\quotient S} = \mathit{Act}_o^T\setminus\mathit{Act}_o^S \cup \mathit{Act}_i^S\setminus \mathit{Act}_i^T$. We have to consider the ten cases from Definition~\ref{def:quotientTIOA} of the quotient for TIOA. We can use the exact same argument as before (where now rules 3 and 5 have become infeasible) to show that $q_2 \xrightarrow{o!} q_3$ is a transition in $Y$ for all feasible cases. As the analysis is independent of the particular $o!$, we can conclude that ${q^X\xlongrightarrow{d'}{}^{\!\! Y} q_2} \wedge q_2\xlongrightarrow{o!}{}^{\!\! Y} q_3$ with $q_2, q_3 \in P$.}
		
		\new{Finally, consider the input transitions labeled with $i?$. Using the same argument as before, we can show that $q_2\xrightarrow{i?} q_4$ in $X$ is also a transition in $Y$, and $q_4\in P$. Therefore, we can conclude that ${q^X\xlongrightarrow{d'}{}^{\!\! Y} q_2} \wedge q_2\xlongrightarrow{o!}{}^{\!\! Y} q_3 \wedge \forall i?\in\mathit{Act}_i^Y: \exists q_4\in P: q_2\xlongrightarrow{i?}{}^{\!\! Y} q_4$ with $q_2, q_3, q_4 \in P$. Thus, the right-hand side is true.}
	\end{itemize}
	
	\new{Thus, we have shown that when the left-hand side is true for $q^X$ in $X$, it is also true for $q^X$ in $Y$; and that when the right-hand side is true for $q^X$ in $X$, it is also true for $q^X$ in $Y$. Thus, $q^X\in\Theta^Y(P)$. Since $q^X\in P$ was chosen arbitrarily, it holds for all states in $P$. Once we choose $P$ to be the fixed-point of $\Theta^X$, we have that $\Theta^X(P)\subseteq\Theta^Y(P)$.
	}
	
	\new{($\Theta^Y(P) \subseteq \Theta^X(P)$) 
		Consider a state $q^Y \in P$. Because $P$ is a postfixed point of $\Theta^Y$, it follows that $p\in\Theta^X(Y)$. From the definition of $\Theta$, it follows that $q^Y \in \overline{\mathrm{err}^Y(\overline{P})}$ and $ q^Y \in \{q\in Q^Y\mid \forall d \geq 0: [\forall q_2\in Q^Y: q\xlongrightarrow{d}{}^{\!\! Y} q_2 \Rightarrow q_2\in P \wedge \forall i?\in \mathit{Act}_i^Y: \exists q_3\in P: q_2\xlongrightarrow{i?}{}^{\!\! Y} q_3]\ \vee [\exists d'\leq d \wedge \exists q_2,q_3\in P \wedge \exists o!\in\mathit{Act}_o^Y: q\xlongrightarrow{d'}{}^{\!\! Y} q_2 \wedge q_2\xlongrightarrow{o!}{}^{\!\! Y} q_3 \wedge \forall i?\in\mathit{Act}_i^Y: \exists q_4\in P: q_2\xlongrightarrow{i?}{}^{\!\! Y} q_4]\}$. Now we focus on the second part of the definition of $\Theta$.
	}
	
	\new{Consider a $d\in\mathbb{R}_{\geq 0}$. Then the left-hand side or the right-hand side of the disjunction is true (or both). }
	\begin{itemize}
		\item \new{Assume the left-hand side is true, i.e., $\forall q_2\in Q^Y: q^Y\xlongrightarrow{d}{}^{\!\! Y} q_2 \Rightarrow q_2\in P \wedge \forall i?\in \mathit{Act}_i^Y: \exists q_3\in P: q_2\xlongrightarrow{i?}{}^{\!\! Y} q_3$. Pick a $q_2 \in Q^Y$. The implication is true when $q^Y\arrownot\xlongrightarrow{d}{}^{\!\! Y} q_2$ or $q^Y\xlongrightarrow{d}{}^{\!\! Y} q_2 \wedge q_2\in P \wedge \forall i?\in \mathit{Act}_i^Y: \exists q_3\in P: q_2\xlongrightarrow{i?}{}^{\!\! Y} q_3$. 
		}
		
		\begin{itemize}
			\item \new{Consider the first case. From Lemma~\ref{lemma:quotientTSandAsamedelay} it follows that $q^Y\arrownot\xlongrightarrow{d}{}^{\!\! Y}$ if $v\models \neg\mathit{Inv}(l^T) \wedge \mathit{Inv}(l^S)$ with $q^Y = (l_1,v_1)$. Now we have from Definitions~\ref{def:quotientTIOA},~\ref{def:semanticTIOA}, and~\ref{def:quotientreducedquotient} that $q^Y\xlongrightarrow{i_{\mathit{new}}}{}^{\!\! X} e$. But since $e \in \mathrm{err}^Y(\overline{P})$, it follows that $e \notin P$. Since $i_{\mathit{new}}$ is an input action, it must hold that $(l_1, v) \notin P$ for any valuation $v$ (see analysis above in the proof). Therefore, $q^Y\arrownot\xlongrightarrow{d}{}^{\!\! X}$. Thus the implication also holds for $q_2$ in $X$.
			}
			
			\item \new{Consider the second case. From Definition~\ref{def:quotientTIOA} of the quotient for TIOA it follows that $\mathit{Inv}((l^T, l^S)) = \mathbf{T}$. Therefore, with Definition~\ref{def:semanticTIOA} of the semantic and Definition~\ref{def:quotientreducedquotient} of the $\sim$-reduced quotient of $\sem{T\quotient S}$ it follows that $q^Y\xlongrightarrow{d}{}^{\!\! X} q_2$. Now, pick an $i?\in\mathit{Act}_i^Y$ with its corresponding $q_3$ according to the implication. Remember that $\mathit{Act}_i^Y = \mathit{Act}_i^T \cup \mathit{Act}_o^S$. We have to consider the nine cases from Definition~\ref{def:quotientTIOTS}.
			}
			
			\begin{enumerate}
				\item \new{$i?\in \mathit{Act}^S \cap\mathit{Act}^T$, $q_2^Y = (q_2^{\sem{T}},q_2^{\sem{S}})$, $q_3^Y = (q_3^{\sem{T}},q_3^{\sem{S}})$, $q_2^{\sem{T}} \xlongrightarrow{i}{}^{\!\! \sem{T}} q_3^{\sem{T}}$, and $q_2^{\sem{S}}\allowbreak {\xlongrightarrow{i}{}^{\!\! \sem{S}}} q_3^{\sem{S}}$. From Definition~\ref{def:semanticTIOA} of semantic it follows that there exists an edge $(l_2^T,i,\varphi^T, c^T, l_3^T)\in E^T$ with $q_2^{\sem{T}} = (l_2^T,v_2^T)$, $q_3^{\sem{T}} = (l_3^T,v_3^T)$, $l_2^T,l_3^T \in \mathit{Loc}^T$, $v_2^T, v_3^T \in [\mathit{Clk}^T \mapsto \mathbb{R}_{\geq 0}]$, $v_2^T\models \varphi^T$, $v_3^T = v_2^T[r\mapsto 0]_{r\in c^T}$, and $v_3^T\models \mathit{Inv}^T(l_3^T)$. Similarly, it follows from the same definition that there exists an edge $(l_2^S,i,\varphi^S, c^S, l_3^S)\in E^S$ with $q_2^{\sem{S}} = (l_2^S,v_2^S)$, $q_3^{\sem{S}} = (l_3^S,v_3^S)$, $l_2^S,l_3^S \in \mathit{Loc}^S$, $v_2^S, v_3^S \in [\mathit{Clk}^S \mapsto \mathbb{R}_{\geq 0}]$, $v_2^S\models \varphi^S$, $v_3^S = v_2^S[r\mapsto 0]_{r\in c^S}$, and $v_3^S\models \mathit{Inv}^S(l_3^S)$. Based on Definition~\ref{def:quotientTIOA} of the quotient for TIOA, we need to consider the following two cases.
				}
				
				\begin{itemize}
					\item \new{$v_2^S \models \mathit{Inv}(l_2^S)$. In this case, there exists an edge $((l_2^T, l_2^S),i,\varphi^T \wedge \mathit{Inv}(l_3^T)[r\mapsto 0]_{r\in c^T} \wedge \varphi^S \wedge \mathit{Inv}(l_2^S) \wedge \mathit{Inv}(l_3^S)[r\mapsto 0]_{r\in c^S}, c^T \cup c^S, (l_3^T,l_3^S))$ in $T\quotient S$. Let $v_i, i=1,2$ be the valuations that combines the one from $T$ with the one from $S$, i.e. $\forall r \in \mathit{Clk}^T: v_i(r) = v_i^T(r)$ and $\forall r \in \mathit{Clk}^S: v_i(r) = v_i^S(r)$. Because $\mathit{Clk}^T\cap\mathit{Clk}^S = \emptyset$, it holds that $v_2\models \varphi^T$, $v_2\models \varphi^S$, and $v_2^S \models \mathit{Inv}(l_2^S)$, thus $v_2\models \varphi^T\wedge\varphi^S \wedge \mathit{Inv}(l_2^S)$; $v_3 = v_2[r\mapsto 0]_{r\in c^T\cup c^S}$; and $v_3\models\mathit{Inv}^T(l_3^T)$ and $v_3\models\mathit{Inv}^S(l_3^S)$, thus $v_3\models\mathit{Inv}^T(l_3^T)\wedge\mathit{Inv}^S(l_3^S)$. 
					}
					
					\new{From Definition~\ref{def:semanticTIOA} it now follows that $((l_2^T,l_2^S), v_2) \xrightarrow{i} ((l_3^T,l_3^S), v_3)$ is a transition in $\sem{T\quotient S}$. Because $\mathit{Clk}^T\cap\mathit{Clk}^S = \emptyset$, we can rearrange the states into $((l_2^T,l_2^S),v_2) = ((l_2^T,v_2^S) , (l_2^T,v_2^S)) = q_2^Y$ and $((l_3^T,l_3^S),v_3) = ((l_3^T,v_3^T) , (l_3^S,v_3^S)) = q_3^Y$. Thus, $q_2^Y \xrightarrow{a} q_3^Y$ is a transition in $\sem{T\quotient S} = Y$. Also, observe now that $q_2^X = q_2^Y$ and $q_3^X=q_3^Y$.
					}
					
					\item \new{$v_2^S \not\models \mathit{Inv}(l_2^S)$. In this case, state $q_2 = (l_2^T, v_2^T, l_2^S,v_2^S)$ cannot be reached by delaying into it, since $v_2^S \not\models \mathit{Inv}(l_2^S)$ implies with Definition~\ref{def:semanticTIOA} of the semantic that $\forall q^{\sem{S}} \in Q^{\sem{S}}$ we have $q^{\sem{S}} \arrownot\xlongrightarrow{d}{}^{\!\! \sem{S}} q_2^{\sem{S}}$. From Definition~\ref{def:quotientTIOTS} we have that in this case $q^Y \xlongrightarrow{d}{}^{\!\! Y} u$, and $q_2^Y \neq u$. Thus this case is infeasible.}
				\end{itemize}
				
				\item \new{$i!\in \mathit{Act}^S \setminus \mathit{Act}^T$, $q_2^Y = (q^{\sem{T}},q_2^{\sem{S}})$, $q_3^Y = (q^{\sem{T}},q_3^{\sem{S}})$, $q^{\sem{T}}\in Q^{\sem{T}}$, and $q_2^{\sem{S}}\xlongrightarrow{i!}{}^{\!\! \sem{S}} q_3^{\sem{S}}$. From Definition~\ref{def:semanticTIOA} of semantic it follows that there exists an edge $(l_2^S,i!,\varphi^S, c^S, l_3^S)\in E^S$ with $q_2^{\sem{S}} = (l_2^S,v_2^S)$, $q_3^{\sem{S}} = (l_3^S,v_3^S)$, $l_2^S,l_3^S \in \mathit{Loc}^S$, $v_2^S, v_3^S \in [\mathit{Clk}^S \mapsto \mathbb{R}_{\geq 0}]$, $v_2^S\models \varphi^S$, $v_3^S = v_2^S[r\mapsto 0]_{r\in c^S}$, and $v_3^S\models \mathit{Inv}^S(l_3^S)$. From the same definition, it follows that $q^{\sem{T}} = (l^T, v^T)$ for some $l^T\in\mathit{Loc}^T$ and $v^T \in [\mathit{Clk}^T \mapsto \mathbb{R}_{\geq 0}]$. Based on Definition~\ref{def:quotientTIOA} of the quotient for TIOA, we need to consider the following two cases.
				}
				
				\begin{itemize}
					\item \new{$v_2^S \models \mathit{Inv}(l_2^S)$. In this case, there exists an edge $((l^T, l_2^S),a, \varphi^S \wedge \mathit{Inv}(l_2^S) \wedge \mathit{Inv}(l_3^S)[r\mapsto 0]_{r\in c^S}, c^S,\allowbreak (l^T,l_3^S))$ in $T\quotient S$. Let $v_i, i=1,2$ be the valuations that combines the one from $T$ with the one from $S$, i.e. $\forall r \in \mathit{Clk}^T: v_i(r) = v_i^T(r)$ and $\forall r \in \mathit{Clk}^S: v_i(r) = v_i^S(r)$. Because $\mathit{Clk}^T\cap\mathit{Clk}^S = \emptyset$, it holds that $v_2\models \varphi^S$, and $v_2 \models \mathit{Inv}(l_2^S)$, thus $v_2\models \varphi^S \wedge \mathit{Inv}(l_2^S)$; $v_3 = v_2[r\mapsto 0]_{r\in c^S}$; and $v_3\models\mathit{Inv}^S(l_3^S)$.
					}
					
					\new{Since $\mathit{Inv}((l^T, l_3^S)) = \mathbf{T}$ by definition $T\quotient S$, we have that $v_3\models \mathit{Inv}((l^T, l_3^S))$. From Definition~\ref{def:semanticTIOA} it now follows that $((l^T,l_2^S), v_2) \xrightarrow{i?} ((l^T,l_3^S), v_3)$ is a transition in $\sem{T\quotient S}$. Using Definition~\ref{def:quotientreducedquotient} of the reduced $\sim$-quotient of $\sem{T\quotient S}$ and Lemma~\ref{lemma:quotientTSandAsamestateset}, we can rearrange the states into $((l^T,l_2^S),v_2) = ((l^T,v_2^T) , (l_2^S,v_2^S)) = q_2^Y$ and $((l^T,l_3^S),v_3) = ((l^T,v_3^T) , (l_3^S,v_3^S)) = q_3^Y$, and we can show that $q_2^Y \xrightarrow{i?} q_3^Y$ is a transition in $\sem{T\quotient S}^{\rho} = X$. Also, observe now that $q_2^X = q_2^Y$ and $q_3^X=q_3^Y$.
					}
					
					\item \new{$v_2^S \not\models \mathit{Inv}(l_2^S)$. In this case, state $q_2 = (l_2^T, v_2^T, l_2^S,v_2^S)$ cannot be reached by delaying into it, since $v_2^S \not\models \mathit{Inv}(l_2^S)$ implies with Definition~\ref{def:semanticTIOA} of the semantic that $\forall q^{\sem{S}} \in Q^{\sem{S}}$ we have $q^{\sem{S}} \arrownot\xlongrightarrow{d}{}^{\!\! \sem{S}} q_2^{\sem{S}}$. From Definition~\ref{def:quotientTIOTS} we have that in this case $q^Y \xlongrightarrow{d}{}^{\!\! Y} u$, and $q_2^Y \neq u$. Thus this case is infeasible.}
				\end{itemize}
				
				\item \new{$i?\in \mathit{Act}^T \setminus \mathit{Act}^S$, $q_2^Y = (q_2^{\sem{T}},q^{\sem{S}})$, $q_3^Y = (q_3^{\sem{T}},q^{\sem{S}})$, $q^{\sem{S}}\in Q^{\sem{S}}$, and $q_2^{\sem{T}}\xlongrightarrow{i?}{}^{\!\! \sem{T}} q_3^{\sem{T}}$. 
					From Definition~\ref{def:semanticTIOA} of semantic it follows that there exists an edge $(l_2^T,i?,\varphi^T, c^T, l_3^T)\in E^T$ with $q_2^{\sem{T}} = (l_2^T,v_2^T)$, $q_3^{\sem{T}} = (l_3^T,v_3^T)$, $l_2^T,l_3^T \in \mathit{Loc}^T$, $v_2^T, v_3^T \in [\mathit{Clk}^T \mapsto \mathbb{R}_{\geq 0}]$, $v_2^T\models \varphi^T$, $v_3^T = v_2^T[r\mapsto 0]_{r\in c^T}$, and $v_3^T\models \mathit{Inv}^T(l_3^T)$. From the same definition, it follows that $q^{\sem{S}} = (l^S, v^S)$ for some $l^S\in\mathit{Loc}^S$ and $v^S \in [\mathit{Clk}^S \mapsto \mathbb{R}_{\geq 0}]$. Based on Definition~\ref{def:quotientTIOA} of the quotient for TIOA, we need to consider the following two cases.
				}
				
				\begin{itemize}
					\item \new{$v_2^S \models \mathit{Inv}(l_2^S)$. In this case, there exists an edge $((l_2^T, l^S),i?, \varphi^T \wedge \mathit{Inv}(l_3^T)[r\mapsto 0]_{r\in c^T} \wedge \mathit{Inv}(l^S), c^T,\allowbreak (l_3^T,l^S))$ in $T\quotient S$. Let $v_i, i=1,2$ be the valuations that combines the one from $T$ with the one from $S$, i.e. $\forall r \in \mathit{Clk}^T: v_i(r) = v_i^T(r)$ and $\forall r \in \mathit{Clk}^S: v_i(r) = v_i^S(r)$. Because $\mathit{Clk}^T\cap\mathit{Clk}^S = \emptyset$, it holds that $v_2\models \varphi^T$, and $v_2 \models \mathit{Inv}(l^S)$, thus $v_2\models \varphi^T \wedge \mathit{Inv}(l^S)$; $v_3 = v_2[r\mapsto 0]_{r\in c^T}$; and $v_3\models\mathit{Inv}^T(l_3^T)$. 
					}
					
					\new{Since $\mathit{Inv}((l_3^T, l^S)) = \mathbf{T}$ by definition $T\quotient S$, we have that $v_3\models \mathit{Inv}((l_3^T, l^S))$. From Definition~\ref{def:semanticTIOA} it now follows that $((l_2^T,l^S), v_2) \xrightarrow{i?} ((l_3^T,l^S), v_3)$ is a transition in $\sem{T\quotient S}$. Using Definition~\ref{def:quotientreducedquotient} of the reduced $\sim$-quotient of $\sem{T\quotient S}$ and Lemma~\ref{lemma:quotientTSandAsamestateset}, we can rearrange the states into $((l_2^T,l^S),v_2) = ((l_2^T,v_2^T) , (l^S,v_2^S)) = q_2^Y$ and $((l_3^T,l^S),v_3) = ((l_3^T,v_3^T) , (l^S,v_3^S)) = q_3^Y$, and we can show that $q_2^Y \xrightarrow{i?} q_3^Y$ is a transition in $\sem{T\quotient S}^{\rho} = X$. Also, observe now that $q_2^X = q_2^Y$ and $q_3^X=q_3^Y$.
					}
					
					\item \new{$v_2^S \not\models \mathit{Inv}(l_2^S)$. In this case, state $q_2 = (l_2^T, v_2^T, l_2^S,v_2^S)$ cannot be reached by delaying into it, since $v_2^S \not\models \mathit{Inv}(l_2^S)$ implies with Definition~\ref{def:semanticTIOA} of the semantic that $\forall q^{\sem{S}} \in Q^{\sem{S}}$ we have $q^{\sem{S}} \arrownot\xlongrightarrow{d}{}^{\!\! \sem{S}} q_2^{\sem{S}}$. From Definition~\ref{def:quotientTIOTS} we have that in this case $q^Y \xlongrightarrow{d}{}^{\!\! Y} u$, and $q_2^Y \neq u$. Thus this case is infeasible.}
				\end{itemize}
				
				\item \new{$d\in\mathbb{R}_{\geq 0}$, $q_2^Y = (q_2^{\sem{T}},q_2^{\sem{S}})$, $q_3^Y = (q_3^{\sem{T}},q_3^{\sem{S}})$, $q_2^{\sem{T}} \xlongrightarrow{d}{}^{\!\! \sem{T}} q_3^{\sem{T}}$, and $q_2^{\sem{S}}\allowbreak {\xlongrightarrow{d}{}^{\!\! \sem{S}}} q_3^{\sem{S}}$. This case is infeasible, since $i? \neq d$.
				}
				
				\item \new{$i!\in \mathit{Act}_o^S$, $q_2^Y = (q^{\sem{T}},q^{\sem{S}})$, $q_3^Y = u$, $q^{\sem{T}} \in Q^{\sem{T}}$, and $q^{\sem{S}}\arrownot \xlongrightarrow{i!}{}^{\!\! \sem{S}}$. From Definition~\ref{def:semanticTIOA} of semantic it follows that $q^{\sem{T}} = (l^T,v^T)$ and $q^{\sem{S}} = (l^S,v^S)$. There are two reasons why $q^{\sem{S}}\arrownot \xlongrightarrow{i!}{}^{\!\! \sem{S}}$: there might be no edge in $E^S$ labeled with action $i!$ from location $l^S$ or none of the edges labeled with $i!$ from $l^S$ are enabled. An edge $(l^S,i!,\varphi, c, l^{S\prime}) \in E^S$ is not enabled if $v^S\not\models \varphi$ or $v^S[r\mapsto 0]_{r\in c} \not\models \mathit{Inv}(l^{S\prime})$ (or both), which can also be written as $v^S\not\models \varphi \wedge \mathit{Inv}(l^{S\prime})[r\mapsto 0]_{r\in c}$. Looking at the third rule in Definition~\ref{def:quotientTIOA} of the quotient for TIOA, we have that $((l^T,l^S),i?,\neg G_S, \emptyset, l_u) \in E^{T\quotient S}$ and $v^S\not\models G_S$, or $v^S\models \neg G_S$. Because $\mathit{Clk}^T\cap\mathit{Clk}^S = \emptyset$, it holds that $v\models \neg G_S$. 
				}
				
				\new{Now, since $\mathit{Inv}(l_u) = \textbf{T}$ and no clocks are reset, it holds that $v[r\mapsto 0]_{r\in\emptyset} = v \models \mathit{Inv}(l_u)$. From Definition~\ref{def:semanticTIOA} it now follows that $((l^T,l^S), v) \xrightarrow{i?} (l_u, v_3)$ is a transition in $\sem{T\quotient S}$. From the state label renaming function $f$ from Lemma~\ref{lemma:quotientTSandAsamestateset} we have that $q_3^X = f((l_u, v_3)) = u = q_3^Y$ and $q_2^X = q_2^Y$. And from Definition~\ref{def:quotientreducedquotient} of the reduced $\sim$-quotient of $\sem{T\quotient S}$ we have that $q_2^Y \xrightarrow{i?} q_3^Y$ is a transition in $\sem{T\quotient S}^{\rho} = X$.
				}
				
				\item \new{$d\in\mathbb{R}_{\geq 0}$, $q_2^Y = (q^{\sem{T}},q^{\sem{S}})$, $q_3^Y = u$, $q^{\sem{T}} \in Q^{\sem{T}}$, and $q^{\sem{S}}\arrownot \xlongrightarrow{d}{}^{\!\! \sem{S}}$. This case is infeasible, since $i? \neq d$.
				}
				
				\item \new{$i!\in \mathit{Act}_o^S \cap\mathit{Act}_o^T$, $q_2^Y = (q^{\sem{T}},q^{\sem{S}})$, $q_3^Y = e$, $q^{\sem{T}} \arrownot\xlongrightarrow{a}{}^{\!\! \sem{T}}$, and $q^{\sem{S}} \xlongrightarrow{a}{}^{\!\! \sem{S}}$. Since the target location is the error location, it holds that $q_3\notin P$. Thus this case is not feasible.
				}
				
				\item \new{$i\in \mathit{Act}^T \cup \mathit{Act}^S \cup \mathbb{R}_{\geq 0}$, $q_2^Y = u$, $q_3^Y = u$. There are two cases how $q_2^Y = u$ could have been reached by a delay.
				}
				\begin{itemize}
					\item \new{$q^Y = u$. In this case, it follows directly from Definition~\ref{def:quotientTIOA} that $(l_u, i?, \mathbf{T}, \emptyset, l_u)\in E^{T\quotient S}$. Since any valuation satisfies a true guard and by definition of $T\quotient S$ that $\mathit{Inv}(l_u) = \mathbf{T}$, we have with Definition~\ref{def:semanticTIOA} of semantic that $(l_u, v) \xrightarrow{i?} (l_u, v)$ is a transition in $\sem{T\quotient S}$. From the state label renaming function $f$ from Lemma~\ref{lemma:quotientTSandAsamestateset} we have that $q_2^X = q_2^Y$ and $q_3^X = f((l_u, v)) = u = q_3^Y$. And from Definition~\ref{def:quotientreducedquotient} of the reduced $\sim$-quotient of $\sem{T\quotient S}$ we have that $q_2^Y \xrightarrow{i?} q_3^Y$ is a transition in $\sem{T\quotient S}^{\rho} = X$.
					}
					
					\item \new{$q^Y = (l^T, v^T, l^S, v^S) \in Q^Y$ with $v^S + d \not\models \mathit{Inv}(l^S)$. In this case, it follows from Definitions~\ref{def:quotientTIOA},~\ref{def:semanticTIOA}, and~\ref{def:quotientreducedquotient} that $q^Y \xrightarrow{d} (l^T, l^S, v + d)$ in $X$. Furthermore, it follows directly from Definition~\ref{def:quotientTIOA} that $((l^T, l^S), i?, \neg \mathit{Inv}(l^S), \emptyset, l_u)\in E^{T\quotient S}$. Since $v^S + d \not\models \mathit{Inv}(l^S)$, we have $v^S + d \models \neg \mathit{Inv}(l^S)$. By definition of $T\quotient S$ we have that $\mathit{Inv}(l_u) = \mathbf{T}$, thus $v + d[r\mapsto 0]_{r\in \emptyset} = v + d \models \mathit{Inv}(l_u)$.  Now, with Definition~\ref{def:semanticTIOA} of semantic we it follows that $(l_u, v + d) \xrightarrow{i?} (l_u, v + d)$ is a transition in $\sem{T\quotient S}$. From the state label renaming function $f$ from Lemma~\ref{lemma:quotientTSandAsamestateset} we have that $q_3^X = f((l_u, v + d)) = u = q_3^Y$. And from Definition~\ref{def:quotientreducedquotient} of the reduced $\sim$-quotient of $\sem{T\quotient S}$ we have that $q_2^Y \xrightarrow{i?} q_3^Y$ is a transition in $\sem{T\quotient S}^{\rho} = X$. }
				\end{itemize}
				
				\item \new{$a\in \mathit{Act}_i^T \cup \mathit{Act}_o^S$, $q_2^Y = e$, $q_3^Y = e$. Since the target location is the error location, it holds that $q_3\notin P$. Thus this case is not feasible.}
			\end{enumerate}
			
			\new{Thus, in all feasible cases we can show that $q_2\xlongrightarrow{i?}{}^{\!\! Y} q_3$ implies $q_2 \xlongrightarrow{i?}{}^{\!\! X} q_3$. Since we have chosen an arbitrarily $i?\in\mathit{Act}_i^Y$, it holds for all $i?\in \mathit{Act}_i^Y$. }
			
			\new{It remains to be shown that $q_2 \xlongrightarrow{i_{\mathit{new}}}{}^{\!\! X} q_3$ and $q_3\in P$, since $i_{\mathit{new}}\notin\mathit{Act}_i^Y$. We only have to consider five cases from Definition~\ref{def:quotientTIOA} that involve $i_{\mathit{new}}$ (rule 4, 6, 7, 9, and 10). Using the same arguments as in these cases when we were considering $\Theta^X(P) \subseteq \Theta^Y(P)$ we can conclude that $q_3\in P$ in all feasible cases for $i_{\mathit{new}}$. Thus the implication also holds for $q_2$ in $X$.}
		\end{itemize}
		
		\new{Thus, in both cases the implication holds. Therefore, we can conclude that $q^Y\xlongrightarrow{d}{}^{\!\! X} q_2 \Rightarrow q_2\in P \wedge \forall i?\in \mathit{Act}_i^X: \exists q_3\in P: q_2\xlongrightarrow{i?}{}^{\!\! X} q_3$. As $q_2$ is chosen arbitrarily, it holds for all $q_2\in Q^X = Q^Y$. Therefore, the left-hand side is true.}
		
		\item \new{Assume the right-hand side is true, i.e., $\exists d'\leq d \wedge \exists q_2,q_3\in P \wedge \exists o!\in\mathit{Act}_o^Y: q\xlongrightarrow{d'}{}^{\!\! Y} q_2 \wedge q_2\xlongrightarrow{o!}{}^{\!\! Y} q_3 \wedge \forall i?\in\mathit{Act}_i^Y: \exists q_4\in P: q_2\xlongrightarrow{i?}{}^{\!\! Y} q_4$. First, focus on the delay. From Definition~\ref{def:quotientTIOA} of the quotient for TIOA it follows that $\mathit{Inv}((l^T, l^S)) = \mathbf{T}$. Therefore, with Definition~\ref{def:semanticTIOA} of the semantic and Definition~\ref{def:quotientreducedquotient} of the $\sim$-reduced quotient of $\sem{T\quotient S}$ it follows that $q^Y\xlongrightarrow{d}{}^{\!\! X} q_2$. 
		}
		
		\new{Now, consider the output transition labeled with $o!$. Remember that $\mathit{Act}_o^Y = \mathit{Act}_o^X = \mathit{Act}_o^T\setminus\mathit{Act}_o^S \cup \mathit{Act}_i^S\setminus\mathit{Act}_i^T$. We have to consider the nine cases from Definition~\ref{def:quotientTIOTS}. We can use the exact same argument as before (where now rules 5 and 7 have become infeasible) to show that $q_2\xlongrightarrow{o!}{}^{\!\! X} q_3$ is a transition in $X$ for all feasible cases. Since we have chosen an arbitrarily $o!\in\mathit{Act}_o^Y$, it holds for all $o!\in \mathit{Act}_o^Y$. Therefore, we can conclude that ${q^Y\xlongrightarrow{d'}{}^{\!\! X} q_2} \wedge q_2\xlongrightarrow{o!}{}^{\!\! X} q_3$ with $q_2, q_3 \in P$.}
		
		\new{Finally, consider the input transitions labeled with $i?$. Using the same argument as before, we can show that $q_2\xrightarrow{i?} q_4$ in $Y$ is also a transition in $X$, and $q_4\in P$. Therefore, we can conclude that ${q^Y\xlongrightarrow{d'}{}^{\!\! X} q_2} \wedge q_2\xlongrightarrow{o!}{}^{\!\! X} q_3 \wedge \forall i?\in\mathit{Act}_i^X: \exists q_4\in P: q_2\xlongrightarrow{i?}{}^{\!\! X} q_4$ with $q_2, q_3, q_4 \in P$. Thus, the right-hand side is true.}
	\end{itemize}
	\new{Thus, we have shown that when the left-hand side is true for $q^Y$ in $Y$, it is also true for $q^Y$ in $X$; and that when the right-hand side is true for $q^Y$ in $Y$, it is also true for $q^Y$ in $X$. Thus, $q^Y\in\Theta^X(P)$. Since $q^Y\in P$ was chosen arbitrarily, it holds for all states in $P$. Once we choose $P$ to be the fixed-point of $\Theta^Y$, we have that $\Theta^Y(P)\subseteq\Theta^X(P)$.}
\end{proof}

\new{Finally, we are ready to proof Theorem~\ref{thrm:quotientTSandA}.}

\newtheorem*{T17}{Lemma~\ref{thrm:quotientTSandA}}
\begin{T17}
	Given specification automata $S = (\mathit{Loc}^S,l_0^S,\mathit{Act}^S,\mathit{Clk}^S, E^S,\mathit{Inv}^S)$ and $T = (\mathit{Loc}^T,l_0^T,\allowbreak\mathit{Act}^T,\mathit{Clk}^T, E^T,\allowbreak\mathit{Inv}^T)$ where $\mathit{Act}_o^S\cap\mathit{Act}_i^T=\emptyset$. Then \new{$(\sem{T\quotient S})^{\Delta} \simeq (\sem{T}  \quotient \sem{S})^{\Delta}$}.
\end{T17}
\begin{proof}[Proof of Theorem~\ref{thrm:quotientTSandA}]
	\new{First, observe that the semantic of a TIOA and adversarial pruning do not alter the action set. Therefore, it follows directly that $(\sem{T\quotient S})^{\Delta}$ and $(\sem{T} \quotient \sem{S})^{\Delta}$ have the same action set and partitioning into input and output actions, except that $(\sem{T\quotient S})^{\Delta}$ has an additional input event $i_{\mathit{new}}$, i.e., $\mathit{Act}^{\sem{T\quotient S}} \cup \{i_{\mathit{new}}\} = \mathit{Act}^{\sem{T} \quotient \sem{S}}$. 
	}
	
	\new{Now, it follows from Lemma~\ref{lemma:quotientadversarialreduced} that it suffice to show that $(\sem{T\quotient S}^{\rho})^{\Delta} \simeq (\sem{T} \quotient \sem{S})^{\Delta}$. It follows from Lemma~\ref{lemma:quotientTSandAsamestateset} that there is a bijective function $f$ relating states from $\sem{T\quotient S}^{\rho}$ and $\sem{T} \quotient \sem{S}$ together. Therefore, we can effectively say that they have the same state set (up to relabeling), i.e., $Q^{\sem{T\quotient S}^{\rho}} = Q^{\sem{T\quotient S}}$. For brevity, in the rest of this proof we write we write $X=\sem{T\quotient S}^{\rho}$, $Y = \sem{T} \quotient \sem{S}$, $\mathit{Clk} = \mathit{Clk}^T \uplus\mathit{Clk}^S$, and $v^S$ and $v^T$ to indicate the part of a valuation $v$ of only the clocks of $S$ and $T$, respectively. Note that $x_{\mathit{new}}\notin\mathit{Clk}$, but $x_{\mathit{new}} \in \mathit{Clk}^X$.
	}
	
	\new{Let $A = \{ q \in Q^{X^{\Delta}} \mid q = ((l^T, l^S), v), v \not\models \mathit{Inv}(l^S) \}$. Let $R\subseteq Q^{X^{\Delta}} \times Q^{Y^{\Delta}}$ such that $R = \{ (q, u) \mid q \in A\} \cup \{(q^X,q^Y) \in Q^{X^{\Delta}} \setminus A \times Q^{Y^{\Delta}} \mid q^X = q^Y \}$. We will show that $R$ is a bisimulation relation. First, observe that $(q_0, q_0) \in R$. Consider a state pair $(q_1^X, q_1^Y) \in R$. We have to check whether the six cases from Definition~\ref{def:bisimulation} of bisimulation hold.
	}
	
	\begin{itemize}
		\item \new{$q_1^X\xlongrightarrow{a}{}^{\!\! X^{\Delta}} q_2^X$, $q_2^X\in Q^X$, and $a \in \mathit{Act}^X\cap\mathit{Act}^Y$. Combining Definitions~\ref{def:adversarialpruning},~\ref{def:quotientTIOTS} and~\ref{def:quotientTIOA} it follows that $a\in\mathit{Act}^S\cup\mathit{Act}^T$. From Definition~\ref{def:adversarialpruning} of adversarial pruning we have that $q_1^X\xlongrightarrow{a}{}^{\!\! X} q_2^X$ and $q_1^X, q_2^X\in\mathrm{cons}^X$. Following Definition~\ref{def:semanticTIOA} of the semantic and Definition~\ref{def:quotientreducedquotient} of the reduced $\sim$-quotient of $\sem{T\quotient S}$, it follows that there exists an edge $(l_1,a,\varphi, c, l_2)\in E^{T\quotient S}$ with $q_1^X = (l_1,v_1)$, $q_2^X = (l_2,v_2)$, $l_1,l_2 \in \mathit{Loc}^{T\quotient S}$, $v_1, v_2 \in [\mathit{Clk} \mapsto \mathbb{R}_{\geq 0}]$, $v_1\models \varphi$, $v_2 = v_1[r\mapsto 0]_{r\in c}$, and $v_2\models \mathit{Inv}(l_2)$. Now, consider the ten cases from Definition~\ref{def:quotientTIOA} of quotient of TIOAs. We have to show for feasible each case that we can simulate a transition in $Y$, that the involved states in $Y$ are consistent, and that the resulting state pair is again in the bisimulation relation $R$.
		}
		
		\begin{enumerate}
			\item \new{$a\in\mathit{Act}^S\cap\mathit{Act}^T$, $l_1 = (l_1^T,l_1^S)$, $l_2 = (l_2^T,l_2^S)$, $\varphi = \varphi^T\wedge \mathit{Inv}(l_2^T)[r\mapsto 0]_{r\in c^T}\wedge\varphi^S \wedge \mathit{Inv}(l_1^S) \wedge \mathit{Inv}(l_2^S)[r\mapsto 0]_{r\in c^S}$, $c = c^T\cup c^S$, $(l_1^T, a, \varphi^T,c^T, l_2^T)\in E^T$, and $(l_1^S, a, \varphi^S, c^S, l_2^S)\in E^S$. Since $v_1\models \varphi$, it holds that $v_1\models \varphi^T$, $v_1\models \mathit{Inv}(l_2^T)[r\mapsto 0]_{r\in c^T}$,  $v_1\models \varphi^S$, $v_1\models \mathit{Inv}(l_1^S)$, and $v_1 \models \mathit{Inv}(l_2^S)[r\mapsto 0]_{r\in c^S}$. Because $\mathit{Clk}^S\cap\mathit{Clk}^T=\emptyset$, it holds that $v_1^T\models \varphi^T$, $v_1^T\models \mathit{Inv}(l_2^T)[r\mapsto 0]_{r\in c^T}$, $v_1^S\models \varphi^S$, $v_1^S \models \mathit{Inv}(l_1^S)$, and $v_1^S \models \mathit{Inv}(l_2^S)[r\mapsto 0]_{r\in c^S}$. Since $v_2 = v_1[r\mapsto 0]_{r\in c}$, it holds that $v_2^T = v_1^T[r\mapsto 0]_{r\in c^T}$ and $v_2^S = v_1^S[r\mapsto 0]_{r\in c^S}$. Therefore, $v_2^T\models \mathit{Inv}(l_2^T)$ and $v_2^S\models\mathit{Inv}(l_2^S)$. 
			}
			
			\new{Combining all information about $T$, we have that $(l_1^T, a, \varphi^T,c^T, l_2^T)\in E^T$, $v_1^T\models \varphi^T$, $v_2^T = v_1^T[r\mapsto 0]_{r\in c^T}$, and $v_2^T\models \mathit{Inv}(l_2^T)$. Therefore, from Definition~\ref{def:semanticTIOA} it follows that $(l_1^T,v_1^T) \xlongrightarrow{a} (l_2^T,v_2^T)$ in $\sem{T}$. Combining all information about $S$, we have that $(l_1^S, a, \varphi^S,c^S, l_2^S)\in E^S$, $v_1^S\models \varphi^S$, $v_2^S = v_1^S[r\mapsto 0]_{r\in c^S}$, and $v_2^S\models \mathit{Inv}(l_2^S)$. Therefore, from Definition~\ref{def:semanticTIOA} it follows that $(l_1^S,v_1^S) \xlongrightarrow{a} (l_2^S,v_2^S)$ in $\sem{S}$. 
			}
			
			\new{Now, from Definition~\ref{def:quotientTIOTS} it follows that $((l_1^T,v_1^T),(l_1^S,v_1^S)) = (l_1^T, l_1^S, v_1) = q_1^Y \xlongrightarrow{a}{}^{\!\! Y} ((l_2^T,v_2^T),\allowbreak(l_2^S,v_2^S)) = (l_2^T, l_2^S, v_2) = q_2^Y$ in $Y$. Thus, we can simulate a transition in $Y$. Also, observe now that $q_1^X = q_1^Y$ and $q_2^X=q_2^Y$.
			}
			
			\item \new{$a\in\mathit{Act}^S\setminus\mathit{Act}^T$, $l_1 = (l^T,l_1^S)$, $l_2 = (l^T,l_2^S)$, $\varphi = \varphi^S \wedge \mathit{Inv}(l_1^S) \wedge \mathit{Inv}(l_2^S)[r\mapsto 0]_{r\in c^S}$, $c = c^S$, $l^T\in \mathit{Loc}^T$, and $(l_1^S, a, \varphi^S, c^S, l_2^S)\in E^S$. Since $v_1\models \varphi$ and $\mathit{Clk}^S\cap\mathit{Clk}^T=\emptyset$, it holds that $v_1^S\models \varphi^S$, $v_1^S\models \mathit{Inv}(l_1^S)$, and $v_1^S \models \mathit{Inv}(l_2^S)[r\mapsto 0]_{r\in c^S}$. Since $v_2 = v_1[r\mapsto 0]_{r\in c}$ and $c = c^S$, it holds that $v_2^S = v_1^S[r\mapsto 0]_{r\in c^S}$, $v_2^T = v_1^T$, and $v_2^S\models\mathit{Inv}(l_2^S)$. Combining all information above about $S$, it follows from Definition~\ref{def:semanticTIOA} that $(l_1^S,v_1^S) \xlongrightarrow{a} (l_2^S,v_2^S)$ in $\sem{S}$. From Definition~\ref{def:semanticTIOA} it also follows that $(l^T, v_1^T) \in Q^{\sem{T}}$. Therefore, following Definition~\ref{def:quotientTIOTS} it follows that $((l^T,v_1^T),(l_1^S,v_1^S)) = (l^T, l_1^S, v_1) = q_1^Y \xlongrightarrow{a}{}^{\!\! Y} ((l^T,v_1^T),(l_2^S,v_2^S)) = (l^T, l_2^S, v_2) = q_2^Y$ in $Y$. Thus, we can simulate a transition in $Y$. Also, observe now that $q_1^X = q_1^Y$ and $q_2^X=q_2^Y$.
			}
			
			\item \new{$a\in\mathit{Act}_o^S$, $l_1 = (l^T,l_1^S)$, $l_2 = l_u$, $\varphi = \neg G_S$, $c = \emptyset$, $l^T\in \mathit{Loc}^T$ and $G_S = \bigvee \{\varphi^S \wedge \mathit{Inv}(l_2^S)[r\mapsto 0]_{r\in c^S} \mid (l_1^S,a,\varphi^S,c^S,l_2^S) \in E^S\}$. Since $v_1\models \varphi$ and $\mathit{Clk}^S\cap\mathit{Clk}^T=\emptyset$, it holds that $v_1^S\models \neg G_S$. Therefore, $v_1^S\not\models G_S$, which indicates that $\forall (l_1^S,a,\varphi^S,c^S,l_2^S) \in E^S$: $v_1^S\not\models \varphi^S \wedge \mathit{Inv}(l_2^S)[r\mapsto 0]_{r\in c^S}$. This means that $v_1^S\not\models \varphi^S$ or $v_1^S\not\models \mathit{Inv}(l_2^S)[r\mapsto 0]_{r\in c^S}$ or both, where the second option is equivalent to $v_1^S[r\mapsto 0]_{r\in c^S} \not\models \mathit{Inv}(l_2^S)$. Following Definition~\ref{def:semanticTIOA}, we can conclude that $(l_1^S,v_1^S) \arrownot\xlongrightarrow{\ a\ }$ in $\sem{S}$. From Definition~\ref{def:semanticTIOA} it also follows that $(l^T, v_1^T) \in Q^{\sem{T}}$. Now, following Definition~\ref{def:quotientTIOTS}, we have transition $((l^T,v_1^T),(l_1^S,v_1^S)) = (l^T, l_1^S, v_1) = q_1^Y \xlongrightarrow{a}{}^{\!\! Y} u = q_2^Y$ in $Y$. Thus we can simulate a transition in $Y$. Also, observe now that $q_1^X = q_1^Y$ and $q_2^X=q_2^Y$ (where $(l_u, v_2)$ is mapped into $u$ by $f$ from Lemma~\ref{lemma:quotientTSandAsamestateset}). 
			}
			
			\item \new{$a \in \mathit{Act}^S\cup\mathit{Act}^T$, $l_1 = (l^T,l^S)$, $l_2 = l_u$, $\varphi = \neg \mathit{Inv}(l^S)$, $c = \emptyset$, $l^T\in \mathit{Loc}^T$, and $l^S\in\mathit{Loc}^S$. Since $v_1\models \varphi$ and $\mathit{Clk}^S\cap\mathit{Clk}^T=\emptyset$, it holds that $v_1^S\models \neg \mathit{Inv}(l^S)$. Therefore, $v_1^S\not\models \mathit{Inv}(l^S)$. Since $(q_1^X,q_1^Y) \in R$ and $v_1^S\not\models \mathit{Inv}(l^S)$, it follows that $q_1^Y = u$. From Definition~\ref{def:quotientTIOTS} it follows that $u = q_1^Y \xlongrightarrow{a}{}^{\!\! Y} u = q_2^Y$ in $Y$. Thus we can simulate a transition in $Y$. Also, observe now that $q_2^X=q_2^Y$ (where $(l_u, v_2)$ is mapped into $u$ by $f$ from Lemma~\ref{lemma:quotientTSandAsamestateset}).
			}
			
			\item \new{$a\in\mathit{Act}_o^S\cap\mathit{Act}_o^T$, $l_1 = (l_1^T,l_1^S)$, $l_2 = l_e$, $\varphi = \varphi^S \wedge \mathit{Inv}(l_1^S) \wedge \mathit{Inv}(l_2^S)[r\mapsto 0]_{r\in c^S} \wedge \neg G_T$, $c = \{x_{\mathit{new}}\}$, $(l_1^S, a, \varphi^S, c^S, l_2^S)\in E^S$, and $G_T = \bigvee \{\varphi^T \wedge \mathit{Inv}(l_2^T)[r\mapsto 0]_{r\in c^T} \mid (l_1^T,a,\varphi^T,c^T, l_2^T) \in E^T\}$. Since the target location is the error location, it holds that $q_2^X\notin\mathrm{cons}^X$. Thus this case is not feasible.
			}
			
			\item \new{$a = i_{\mathit{new}}$, $l_1 = (l^T,l^S)$, $l_2 = l_e$, $\varphi = \neg \mathit{Inv}(l^T) \wedge \mathit{Inv}(l^S)$, $c = \{x_{\mathit{new}}\}$, $l^T\in \mathit{Loc}^T$, and $q^S\in\mathit{Loc}^S$. This case is infeasible, since $i_{\mathit{new}} \notin \mathit{Act}^Y$, thus $i_{\mathit{new}}\notin \mathit{Act}^X\cap\mathit{Act}^Y$.
			}
			
			\item \new{$a = i_{\mathit{new}}$, $l_1 = l_2 = (l_1^T,l_1^S)$, $\varphi = \mathit{Inv}(l^T) \vee \neg\mathit{Inv}(l^S)$ and $c = \emptyset$. This case is infeasible, since $i_{\mathit{new}} \notin \mathit{Act}^Y$, thus $i_{\mathit{new}}\notin \mathit{Act}^X\cap\mathit{Act}^Y$.
			}
			
			\item \new{$a\in\mathit{Act}^T\setminus\mathit{Act}^S$, $l_1 = (l_1^T,l^S)$, $l_2 = (l_2^T,l^S)$, $\varphi = \varphi^T \wedge \mathit{Inv}(l_2^T)[r\mapsto 0]_{r\in c^T} \wedge \mathit{Inv}(l^S)$, $c = c^T$, $l^S\in \mathit{Loc}^S$, and $(l_1^T, a, \varphi^T,c^T, l_2^T)\in E^T$. Since $v_1\models \varphi$ and $\mathit{Clk}^S\cap\mathit{Clk}^T=\emptyset$, it holds that $v_1^T\models \varphi^T$ and $v_1^T \models \mathit{Inv}(l_2^T)[r\mapsto 0]_{r\in c^T}$. Since $v_2 = v_1[r\mapsto 0]_{r\in c}$ and $c = c^T$, it holds that $v_2^T = v_1^T[r\mapsto 0]_{r\in c^T}$, $v_2^S = v_1^S$, and $v_2^T\models\mathit{Inv}(l_2^T)$. Combining all information above about $T$, it follows from Definition~\ref{def:semanticTIOA} that $(l_1^T,v_1^T) \xlongrightarrow{a} (l_2^T,v_2^T)$ in $\sem{T}$. From Definition~\ref{def:semanticTIOA} it also follows that $(l^S, v_1^S) \in Q^{\sem{S}}$. Therefore, following Definition~\ref{def:quotientTIOTS} it follows that $((l_1^T,v_1^T),(l^S,v_1^S)) = (l_1^T, l^S, v_1) = q_1^Y \xlongrightarrow{a}{}^{\!\! Y} ((l_2^T,v_2^T),(l^S,v_1^S)) = (l_2^T, l^S, v_2) = q_2^Y$ in $Y$. Thus, we can simulate a transition in $Y$. Also, observe now that $q_1^X = q_1^Y$ and $q_2^X=q_2^Y$.
			}
			
			\item \new{$a\in\mathit{Act}^S\cup\mathit{Act}^T$, $l_1 = l_u$, $l_2 = l_u$, $\varphi = \mathbf{T}$, $c = \emptyset$. From the construction of the bisimulation relation $R$, we know that if $q_1^X = f((l_u,v_1)) = u$ for some valuation $v_1$, then $q_1^Y = u$. From Definition~\ref{def:quotientTIOTS} it follows directly that there exists a transition $q_1^Y = u \xlongrightarrow{a}{}^{\!\! Y} u = q_2^Y$ in $Y$. Thus, we can simulate a transition in $Y$. Also, observe now that $q_1^X = q_1^Y$ and $q_2^X=q_2^Y$.
			}
			
			\item \new{$a\in\mathit{Act}_i^S\cup\mathit{Act}_i^T$, $l_1 = l_e$, $l_2 = l_e$, $\varphi = x_{\mathit{new}}=0$, $c = \emptyset$. Since the source and target locations are the error location, it holds that $q_1^X, q_2^X\notin\mathrm{cons}^X$. Thus this case is not feasible.}
		\end{enumerate}
		
		\new{In all feasible cases we can show that $q_1^Y = q_1^X$ or $q_1^Y = u$ and $q_2^Y = q_2^X$. Since $q_1^X, q_2^X\in\mathrm{cons}^X$ and $u \in \mathrm{cons}^Y$ by construction of $u$, it follows from Lemma~\ref{lemma:quotientTSandAsamecons} that $q_1^Y, q_2^Y\in\mathrm{cons}^Y$. Therefore, we can conclude that $q_1^Y\xlongrightarrow{a}{}^{\!\! Y^{\Delta}} q_2^Y$. And from the construction of the bisimulation relation $R$ it follows that $(q_2^X,q_2^Y)\in R$.
		}
		
		\item \new{$q_1^X\xlongrightarrow{a}{}^{\!\! X^{\Delta}} q_2^X$, $q_2^X\in Q^X$, and $a = i_{\mathit{new}}$. From Definition~\ref{def:adversarialpruning} of adversarial pruning we have that $q_1^X\allowbreak {\xlongrightarrow{a}{}^{\!\! X}}\allowbreak q_2^X$ and $q_1^X, q_2^X\in\mathrm{cons}^X$. Following Definition~\ref{def:semanticTIOA} of the semantic, it follows that there exists an edge $(l_1,a,\varphi, c, l_2)\in E^{T\quotient S}$ with $q_1^X = (l_1,v_1)$, $q_2^X = (l_2,v_2)$, $l_1,l_2 \in \mathit{Loc}^{T\quotient S}$, $v_1, v_2 \in [\mathit{Clk} \mapsto \mathbb{R}_{\geq 0}]$, $v_1\models \varphi$, $v_2 = v_1[r\mapsto 0]_{r\in c}$, and $v_2\models \mathit{Inv}(l_2)$. There are three cases from Definition~\ref{def:quotientTIOA} of the quotient for TIOA that apply here.
		}
		
		\begin{itemize}
			\item \new{$l_1 = (l^T,l^S)$, $l_2 = l_e$, $\varphi = \neg \mathit{Inv}(l^T) \wedge \mathit{Inv}(l^S)$, $c = \{x_{\mathit{new}}\}$, $l^T\in \mathit{Loc}^T$, and $q^S\in\mathit{Loc}^S$. Since the target location is the error location, it holds that $q_2^X\notin\mathrm{cons}^X$. Thus this case is not feasible. 
			}
			
			\item \new{$l_1 = l_2 = (l_1^T,l_1^S)$, $\varphi = \mathit{Inv}(l^T) \vee \neg\mathit{Inv}(l^S)$ and $c = \emptyset$. Since $c = \emptyset$, it follows that $v_2 = v_1$. Therefore, $q_1^X = q_2^X$. Following the second case of Definition~\ref{def:bisimulation} and knowing that $(q_1^X, q_1^Y) \in R$, if follows immediately that $(q_2^X,q_1^Y) \in R$. Since $q_1^X\in\mathrm{cons}^X$, it follows from the construction of $R$ and Lemma~\ref{lemma:quotientTSandAsamecons} that $q_1^y = q_1^X$ and thus $q_1^Y\in\mathrm{cons}^Y$.
			}
			
			\item \new{$l_1 = l_2$, $l_2 = l_e$, $\varphi = x_{\mathit{new}}$, and $c = \emptyset$. Since the source and target locations are the error location, it holds that $q_1^X,q_2^X\notin\mathrm{cons}^X$. Thus this case is not feasible. }
		\end{itemize}
		
		\item \new{$q_1^Y\xlongrightarrow{a}{}^{\!\! Y^{\Delta}} q_2^Y$, $q_2^Y\in Q^Y$, and $a\in\mathit{Act}^Y\cap\mathit{Act}^X$. Combining Definitions~\ref{def:adversarialpruning},~\ref{def:quotientTIOTS} and~\ref{def:quotientTIOA} it follows that $a\in\mathit{Act}^S\cup\mathit{Act}^T$. From Definition~\ref{def:adversarialpruning} of adversarial pruning we have that $q_1^Y\xlongrightarrow{a}{}^{\!\! Y} q_2^Y$ and $q_1^Y, q_2^Y\in\mathrm{cons}^Y$. Now, consider the nine cases from Definition~\ref{def:quotientTIOTS} of the quotient of TIOTS. We have to show for each feasible case that we can simulate a transition in $X$, that the involved states in $X$ are consistent, and that the resulting state pair is again in the bisimulation relation $R$.
		}
		
		\begin{enumerate}
			\item \new{$a\in \mathit{Act}^S \cap\mathit{Act}^T$, $q_1^Y = (q_1^{\sem{T}},q_1^{\sem{S}})$, $q_2^Y = (q_2^{\sem{T}},q_2^{\sem{S}})$, $q_1^{\sem{T}} \xlongrightarrow{a}{}^{\!\! \sem{T}} q_2^{\sem{T}}$, and $q_1^{\sem{S}}\allowbreak {\xlongrightarrow{a}{}^{\!\! \sem{S}}} q_2^{\sem{S}}$. From Definition~\ref{def:semanticTIOA} of semantic it follows that there exists an edge $(l_1^T,a,\varphi^T, c^T, l_2^T)\in E^T$ with $q_1^{\sem{T}} = (l_1^T,v_1^T)$, $q_2^{\sem{T}} = (l_2^T,v_2^T)$, $l_1^T,l_2^T \in \mathit{Loc}^T$, $v_1^T, v_2^T \in [\mathit{Clk}^T \mapsto \mathbb{R}_{\geq 0}]$, $v_1^T\models \varphi^T$, $v_2^T = v_1^T[r\mapsto 0]_{r\in c^T}$, and $v_2^T\models \mathit{Inv}^T(l_2^T)$. Similarly, it follows from the same definition that there exists an edge $(l_1^S,a,\varphi^S, c^S, l_2^S)\in E^S$ with $q_1^{\sem{S}} = (l_1^S,v_1^S)$, $q_2^{\sem{S}} = (l_2^S,v_2^S)$, $l_1^S,l_2^S \in \mathit{Loc}^S$, $v_1^S, v_2^S \in [\mathit{Clk}^S \mapsto \mathbb{R}_{\geq 0}]$, $v_1^S\models \varphi^S$, $v_2^S = v_1^S[r\mapsto 0]_{r\in c^S}$, and $v_2^S\models \mathit{Inv}^S(l_2^S)$. Based on Definition~\ref{def:quotientTIOA} of the quotient for TIOA, we need to consider the following two cases.
			}
			
			\begin{itemize}
				\item \new{$v_1^S \models \mathit{Inv}(l_1^S)$. In this case, there exists an edge $((l_1^T, l_1^S),a,\varphi^T \wedge \mathit{Inv}(l_2^T)[r\mapsto 0]_{r\in c^T} \wedge \varphi^S \wedge \mathit{Inv}(l_1^S) \wedge \mathit{Inv}(l_2^S)[r\mapsto 0]_{r\in c^S}, c^T \cup c^S, (l_2^T,l_2^S))$ in $T\quotient S$. Let $v_i, i=1,2$ be the valuations that combines the one from $T$ with the one from $S$, i.e. $\forall r \in \mathit{Clk}^T: v_i(r) = v_i^T(r)$ and $\forall r \in \mathit{Clk}^S: v_i(r) = v_i^S(r)$. Because $\mathit{Clk}^T\cap\mathit{Clk}^S = \emptyset$, it holds that $v_1\models \varphi^T$, $v_1\models \varphi^S$, and $v_1^S \models \mathit{Inv}(l_1^S)$, thus $v_1\models \varphi^T\wedge\varphi^S \wedge \mathit{Inv}(l_1^S)$; $v_2 = v_1[r\mapsto 0]_{r\in c^T\cup c^S}$; and $v_2\models\mathit{Inv}^T(l_2^T)$ and $v_2\models\mathit{Inv}^S(l_2^S)$, thus $v_2\models\mathit{Inv}^T(l_2^T)\wedge\mathit{Inv}^S(l_2^S)$. 
				}
				
				\new{From Definition~\ref{def:semanticTIOA} it now follows that $((l_1^T,l_1^S), v_1) \xrightarrow{a} ((l_2^T,l_2^S), v_2)$ is a transition in $\sem{T\quotient S}$. Because $\mathit{Clk}^T\cap\mathit{Clk}^S = \emptyset$, we can rearrange the states into $((l_1^T,l_1^S),v_1) = ((l_1^T,v_1^S) , (l_1^T,v_1^S)) = q_1^Y$ and $((l_2^T,l_2^S),v_2) = ((l_2^T,v_2^T) , (l_2^S,v_2^S)) = q_2^Y$. Thus, $q_1^Y \xrightarrow{a} q_2^Y$ is a transition in $\sem{T\quotient S} = Y$. Also, observe now that $q_1^X = q_1^Y$ and $q_2^X=q_2^Y$.
				}
				
				\item \new{$v_1^S \not\models \mathit{Inv}(l_1^S)$. From the construction of $R$, it follows that $((l_1^T, l_1^S, v_1), u) \in R$, i.e. $q_1^Y = u$. This contradicts with the start of this case that $q_2^Y = (q_2^{\sem{T}},q_2^{\sem{S}})$. Thus this case is infeasible.}
			\end{itemize}
			
			\item \new{$a\in \mathit{Act}^S \setminus \mathit{Act}^T$, $q_1^Y = (q^{\sem{T}},q_1^{\sem{S}})$, $q_2^Y = (q^{\sem{T}},q_2^{\sem{S}})$, $q^{\sem{T}}\in Q^{\sem{T}}$, and $q_1^{\sem{S}}\xlongrightarrow{a}{}^{\!\! \sem{S}} q_2^{\sem{S}}$. From Definition~\ref{def:semanticTIOA} of semantic it follows that there exists an edge $(l_1^S,a,\varphi^S, c^S, l_2^S)\in E^S$ with $q_1^{\sem{S}} = (l_1^S,v_1^S)$, $q_2^{\sem{S}} = (l_2^S,v_2^S)$, $l_1^S,l_2^S \in \mathit{Loc}^S$, $v_1^S, v_2^S \in [\mathit{Clk}^S \mapsto \mathbb{R}_{\geq 0}]$, $v_1^S\models \varphi^S$, $v_2^S = v_1^S[r\mapsto 0]_{r\in c^S}$, and $v_2^S\models \mathit{Inv}^S(l_2^S)$. From the same definition, it follows that $q^{\sem{T}} = (l^T, v^T)$ for some $l^T\in\mathit{Loc}^T$ and $v^T \in [\mathit{Clk}^T \mapsto \mathbb{R}_{\geq 0}]$. Based on Definition~\ref{def:quotientTIOA} of the quotient for TIOA, we need to consider the following two cases.
			}
			
			\begin{itemize}
				\item \new{$v_1^S \models \mathit{Inv}(l_1^S)$. In this case, there exists an edge $((l^T, l_1^S),a, \varphi^S \wedge \mathit{Inv}(l_1^S) \wedge \mathit{Inv}(l_2^S)[r\mapsto 0]_{r\in c^S}, c^S,\allowbreak (l^T,l_2^S))$ in $T\quotient S$. Let $v_i, i=1,2$ be the valuations that combines the one from $T$ with the one from $S$, i.e. $\forall r \in \mathit{Clk}^T: v_i(r) = v_i^T(r)$ and $\forall r \in \mathit{Clk}^S: v_i(r) = v_i^S(r)$. Because $\mathit{Clk}^T\cap\mathit{Clk}^S = \emptyset$, it holds that $v_1\models \varphi^S$, and $v_1 \models \mathit{Inv}(l_1^S)$, thus $v_1\models \varphi^S \wedge \mathit{Inv}(l_1^S)$; $v_2 = v_1[r\mapsto 0]_{r\in c^S}$; and $v_2\models\mathit{Inv}^S(l_2^S)$.
				}
				
				\new{Since $\mathit{Inv}((l^T, l_2^S)) = \mathbf{T}$ by definition $T\quotient S$, we have that $v_2\models \mathit{Inv}((l^T, l_2^S))$. From Definition~\ref{def:semanticTIOA} it now follows that $((l^T,l_1^S), v_1) \xrightarrow{a} ((l^T,l_2^S), v_2)$ is a transition in $\sem{T\quotient S}$. Using Definition~\ref{def:quotientreducedquotient} of the reduced $\sim$-quotient of $\sem{T\quotient S}$ and Lemma~\ref{lemma:quotientTSandAsamestateset}, we can rearrange the states into $((l^T,l_1^S),v_1) = ((l^T,v_1^T) , (l_1^S,v_1^S)) = q_1^Y$ and $((l^T,l_2^S),v_2) = ((l^T,v_2^T) , (l_2^S,v_2^S)) = q_2^Y$, and we can show that $q_1^Y \xrightarrow{a} q_2^Y$ is a transition in $\sem{T\quotient S}^{\rho} = X$. Also, observe now that $q_1^X = q_1^Y$ and $q_2^X=q_2^Y$.
				}
				
				\item \new{$v_1^S \not\models \mathit{Inv}(l_1^S)$. From the construction of $R$, it follows that $((l_1^T, l_1^S, v_1), u) \in R$, i.e. $q_1^Y = u$. This contradicts with the start of this case that $q_2^Y = (q_2^{\sem{T}},q_2^{\sem{S}})$. Thus this case is infeasible.}
			\end{itemize}
			
			\item \new{$a\in \mathit{Act}^T \setminus \mathit{Act}^S$, $q_1^Y = (q_1^{\sem{T}},q^{\sem{S}})$, $q_2^Y = (q_2^{\sem{T}},q^{\sem{S}})$, $q^{\sem{S}}\in Q^{\sem{S}}$, and $q_1^{\sem{T}}\xlongrightarrow{a}{}^{\!\! \sem{T}} q_2^{\sem{T}}$. 
				From Definition~\ref{def:semanticTIOA} of semantic it follows that there exists an edge $(l_1^T,a,\varphi^T, c^T, l_2^T)\in E^T$ with $q_1^{\sem{T}} = (l_1^T,v_1^T)$, $q_2^{\sem{T}} = (l_2^T,v_2^T)$, $l_1^T,l_2^T \in \mathit{Loc}^T$, $v_1^T, v_2^T \in [\mathit{Clk}^T \mapsto \mathbb{R}_{\geq 0}]$, $v_1^T\models \varphi^T$, $v_2^T = v_1^T[r\mapsto 0]_{r\in c^T}$, and $v_2^T\models \mathit{Inv}^T(l_2^T)$. From the same definition, it follows that $q^{\sem{S}} = (l^S, v^S)$ for some $l^S\in\mathit{Loc}^S$ and $v^S \in [\mathit{Clk}^S \mapsto \mathbb{R}_{\geq 0}]$. Based on Definition~\ref{def:quotientTIOA} of the quotient for TIOA, we need to consider the following two cases.
			}
			
			\begin{itemize}
				\item \new{$v_1^S \models \mathit{Inv}(l_1^S)$. In this case, there exists an edge $((l_1^T, l^S),a, \varphi^T \wedge \mathit{Inv}(l_2^T)[r\mapsto 0]_{r\in c^T} \wedge \mathit{Inv}(l^S), c^T,\allowbreak (l_2^T,l^S))$ in $T\quotient S$. Let $v_i, i=1,2$ be the valuations that combines the one from $T$ with the one from $S$, i.e. $\forall r \in \mathit{Clk}^T: v_i(r) = v_i^T(r)$ and $\forall r \in \mathit{Clk}^S: v_i(r) = v_i^S(r)$. Because $\mathit{Clk}^T\cap\mathit{Clk}^S = \emptyset$, it holds that $v_1\models \varphi^T$, and $v_1 \models \mathit{Inv}(l^S)$, thus $v_1\models \varphi^T \wedge \mathit{Inv}(l^S)$; $v_2 = v_1[r\mapsto 0]_{r\in c^T}$; and $v_2\models\mathit{Inv}^T(l_2^T)$. 
				}
				
				\new{Since $\mathit{Inv}((l_2^T, l^S)) = \mathbf{T}$ by definition $T\quotient S$, we have that $v_2\models \mathit{Inv}((l_2^T, l^S))$. From Definition~\ref{def:semanticTIOA} it now follows that $((l_1^T,l^S), v_1) \xrightarrow{a} ((l_2^T,l^S), v_2)$ is a transition in $\sem{T\quotient S}$. Using Definition~\ref{def:quotientreducedquotient} of the reduced $\sim$-quotient of $\sem{T\quotient S}$ and Lemma~\ref{lemma:quotientTSandAsamestateset}, we can rearrange the states into $((l_1^T,l^S),v_1) = ((l_1^T,v_1^T) , (l^S,v_1^S)) = q_1^Y$ and $((l_2^T,l^S),v_2) = ((l_2^T,v_2^T) , (l^S,v_2^S)) = q_2^Y$, and we can show that $q_1^Y \xrightarrow{a} q_2^Y$ is a transition in $\sem{T\quotient S}^{\rho} = X$. Also, observe now that $q_1^X = q_1^Y$ and $q_2^X=q_2^Y$.
				}
				
				\item \new{$v_1^S \not\models \mathit{Inv}(l_1^S)$. From the construction of $R$, it follows that $((l_1^T, l^S, v_1), u) \in R$, i.e. $q_1^Y = u$. This contradicts with the start of this case that $q_2^Y = (q_2^{\sem{T}},q_2^{\sem{S}})$. Thus this case is infeasible.}
			\end{itemize}
			
			\item \new{$d\in\mathbb{R}_{\geq 0}$, $q_1^Y = (q_1^{\sem{T}},q_1^{\sem{S}})$, $q_2^Y = (q_2^{\sem{T}},q_2^{\sem{S}})$, $q_1^{\sem{T}} \xlongrightarrow{d}{}^{\!\! \sem{T}} q_2^{\sem{T}}$, and $q_1^{\sem{S}}\allowbreak {\xlongrightarrow{d}{}^{\!\! \sem{S}}} q_2^{\sem{S}}$. This case is infeasible, since $a \neq d$ (delays will be treated later in the proof).
			}
			
			\item \new{$a\in \mathit{Act}_o^S$, $q_1^Y = (q^{\sem{T}},q^{\sem{S}})$, $q_2^Y = u$, $q^{\sem{T}} \in Q^{\sem{T}}$, and $q^{\sem{S}}\arrownot \xlongrightarrow{a}{}^{\!\! \sem{S}}$. From Definition~\ref{def:semanticTIOA} of semantic it follows that $q^{\sem{T}} = (l^T,v^T)$ and $q^{\sem{S}} = (l^S,v^S)$. There are two reasons why $q^{\sem{S}}\arrownot \xlongrightarrow{a}{}^{\!\! \sem{S}}$: there might be no edge in $E^S$ labeled with action $a$ from location $l^S$ or none of the edges labeled with $a$ from $l^S$ are enabled. An edge $(l^S,a,\varphi, c, l^{S\prime}) \in E^S$ is not enabled if $v^S\not\models \varphi$ or $v^S[r\mapsto 0]_{r\in c} \not\models \mathit{Inv}(l^{S\prime})$ (or both), which can also be written as $v^S\not\models \varphi \wedge \mathit{Inv}(l^{S\prime})[r\mapsto 0]_{r\in c}$. Looking at the third rule in Definition~\ref{def:quotientTIOA} of the quotient for TIOA, we have that $((l^T,l^S),a,\neg G_S, \emptyset, l_u) \in E^{T\quotient S}$ and $v^S\not\models G_S$, or $v^S\models \neg G_S$. Because $\mathit{Clk}^T\cap\mathit{Clk}^S = \emptyset$, it holds that $v\models \neg G_S$. 
			}
			
			\new{Now, since $\mathit{Inv}(l_u) = \textbf{T}$ and no clocks are reset, it holds that $v[r\mapsto 0]_{r\in\emptyset} = v \models \mathit{Inv}(l_u)$. From Definition~\ref{def:semanticTIOA} it now follows that $((l^T,l^S), v) \xrightarrow{a} (l_u, v_2)$ is a transition in $\sem{T\quotient S}$. From the state label renaming function $f$ from Lemma~\ref{lemma:quotientTSandAsamestateset} we have that $q_2^X = f((l_u, v_2)) = u = q_2^Y$ and $q_1^X = q_1^Y$. And from Definition~\ref{def:quotientreducedquotient} of the reduced $\sim$-quotient of $\sem{T\quotient S}$ we have that $q_1^Y \xrightarrow{a} q_2^Y$ is a transition in $\sem{T\quotient S}^{\rho} = X$.
			}
			
			\item \new{$d\in\mathbb{R}_{\geq 0}$, $q_1^Y = (q^{\sem{T}},q^{\sem{S}})$, $q_2^Y = u$, $q^{\sem{T}} \in Q^{\sem{T}}$, and $q^{\sem{S}}\arrownot \xlongrightarrow{d}{}^{\!\! \sem{S}}$. This case is infeasible, since $a \neq d$ (delays will be treated later in the proof).
			}
			
			\item \new{$a\in \mathit{Act}_o^S \cap\mathit{Act}_o^T$, $q_1^Y = (q^{\sem{T}},q^{\sem{S}})$, $q_2^Y = e$, $q^{\sem{T}} \arrownot\xlongrightarrow{a}{}^{\!\! \sem{T}}$, and $q^{\sem{S}} \xlongrightarrow{a}{}^{\!\! \sem{S}}$. Since the target state is the error state, it holds that $q_2^Y\notin\mathrm{cons}^Y$. Thus this case is not feasible.
			}
			
			\item \new{$a\in \mathit{Act}^T \cup \mathit{Act}^S \cup \mathbb{R}_{\geq 0}$, $q_1^Y = u$, $q_2^Y = u$. From the construction of $R$ it follows that there are two options for $q_1^X$ for the pair $(q_1^X, u) \in R$.
			}
			\begin{itemize}
				\item \new{$q_1^X = u\ (= (l_u, v))$. In this case, it follows directly from Definition~\ref{def:quotientTIOA} that $(l_u, a, \mathbf{T}, \emptyset, l_u)\in E^{T\quotient S}$. Since any valuation satisfies a true guard and by definition of $T\quotient S$ that $\mathit{Inv}(l_u) = \mathbf{T}$, we have with Definition~\ref{def:semanticTIOA} of semantic that $(l_u, v) \xrightarrow{a} (l_u, v)$ is a transition in $\sem{T\quotient S}$. From the state label renaming function $f$ from Lemma~\ref{lemma:quotientTSandAsamestateset} we have that $q_1^X = q_1^Y$ and $q_2^X = f((l_u, v)) = u = q_2^Y$. And from Definition~\ref{def:quotientreducedquotient} of the reduced $\sim$-quotient of $\sem{T\quotient S}$ we have that $q_1^Y \xrightarrow{a} q_2^Y$ is a transition in $\sem{T\quotient S}^{\rho} = X$.
				}
				
				\item \new{$q_1^X = ((l^T, l^S), v) \in Q^{X^{\Delta}}$ with $v \not\models \mathit{Inv}(l^S)$. In this case, it follows directly from Definition~\ref{def:quotientTIOA} that $((l^T, l^S), a, \neg \mathit{Inv}(l^S), \emptyset, l_u)\in E^{T\quotient S}$. Since $v \not\models \mathit{Inv}(l^S)$, we have $v \models \neg \mathit{Inv}(l^S)$. By definition of $T\quotient S$ we have that $\mathit{Inv}(l_u) = \mathbf{T}$, thus $v[r\mapsto 0]_{r\in \emptyset} = v \models \mathit{Inv}(l_u)$.  Now, with Definition~\ref{def:semanticTIOA} of semantic we it follows that $(l_u, v) \xrightarrow{a} (l_u, v)$ is a transition in $\sem{T\quotient S}$. From the state label renaming function $f$ from Lemma~\ref{lemma:quotientTSandAsamestateset} we have that $q_2^X = f((l_u, v)) = u = q_2^Y$. And from Definition~\ref{def:quotientreducedquotient} of the reduced $\sim$-quotient of $\sem{T\quotient S}$ we have that $q_1^Y \xrightarrow{a} q_2^Y$ is a transition in $\sem{T\quotient S}^{\rho} = X$.}
			\end{itemize}
			
			\item \new{$a\in \mathit{Act}_i^T \cup \mathit{Act}_o^S$, $q_1^Y = e$, $q_2^Y = e$. Since the source and target states are the error state, it holds that $q_1^Y, q_2^Y\notin\mathrm{cons}^Y$. Thus this case is not feasible.}
		\end{enumerate}
		
		\new{In all feasible cases we can show that $q_1^X = q_1^Y$ or $q_1^X = ((l^T,l^S), v)$ with $v\not\models\mathit{Inv}(l^S)$ and $q_2^X = q_2^Y$. Since $q_1^Y, q_2^Y\in\mathrm{cons}^Y$ and $((l^T, l^S), v) \in Q^{X^{\Delta}}$ by construction of $R$, it follows from Lemma~\ref{lemma:quotientTSandAsamecons} that $q_1^X, q_2^X\in\mathrm{cons}^X$. Therefore, we can conclude that $q_1^X\xlongrightarrow{a}{}^{\!\! X^{\Delta}} q_2^X$. And from the construction of the bisimulation relation $R$ it follows that $(q_2^X,q_2^Y)\in R$.
		}
		
		\item \new{$q_1^Y\xlongrightarrow{a}{}^{\!\! Y^{\Delta}} q_2^Y$, $q_2^Y\in Q^Y$, and $a\in\mathit{Act}^Y\setminus\mathit{Act}^X$. This case is infeasible, as  $\mathit{Act}^X = \mathit{Act}^Y \cup \{i_{\mathit{new}}\}$. }
		
		\item \new{$q_1^X\xlongrightarrow{d}{}^{\!\! X^{\Delta}} q_2^X$, $q_2^X\in Q^X$, and $d\in\mathbb{R}_{\geq 0}$. From Definition~\ref{def:adversarialpruning} of adversarial pruning we have that $q_1^X\xlongrightarrow{d}{}^{\!\! X} q_2^X$ and $q_1^X, q_2^X\in\mathrm{cons}^X$. Following Definition~\ref{def:semanticTIOA} of the semantic and Definition~\ref{def:quotientreducedquotient} of the reduced $\sim$-quotient of $\sem{T\quotient S}$, it follows that $q_1^X = (l_1, v_1)$ and $q_2^X = (l_1, v_1 + d)$ with $l_1\in\mathit{Loc}^{T\quotient S}$, $v_1 \in [\mathit{Clk} \mapsto \mathbb{R}_{\geq 0}]$, $v_1 + d\models \mathit{Inv}(l_1)$, and $\forall d'\in\mathbb{R}_{\geq 0}, d' < d: v_1 + d'\models \mathit{Inv}(l_1)$. Since $q_1^X\in\mathrm{cons}^X$, it follows that $l_1 = (l_1^T, l_1^S)$ or $l_1 = l_u$. Therefore, from Definition~\ref{def:quotientTIOA} of the quotient for TIOA, we have that  and $\mathit{Inv}(l_1) = \mathbf{T}$. Note that we do not directly get information about whether the valuation $v_1 + d$ satisfy the location invariant in $T$ or $S$.
		}
		
		\new{Now consider first the simple case where $l_1 = l_u$. From Definition~\ref{def:quotientTIOTS} of the quotient for TIOTS, it follows directly that $u \xlongrightarrow{d}{}^{\!\! Y} u$. And note with Lemma~\ref{lemma:quotientTSandAsamestateset} that $q_2^X = f((l_u, v_1 + d)) = u = q_2^Y$ and thus $(q_2^X,q_2^Y) \in R$.
		}
		
		\new{Now consider the case where $l_1 = (l_1^T, l_1^S)$. We have to consider whether delays are possible in $\sem{T}$ and $\sem{S}$ in order to show that $Y$ can follow the delay and that the resulting state pair is in the bisimulation relation $R$.}
		\begin{itemize}
			\item \new{$q_1^{\sem{T}}\xlongrightarrow{d}{}^{\!\! \sem{T}} q_2^{\sem{T}}$ and $q_1^{\sem{S}}\xlongrightarrow{d}{}^{\!\! \sem{S}} q_2^{\sem{S}}$. In this case, it follows from Definition~\ref{def:semanticTIOA} of the semantic that $q_1^{\sem{T}} = (l_1^T, v_1^T)$, $\forall c \in \mathit{Clk}^T: v_1^T(c) = v_1(c)$, $q_2^{\sem{T}} = (l_1^T, v_1^T + d)$, $v_1^T + d \models \mathit{Inv}(l_1^T)$, and $\forall d'\in\mathbb{R}_{\geq 0}, d'<d: v_1^T + d' \models \mathit{Inv}(l_1^T)$; similarly we have that $q_1^{\sem{S}} = (l_1^S, v_1^S)$, $\forall c \in \mathit{Clk}^S: v_1^S(c) = v_1(c)$, $q_2^{\sem{S}} = (l_1^S, v_1^S + d)$, $v_1^S + d \models \mathit{Inv}(l_1^S)$, and $\forall d'\in\mathbb{R}_{\geq 0}, d'<d: v_1^S + d' \models \mathit{Inv}(l_1^S)$. From Definition~\ref{def:quotientTIOTS} of the quotient for TIOTS it follows that $(q_1^T, q_1^S) \xlongrightarrow{d}{}^{\!\! Y} (q_2^T, q_2^S)$. Observe with Lemma~\ref{lemma:quotientTSandAsamestateset} that $q_1^Y = (q_1^{\sem{T}}, q_1^{\sem{S}}) = (l_1^T, l_1^S,v_1) = q_1^X$ and $q_2^Y = (q_2^{\sem{T}}, q_2^{\sem{S}}) = (l_1^T, l_1^S,v_2) = q_2^X$. Thus $(q_2^X,q_2^Y) \in R$.
			}
			
			\item \new{$q_1^{\sem{T}}\xlongrightarrow{d}{}^{\!\! \sem{T}} q_2^{\sem{T}}$ and $q_1^{\sem{S}}\arrownot\xlongrightarrow{d}{}^{\!\! \sem{S}}$. In this case, it follows from Definition~\ref{def:semanticTIOA} of the semantic that $q_1^{\sem{T}} = (l_1^T, v_1^T)$, $\forall c \in \mathit{Clk}^T: v_1^T(c) = v_1(c)$, $q_2^{\sem{T}} = (l_1^T, v_1^T + d)$, $v_1^T + d \models \mathit{Inv}(l_1^T)$, and $\forall d'\in\mathbb{R}_{\geq 0}, d'<d: v_1^T + d' \models \mathit{Inv}(l_1^T)$; similarly we have that $q_1^{\sem{S}} = (l_1^S, v_1^S)$, $\forall c \in \mathit{Clk}^S: v_1^S(c) = v_1(c)$, and $\exists d'\in\mathbb{R}_{\geq 0}, d' \leq d: v_1^S + d' \not\models \mathit{Inv}(l_1^S)$. We have to consider two cases.
			}
			\begin{itemize}
				\item \new{$v_1^S\models \mathit{Inv}(l_1^S)$. Since $\mathit{Clk}^T\cap\mathit{Clk}^S=\emptyset$, $v_1 \models\mathit{Inv}(l_1^S)$. Since $(q_1^X, q_1^Y) \in R$ and $v_1^S \models \mathit{Inv}(l_1^S)$, we have that $q_1^Y = q_1^X$. From Definition~\ref{def:quotientTIOTS} of the quotient for TIOTS, it follows  that $q_1^Y = ((l_1^T, v_1^T), (l_1^S, v_1^S)) \xlongrightarrow{d}{}^{\!\! Y} u = q_2^Y$. From the construction of $R$ we have that $q_2^X \in A$, thus we can confirm that $(q_2^T, q_2^Y) \in R$.
				}
				
				\item \new{$v_1^S \not\models \mathit{Inv}(l_1^S)$. Again, since $\mathit{Clk}^T\cap\mathit{Clk}^S=\emptyset$, $v_1 \not\models\mathit{Inv}(l_1^S)$. Since $(q_1^X, q_1^Y) \in R$ and $v_1^S \not\models \mathit{Inv}(l_1^S)$, we have that $q_1^X \in A$, thus $q_1^Y = u$. From Definition~\ref{def:quotientTIOTS} of the quotient for TIOTS, it follows  that $u \xlongrightarrow{d}{}^{\!\! Y} u$. And by construction of $R$ it follows that $(q_2^X,q_2^Y) \in R$. }
			\end{itemize}
			
			\item \new{$q_1^{\sem{T}}\arrownot\xlongrightarrow{d}{}^{\!\! \sem{T}} q_2^{\sem{T}}$ and $q_1^{\sem{S}}\arrownot\xlongrightarrow{d}{}^{\!\! \sem{S}}$. This case follows the exact same reasoning as the one above, since Definition~\ref{def:quotientTIOTS} of the quotient for TIOTS does not care whether a delay $d$ is possible in $\sem{T}$ once it is not possible in $\sem{S}$.
			}
			
			\item \new{$q_1^{\sem{T}}\arrownot\xlongrightarrow{d}{}^{\!\! \sem{T}} q_2^{\sem{T}}$ and $q_1^{\sem{S}}\xlongrightarrow{d}{}^{\!\! \sem{S}} q_2^{\sem{S}}$. In this case, it follows directly from Definition~\ref{def:quotientTIOTS} of the quotient for TIOTS that there is no delay possible in $Y$, i.e., $(q_1^{\sem{T}}, q_1^{\sem{S}}) \arrownot\xlongrightarrow{d}{}^{\!\! \sem{T}\quotient \sem{S}}$. It follows from Definition~\ref{def:semanticTIOA} of the semantic that $q_1^{\sem{T}} = (l_1^T, v_1^T)$, $\forall c \in \mathit{Clk}^T: v_1^T(c) = v_1(c)$, and $\exists d'\in\mathbb{R}_{\geq 0}, d' \leq d: v_1^T + d' \not\models \mathit{Inv}(l_1^T)$; similarly we have that $q_1^{\sem{S}} = (l_1^S, v_1^S)$, $\forall c \in \mathit{Clk}^S: v_1^S(c) = v_1(c)$, $q_2^{\sem{S}} = (l_1^S, v_1^S + d)$, $v_1^S + d \models \mathit{Inv}(l_1^S)$, and $\forall d'\in\mathbb{R}_{\geq 0}, d'<d: v_1^S + d' \models \mathit{Inv}(l_1^S)$. Without loss of generality, we can assume that $v_1^T + 0 \not\models \mathit{Inv}(l_1^T)$\footnote{\new{In case there would be a $d' < d$ such that $v_1^T + d' \models \mathit{Inv}(l_1^T)$, we can use the first case to simulate the delay $d'$ in$Y$.}}, which simplifies to $v_1^T \not\models \mathit{Inv}(l_1^T)$. Combining this information, we have that $v_1\models \neg\mathit{Inv}(l_1^T) \wedge \mathit{Inv}(l_1^S)$, where we used the fact that $\mathit{Clk}^T\cap\mathit{Clk}^S=\emptyset$. Now, using Definition~\ref{def:quotientTIOA} of the quotient for TIOA and Definition~\ref{def:semanticTIOA} of the semantics, we have that $(l_1^T, l_1^S, v_1)\xlongrightarrow{i_{\mathit{new}}}{}^{\!\! \sem{T\quotient S}} (l_e, v_1)$. Since $(l_e, v_1) \notin\mathrm{cons}^X$ and $i_{\mathit{new}}$ is an input, it follows that $(l_1^T, l_1^S, v_1) = q_1^X \notin \mathrm{cons}^X$. This contradicts with our assumption that $q_1^X \in \mathrm{cons}^X$. Therefore, this case is infeasible.}
		\end{itemize}
		
		\new{In all feasible cases we can show that $(q_2^X,q_2^Y)\in R$. Since $q_1^X, q_2^X\in\mathrm{cons}^X$ and $A\subseteq Q^{X^{\Delta}}$ by construction of $R$, it follows from Lemma~\ref{lemma:quotientTSandAsamecons} that $q_1^Y, q_2^Y\in\mathrm{cons}^Y$. Therefore, we can conclude that $q_1^Y\xlongrightarrow{d}{}^{\!\! Y^{\Delta}} q_2^Y$.
		}
		
		\item \new{$q_1^Y\xlongrightarrow{d}{}^{\!\! Y^{\Delta}} q_2^Y$, $q_2^Y\in Q^Y$, and $d\in\mathbb{R}_{\geq 0}$. From Definition~\ref{def:adversarialpruning} of adversarial pruning we have that $q_1^Y\xlongrightarrow{d}{}^{\!\! Y} q_2^Y$ and $q_1^Y, q_2^Y\in\mathrm{cons}^Y$. Consider the following three cases from Definition~\ref{def:quotientTIOTS} of the quotient for TIOTS. 
		}
		\begin{itemize}
			\item \new{$q_1^Y = (q_1^{\sem{T}}, q_1^{\sem{S}})$, $q_2^Y = (q_2^{\sem{T}}, q_2^{\sem{S}})$, $q_1^{\sem{T}}\xlongrightarrow{d}{}^{\!\! \sem{T}} q_2^{\sem{T}}$, and $q_1^{\sem{S}}\xlongrightarrow{d}{}^{\!\! \sem{S}} q_2^{\sem{S}}$. From Definition~\ref{def:semanticTIOA} of the semantic it follows that $q_1^{\sem{T}} = (l_1^T, v_1^T)$, $q_2^{\sem{T}} = (l_1^T, v_1^T + d)$, $v_1^T + d \models \mathit{Inv}(l_1^T)$, $\forall d'\in\mathbb{R}_{\geq 0}, d' < d: v_1^T + d' \models \mathit{Inv}(l_1^T)$, $q_1^{\sem{S}} = (l_1^S, v_1^S)$, $q_2^{\sem{S}} = (l_1^S, v_1^S + d)$, $v_1^S + d \models \mathit{Inv}(l_1^S)$, and $\forall d'\in\mathbb{R}_{\geq 0}, d' < d: v_1^S + d' \models \mathit{Inv}(l_1^S)$. Now, from Definition~\ref{def:quotientTIOA} of the quotient for TIOA we have that $\mathit{Inv}((l_1^S, l_1^T)) = \mathbf{T}$ in $T\quotient S$, thus using Definitions~\ref{def:semanticTIOA} and~\ref{def:quotientreducedquotient} we have $q_1^X = (l_1^S, l_1^T, v_1)\xlongrightarrow{d}{}^{\!\! X} (l_1^S, l_1^T, v_1 + d) = q_2^X$. Observe that $q_1^X = q_1^Y$ and $q_2^X = q_2^Y$, thus $q_2^X, q_2^Y\in R$.
			}
			
			\item \new{$q_1^Y = (q_1^{\sem{T}}, q_1^{\sem{S}})$, $q_2^Y = u$, and $q_1^{\sem{S}}\arrownot\xlongrightarrow{d}{}^{\!\! \sem{S}} q_2^{\sem{S}}$. From Definition~\ref{def:semanticTIOA} of the semantic it follows that $q_1^{\sem{T}} = (l_1^T, v_1^T)$, $q_1^{\sem{S}} = (l_1^S, v_1^S)$, and $\exists d'\in\mathbb{R}_{\geq 0}, d' \leq d: v_1^S + d' \not\models \mathit{Inv}(l_1^S)$. Now, from Definition~\ref{def:quotientTIOA} of the quotient for TIOA we have that $\mathit{Inv}((l_1^S, l_1^T)) = \mathbf{T}$ in $T\quotient S$, thus using Definitions~\ref{def:semanticTIOA} and~\ref{def:quotientreducedquotient} we have $q_1^X = (l_1^S, l_1^T, v_1)\xlongrightarrow{d}{}^{\!\! X} (l_1^S, l_1^T, v_1 + d) = q_2^X$. We have to consider two cases to show that $(q_2^X,q_2^Y) \in R$.
			}
			\begin{itemize}
				\item \new{$v_1^S \models \mathit{Inv}(l_1^S)$. In this case $q_1^X \notin A$ and $q_2^X \in A$. Therefore, $(q_2^X,q_2^Y) \in R$.}
				
				\item \new{$v_1^S \not\models \mathit{Inv}(l_1^S)$. In this case $q_1^X, q_2^X \in A$. From the construction of $R$ it follows that any state from $A$ can only be related to state $u$ in $Y$, but $q_1^Y = (q_1^{\sem{T}}, q_1^{\sem{S}})$. This contradiction renders this case infeasible. }
			\end{itemize}
			
			\item \new{$q_1^Y = u$ and $q_2^Y = u$. From Definition~\ref{def:quotientTIOA} of the quotient for TIOA, it follows directly that $(l_u, v)\ {\xlongrightarrow{d}{}^{\!\! X}}\allowbreak (l_u, v)$ for any $v \in [\mathit{Clk} \mapsto \mathbb{R}_{\geq 0}]$. And note with Lemma~\ref{lemma:quotientTSandAsamestateset} that $q_1^X = q_2^X = f((l_u, v)) = u = q_1^Y = q_2^Y$ and thus $(q_2^X,q_2^Y) \in R$.}
		\end{itemize}
		
		\new{In all feasible cases we can show that $(q_2^X,q_2^Y)\in R$. Since $q_1^Y, q_2^Y\in\mathrm{cons}^Y$ and $A\subseteq Q^{X^{\Delta}}$ by construction of $R$, it follows from Lemma~\ref{lemma:quotientTSandAsamecons} that $q_1^X, q_2^X\in\mathrm{cons}^X$. Therefore, we can conclude that $q_1^X\xlongrightarrow{d}{}^{\!\! X^{\Delta}} q_2^X$.}
	\end{itemize}
	
	\new{We have show for state pair $(q_1^X, q_1^Y) \in R$ that all the six cases of bisimulation hold. Since we have chosen an arbitrary state pair from $R$, it holds for all state pairs in $R$. This concludes the proof.}
\end{proof}

\end{appendices}


\bibliography{references}

\end{document}